\documentclass[reqno,a4paper,11pt]{article}
\pdfoutput=1
\usepackage{xcolor}
\overfullrule=5pt

\usepackage{graphicx}
\usepackage[textwidth = 430 pt, textheight = 630 pt]{geometry}

\definecolor{MyDarkBlue}{rgb}{0.15,0.25,0.45}
\usepackage{epsfig,rotating}
\usepackage{amsmath,amssymb}
\usepackage{amsfonts}
\usepackage{mathrsfs}
\usepackage{bbm}
\usepackage[normalem]{ulem}
\usepackage{enumitem}

\usepackage{latexsym}
\usepackage{amsthm}

\usepackage{tensor}

\usepackage[numbers,sort&compress]{natbib}

\usepackage[utf8x]{inputenc}

\usepackage{hyperref}
\hypersetup{
hypertexnames=false,
colorlinks=false,
citecolor=MyDarkBlue,
linkcolor=MyDarkBlue,
urlcolor=MyDarkBlue,
pdfauthor={Severin Bunk, Christian Sämann and Richard J. Szabo},
pdftitle={The 2-Hilbert Space of a Prequantum Bundle Gerbe},
pdfsubject={hep-th math-ph}
breaklinks=true
}

\usepackage{tikz}
\usepackage{mathtools,tensor}
\usepackage[all,knot,2cell]{xy}
\UseTwocells
\xyoption{arc}


\allowdisplaybreaks

\setcounter{tocdepth}{4}
\setcounter{secnumdepth}{4}

\newcommand{\eor}{\hfill\ensuremath{\lhd}}%

\newtheoremstyle{thm} 								
{0.2cm}   					
{0.2cm}   	 				
{\itshape} 	
{}         				
{\bfseries}				
{}        				
{0.2cm} 					
{}         				

\newtheoremstyle{rmk} 								
{0.2cm}   					
{0.2cm}   	 				
{} 	
{}         				
{\bfseries}				
{}        				
{0.2cm} 					
{}         				

\theoremstyle{thm}
\newtheorem{theorem}{Theorem}
\newtheorem{lemma}[theorem]{Lemma}
\newtheorem{definition}[theorem]{Definition}
\newtheorem{proposition}[theorem]{Proposition}
\newtheorem{corollary}[theorem]{Corollary}

\theoremstyle{rmk}
\newtheorem{rmk_aux}[theorem]{Remark}
\newtheorem{ex_aux}[theorem]{Example}

\numberwithin{equation}{section}
\numberwithin{theorem}{section}

\newenvironment{remark}{\begin{rmk_aux}}{\eor \end{rmk_aux}}
\newenvironment{example}{\begin{ex_aux}}{\eor \end{ex_aux}}

\newcommand{\appendices}{
\section*{Appendix}\label{appendices}\setcounter{subsection}{0}
\addcontentsline{toc}{section}{Appendix}
\setcounter{equation}{0}
\setcounter{theorem}{0}
\makeatletter
\renewcommand{\theequation}{\Alph{subsection}.\arabic{equation}}
\renewcommand{\thesubsection}{\Alph{subsection}}
\renewcommand{\thetheorem}{\Alph{subsection}.\arabic{theorem}}
\makeatother
}

\linespread{1.09}




\def\slasha#1{\setbox0=\hbox{$#1$}#1\hskip-\wd0\hbox to\wd0{\hss\sl/\/\hss}}

\def\periodb#1{\setbox0=\hbox{$#1$}#1\hskip-\wd0\hbox to\wd0{-}}





\newcommand{\lsb}{\big(\hspace{-0.1cm}\big(}
\newcommand{\rsb}{\big)\hspace{-0.1cm}\big)}

\newcommand{\unit}{\mathbbm{1}}   			
\newcommand{\im}{\mathrm{im}}   			
\newcommand{\id}{\mathrm{id}}   			

\newcommand{\CA}{\mathcal{A}}    			

\newcommand{\xd}{\dot{x}}

\newcommand{\CB}{\mathcal{B}}
\newcommand{\CCB}{\mathscr{B}}
\newcommand{\CC}{\mathcal{C}}
\newcommand{\CCC}{\mathscr{C}}

\newcommand{\CD}{\mathcal{D}}
\newcommand{\CCD}{\mathscr{D}}

\newcommand{\CCF}{\mathscr{F}}
\newcommand{\CG}{\mathcal{G}}
\newcommand{\CCG}{\mathscr{G}}
\newcommand{\CH}{\mathcal{H}}
\newcommand{\CCH}{\mathscr{H}}
\newcommand{\CI}{\mathcal{I}}

\newcommand{\CK}{\mathcal{K}}

\newcommand{\CL}{\mathcal{L}}

\newcommand{\CN}{\mathcal{N}}
\newcommand{\CO}{\mathcal{O}}

\newcommand{\CP}{\mathcal{P}}

\newcommand{\CCP}{\mathscr{P}}
\newcommand{\CQ}{\mathcal{Q}}

\newcommand{\CCR}{\mathscr{R}}

\newcommand{\CS}{\mathcal{S}}
\newcommand{\CT}{\mathcal{T}}

\newcommand{\CU}{\mathcal{U}}
\newcommand{\CV}{\mathcal{V}}
\newcommand{\CCV}{\mathscr{V}}
\newcommand{\CW}{\mathcal{W}}

\newcommand{\CE}{\mathcal{E}}
\newcommand{\fra}{\mathfrak{a}}				
\newcommand{\frh}{\mathfrak{h}}				

\newcommand{\frU}{\mathfrak{U}}
\newcommand{\fru}{\mathfrak{u}}
\newcommand{\frX}{\mathfrak{X}}

\newcommand{\FR}{\mathbbm{R}}     			
\newcommand{\FC}{\mathbbm{C}}     			
\newcommand{\NN}{\mathbbm{N}}     			
\newcommand{\MM}{\mathbbm{M}}     			
\newcommand{\RZ}{\mathbbm{Z}}     			
\newcommand{\CPP}{{\mathbbm{C}P}}    			

\newcommand{\dd}{\mathrm{d}}     			
\newcommand{\dpar}{\partial}     			
\newcommand{\embd}{{\hookrightarrow}}     		
\newcommand{\de}{\mathrm{e}}     			
\newcommand{\di}{\mathrm{i}}     			
\newcommand{\eps}{{\varepsilon}}			

\newcommand{\sB}{\mathsf{B}}



\newcommand{\eand}{{\qquad\mbox{and}\qquad}}     		
\newcommand{\ewith}{{\qquad\mbox{with}\qquad}}
\newcommand{\efor}{{\qquad\mbox{for}\qquad}}
\newcommand{\eon}{{~~\mbox{on}~~}}

\newcommand{\der}[1]{\frac{\dpar}{\dpar #1}}   		
\newcommand{\tr}{\mathsf{tr}}     			
\newcommand{\pr}{\mathrm{pr}}     			
\newcommand{\curv}{\mathrm{curv}}     			


\newcommand{\au}{\mathfrak{u}}

\newcommand{\aso}{\mathfrak{so}}

\newcommand{\sU}{\mathsf{U}}     			

\newcommand{\sA}{\mathsf{A}}
\newcommand{\sG}{\mathsf{G}}

\newcommand{\sR}{\mathsf{R}}

\newcommand{\sHom}{\mathsf{Hom}}

\newcommand{\sLie}{\mathsf{Lie}}

\newcommand{\sH}{\mathsf{H}}
\newcommand{\sSU}{\mathsf{SU}}
\newcommand{\sPU}{\mathsf{PU}}

\newcommand{\sGL}{\mathsf{GL}}

\newcommand{\sSO}{\mathsf{SO}}
\newcommand{\sSpin}{\mathsf{Spin}}
\newcommand{\sString}{\mathsf{String}}
\newcommand{\sEnd}{\mathsf{End}}

     				%

\def\tyng(#1){\hbox{\tiny$\yng(#1)$}}			
\def\tyoung(#1){\hbox{\tiny$\young(#1)$}}			

\newcommand{\beq}{\begin{eqnarray}}
\newcommand{\eeq}{\end{eqnarray}}

\newcommand{\htimes}{\bullet}

\newcommand{\sft}{{\sf t}}

\newcommand{\sfd}{{\sf d}}

\newcommand{\sff}{{\sf f}}

\newcommand{\sfsw}{{\sf sw}}
\newcommand{\myxymatrix}[1]{\vcenter{\vbox{\xymatrix{#1}}}}

\newcommand{\sfa}{\mathsf{a}}

\newcommand{\sfr}{\mathsf{r}}
\newcommand{\sfl}{\mathsf{l}}

\newcommand{\sfs}{\mathsf{s}}

\newcommand{\sfR}{\mathsf{R}}
\newcommand{\sfS}{\mathsf{S}}

\renewcommand{\uline}[1]{\underline{\it #1}}

\newcommand{\CatCat}{\mathsf{Cat}}
\newcommand{\CatSet}{\mathsf{Set}}

\newcommand{\CatMfd}{\mathsf{Mfd}}
\newcommand{\CatMfdCat}{\mathsf{MfdCat}}
\newcommand{\CatLieGrpd}{\mathsf{LieGrpd}}
\newcommand{\CatLieGrp}{\mathsf{LieGrp}}

\newcommand{\CatBibun}{\mathsf{Bibun}}
\newcommand{\CatBimod}{\mathsf{Bimod}}
\newcommand{\CatMod}{\mathsf{Mod}}
\newcommand{\CatVect}{\mathsf{Vect}}
\newcommand{\CatHilb}{\mathsf{Hilb}}
\newcommand{\shom}{\mathsf{hom}}

\newcommand{\sisom}{\mathsf{isom}}

\newcommand{\CatLBGrb}{\mathsf{LBGrb}}

\newcommand{\CatDes}{\mathsf{Des}}
\newcommand{\CatVBdl}{\mathsf{VBdl}}
\newcommand{\CatHVBdl}{\mathsf{HVBdl}}

\newcommand{\CatHLBdl}{\mathsf{HLBdl}}

\newcommand{\doublearrow}{\mathrel{\substack{\longrightarrow\\[-0.6ex]
                      \longrightarrow}}}
\newcommand{\triplearrow}{\mathrel{\substack{\longrightarrow\\[-0.6ex]
                      \longrightarrow \\[-0.6ex]
                      \longrightarrow}}}
\newcommand{\quadarrow}{\mathrel{\substack{\longrightarrow\\[-0.6ex]
                      \longrightarrow \\[-0.6ex]
                      \longrightarrow\\[-0.6ex]
                      \longrightarrow}}}

\newcommand{\hol}{{\sf hol}}

\newcommand{\thra}{\twoheadrightarrow}
\newcommand{\xthra}[2][]{%
  \xrightarrow[#1]{#2}\mathrel{\mkern-14mu}\rightarrow
}

\newcommand{\fpmap}[4]{\overset{#1}{#2}{}^{#3}_{#4}}

\newcommand{\rk}{\,\mathrm{rk}}

\newcommand{\isom}{\overset{\cong}{\longrightarrow}}
\newcommand{\twoisom}{\overset{\cong}{\Longrightarrow}}

\newcommand{\ev}{\mathrm{ev}}
\newcommand{\bbG}{\mathbbm{G}}
\newcommand{\nt}{\notag\\}

\newcommand{\coker}{\mathrm{coker}}
\newcommand{\Gr}{\mathsf{Gr}}

\title{}

\begin{document}

\begin{titlepage}
\begin{flushright}
 EMPG--16--16
\end{flushright}
\vskip 2.0cm
\begin{center}
{\LARGE \bf The 2-Hilbert Space \\[0.3cm] of a Prequantum Bundle Gerbe}
\vskip 1.5cm
{\Large Severin Bunk, Christian S\"amann and Richard J.\ Szabo}
\setcounter{footnote}{0}
\renewcommand{\thefootnote}{\arabic{thefootnote}}
\vskip 1cm
{\em 
Department of Mathematics,
Heriot-Watt University\\
Colin Maclaurin Building, Riccarton, Edinburgh EH14 4AS, U.K.}\\
and\\  {\em Maxwell Institute for Mathematical Sciences, Edinburgh,
  U.K.} \\ and \\ {\em The Higgs Centre for Theoretical Physics,
  Edinburgh, U.K.}
\\[0.5cm]
{Email: {\ttfamily sb11@hw.ac.uk , C.Saemann@hw.ac.uk , R.J.Szabo@hw.ac.uk}}
\end{center}
\vskip 2.0cm
\begin{center}
{\bf Abstract}
\end{center}
\begin{quote}
We construct a prequantum 2-Hilbert space for any line bundle gerbe whose
Dix\-mier-Douady class is torsion. Analogously to usual
prequantisation, this 2-Hilbert space has the category of sections of
the line bundle gerbe as its underlying 2-vector
space. These sections are obtained as certain morphism categories in
Waldorf's version of the 2-category of line bundle gerbes. We show
that these morphism categories carry a monoidal structure under which
they are semisimple and abelian. We introduce a dual functor on the
sections, which yields a closed structure on the morphisms between
bundle gerbes and turns the category of sections into a 2-Hilbert
space. We discuss how these 2-Hilbert spaces fit various
expectations from higher prequantisation. We then extend the
transgression functor to the full 2-category of bundle gerbes and demonstrate
its compatibility with the additional structures 
introduced. We discuss various aspects of Kostant-Souriau
prequantisation in this setting, including its dimensional reduction to ordinary prequantisation.
\end{quote}
\end{titlepage}

\tableofcontents

\section{Introduction and summary}

In this paper we develop a construction of a 2-Hilbert space from a line bundle gerbe. This is a categorification of the first part of geometric quantisation, where a Hilbert space is constructed from sections of a prequantum line bundle over a symplectic manifold. Recall that the first Chern class of a prequantum line bundle is (an integer multiple of) the cohomology class of the symplectic form of the base manifold. 

\subsection{Motivation and context\label{sec:motivation}}

Our motivation for studying this construction is essentially twofold. Firstly, categorification of a mathematical construction or object usually leads to a deeper understanding, akin to deformations and other generalisations. Secondly, within string theory the need for higher geometric quantisation arises in various contexts, and the construction of a 2-Hilbert space is a first step towards such a procedure. Let us explain both points in more detail. 

Categorification \cite{Baez:9802029} is in general not a unique process. It is meant to provide a recipe for lifting notions based on sets (or 0-categories) to related notions using categories (or, rather generally, notions based on $n$-categories to related notions involving $n+1$-categories). There are essentially two main variants of what is more precisely called vertical categorification which have appeared in the string theory literature. The first variant amounts to constructing sections of decategorification maps; that is, one tries to find for a given set $S$ a category $\CCC$ together with a `nice' map $p$ from the isomorphism classes $h_0 \CCC$ to $S$. The standard examples are 1) $S=\NN_0$ and $\CCC$ the category of finite sets with $p$ the cardinality and 2) $S=\NN_0$ and $\CCC$ the category of finite-dimensional vector spaces with $p$ the dimension.

A second variant of vertical categorification is that of (weak) internalisation in another category or even in a higher category. Here one separates a mathematical notion into sets, structure maps and structure relations. The sets become objects in the category, the structure maps become morphisms and the structure relations become commutative diagrams. In this way a group internal to the category of smooth manifolds is simply a Lie group. On the other hand, a group internal to $\CatCat$, the category of small categories, is a categorified group or 2-group. (In fact, the latter also yields a section of a decategorification map.)

Definitions of mathematical objects that allow for a straightforward categorification are usually particularly useful, since they separate essential from non-essential aspects and allow for a direct identification of trivial relations. Two important examples of such definitions, which we shall encounter in this paper, are the definition of a principal bundle as a functor from the \v Cech groupoid of a cover to the delooping of the structure group and the definition of sections of a line bundle $L$ as morphisms from the trivial line bundle to $L$. Both definitions are readily categorified, see e.g.\ \cite{Waldorf:2007aa,Schweigert:2014nia}, which allows us to define and study sections of higher line bundles. The latter will form the 2-vector spaces underlying the 2-Hilbert spaces we are after. In the process of developing various aspects of the first steps in higher geometric quantisation, our awareness of the various mathematical structures involved in the corresponding steps of ordinary geometric quantisation increased significantly.

However, there is another more direct reason for constructing 2-Hilbert spaces, and that is their appearance in various situations in string theory that are currently of interest. Rather generally, M-theory requires a lift of the noncommutative deformations arising in string theory to higher noncommutative geometry, involving nonassociative structures. For example, D-branes ending on other D-branes usually form partially noncommutative manifolds. In particular, open D$p$-branes ending transversally on D$(p+2)$-branes will form what is known as a ``fuzzy funnel''. Here each point in the worldvolume of the D$p$-brane polarises into a fuzzy or quantised 2-sphere. The lift to M-theory of the D$2$--D$4$-brane system yields an open M2-brane ending on an M5-brane with a quantised 3-sphere expected to arise. Just as the 2-sphere comes with a symplectic volume form, defining a prequantum line bundle whose sections form the prequantum Hilbert space, the 3-sphere comes with a multisymplectic volume form, defining a prequantum gerbe whose sections are supposed to form a prequantum 2-Hilbert space.   A more detailed explanation of this point is found e.g.\ in \cite{Saemann:2012ex}.

Higher quantisation becomes equally prominent in flux compactifications of string theory and in double field theory \cite{Hull:2009mi}. In particular, the NS--NS $H$-flux defines a multisymplectic 3-form which is integral by (generalised) Dirac quantisation, hence it defines a prequantum gerbe on the string background and again a prequantum 2-Hilbert space. Applying successive T-duality transformations to such compactifications inevitably brings one into a `non-geometric' regime of string theory, where the target space can no longer be described as a classical Riemannian manifold. Non-geometric string theory requires in
   particular the quantisation of twisted Poisson structures involving
   nonassociative observable algebras. For example, on the simplest 3-dimensional multisymplectic manifolds $\FR^3$, $T^3$ and $S^3$, closed strings which wind and propagate in the ensuing non-geometric backgrounds probe a noncommutative and even nonassociative deformation of the original geometry~\cite{Blumenhagen:2010hj,Lust:2010iy,Blumenhagen:2011ph}, which is closely related to the appearance of categorified and other higher geometric structures~\cite{Mylonas:2012pg}; in these settings the noncommutative and nonassociative geometry is most efficiently described by structures internal to a certain representation category~\cite{Barnes:2014ksa,Barnes:2015uxa}.
The nonassociative observable algebras arising in non-geometric string theory can typically be embedded as subspaces of (associative) Poisson algebras of differential operators~\cite{Mylonas:2013jha}; an approach to geometric quantisation of twisted Poisson
   manifolds is pursued along these lines in~\cite{Petalidou:2007tx}.

We shall concisely summarise the steps involved in geometrically quantising a symplectic manifold in Section~\ref{ssec:gq_outline}. The purpose of this paper is then to identify categorified versions of the ingredients in each step to the extent possible. We shall start from a multisymplectic $3$-form $\omega$ on some manifold $M$. If $\omega$ is integral (this is the usual quantisation condition), then it represents the characteristic class of a $\sU(1)$-bundle gerbe. We can now define sections of its associated line bundle gerbes and we find that they form a structure that matches a categorified notion of a Hilbert space. 

The result is an analogue of the prequantum Hilbert space of ordinary geometric quantisation. There one would use a polarisation on the manifold to halve the state space to the actual Hilbert space. A categorified notion of polarisation remains an important open problem in higher quantisation, and unfortunately we do not gain any new insights into this issue from our perspective.

In some simple cases, however, we can understand e.g.\ the explicit action of categorified versions of the isometry group of the manifold on the 2-Hilbert space, which is a subgroup of the categorified multisymplectomorphisms. We also find that an expected higher Lie algebra of quantum observables acts as categorified endomorphisms on this 2-Hilbert space.

The configuration space for a closed string moving through a target space $M$ is the loop space\footnote{Or, more precisely, its unparameterised version, the knot space $\CK M$.} $\CL M$. Therefore it is only natural to ask about quantisations of this mapping space. Interestingly, the categorified line bundles we encountered above can be transported to loop space via the transgression map, where they turn into ordinary line bundles, cf.~\cite{0817647309}. For a number of reasons, it is interesting to study the translation of the various 2-categorical structures we found before to the loop space picture. As an example, the transgressed line bundle on the loop spaces of compact semi-simple Lie groups has close relations with the representation theory of the central extensions of the loop group, cf.~\cite{Pressley:1988aa,Sergeev:2008aa}. A detailed explanation of how  quantisation of loop spaces nicely matches the quantum geometry of open M2-branes ending on M5-branes in a constant $C$-field background as well as the noncommutative geometry of closed strings in non-geometric T-duality frames can be found in~\cite{Saemann:2012ab}. 

Further motivation for studying higher prequantum geometry and a summary of the structures involved can be found in~\cite{Schreiber:2016pxa} as well as in the much more detailed exposition~\cite{Schreiber:2013pra}. The papers~\cite{Rogers:2010sc,Fiorenza:2013kqa,Fiorenza:1304.6292} discuss higher prequantisation from a perspective omitting an explicit notion of higher Hilbert space.

\subsection{Summary of results}

Throughout this paper we have attempted to include reviews of relevant mathematical notions and their interrelationships to the extent that we would have found useful at the start of this project, and we hope that other readers will benefit from our exposition.

In particular, we review the theory of Lie groupoids and their various morphisms, which is useful for regarding gerbes as central groupoid extensions. We also introduce 2-vector spaces as module categories over rig categories, cf.\ \cite{Schreiber:2008aa,Schreiber:2009:357-401}. This allows for a unified perspective on various 2-vector spaces in the literature, such as those of Baez-Crans \cite{Baez:2003aa} and Kapranov-Voevodsky \cite{kapranov19942}.

Gerbes are presented from three different perspectives: first, as abelian principal 2-bundles given in terms of \v Cech cocycles; second, as Murray's bundle gerbes and third, as central groupoid extensions. The associated line bundles are introduced and they are the objects of Waldorf's convenient 2-category $\CatLBGrb^\nabla(M)$ \cite{Waldorf:2007aa}, which is also explained in detail.

We extend the morphism category in $\CatLBGrb^\nabla(M)$ by a direct sum functor, which eventually turns $\CatLBGrb^\nabla(M)$ into a closed symmetric monoidal 2-category whose morphism categories are semisimple abelian and cartesian symmetric monoidal categories by Theorem~\ref{thm:4.28}. On the morphism category of $\CatLBGrb^\nabla(M)$, we then define an inner product structure. In particular, we can restrict this inner product to sections of line bundle gerbes, which are full subcategories of $\sHom_{\CatLBGrb^\nabla(M)}$, and we obtain a notion of line bundle gerbe metric in Definition~\ref{def:2-bdl_metric}.

In Definition~\ref{def:2-Hilbert_space} we give a suitable notion of 2-Hilbert space, of which the  morphism subcategories corresponding to sections of line bundle gerbes form special instances, see Theorem~\ref{thm:5.9}. Our definition is similar in spirit but not equivalent to the 2-Hilbert spaces introduced in \cite{kapranov19942} and \cite{Baez:9609018}. We also find that the Lie 2-algebra of classical observables can be embedded in the category of endomorphisms of sections in Proposition~\ref{prop:5.11}.

We discuss the example of 2-Hilbert spaces arising from prequantum gerbes over $\FR^3$ in much detail. In particular, we find that the string 2-group $\sString(3)$ acts naturally on these 2-Hilbert spaces, cf.\ Theorem~\ref{thm:6.5}.

In our last section, we turn to the transgression functor, taking the prequantum gerbe to a line bundle over loop space. We give a definition, which by Theorem~\ref{st:Transgression_is_functorial} is a functor between $h_1\CatLBGrb^\nabla(M)$, the 2-isomorphism classes in the 2-category of line bundle gerbes, and the category of hermitean fusion line bundles over loop space, both regarded as closed and symmetric monoidal additive categories. The proof of this theorem requires a number of technical results, including an explicit construction of the transgressed connection, which are interesting in their own right and which can be found in Sections~\ref{sect:Transgression_of_algebraic_data} and~\ref{sect:connection_on_CTCG}.

The Grothendieck completion of $h_1\CatLBGrb^\nabla(M)$ given in Definition~\ref{def:Grothendieck_completion} also allows us to extend the transgression functor to a functor of closed and symmetric monoidal categories enriched over abelian groups, see Theorem~\ref{thm:7.14}. 

Finally, we give the appropriate Kostant-Souriau prequantisation map over loop space in Proposition~\ref{prop:KS_prequantisation_map}, and we briefly comment on its string theory reduction.

\subsection{Overview of geometric prequantisation}\label{ssec:gq_outline}

Let us concisely recall the basics of geometric prequantisation in a language that will allow for a straightforward translation to higher geometric quantisation. The relevant original literature is \cite{Souriau-1970aa,Kostant-1970aa,Berezin:1974du,Kostant:1975qe,Kirillov2001}; see also \cite{IuliuLazaroiu:2008pk} for a review. We shall be referring to the following points throughout the text:
\begin{enumerate}[fullwidth]
 \item\label{it:observables} On a symplectic manifold $(M,\omega)$,
   there is a Hamiltonian vector field $X_f\in\frX(M)$ for each observable $f\in
   C^\infty(M)$ such that $\iota_{X_f} \omega=-\dd f$. This induces a
   Poisson algebra structure $\{f,g\}_\omega:=-\iota_{X_f}\iota_{X_g}\omega$
   on $C^\infty(M)$ which, as a Lie algebra, is a central extension of
   the Lie algebra of Hamiltonian vector fields on $M$: $[X_f,X_g]=X_{\{f,g\}_\omega}$.
 \item\label{it:prequantum_line_bundle} A prequantum line bundle of a
   symplectic manifold $(M,\omega)$ is the associated line bundle $L$
   to a principal $\sU(1)$-bundle endowed with a connection $\nabla$
   of curvature $F_\nabla=-2 \pi \, \di \, \omega$. If such a line
   bundle $L$ exists, that is,\ $[\omega] \in H^2(M,\RZ_M)$, then we call $(M,\omega)$ \emph{quantisable}. Each prequantum line bundle admits a hermitean metric $h$ compatible with the connection. If $M$ is K\"ahler, compatibility with the complex structure determines $h$ up to multiplication by a positive constant.
 \item\label{it:tautological_line_bundle} This prequantum line bundle
   was called the tautological 0-gerbe\footnote{This produces, in
     fact, the dual of the tautological line bundle $\CO(-1)$ over
     $\CPP^n$ for the canonical symplectic form.} for $\omega$ in
   \cite{Johnson:2003aa}. It is readily constructed if we generalise
   from open covers to the surjective submersion known as the path
   fibration. Let $M$ be 1-connected and choose a base point $x_0$ in
   $M$. Denote the space of paths based at this point by $\CP M$; it
   is fibred over $M$ by the endpoint evaluation map
   $\dpar$.\footnote{We will not go into the details of
     infinite-dimensional geometry here, but one could view
     $\CP M$ as a Fr\'echet manifold or even as a diffeological
     space~\cite{Waldorf:0911.3212,Waldorf:2010aa}, and $\dpar$ is
     then a surjective submersion.} The fibred product $\CP M^{[2]}$
   consisting of pairs of paths $(\gamma_1,\gamma_2)$ with the same
   endpoint can now be identified with the based loop space $\Omega
   M$: From each pair of paths $(\gamma_1,\gamma_2)\in \CP M^{[2]}$,
   we can form a based loop $\ell:=\bar \gamma_2\circ
   \gamma_1\in\Omega M$ modulo some technicalities;\footnote{To ensure
     a smooth joining, we should introduce ``sitting instances'' at
     the endpoints; that is, we reparametrise the paths such that they are constant in a neighbourhood of their endpoints.} inversely, given a loop $\ell\in \Omega M$, we can split it into two paths with endpoints $z=\ell(1)$.
 
 We can now define a function $f:\Omega M\rightarrow \sU(1)$ by
\begin{equation}
 f(\ell)=\exp\Big(2\pi\,\di\, \int_\Sigma \, \omega\Big) \ewith \dpar \Sigma=\ell~.
\end{equation}
Since $[\omega]\in H^2(M,\RZ_M)$, this function depends only on $\ell$
and not on the choice of $\Sigma$. The function $f$ is a cocycle, as 
\begin{equation}
\begin{aligned}
 f(\gamma_1,\gamma_2)\,
 f(\gamma_2,\gamma_3)&=\exp\Big( 2\pi\, \di\, \int_{\Sigma_{12}} \,
 \omega+2\pi\, \di\, \int_{\Sigma_{23}}\,
 \omega\Big) \\[4pt] 
 &= \exp\Big(2\pi\, \di\, \int_{\Sigma_{12}\cup\Sigma_{23}}\, \omega\Big) \\[4pt]
 &= f(\gamma_1,\gamma_3)~,
\end{aligned}
\end{equation}
where we decomposed each loop $\ell=\bar\gamma_2\circ\gamma_1$ into its two based paths $(\gamma_1,\gamma_2)$. Thus $f$ defines a principal $\sU(1)$-bundle $P_\omega$ over $M$, and we can define a connection $\nabla_\omega$ on $P_\omega$ with gauge potential
\begin{equation}
 A_\omega=2\pi\,\di\, \int_I\, \ev^* \omega~,
\end{equation}
where $\ev:\CP M\times I\rightarrow M$ denotes the evaluation. This connection satisfies $\dd A_\omega=\dpar^* F_{\nabla_\omega}$, and the line bundle $L_\omega$ associated to $(P_\omega,\nabla_\omega)$ is the prequantum line bundle with first Chern class $[\omega]$.

\item\label{it:vector_space} The prequantum Hilbert space 
  of a quantisable manifold $(M,\omega)$ is now constructed from the vector space $\hat
  \CH=\Gamma(M,L)$ of sections of $L$; we can identify
  $\hat\CH=\shom_{\CatVBdl(M)}(I_0,L)$ where $I_0$ is the trivial line
  bundle over $M$. Note that $\hat \CH$ is a left
  module over the ring of smooth functions $C^\infty(M) =\shom_{\CatVBdl(M)}(I_0,I_0)$, and since
  there is a trivial embedding of $\FC$ as constant functions into $C^\infty(M)$, it is a complex vector space.

\item\label{it:inner_product} If $M$ is compact of dimension $2n$ and $L$ is hermitean, the vector space $\hat\CH$ is additionally a pre-Hilbert space if endowed with the scalar product
\begin{equation}
\langle \eps_1,\eps_2 \rangle = \int_M\, \dd\mu_M~h(\eps_1,\eps_2)~,
\end{equation}
where $\dd\mu_M$ is some volume form on $M$. This can be the canonical
Liouville form $\dd\mu_M=\frac{1}{n!}\, \omega^n$, but we can also
introduce an additional measure $\mu_M'=\mu_M \, \rho$ for some smooth
positive function $\rho$ on $M$. Completion with respect to this scalar product then turns $\hat\CH$ into a Hilbert space. If $M$ is non-compact, we restrict
$\hat \CH$ to the subspace of square-integrable smooth sections $\hat \CH_{\rm
  res}$ of the prequantum line bundle $L$; in this case we have to ensure that
$\rho$ decays fast enough to allow for an interesting space $\hat
\CH_{\rm res}$.

\item\label{it:rep_observables} The observables in $C^\infty(M)
  =\shom_{\CatVBdl(M)}(I_0,I_0)$ can be recovered as the real sections of
  the trivial line bundle $I_0$  over $M$ or, equivalently, as real bundle
  endomorphisms of $I_0$.

\item\label{it:symmetry_group_action} If the symplectic form underlying a K\"ahler manifold $(M,\omega)$ has a symmetry group of isotropies, then we find an induced action of this isotropy group on the Hilbert space of sections. This usually extends to the double cover of the isotropy group or its spin group and the Hilbert space decomposes into a direct sum of irreducible representations, see e.g.~\cite{Murray:2006pi} for a large class of examples. All this is readily seen if we replace the prequantum line bundles with appropriate equivariant prequantum line bundles.

\item\label{it:quantisation_map} The space $\Gamma(M,L)$ is endowed with an action of the Lie algebra $(C^\infty(M),\{-,-\}_\omega)$ given by the \emph{Kostant-Souriau prequantisation map}
\begin{equation}
 Q: C^\infty(M) \longrightarrow \sEnd_{\FC}(\Gamma(M,L)) \ , \quad f
 \longmapsto \nabla^L_{X_f} + 2 \pi \,\di\, f \ ,
\end{equation}
and this representation extends to the full prequantum Hilbert space.
\end{enumerate}
As stated before, a full quantisation requires the introduction of the
notion of polarisation (typical examples being a realisation of $M$ as a cotangent bundle $M=T^*N$ or a
choice of K\"ahler structure on $M$). Restricting to polarised sections (defined via half-densities
on $N$ or holomorphic sections of $L\otimes K^{1/2}$ with $K^{1/2}$ the square root of the canonical line bundle of $M$) then reduces
the prequantum Hilbert space to the actual quantum Hilbert space. The
appropriate notion of polarisation for higher prequantisation remains
unclear, see the discussion in \cite{Rogers:2011zc,Saemann:2012ab}, and we have
nothing to say on this problem in the present paper.

The categorification of these statements will be discussed in the ensuing sections: Statements~\eqref{it:observables}, \eqref{it:prequantum_line_bundle} and \eqref{it:tautological_line_bundle} in Section~\ref{ssec:multisymplectic}, statements \eqref{it:vector_space} and \eqref{it:inner_product} in Section~\ref{ssec:2Hilbert_space_line_bgr}. Statement~\eqref{it:rep_observables} will be addressed in Section~\ref{ssec:observables} and statement~\eqref{it:symmetry_group_action} in Section~\ref{ssec:symmetries}. Finally, a higher equivalent version of statement~\eqref{it:quantisation_map} is obtained in Section~\ref{sect:KS_prequantisation} after transgression to loop space.

\subsection{Outline}

Since the discussion will be rather technical at times, let us give a short outline of the contents of this paper.

Section~\ref{sect:Preliminaries} gives a review of the relevant background material on Lie groupoids, \v Cech and Deligne cohomology, Lie 2-groupoids, 2-vector spaces and representations of 2-groups.

Section~\ref{sect:U(1)-bundle-gerbes} starts with Hitchin-Chatterjee gerbes and their functorial definition, reviews Murray's $\sU(1)$-bundle gerbes as well as gerbes in the form of central groupoid extensions, and compares their morphisms. Connective structures are introduced and various examples are reviewed.

Section~\ref{sect:2-cat_of_BGrbs} contains the definition of line bundle gerbes associated to $\sU(1)$-bundle gerbes, as well as a discussion of the morphisms and 2-morphisms underlying Waldorf's 2-cate\-gory of line bundle gerbes. There we also identify monoidal and closed structures on this 2-category, and define observables in multisymplectic geometry as well as the notion of prequantum line bundle gerbe.

Section~\ref{sect:2-Hilbert_spaces_from_Bgerbes} is concerned with constructing the 2-Hilbert spaces. After giving a suitable definition, we identify the corresponding 2-Hilbert space structures on the category of sections of a line bundle gerbe and embed expected observables into 2-endomorphisms on these 2-Hilbert spaces.

Section~\ref{sect:action_of_2-grps_on_their_BGrbs} gives an explicit example of higher prequantisation, looking at the case of $\FR^3$. We present the 2-Hilbert space as well as the explicit action of the string 2-group of isotropies on $\FR^3$ onto this 2-Hilbert space.

Section~\ref{sect:Transgression} begins with a review of Waldorf's version of transgression, which we then extend to a transgression functor living on an appropriate reduced closed and symmetric monoidal additive category of line bundle gerbes.

In the final Section~\ref{sec:outlook}, we present an outlook on the non-torsion case. Three appendices contain some technical results concerning morphisms of line bundle gerbes, the proof of one of our main theorems (Theorem~\ref{st:Direct_sum_as_fctr}) and some detailed background on the transgression of differential forms to mapping spaces.

\subsection{Notation}

In the following we briefly summarise our notation for the reader's convenience. 

Generic categories and groupoids will be labelled by calligraphic
letters $\CCC, \CCG, \CCV,\dots$. We also write $\CCC=(\CCC_1\rightrightarrows \CCC_0)$ to depict the maps
$\sfs,\sft$ from the morphisms $\CCC_1$ to their source and target
objects $\CCC_0$. We write $\CCC^{\rm op}$ for the opposite category with the same objects as $\CCC$ but with the directions of morphisms reversed (equivalently with the source and target maps $\sfs,\sft$ interchanged). We denote the category of morphisms in a 2-category $\CCC$ by $\CCC_1=(\CCC_{1,1}\rightrightarrows \CCC_{1,0})$. By $\shom_{\CCC}$ we mean
(1-)morphisms $(a\to b)\in\CCC_1$ in a 1- or 2-category $\CCC$ and by
$2\shom_{\CCC}$ we denote 2-morphisms $(f\Rightarrow g)\in
\CCC_{1,1}$; isomorphisms in $\CCC$ are denoted $\sisom_{\CCC}$.

We write $\CatCat$ for the category of small categories, $\CatSet$ for the category of sets and $\CatMfd$ for the category of smooth manifolds. By $\CatVect$ we denote the category of finite-dimensional complex vector spaces. The category of finite-rank vector bundles on $M\in\CatMfd$ is $\CatVBdl(M)$, and its extensions to hermitean vector bundles and hermitean vector bundles with connection are denoted by $\CatHVBdl(M)$ and $\CatHVBdl^\nabla(M)$, respectively. 

Further 2-categories we shall introduce comprise the 2-category of
2-vector spaces $2\CatVect_\CCR$ over a rig category $\CCR$, cf.\
Definition~\ref{def:2-vector_space}, and the 2-category of hermitean line bundle gerbes with connective structure over
$M$, $\CatLBGrb^\nabla(M)$, cf.\ Definition~\ref{def:lbgrb_2-category}. 

Given a manifold $M$, $\frX(M)$ is the Lie algebra of vector fields on
$M$. Given a base point on $M$, $\Omega M$ and $\CP M$ denote
its based loop and path spaces, respectively, cf.\
statement~(\ref{it:tautological_line_bundle}) of
Section~\ref{ssec:gq_outline}. The free loop space is denoted by $\CL M:=C^\infty(S^1,M)$.

We use $\CG=(P,\sigma^Y)$ for most $\sU(1)$-bundle gerbes and
$\CL=(L,\sigma^Y)$ for line bundle gerbes, where $\sigma^Y:Y\thra M$
refers to a surjective submersion, and $P$ and $L$ are principal and
line bundles over the fibre product $Y^{[2]}:=Y\times_M Y$, cf.\
Definitions~\ref{def:u(1)-bundle_gerbe}
and~\ref{def:line_bundle_gerbes}. We depict a $\sU(1)$-bundle gerbe by a diagram
\begin{equation}
     \myxymatrix{
	  P \ar@{->}[d] & \\
	  Y^{[2]} \ar@<1.5pt>[r] \ar@<-1.5pt>[r] & Y \ar@{->}[d]^{\sigma^Y} \\
	    & M
  }
  \end{equation}
where the middle row denotes the \v Cech groupoid of $\sigma^Y$, cf.\ Definition~\ref{def:Cechgroupoid}. Morphisms between line bundle gerbes are denoted by $(E,\zeta^Z)$, cf.\ Definition~\ref{def:lbgr_morph}, and $(\phi,\omega^W)$ refers to a 2-morphism, cf.\ Definition~\ref{def:2hom_bgrb}, where $\zeta^Z$ and $\omega^W$ denote surjective submersions with domains $Z$ and $W$. 

The descent bundle of a descent datum $(E,\varphi)$ in the stack $\CatVBdl:\CatMfd^{\rm op}\to\CatCat$ with respect to a surjective submersion $\sigma^Y:Y\thra M$ is written $D_{\sigma^Y}(E,\varphi)\to M$. This extends to a canonical functor $D_{\sigma^Y}$ from the category of descent data $\CatDes(\CatVBdl,\sigma^Y)$ to $\CatVBdl(M)$, which is an equivalence of categories with canonical inverse $D_{\sigma^Y}^{-1}:\CatVBdl(M)\to \CatDes(\CatVBdl,\sigma^Y)$ that sends $E\in\CatVBdl(M)$ to the pair $(\sigma^{Y*}E,\id)$.

Finally, for any morphism $\phi:E\to F$ of vector bundles, by $\phi^{\rm t}:F^*\to E^*$ we mean the
transpose or Banach space adjoint, never the Hilbert space adjoint
which is metric dependent and which we denote by $\phi^*:F\to E$. When $\phi$ is an
isomorphism we also write
$\phi^{-{\rm t}}:=(\phi^{-1})^{\rm t}$.

\section{Preliminaries on higher structures}\label{sect:Preliminaries}

In this section we recall definitions of various higher
structures and their
properties that we will need in this paper, beginning with a quick
review of 2-categories, and emphasising the role played by the \v
Cech complex of a manifold.

\subsection{2-categories}

\begin{definition}
 A \uline{strict 2-category} is a category enriched over $\CatCat$.
\end{definition}
That is, a strict 2-category $\CCC$ consists of a class of objects
$\CCC_0$ together with a category $\CCC_1(a,b)$ of morphisms
$\CCC_{1,1}(a,b)\rightrightarrows\CCC_{1,0}(a,b)$ over pairs $a,b\in
\CCC_0$. The elements of $\CCC_{1,0}$ are also called
\emph{1-morphisms}, while the elements of $\CCC_{1,1}$, the morphisms
between morphisms, are called \emph{2-morphisms}. There are now two
compositions: the \emph{vertical composition} of 2-morphisms, denoted
by $\circ$, and the functorial \emph{horizontal composition} of 1- and
2-morphisms, denoted by $\htimes_\CCC$ or simply $\htimes$. This definition is readily iterated to strict $n$-categories. The canonical example is the strict 2-category $\CatCat$ of categories, functors and natural transformations. 

Besides the strict 2-functors, which are simply functors enriched over $\CatCat$, we will also require the notion of a normalised pseudofunctor.
\begin{definition}
 A \uline{normalised pseudofunctor}, also called a \uline{normalised weak 2-functor}, $\Phi$ between two strict
 2-categories $\CCC$ and $\CCD$ consists of a function
 $\Phi_0:\CCC_0\rightarrow \CCD_0$, a functor
 $\Phi_1^{ab}:\CCC_1(a,b)\rightarrow \CCD_1(\Phi_0(a),\Phi_0(b))$, and
 an invertible 2-morphism
 $\Phi_2^{abc}:\Phi_1^{ab}(x)\htimes_\CCD\Phi_1^{bc}(y)\Rightarrow
 \Phi_1^{ac}(x\htimes_\CCC y)$ for all $a,b,c\in \CCC_{0}$ and $x,y\in
 \CCC_{1,0}$ such that the diagram 
  \begin{equation}
  \xymatrixcolsep{3pc}
  \myxymatrix{
  & \Phi_1^{ac}(x\htimes y)\, \htimes\, \Phi_1^{cd}(z) \ar@{=>}[dr]^{\Phi_2^{acd}} & \\
  (\Phi_1^{ab}(x)\, \htimes\,\Phi_1^{bc}(y))\, \htimes\, \Phi_1^{cd}(z) \ar@{=>}[ur]^{\Phi_2^{abc}\htimes \id}  \ar@{=>}[d]_{=} & & \Phi_1^{ad}((x\htimes y)\htimes z) \ar@{=>}[d]^{=}\\
  \Phi_1^{ab}(x)\, \htimes\,(\Phi_1^{bc}(y)\, \htimes\, \Phi_1^{cd}(z))\ar@{=>}[dr]_{\id\htimes \Phi_2^{bcd}} &  & \Phi_1^{ad}(x\htimes(y\htimes z))\\
  & \Phi_1^{ab}(x)\, \htimes\,\Phi_1^{bd}(y\htimes z)\ar@{=>}[ur]_{\Phi_2^{abd}} & }
  \end{equation}
 commutes for all $a,b,c,d\in \CCC_0$ and $x,y,z\in \CCC_{1,0}$.
\end{definition}
Analogously, there is the notion of pseudonatural transformations.
\begin{definition}
 Given two normalised pseudofunctors $\Phi,\Psi:\CCC\rightarrow \CCD$,
 a \uline{pseudonatural} \uline{transformation}, or \uline{weak natural 2-transformation},
 $\alpha:\Phi\Rightarrow \Psi$ consists of 1-morphisms
 $\alpha_1^a:\Phi_0(a)\rightarrow \Psi_0(a)$ for each $a\in \CCC_0$
 together with invertible 2-morphisms $\alpha_2^{ab}(x)$ for all $a,b\in \CCC_0$ and $x\in \CCC_{1,0}(a,b)$ defined by
\begin{equation}
 \xymatrixcolsep{5pc}
\myxymatrix{
 \Phi_0(b) \ar@{->}[r]^{\Phi_1^{ab}(x)}  \ar@{->}[d]_{\alpha_1^b} & \Phi_0(a) \ar@{->}[d]^{\alpha_1^a}\\
 \Psi_0(b) \ar@{->}[r]_{\Psi_1^{ab}(x)} \ar@{=>}[ru]^{\alpha_2^{ab}(x)}  & \Psi_0(a)
}
\end{equation}
such that the diagram
\begin{equation}
\xymatrixcolsep{2.6pc}\myxymatrix{
  \Psi_1^{ab}(x)\, \htimes\,  (\alpha_1^b\, \htimes\,  \Phi_1^{bc}(y))  \ar@{=>}[r]^{ =}  &  (\Psi_1^{ab}(x)\, \htimes\,  \alpha_1^b)\, \htimes\,  \Phi_1^{bc}(y)  \ar@{=>}[r]^{\alpha_2^{ab}(x)\htimes \id}  & (\alpha_1^a\, \htimes\, \Phi_1^{ab}(x))\, \htimes\, \Phi_1^{bc}(y)  \ar@{=>}[d]^{=}\\
 \Psi_1^{ab}(x)\, \htimes\, (\Psi_1^{bc}(y)\, \htimes\,  \alpha_1^c)  \ar@{=>}[u]^{\id\, \htimes\,  \alpha_2^{bc}(y)} & &   \alpha_1^a\, \htimes\, (\Phi_1^{ab}(x)\, \htimes\, \Phi_1^{bc}(y))  \ar@{=>}[d]^{\id\, \htimes\,  \Phi_2^{abc}}\\
(\Psi_1^{ab}(x)\, \htimes\, \Psi_1^{bc}(y))\, \htimes\,  \alpha_1^c   \ar@{=>}[r]_{~~~\Psi_2^{abc}\, \htimes\, \id}  
 \ar@{=>}[u]^{=}  & \Psi_1^{ac}(x\htimes y)\, \htimes\,  \alpha_1^c
 \ar@{=>}[r]_{\alpha_2^{ac}(x\htimes y)} & \alpha_1^a\, \htimes\,  \Phi_1^{ac}(x\htimes y)
 }
\end{equation}
commutes for all $a,b,c\in\CCC_0$ and $x,y\in \CCC_{1,0}$.
\end{definition}

\begin{remark}
 We shall also encounter {\em weak 2-categories} or {\em bicategories} in our discussion. Since we will not require many details, let us just sketch their definition. A weak 2-category is a 2-category in which the horizontal composition is unital only up to 2-isomorphisms, which are known as {\em left} and {\em right unitors},
 \begin{equation}
  \sfl_f:f\htimes \id_{\sfs(f)}\Longrightarrow f\eand \sfr_f:\id_{\sft(f)}\htimes f \Longrightarrow f~,
 \end{equation}
for any 1-morphism $f$.
 Moreover, the horizontal composition is associative only up to a 2-isomorphism, known as the {\em associator},
 \begin{equation}
  \sfa_{f,g,h}: (f\htimes g)\htimes h \Longrightarrow f\htimes
  (g\htimes h) \ ,
 \end{equation}
 for 1-morphisms $f,g,h$. The unitors and the associator have to satisfy certain coherence axioms, see e.g.\ \cite{Benabou:1967:1} for details. 
 
 The most general notion of 2-functor between bicategories is that of a {\em lax 2-functor}, which differs from our normalised weak 2-functor in two ways. Firstly, it is not normalised and there is an additional 2-morphism $\Phi^a_2:\id_{\Phi_0(a)}\Rightarrow \Phi_1^{aa}(\id_a)$. Secondly, the 2-morphisms $\Phi_2^{abc}$ are not necessarily invertible. A detailed discussion of the various 2-functors and natural 2-transformations between bicategories is found e.g.\ in \cite{Jurco:2014mva}.
\end{remark}

\subsection{Lie groupoids and their morphisms}\label{ssec:groupoids}

\begin{definition}
 A \uline{groupoid} is a small category in which all morphisms are
 isomorphisms. A \uline{Lie groupoid} is a groupoid object in
 $\CatMfd$.\footnote{Since $\CatMfd$ does not possess all pullbacks,
   we have to demand that the internal source and target maps
   $\sfs,\sft$ are surjective submersions.} \uline{Morphisms between Lie groupoids} are internal functors between the corresponding groupoid objects in $\CatMfd$. Lie groupoids and their morphisms form the category $\CatLieGrpd$.
\end{definition}
Simple examples of Lie groupoids include the discrete Lie groupoid
$M\rightrightarrows M$ for a manifold $M$, and the delooping of a Lie group $\sG$, $\sB\sG:=(\sG\rightrightarrows *)$, where $*$ denotes the singleton set.

Consider now a surjective submersion $\sigma^Y:Y\thra M$ between two manifolds $M,Y\in\CatMfd$. The $k$-fold fibre product $Y^{[k]}:=Y\times_M Y\times_M\cdots\times_M Y$ consists of $k$-tuples $(y_1,\ldots,y_k)$ with $y_i\in Y$ and $\sigma^Y(y_1)=\cdots=\sigma^Y(y_k)$. An example of a Lie groupoid, which is important to our discussion, is the \v Cech groupoid of $\sigma^Y$.
\begin{definition}\label{def:Cechgroupoid}
 The \uline{\v Cech groupoid} $\check \CCC(Y\thra M)$ of a surjective
 submersion $\sigma^Y: Y\thra M$ between two manifolds $M,Y\in\CatMfd$ has
 objects $Y$ and morphisms $Y^{[2]}$ with structure maps
 $\sfs(y_1,y_2)=y_2$, $\sft(y_1,y_2)=y_1$ and $\id_y=(y,y)$,
 composition $(y_1,y_2)\htimes (y_2,y_3)=(y_1,y_3)$, and the inverse of a morphism $(y_1,y_2)^{-1}=(y_2,y_1)$ for $y,y_1,y_2,y_3\in Y$.
\end{definition}

To establish a relation between the \v Cech groupoid of a surjective
submersion $\sigma^Y: Y\thra M$ and the base manifold $M$, the morphisms in $\CatLieGrpd$ are not enough; instead, we should introduce generalised morphisms, which are also known as Hilsum-Skandalis morphisms~\cite{MR925720,Moerdijk:2003bb}. Below we summarise some relevant definitions and results. For a more detailed review, see e.g.\ \cite{Blohmann:2007ez,Demessie:2016ieh}
\begin{definition}
 A \uline{(left-) principal bibundle} from a Lie groupoid
 $\CCH=(\CCH_1\rightrightarrows \CCH_0)$ to a Lie groupoid
 $\CCG=(\CCG_1\rightrightarrows \CCG_0)$ is a smooth manifold $B$
 together with a smooth map $\tau:B\rightarrow \CCG_0$ and a
 surjective submersion $\sigma:B\thra\CCH_0$. This defines a double fibration
\begin{equation}
  \xymatrixcolsep{2pc}
  \xymatrixrowsep{2pc}
  \myxymatrix{ \CCG_1\ar@<-.5ex>[dr] \ar@<.5ex>[dr] & & \ar@{->}[dl]_{\tau} B\ar@{->>}[dr]^{\sigma} & & \ar@<-.5ex>[dl] \ar@<.5ex>[dl]\CCH_1\\
  & \CCG_0 &  & \CCH_0
    }
\end{equation}
which is endowed with left- and right-action maps 
 \begin{equation}
  \CCG_1{}\times^{\sfs,\tau}_{\CCG_0} B\longrightarrow B\eand B{}\times^{\sigma,\sft}_{\CCH_0} \CCH_1\longrightarrow B~,
 \end{equation}
 such that the following relations are satisfied:
 \begin{itemize}
  \item[(i)] $g_1(g_2b)=(g_1\,g_2) b$ for all $(g_1,g_2,b)\in \CCG_1\times^{\sfs,\sft}_{\CCG_0}\CCG_1\times^{\sfs,\tau}_{\CCG_0}B$;
  \item[(ii)] $(bh_1) h_2=b (h_1\, h_2)$ for all $(b,h_1,h_2)\in B\times^{\sigma,\sft}_{\CCH_0}\CCH_1\times^{\sfs,\sft}_{\CCH_0}\CCH_1$;
  \item[(iii)] $b\;\id_\CCH(\sigma(b))=b$ and $\id_\CCG(\tau(b))\;b=b$ for all $b\in B$;
  \item[(iv)] $g (bh)=(gb)h$ for all $(g,b,h)\in \CCG_1\times^{\sfs,\tau}_{\CCG_0} B\times^{\sigma,\sft}_{\CCH_0} \CCH_1$;
  \item[(v)] The map $\CCG_1\times^{\sfs,\tau}_{\CCG_0}B\rightarrow
    B\times_{\CCH_0}B$, $(g,b)\mapsto (gb,b)$ is an isomorphism.
 \end{itemize}
 
 The \uline{(horizontal) composition of bibundles} $B:\CCG\rightarrow \CCF$ and $B':\CCH\rightarrow \CCG$ is defined as 
 \begin{equation}
  B\htimes B'=(B\times_{\CCG_0} B'\, )\, \big/\, \CCG_1~,
 \end{equation}
where the quotient is by the diagonal $\CCG_1$-action.
 
 A \uline{morphism of bibundles} $B,B':\CCH\rightarrow \CCG$ is an
 element of $\shom_\CatMfd(B,B'\, )$ which is equivariant with respect to the left- and right action maps.
 
 A \uline{bibundle equivalence} is a bibundle $B:\CCH\rightarrow \CCG$ which is simultaneously a bibundle $B:\CCG\rightarrow \CCH$.
\end{definition}

Bibundles indeed generalise morphisms of Lie groupoids. Given such a morphism $\phi:\CCH\rightarrow \CCG$, we define its \emph{bundlisation} as the bibundle
\begin{equation}
  \xymatrixcolsep{2pc}
  \xymatrixrowsep{2pc}
  \myxymatrix{ \CCG_1\ar@<-.5ex>[dr] \ar@<.5ex>[dr] & &
    \ar@{->}[dl]_{\sft} \hat\phi = \CCH_0\times^{\phi_0,\sfs}_{\CCG_0} \CCG_1\ar@{->>}[dr]^{\pi} & & \ar@<-.5ex>[dl] \ar@<.5ex>[dl]\CCH_1\\
  & \CCG_0 &  & \CCH_0
    }
\end{equation}
where $\sft$ is the target map in $\CCG$ and $\pi$ is the trivial projection. The left- and right-action maps are given by
\begin{equation}
 g'(x,g):=(x,g'\circ g)\eand (x,g)h:=(\sfs(h),g\circ \phi_1(h))
\end{equation}
for $g,g'\in \CCG_1$, $h\in \CCH_1$ and $(x,g)\in \hat\phi$. This motivates the following definition.
\begin{definition}
 The weak 2-category {$\CatBibun$} has Lie groupoids as objects, bibundles as 1-morphisms and bibundle morphisms as 2-morphisms. Horizontal composition in this bicategory is composition of bibundles. An equivalence in $\CatBibun$ is a bibundle equivalence.
\end{definition}
The category $\CatBibun$ can be viewed as the 2-category of ``stacky manifolds'', since Lie groupoids are presentations of differentiable stacks and bibundle equivalence amounts to Morita equivalence.

We can now readily establish a relation between the \v Cech groupoid
$\check \CCC(Y\thra M)$ and the manifold $M$: The former is equivalent to the discrete Lie groupoid $M\rightrightarrows M$ via a bibundle with total space $Y$, and obvious structure and action maps:
\begin{equation}\label{eq:moritaY}
  \xymatrixcolsep{2pc}
  \xymatrixrowsep{2pc}
  \myxymatrix{ Y^{[2]}\ar@<-.5ex>[dr] \ar@<.5ex>[dr] & & \ar@{->>}[dl]_{\id_Y} Y\ar@{->>}[dr]^{\sigma^Y} & & \ar@<-.5ex>[dl] \ar@<.5ex>[dl] M\\
  & Y &  & M
    }
\end{equation}

We shall make use of bibundles in two other contexts. Firstly, certain
bibundle equivalences encode stable isomorphisms between
$\sU(1)$-bundle gerbes as explained in
Section~\ref{ssec:gerbes}. Secondly, we can restrict the weak 2-category
$\CatBibun$ to Lie groupoids with a single object which has linear
spaces as morphisms, and correspondingly we restrict to bibundles with
linear left- and right-actions as well as to linear bibundle
morphisms; in particular, there is the category $\CatBimod$ of unital associative algebras, bimodules and bimodule morphisms which appears in Sections~\ref{ssec:2-vector-spaces} and \ref{ssec:lie2grp_reps}.

\subsection{Deligne cohomology}\label{ssec:Cech_Deligne}

The nerve $\CN\check \CCC(Y\thra M)$ of the \v Cech groupoid $\check
\CCC(Y\thra M)$ of a surjective submersion $\sigma^Y:Y\thra M$ is a
simplicial manifold with $(k-1)$-simplices given by the $k$-fold fibre products $Y^{[k]}=Y\times_M Y\times_M\cdots\times_M Y$:
\begin{equation}
 \CN\check \CCC(Y\thra M)~=~\cdots Y^{[4]} \quadarrow Y^{[3]}\triplearrow Y^{[2]}\doublearrow Y
\end{equation}
The face maps $\sff^k_i:Y^{[k]}\rightarrow Y^{[k-1]}$ can be combined
into a coboundary operator on differential forms $\check \delta : \Omega^p(Y^{[k]})\rightarrow \Omega^p(Y^{[k+1]})$ by
\begin{equation}
 (\check \delta \alpha)(s):= \sum_{j=0}^k\, (-1)^j\,(\sff^k_j)^* \alpha(s)
\end{equation}
for $\alpha\in \Omega^p(Y^{[k]})$ and $s\in Y^{[k+1]}$; then $\check \delta\circ \check \delta=0$. This gives rise to the exact complex~\cite{Murray:9407015,Murray:2007ps}
\begin{equation}\label{eq:fundamental_sequence}
0\longrightarrow \Omega^p(M)\xrightarrow{(\sigma^Y)^*}\Omega^p(Y)\xrightarrow{~\check \delta~} \Omega^p(Y^{[2]})\xrightarrow{~\check \delta~} \Omega^p(Y^{[3]})\xrightarrow{~\check \delta~} \cdots
\end{equation}
Similarly, we define for each function $g$ from $Y^{[k]}$ to an
abelian group $\sA$ a map $\check \delta g:Y^{[k+1]}\rightarrow \sA$
given by
\begin{equation}
 (\check \delta g)(s):= \sum_{j=0}^k\, (-1)^j\, g(\sff^k_js)~.
\end{equation}
Finally, we also define for an $\sA$-bundle $P\rightarrow Y^{[k-1]}$ an $\sA$-bundle $\check \delta P$ over $Y^{[k]}$ by
\begin{equation}
 \check \delta P =(\sff^k_0)^*(P)~\otimes~\big((\sff^k_1)^*(P) \big)^*~\otimes~(\sff^k_2)^*(P)~\otimes~\cdots
\end{equation}

A simple choice for a surjective submersion is $Y = \frU =
\bigsqcup_a\, U_a$, the total space of an open cover $(U_a)_{a \in
  \Lambda}$ of $M$. In this case, the space $\frU^{[k]}$ is just the
disjoint union of all $k$-fold intersections $U_{a_1\cdots
  a_k}:=U_{a_1}\cap\cdots \cap U_{a_k}$, including self-intersections
$U_a \cap U_a$, of sets in the cover. 

\begin{definition}
 Given a sheaf $\CS$ of abelian groups over $M$ and an open cover
 $\frU$ of $M$, let $\check
 C^\bullet=\bigcup_k \, \check C^k$ be the cosimplicial object with $\check C^k(\frU,\CS):=\CS(\frU^{[k]})$. Elements of $\check C^k(\frU,\CS)$ are  \uline{$k$-cochains}. The coboundary operator $\check \delta$ on $\check C^\bullet$ is the \uline{\v Cech differential}. The complex $(\check C^\bullet,\check \delta)$ is the \uline{\v Cech complex}, containing \uline{\v Cech $k$-cocycles} and \uline{\v Cech $k$-coboundaries}. We denote the \uline{$k$-th \v Cech cohomology group} by $\check H^k(\frU,\CS)$.
\end{definition}
We can remove the dependence on the cover $\frU$ by taking the direct limit over all covers of a manifold $M$. This yields the \v Cech cohomology groups $\check H^k(M,\CS)$.

If $\sA$ is an abelian Lie group, let $\underline{\sA}_M$ denote the
sheaf of smooth $\sA$-valued functions on a manifold $M$, and $\sA_M$
the sheaf of locally constant $\sA$-valued functions on
$M$. Evidently, $\underline{\RZ}_M=\RZ_M$.

\begin{example}
Consider an open cover $\frU=\bigsqcup_a\, U_a$ of a manifold $M$ and
$\underline{\sU(1)}_M$ the sheaf of smooth $\sU(1)$-valued functions
on $M$. Then a \v Cech 1-cocycle $f=(f_{ab})$ is a collection of maps
$f_{ab}: U_{ab}\rightarrow \sU(1)$ satisfying $f_{ab}\,
f_{bc}=f_{ac}$ on $U_{abc}$. Thus $f$ encodes a principal $\sU(1)$-bundle $P_f$ over $M$ subordinate to the cover $\frU$. Equivalent such principal $\sU(1)$-bundles are in the same \v Cech cohomology class in $\check H^1(\frU,\underline{\sU(1)}_M)$. The exponential map gives rise to the short exact sequence
\begin{equation}\label{eq:exp-map-sequence}
	0 \longrightarrow \RZ_M \longrightarrow \FR_M
        \xrightarrow{\exp(2\pi\, \di\, -)} \sU(1)_M \longrightarrow 0
\end{equation}
which induces a long exact cohomology sequence containing
\begin{equation}
 \cdots \longrightarrow \check H^1(\frU,\underline{\FR}_M) \longrightarrow \check H^1(\frU,\underline{\sU(1)}_M) \xrightarrow{~\sigma~} \check H^2(\frU,\underline{\RZ}_M) \longrightarrow \check H^2(\frU,\underline{\FR}_M) \longrightarrow \cdots
\end{equation}
Since the sheaf $\underline{\FR}_M$ of smooth real functions on $M$ is a fine sheaf and therefore the cohomology groups $\check H^k(\frU,\underline{\FR}_M)$ are trivial, the connecting homomorphism $\sigma$ is an isomorphism and we have $\check H^1(\frU,\underline{\sU(1)}_M)\cong \check H^2(\frU,\underline{\RZ}_M)$. 
By mapping $\check H^2(\frU,\RZ_M)$ into $\check H^2(\frU,\FR_M)$ and using the \v Cech-de Rham isomorphism, we obtain a 2-form in the de Rham cohomology group $H^2_{\rm dR}(M)$ representing the first Chern class\footnote{As explained in detail below, it represents the first Chern class modulo torsion.} of $P_f$.
\label{ex:U1bundle}\end{example}

To endow the principal $\sU(1)$-bundle of the example above with a connection, we need to generalise from \v Cech cohomology to Deligne cohomology.
\begin{definition}
 The \uline{Deligne complex in degree $n$} is the complex 
  \begin{equation}
  \CD^\bullet(n) =%
  \underline{\sU(1)}_M \xrightarrow {\dd \log}
  \di\, \underline{\Omega}^1_M \xrightarrow{ \ \dd \ } \cdots
  \xrightarrow{ \ \dd \ } \di\, \underline{\Omega}_M^n
  \end{equation}
  where $\di\, \underline{\Omega}_M^n$ is the sheaf of $\di \, \FR$-valued differential $n$-forms on $M$, and
  \begin{equation}
  \dd \log g = g^* \mu_{\sU(1)} = g^{-1} \, \dd g
  \end{equation}
  with $\mu_\sG$ the Maurer-Cartan form on a Lie group $\sG$. The set
  $C^k(M,\CD^\bullet(n))$ of \v Cech $k$-cochains for the sheaf
  $\CD^\bullet(n)$ gives the \uline{Deligne $k$-cochains in degree
    $n$}. The collection of sets $C^k(M,\CD^\bullet(n))$ forms a complex with differential $\delta_{\CD}$ given by the evident graded sum of the \v Cech differential and the differential in $\CD^\bullet(n)$. The resulting hypercohomology group $H^k(M,\CD^\bullet(n))$ is  the \uline{$k$-th Deligne cohomology group in degree $n$}.
\end{definition}
Our conventions differ from a perhaps more common but equivalent choice, cf.~\cite{Waldorf:2007aa}; another way of representing Deligne cohomology is found in \cite{0817647309}, which contains a degree shift compared to our conventions here.

\begin{proposition}[\cite{0817647309}]
	\label{st:Deligne_ex_seq_1}
	For a good open cover $\frU$, denote the composition of the projection of a Deligne $k$-cocycle $(g,\alpha_1,\ldots,\alpha_n) \in H^k(M,\CD^\bullet(n))$ onto its component $g \in \check{H}^k(\frU,\underline{\sU(1)}_M) \cong H^k(M,\underline{\sU(1)}_M)$ with the isomorphism $H^k(M,\underline{\sU(1)}_M) \rightarrow H^{k+1}(M,\RZ_M)$ by
	\begin{equation}
	 \pr_\CD: H^k(M,\CD^\bullet(n)) \longrightarrow H^{k+1}(M,\RZ_M)~.
	\end{equation}
	This map has the following properties:
	\begin{enumerate}
		\item It is an isomorphism for $k > n$.
		\item For $k = n$ there is an exact sequence of abelian groups
		\begin{equation}
		\label{eq:DD_exact_sequence}
		0 \longrightarrow 2\pi\, \di\, \Omega^n_{{\rm cl},\RZ}(M)
                \longrightarrow \di\, \Omega^n(M) \xrightarrow{ \ t^n \
                } H^n(M,\CD^\bullet(n)) \xrightarrow{\pr_\CD} H^{n+1}(M,\RZ_M) \longrightarrow 0
		\end{equation}
		where $\Omega^n_{{\rm cl},\RZ}(M)$ denotes the closed global $n$-forms on $M$ with integer periods.
	\end{enumerate}
\end{proposition}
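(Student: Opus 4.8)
The plan is to realise both statements as consequences of a single short exact sequence of complexes of sheaves on $M$ together with the exponential sequence \eqref{eq:exp-map-sequence}. Write $\CK^\bullet$ for the subcomplex of $\CD^\bullet(n)$ obtained by deleting the degree-zero term, that is the complex $\di\,\underline{\Omega}^1_M \xrightarrow{\dd}\cdots\xrightarrow{\dd}\di\,\underline{\Omega}^n_M$ concentrated in degrees $1,\ldots,n$. Projection onto the degree-zero component then gives a short exact sequence of complexes $0\to \CK^\bullet \to \CD^\bullet(n)\to \underline{\sU(1)}_M[0]\to 0$, where $\underline{\sU(1)}_M[0]$ denotes $\underline{\sU(1)}_M$ placed in degree zero. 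I would take the associated long exact sequence in hypercohomology. Since $\underline{\FR}_M$ is fine, the exponential sequence \eqref{eq:exp-map-sequence} (as in Example~\ref{ex:U1bundle}) yields $H^k(M,\underline{\sU(1)}_M)\cong H^{k+1}(M,\RZ_M)$ for all $k$, and under this isomorphism the map $H^k(M,\CD^\bullet(n))\to H^k(M,\underline{\sU(1)}_M)$ induced by the projection of complexes is exactly the map $\pr_\CD$ defined in the statement, as it just reads off the $\underline{\sU(1)}_M$-component $g$ of a \v Cech--Deligne cocycle.

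Next I would compute $H^\bullet(M,\CK^\bullet)$. Each sheaf $\di\,\underline{\Omega}^p_M$ is fine, hence acyclic, so the hypercohomology spectral sequence degenerates and $H^k(M,\CK^\bullet)$ is the cohomology of the complex of global sections $\di\,\Omega^1(M)\to\cdots\to\di\,\Omega^n(M)$ in degrees $1,\ldots,n$. In particular $H^k(M,\CK^\bullet)=0$ for $k>n$, while $H^n(M,\CK^\bullet)=\di\,\Omega^n(M)/\dd\big(\di\,\Omega^{n-1}(M)\big)$ because the outgoing differential at the top degree vanishes. For $k>n$ both flanking terms $H^k(M,\CK^\bullet)$ and $H^{k+1}(M,\CK^\bullet)$ are zero, so the projection is an isomorphism and statement~(1) follows at once. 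For $k=n$ the vanishing of $H^{n+1}(M,\CK^\bullet)$ forces $\pr_\CD$ to be surjective, which is exactness at $H^{n+1}(M,\RZ_M)$, and the long exact sequence identifies the image of $t^n\colon \di\,\Omega^n(M)\twoheadrightarrow H^n(M,\CK^\bullet)\to H^n(M,\CD^\bullet(n))$ with $\ker\pr_\CD$; here $t^n(\beta)$ is the class of the Deligne cocycle with trivial $\sU(1)$-part and top component $\beta$, giving exactness at $H^n(M,\CD^\bullet(n))$.

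It remains to compute $\ker t^n$, and this is the step I expect to be the main obstacle. A first observation is that the curvature $\dd\beta$ is a well-defined invariant of the class $t^n(\beta)$, so $t^n(\beta)=0$ already forces $\beta$ to be closed; hence $\ker t^n$ consists of closed forms. To pin it down exactly I would compute the connecting homomorphism $\delta\colon H^{n-1}(M,\underline{\sU(1)}_M)\to H^n(M,\CK^\bullet)$, whose image is the kernel of $H^n(M,\CK^\bullet)\to H^n(M,\CD^\bullet(n))$. Representing a class by a \v Cech cocycle $h$ with $\check\delta h=1$ on a good cover, $\delta$ sends it to the class of $\dd\log h$, a $\check\delta$-closed family of closed $1$-forms; running the \v Cech--de Rham zig-zag permitted by the exactness of \eqref{eq:fundamental_sequence} converts this into a global closed $n$-form $F$ whose de Rham class is the image of the integral class $\sigma([h])\in H^n(M,\RZ_M)$, so that $\tfrac{1}{2\pi\di}\,F$ has integer periods, and conversely every closed form with integer periods arises this way up to an exact form. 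Since exact forms lie in $\dd(\di\,\Omega^{n-1}(M))$ and trivially have integer periods, pulling $\im\delta$ back along the quotient $\di\,\Omega^n(M)\to H^n(M,\CK^\bullet)$ yields precisely $2\pi\,\di\,\Omega^n_{{\rm cl},\RZ}(M)$. This is $\ker t^n$, establishing exactness of \eqref{eq:DD_exact_sequence} at $\di\,\Omega^n(M)$ and completing the proof. The delicate points throughout are the bookkeeping of the $2\pi\,\di$ normalisation and the verification that the zig-zag representative genuinely computes the de Rham image of $\sigma([h])$.
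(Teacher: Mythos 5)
Your proposal is correct, but there is no in-paper proof to compare it against: the paper states this proposition with a citation to Brylinski's book \cite{0817647309} and never proves it, so your write-up in effect supplies the argument that the paper delegates to the reference, and it follows the standard route. Concretely, the truncation short exact sequence $0\to\CK^\bullet\to\CD^\bullet(n)\to\underline{\sU(1)}_M[0]\to 0$, the computation of $H^k(M,\CK^\bullet)$ from the global-section complex using fineness of the sheaves $\di\,\underline{\Omega}^p_M$ (so $H^k(M,\CK^\bullet)=0$ for $k>n$ and $H^n(M,\CK^\bullet)=\di\,\Omega^n(M)/\dd\big(\di\,\Omega^{n-1}(M)\big)$), and the resulting long exact sequence immediately give part (1) and exactness of \eqref{eq:DD_exact_sequence} at $H^n(M,\CD^\bullet(n))$ and at $H^{n+1}(M,\RZ_M)$; your identification of the projection of complexes with $\pr_\CD$ and of the composite $\di\,\Omega^n(M)\twoheadrightarrow H^n(M,\CK^\bullet)\to H^n(M,\CD^\bullet(n))$ with $t^n$ are both accurate. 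The step you flag as the main obstacle also goes through as you describe: writing $h=\exp(2\pi\,\di\,f)$ on the good cover gives $\delta[h]=[2\pi\,\di\,\dd f]$ with $\check\delta f$ an integral \v Cech cocycle representing $\sigma([h])$, and the \v Cech--de Rham zig-zag turns $\dd f$ into a global closed form whose de Rham class is the real image of $\sigma([h])$, hence with integer periods after dividing by $2\pi\,\di$; conversely, for $F\in\Omega^n_{{\rm cl},\RZ}(M)$ one picks an integral cocycle $c$ whose real image is $[F]_{\rm dR}$, solves $\check\delta f=c$ (possible since $\check H^n(\frU,\underline{\FR}_M)=0$ by fineness), and obtains $\delta[\exp(2\pi\,\di\,f)]=[2\pi\,\di\,F]$, so that the preimage of $\im\,\delta$ is exactly $2\pi\,\di\,\Omega^n_{{\rm cl},\RZ}(M)$ because exact forms have integer periods. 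Two minor remarks: the isomorphism $H^k(M,\underline{\sU(1)}_M)\cong H^{k+1}(M,\RZ_M)$ you invoke requires $k\geq 1$, so your argument for part (2) implicitly assumes $n\geq 1$ (harmless, since the paper only uses $n=1,2$); and your preliminary observation that $t^n(\beta)=0$ forces $\dd\beta=0$ via the curvature invariant is logically redundant, as closedness already follows from your description of the preimage of $\im\,\delta$.
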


That is, for $k > n$ all information encoded in the class $[g,\alpha_1,\ldots,\alpha_n]$ can be recovered from the class $\pr_\CD(g)$. For $k \leq n$, the highest degree form $\alpha_n$ is not closed.
Hence we can project to that part of a Deligne cocycle and apply the exterior derivative.
As the latter squares to zero, this map is well-defined on cohomology classes.
For the same reason and by the Deligne cocycle condition we have $\check{\delta}\, \dd \alpha_n = 0$.
As the \v{C}ech complex is a resolution, this implies that there
exists a unique element $\beta$ of $\di\, \Omega^{k+1}(M)$ such that $\dd \alpha_n = \check{\delta} \beta$.
This is called the \emph{curvature of the Deligne class}, and we have constructed a group homomorphism
\begin{equation}
\curv: H^k(M,\CD^\bullet(n)) \longrightarrow 2\pi\, \di\,
\Omega^{k+1}_{{\rm cl},\RZ}(M) \ .
\end{equation}

\begin{proposition}[\cite{Waldorf:2007aa}]
	\label{st:Deligne_ex_seq_2}
	The curvature $\curv: H^k(M,\CD^\bullet(n)) \rightarrow 2\pi\,
        \di\,
        \Omega^{k+1}_{{\rm cl},\RZ}(M)$ has the following properties:
	\begin{enumerate}
		\item It is an isomorphism for $0 \leq k < n$.
		\item For $k = n$ there is a short exact sequence of abelian groups
		\begin{equation}
		\label{eq:curv_exact_sequence}
		0 \longrightarrow H^n(M,\sU(1)) \longrightarrow
                H^n(M,\CD^\bullet(n)) \xrightarrow{\curv} 2\pi\, \di\,
                \Omega^{n+1}_{{\rm cl},\RZ}(M) \longrightarrow 0 
		\end{equation}
	\end{enumerate}
\end{proposition}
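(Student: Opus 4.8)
The plan is to derive this curvature sequence from its companion, the characteristic-class sequence of Proposition~\ref{st:Deligne_ex_seq_1}, together with the de Rham theorem. The conceptual point is that $\pr_\CD$ and $\curv$ are the two edge maps reading off complementary data of a Deligne class: $\pr_\CD$ remembers its integral (topological) class, while $\curv$ remembers a differential-form representative of the image of that class in real cohomology. Writing $c\colon H^{n+1}(M,\RZ)\to H^{n+1}_{\rm dR}(M)$ for the coefficient map induced by $\RZ_M\hookrightarrow\FR_M$, the bridge between the two propositions is the identity
\[
 [\curv(\xi)]_{\rm dR} \;=\; c\big(\pr_\CD(\xi)\big) \qquad\text{in } H^{n+1}_{\rm dR}(M)\ ,
\]
both sides being computed from the top component $\alpha_n$ of a Deligne representative; this is the familiar statement that the curvature represents the real characteristic class.

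For $k=n$ I would prove exactness in three steps. For surjectivity of $\curv$, take a closed $(n+1)$-form $\eta$ with integer periods; its de Rham class is integral, so $[\eta]_{\rm dR}=c(x)$ for some $x\in H^{n+1}(M,\RZ)$, and by surjectivity of $\pr_\CD$ in~\eqref{eq:DD_exact_sequence} we may write $x=\pr_\CD(\xi)$. The bridge identity gives $[\curv(\xi)]_{\rm dR}=[\eta]_{\rm dR}$, so $\eta-\curv(\xi)=\dd\beta$ is exact; since $\curv\circ t^n=\dd$ on $\di\,\Omega^n(M)$, the class $\xi+t^n(\beta)$ then has curvature exactly $\eta$. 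For the kernel, $\curv(\xi)=0$ forces the top form $\alpha_n$ of a representative to be closed, i.e.\ the representative to be flat; such classes are precisely the image of the locally constant sheaf under the chain map $\sU(1)_M[0]\hookrightarrow\CD^\bullet(n)$, which identifies $\ker\curv$ with $H^n(M,\sU(1))$.

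For $0\le k<n$ I would instead run the hypercohomology spectral sequence of the \v Cech--Deligne double complex $\check C^q(\frU,\CD^p(n))$ for a good cover $\frU$. Taking the internal Deligne differential first and invoking the Poincaré lemma, the cohomology sheaves of $\CD^\bullet(n)$ are $\sU(1)_M$ in degree $0$ and $\di\,\underline{\Omega}^{n+1}_{M,{\rm cl}}$ in degree $n$, and the surviving higher differentials encode the integrality of periods. In this range the double complex has essentially a single contributing edge, so the spectral sequence collapses and one reads off the asserted isomorphism directly; this is the more straightforward, degenerate regime compared with the boundary case $k=n$.

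The main obstacle is the kernel identification at $k=n$: one must show that the comparison map $H^n(M,\sU(1))\to H^n(M,\CD^\bullet(n))$ induced by $\sU(1)_M\hookrightarrow\underline{\sU(1)}_M$ is injective with image exactly $\ker\curv$. I expect this to require a careful interplay of three ingredients---the flat subcomplex $\sU(1)_M[0]\hookrightarrow\CD^\bullet(n)$, the subgroup $t^n\big(\di\,\Omega^n(M)\big)$ of form-classes, and the exact sequence~\eqref{eq:DD_exact_sequence}---combined with the de Rham theorem to match periods with the integral lattice in $H^{n+1}(M,\RZ)$, whose torsion is what distinguishes the group of closed integral forms from the bare topological class.
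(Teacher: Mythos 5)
The paper itself contains no proof of this proposition --- it is quoted verbatim (in fact, misquoted; see below) from Waldorf's thesis --- so your proposal has to stand on its own. The verdict is split.

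Your treatment of the critical degree $k=n$ is essentially sound. The surjectivity argument is correct: integrality of $[\eta]_{\rm dR}$ plus the de Rham theorem produces an integral class mapping to $[\eta]_{\rm dR}$, surjectivity of $\pr_\CD$ in \eqref{eq:DD_exact_sequence} lifts it to a Deligne class $\xi$, the bridge identity matches de Rham classes, and the correction $\xi + t^n(\beta)$ works because $\curv\circ t^n = \dd$ (the bridge identity itself still needs a routine \v Cech--de Rham zig-zag, which you should not wave at as ``familiar''). The kernel identification you rightly isolate as the main remaining work; the clean way to finish it --- and to get injectivity, the kernel, and even surjectivity in one stroke --- is not the three-ingredient bookkeeping you sketch, but the short exact sequence of complexes of sheaves
\begin{equation}
0 \longrightarrow \CD^\bullet_{\rm flat}(n) \longrightarrow \CD^\bullet(n) \xrightarrow{\ \dd\ } \di\,\underline{\Omega}^{n+1}_{M,{\rm cl}}[-n] \longrightarrow 0\ ,
\end{equation}
where $\CD^\bullet_{\rm flat}(n)$ is the subcomplex with the sheaf of \emph{closed} $n$-forms in top degree and $[-n]$ places the sheaf of closed $(n+1)$-forms in degree $n$. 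The Poincar\'e lemma shows $\CD^\bullet_{\rm flat}(n)$ is quasi-isomorphic to $\sU(1)_M$, and the hypercohomology long exact sequence then gives injectivity of $H^n(M,\sU(1)) \to H^n(M,\CD^\bullet(n))$, the equality $\ker\curv = H^n(M,\sU(1))$, and, after computing the connecting map into $H^{n+1}(M,\sU(1))$ with the de Rham theorem, the image of $\curv$ as exactly the integral forms.

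For $0\le k<n$, however, your argument has a genuine and fatal gap: the step ``the spectral sequence collapses and one reads off the asserted isomorphism directly'' is false, and your own method disproves the assertion. For $k<n$ the Deligne cocycle condition contains the component $\dd\alpha_k = 0$ in $\check C^0(\frU,\di\,\underline{\Omega}^{k+1}_M)$ --- this sheaf belongs to the complex precisely because $k+1\le n$, and no \v Cech term can cancel it --- so the top form of every cocycle is closed and $\curv$ vanishes \emph{identically} on $H^k(M,\CD^\bullet(n))$. Hence $\curv$ cannot be an isomorphism onto $2\pi\di\,\Omega^{k+1}_{{\rm cl},\RZ}(M)$ unless both groups vanish: already for $k=0$, $n=1$ and $M$ a point one has $H^0(M,\CD^\bullet(1)) \cong \sU(1)$ while the target is $0$, and for $M=\FR$ the target $2\pi\di\,\Omega^1_{{\rm cl},\RZ}(\FR) = 2\pi\di\,\dd C^\infty(\FR)$ is an infinite-dimensional vector space while the source is still $\sU(1)$. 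Your spectral sequence, run with the cohomology sheaves you correctly identified ($\sU(1)_M$ in degree $0$, closed $(n+1)$-forms in degree $n$, nothing in between), has for total degree $k<n$ the single contributing entry $E_2^{0,k} = H^k(M,\sU(1))$; what it proves is Waldorf's actual statement, namely that for $k<n$ every Deligne class is flat, i.e.\ the natural map $H^k(M,\sU(1)) \to H^k(M,\CD^\bullet(n))$ is an isomorphism and $\curv = 0$. Part (1) as printed in the paper is a garbled transcription of that result. So the honest conclusion of your computation should have been a correction of the statement, not a claimed proof of it.
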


The images of the maps $\pr_\CD$ and $\curv$ are closely related.
From the short exact sequence \eqref{eq:exp-map-sequence}, we obtain
homomorphisms $H^{n+1}(M,\RZ_M) \rightarrow H^{n+1}(M,\FR) \cong
H^{n+1}_{\mathrm{dR}}(M)$. Their images now agree with the image of
the curvature of the Deligne class under $-\di: \di\,
\Omega^{n+1}_{{\rm cl},\RZ}(M) \rightarrow H^{n+1}_{\mathrm{dR}}(M)$. The representation in de Rham cohomology, however, loses information since 
\begin{equation}
	\frac{\Omega^{k+1}_{{\rm cl},\RZ}(M)}{\dd \Omega^{k}(M)} \cong \frac{H^{k+1}(M,\RZ_M)}{\mathrm{Tor}(H^{k+1}(M,\RZ_M))}
\end{equation}
and the kernel of the homomorphism from integer to real cohomology is precisely the torsion subgroup of $H^{k+1}(M,\RZ_M)$.

\begin{corollary}
	\begin{enumerate}
		\item The curvature $\curv(\alpha)$ of $\alpha \in C^n(M,\CD^\bullet(n))$ is exact if and only if $\pr_\CD(\alpha)$ is torsion.
		\item If $[\alpha] \in
                  \mathrm{Tor}(H^n(M,\CD^\bullet(n)))$, then its
                  curvature $n+1$-form is exact.
	\end{enumerate}
\end{corollary}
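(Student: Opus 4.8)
The plan is to treat both parts as formal consequences of the compatibility between $\pr_\CD$ and $\curv$ through de Rham cohomology established in the paragraph preceding the statement, together with the identification of the torsion subgroup of $H^{n+1}(M,\RZ_M)$ as the kernel of the coefficient homomorphism $H^{n+1}(M,\RZ_M)\to H^{n+1}(M,\FR)\cong H^{n+1}_{\mathrm{dR}}(M)$. Since $\curv$ and $\pr_\CD$ are group homomorphisms on $H^n(M,\CD^\bullet(n))$, I would first regard $\alpha$ as a Deligne $n$-cocycle and pass to its class $[\alpha]$, on which both maps are defined.

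For the first part I would argue as follows. The curvature $\curv([\alpha])$ is by construction a closed form valued in $2\pi\,\di\,\Omega^{n+1}_{{\rm cl},\RZ}(M)$, so it is exact precisely when its class in $H^{n+1}_{\mathrm{dR}}(M)$ vanishes. By the compatibility recalled above, this de Rham class coincides---up to the nonzero scalar factors introduced by $-\di$ and $2\pi$---with the image of $\pr_\CD([\alpha])$ under the coefficient homomorphism $H^{n+1}(M,\RZ_M)\to H^{n+1}_{\mathrm{dR}}(M)$. Hence $\curv([\alpha])$ is exact if and only if $\pr_\CD([\alpha])$ lies in the kernel of this homomorphism, which by the displayed isomorphism in the preceding discussion is exactly $\mathrm{Tor}\big(H^{n+1}(M,\RZ_M)\big)$. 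This gives the claimed equivalence, both directions following from the same chain of equivalences read in the two possible orders.

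For the second part I would deduce it directly from the first. If $[\alpha]\in\mathrm{Tor}\big(H^n(M,\CD^\bullet(n))\big)$, say $m\,[\alpha]=0$ for some positive integer $m$, then since $\pr_\CD$ is a group homomorphism we obtain $m\,\pr_\CD([\alpha])=\pr_\CD(m\,[\alpha])=0$, so $\pr_\CD([\alpha])$ is a torsion class in $H^{n+1}(M,\RZ_M)$; part~(1) then immediately yields that $\curv([\alpha])$ is exact. As an aside, one even gets the stronger fact that the curvature vanishes outright: $\curv$ is a homomorphism into $2\pi\,\di\,\Omega^{n+1}_{{\rm cl},\RZ}(M)$, which as a subgroup of a real vector space is torsion-free, so $m\,\curv([\alpha])=0$ already forces $\curv([\alpha])=0$.

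Being essentially a diagram chase, this argument has no single hard step; the only point requiring care is the bookkeeping of the normalisation constants ($2\pi$ and $\di$) when transporting $\curv([\alpha])$ and $\pr_\CD([\alpha])$ into $H^{n+1}_{\mathrm{dR}}(M)$, and confirming that the comparison is made in the correct degree $n+1$ so that the kernel-equals-torsion identification is precisely the one displayed above. Since exactness is insensitive to multiplication by nonzero scalars, these constants do not affect the conclusion.
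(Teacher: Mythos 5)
Your proof is correct and follows essentially the same route as the paper, which presents the corollary as an immediate consequence of the preceding discussion: the de Rham class of $\curv([\alpha])$ agrees up to nonzero constants with the image of $\pr_\CD([\alpha])$ under $H^{n+1}(M,\RZ_M)\to H^{n+1}_{\mathrm{dR}}(M)$, whose kernel is precisely $\mathrm{Tor}(H^{n+1}(M,\RZ_M))$, and part~(2) then follows because $\pr_\CD$ is a group homomorphism. Your aside is also a valid strengthening of the paper's statement: since $\curv$ is a homomorphism into $2\pi\,\di\,\Omega^{n+1}_{{\rm cl},\RZ}(M)$, which is torsion-free as a subgroup of a vector space of forms, a torsion Deligne class in fact has vanishing, not merely exact, curvature.
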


\begin{example}
 Equivalent principal $\sU(1)$-bundles with connection are now described by the first Deligne cohomology group in degree $1$. A representative for this class is a Deligne 1-cocycle in degree 1,
 \begin{equation}
  (f_{ab},A_{a}) \ \in \ C^1(\frU,\CD^\bullet(1))\cong
  C^1(\frU ,\underline{\sU(1)}_M)\oplus C^0(\frU ,
  \di\, \underline{\Omega}_M^1) \ ,
 \end{equation}
 such that the cocycle relations
 \begin{equation}
  f_{ab}\, f_{bc}\, f_{ca}= 1\eon U_{abc} \eand A_a-A_b =
  \dd \log f_{ab} \eon U_{ab}
 \end{equation}
  are fulfilled. The equivalence relations, also known as gauge transformations, are
 described by Deligne 1-coboundaries: A coboundary $(g_a)\in \CC^0(\frU,\underline{\sU(1)}_M)$ between two Deligne 1-cocycles $(f_{ab},A_a)$ and $(\tilde f_{ab},\tilde A_a)$ satisfies
\begin{equation}
\tilde f_{ab}=g^{-1}_a\, f_{ab}\, g_b \eand 
 \tilde A_{a}=A_{a}+ \dd\log g_a~.
\end{equation}
We thus have a \v Cech 1-cocycle $(f_{ab})$ defining smooth transition
functions of a principal $\sU(1)$-bundle on $M$ together with local connection 1-forms $A_a$ on the patches $U_a$: $\nabla|_{U_a}=\dd+A_a$.

The projection $\pr_\CD(f_{ab},A_a)$ yields the first Chern class of
the principal $\sU(1)$-bundle, while $\curv$ gives its curvature
$F_\nabla$ which is a global 2-form on $M$. For non-vanishing torsion, the
curvature 2-form $F_\nabla$ is not sufficient to describe the full first
Chern class, as then there can be non-trivial flat bundles; an example of such a case is the canonical bundle on an Enriques surface.
\end{example}

\begin{remark} 
 It is now natural to seek similar interpretations for the $n$-th
 Deligne cohomology groups in degree $n$. They define geometric
 objects that one might call local Hitchin-Chatterjee $(n-1)$-gerbes
 or certain kinds of abelian principal $n$-bundles. We will come back
 to them in Section~\ref{sect:U(1)-bundle-gerbes}.
\label{rem:higherDeligne}\end{remark}

\subsection{Lie 2-groupoids and Lie 2-groups\label{sec:Lie2groups}}

The notion of a Lie groupoid is now readily categorified. A
particularly elegant and general way of doing this is to use Kan
simplicial manifolds. Here, however, we are merely interested in Lie 2-groupoids, which are straightforwardly defined. We can categorify using either strict 2-categories or bicategories, leading to strict and weak 2-groupoids.

\begin{definition}
 A \uline{(strict) 2-groupoid} is a (strict) 2-category in which both 1- and 2-mor\-phisms are (strictly) invertible.
\end{definition}
One readily internalises this definition in the strict 2-category
$\CatMfdCat$ whose objects are categories internal to $\CatMfd$, 1-morphisms are functors in $\CatMfd$ and 2-morphisms are natural transformations in $\CatMfd$.
\begin{definition}
 A \uline{(strict) Lie 2-groupoid} is a (strict) 2-groupoid internal to $\CatMfdCat$.
\end{definition}
Just as any set $X$ can be lifted to a discrete category $X\rightrightarrows X$ with all morphisms being identities, any Lie groupoid $\CCG_1\rightrightarrows \CCG_0$ can be lifted to a strict Lie 2-groupoid $\CCG_1\rightrightarrows \CCG_1\rightrightarrows \CCG_0$ with all 2-morphisms being identities. In particular, we thus obtain the \v Cech 2-groupoid $\check \CCC(Y\thra M)$ of a surjective submersion.

A Lie group $\sG$ is encoded in the morphisms of its delooping, which is the one object Lie groupoid $\sB\sG=(\sG\rightrightarrows *)$. Analogously, we can define Lie 2-groups via their delooping.
\begin{definition}
 A \uline{(strict) Lie 2-group} is the monoidal category of morphisms of a (strict) Lie 2-groupoid with one object.
\end{definition}
Equivalently, a strict Lie 2-group is a category internal to the category of Lie groups $\CatLieGrp$ or a group object internal to $\CatMfdCat$. An important example of a strict Lie 2-group is $\sB\sU(1)=(\sU(1)\rightrightarrows *)$ with delooping 
\begin{equation}
 \sB\sB\sU(1)=(\sU(1)\rightrightarrows *\rightrightarrows *)~.
\end{equation}

Applying the tangent functor to a Lie 2-group, we obtain a linear category which is part of the structure of a {\em Lie 2-algebra}. The notion of a Lie 2-algebra is a vertical categorification of that of a Lie algebra; that is, a Lie 2-algebra is a linear category endowed with a functorial bilinear operation, called a \emph{Lie bracket functor}, which is skew-symmetric and satisfies the Jacobi identity up to natural transformations, called respectively {\em alternator} and {\em Jacobiator}, that in turn fulfill certain coherence laws \cite{Roytenberg:0712.3461}. We are exclusively interested in {\em semistrict Lie 2-algebras}, for which the alternator is trivial. A full Lie functor, taking a Lie 2-group to a semistrict Lie 2-algebra (even Lie $\infty$-groupoids to Lie $\infty$-algebroids) was given by \v Severa~\cite{Severa:2006aa}. 

It will also prove useful for us to note that strict 2-groups are categorically equivalent to crossed modules of Lie groups \cite{Baez:0307200}.
\begin{definition}\label{def:crossed_module_Lie_groups}
 A \uline{crossed module of Lie groups} is a pair of Lie groups $\sG$
 and $\sH$, together with an action $\triangleright:\sG\times
 \sH\rightarrow \sH$ as automorphisms and a group homomorphism
 $\theta:\sH\to \sG$ such that
 \begin{equation}
  \begin{aligned}
    \theta(g\triangleright h)=g \, \theta(h) \,
    g^{-1}\eand\theta(h_1)\triangleright h_2 = h_1\, h_2\, h_1^{-1}~.
  \end{aligned}
 \end{equation}
\end{definition}
In particular, given a crossed module of Lie groups
$\sH\xrightarrow{ \ \theta \ }\sG$, the corresponding strict Lie 2-group
reads as $\sG\ltimes \sH\rightrightarrows \sG$ with $\sfs(g,h)=g$,
$\sft(g,h)=\theta(h)\, g$ and $\id_g=(g,\unit)$. The horizontal composition is given by the group multiplication on the semidirect product $\sG\ltimes \sH$ and the vertical composition is the group multiplication on $\sH$.

Applying the tangent functor to a crossed module of Lie groups $\sH\xrightarrow{\ \theta \ } \sG$, we
obtain a {\em crossed module of Lie algebras}
$\theta:\sLie(\sH)\rightarrow \sLie(\sG)$ with an action
$\triangleright$ satisfying obvious relations.\footnote{With a slight
abuse of notation, here we use the same notation for $\theta$, $\triangleright$ and their differentials.} This crossed module of Lie algebras is categorically equivalent to a semistrict Lie 2-algebra \cite{Baez:2003aa}, and can also be obtained by applying \v Severa's Lie functor \cite{Severa:2006aa} to the strict Lie 2-group corresponding to $\sH\xrightarrow{\ \theta \ } \sG$.

\subsection{2-vector spaces}\label{ssec:2-vector-spaces}

One of the crucial ingredients in our later discussion will be 2-vector spaces. Let us therefore also discuss their definition in appropriate detail.

A complex vector space is a $\FC$-module, and therefore a 2-vector
space should be a module category for some other category $\CCC$,
where $\CCC$ is a categorified ground field taking over the role of $\FC$. From our previous discussion, one might be tempted to use the category $\FC\rightrightarrows *$ as a categorified ground field, but our later constructions will motivate the use of $\CatVect$, which is also in line with more general principles of categorification.

In regarding a complex vector space as a $\FC$-module, we demand compatibility with addition and multiplication in $\FC$. On a categorified ground field such as $\CatVect$, we cannot expect the inverses of addition and multiplication to exist. We therefore focus on the following type of categories.
\begin{definition}
A category $\CCR$ with two monoidal structures $\otimes$ and $\oplus$, where $\otimes$ distributes over $\oplus$, is  a \uline{rig category}.
\end{definition}

The suitable notion of module category for rig categories generalises notions of left-module categories as e.g.\ in \cite{Douglas:2013aea}.
\begin{definition}
Given a rig category $\CCR$, a \uline{left $\CCR$-module category} is a monoidal category $(\CCV,\oplus_\CCV)$ together with a functor $\rho: \CCR \times \CCV \rightarrow \CCV$ and natural transformations capturing distributivity, associativity and unitality of $\rho$,
\begin{equation}
\begin{aligned}
\sfd^{\rm l}_{U,\CV,\CW}&: \rho (U, \CV \oplus_{\CCV} \CW) \isom \rho(U, \CV) \oplus_{\CCV} \rho(U, \CW)~,\\[4pt]
\sfd^{\rm r}_{U,V,\CW}&: \rho(U \oplus_{\CCR} V, \CW) \isom \rho(U, \CW) \oplus_{\CCV} \rho(V,\CW)~,\\[4pt]
\sfa_{U,V,\CW}&: \rho(U \otimes_\CCR V, \CW) \isom \rho \big( U, \rho(V,\CW) \big)~,\\[4pt]
\sfl_\CW&: \rho(1_\CCR,\CW) \isom \CW~.
\end{aligned}
\end{equation}
We demand that $\rho(0_\CCR,\CW) = \rho(V,0_\CCV) = 0_\CCV$ and that the obvious coherence axioms are satisfied.
\end{definition}
\noindent In particular, $\CCR$ is trivially a module category over itself.

This directly yields the definition of the 2-category of 2-vector spaces.
\begin{definition}\label{def:2-vector_space}
 Given a rig category $\CCR$, a \uline{2-vector space} is a left $\CCR$-module category. The weak 2-category {$2\CatVect_\CCR$} consists of left $\CCR$-module categories, functors weakly commuting with the module actions between them and natural transformations between those.
\end{definition}

A straightforward yet important example is the following.
\begin{example}\label{ex:Baez-Crans-2-vector-spaces}
The simplest possible choice for a rig category is the discrete
category $\FC\rightrightarrows \FC$ with obvious monoidal
structures. The resulting 2-vector spaces
$2\CatVect_{\FC\rightrightarrows\FC}$ consist simply of categories
having complex vector spaces as sets of objects and morphisms. These
are categories internal to $\CatVect$, and therefore they are precisely
the 2-vector spaces of Baez and Crans \cite{Baez:2003aa}. {Baez-Crans
  2-vector spaces} are sufficient to describe a theory of semistrict
Lie $2$-algebras as defined in Section~\ref{sec:Lie2groups}, but they are rather restrictive in other regards. We
shall encounter them again in our discussion of the Lie 2-algebra of
observables on a 2-plectic manifold that we undertake in
Section~\ref{ssec:multisymplectic}. 
\end{example}

The next obvious candidate for a categorified ground field is $\CatVect$. By themselves, the resulting left $\CatVect$-module categories are awkward, but one can consider left $\CatVect$-module categories with basis \cite{Schreiber:2009:357-401}. To understand what a basis of a 2-vector space might be, recall that a vector space $V$ can be identified with the set of homomorphisms from a basis $B$ to $\FC$: $V\cong \shom_{\CatSet}(B,\FC)$. Analogously, we define the following.
\begin{definition}
 A \uline{2-basis of a 2-vector space} $\CCV$ is a category $\CCB$ such that $\CCV$ is equivalent to $\shom_{\CatCat}(\CCB,\CatVect)$.
\end{definition}
Just as in the case of vector spaces, we cannot possibly expect a 2-vector space to come with a unique 2-basis. We shall focus on 2-bases given by $\CatVect$-enriched categories $\CCB$ such that the 2-vector spaces are in fact algebroid modules. Since $\CatVect$ itself is enriched in $\CatVect$, we replace $\shom_{\CatCat}(\CCB,\CatVect)$ with the corresponding functor enriched in $\CatVect$. In these cases, we have the following statement.
\begin{proposition}
Two 2-vector spaces with bases $\CCB_1$ and $\CCB_2$ are equivalent if their bases are Morita equivalent, cf.\ Section~\ref{ssec:groupoids}.
\end{proposition}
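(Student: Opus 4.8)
The plan is to reduce the statement to the categorified Morita theorem: Morita equivalent algebroids have equivalent module categories. First I would unwind the definitions. With a $\CatVect$-enriched $2$-basis $\CCB$, the $2$-vector space $\CCV \simeq \shom_{\CatCat}(\CCB,\CatVect)$ is exactly the category of $\CatVect$-enriched functors $\CCB \to \CatVect$, i.e.\ the category of (left) modules over the algebroid $\CCB$. Under the restriction of $\CatBibun$ to linear groupoids described at the end of Section~\ref{ssec:groupoids}, which yields $\CatBimod$, Morita equivalence of $\CCB_1$ and $\CCB_2$ means there is a bibundle equivalence between them; in the linear setting this is an invertible bimodule $B$ together with an inverse bimodule $B\dual$ satisfying
\begin{equation}
 B \htimes B\dual \cong \CCB_1 \eand B\dual \htimes B \cong \CCB_2
\end{equation}
as bimodules, where $\htimes$ is bibundle composition and $\CCB_i$ denotes the identity bibundle, namely the algebroid regarded as a bimodule over itself via its hom-functor.

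Next I would produce the candidate equivalence by tensoring with the invertible bimodule, setting $\Phi := B\dual \htimes (-) \colon \shom_{\CatCat}(\CCB_1,\CatVect) \to \shom_{\CatCat}(\CCB_2,\CatVect)$ and $\Psi := B \htimes (-)$ in the opposite direction. The plan is then to show $\Phi$ and $\Psi$ are mutually inverse equivalences: for a $\CCB_1$-module $\CF$,
\begin{equation}
 \Psi\big(\Phi(\CF)\big) = B \htimes \big(B\dual \htimes \CF\big) \cong \big(B\htimes B\dual\big)\htimes \CF \cong \CCB_1 \htimes \CF \cong \CF~,
\end{equation}
invoking associativity of $\htimes$, the Morita relation, and the unit law, with the symmetric computation giving $\Phi\circ\Psi \cong \id$.

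The main work, and the step I expect to be the principal obstacle, is to establish these structural isomorphisms rigorously in the $\CatVect$-enriched context. Bibundle composition $\htimes$ is defined in Section~\ref{ssec:groupoids} as a quotient by the diagonal groupoid action; in the linear case this quotient is the coequalizer computing the tensor product of bimodules, that is, an enriched coend over the relevant algebroid. Associativity of $\htimes$ then becomes the Fubini theorem for coends, while the unit law $\CCB_i \htimes \CF \cong \CF$ is the enriched co-Yoneda (density) isomorphism. One must check that all of these are natural in $\CF$ and fit together coherently; this is routine coend bookkeeping, but it is where essentially all of the verification lies. Finally, to upgrade the equivalence of categories to an equivalence of $2$-vector spaces in $2\CatVect_{\CatVect}$, I would observe that the external $\CatVect$-action on $\shom_{\CatCat}(\CCB_i,\CatVect)$ is pointwise tensoring by a vector space, which commutes with the coend defining $B\htimes(-)$; hence $\Phi$ and $\Psi$ are canonically module functors and the isomorphisms above are module natural transformations, so $\CCV_1$ and $\CCV_2$ are indeed equivalent as $2$-vector spaces.
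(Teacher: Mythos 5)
Your proposal is correct and follows the same route as the paper: the paper's entire proof is the one-line remark that Morita equivalent algebroids have equivalent module categories, and your argument simply supplies the standard proof of that fact by tensoring with the invertible bimodule and verifying the unit and associativity isomorphisms via co-Yoneda and Fubini for coends. Your additional check that the $\CatVect$-module structure is preserved is a detail the paper leaves implicit, but it is part of the same standard argument rather than a different approach.
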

This is a direct consequence of the fact that two algebroids have equivalent module categories if they are Morita equivalent.

Consider now the discrete category $\CCB_X=(X\rightrightarrows X)$ for some set $X$, which can be enriched in $\CatVect$ by adding the ground field $\FC$ over each morphism.
\begin{definition}
 The \uline{free 2-vector space $\CCV_X$ over $X$} is the 2-vector space with 2-basis $\CCB_X$,
 \begin{equation}
  \CCV_X:=\shom_{\CatCat}(\CCB_X,\CatVect)~.
 \end{equation}
\end{definition}
Explicitly, $\CCV_X$ consists of $X$-indexed families of vector spaces and $X$-indexed families of morphisms between vector spaces. A further specialisation is now the following.
\begin{definition}[cf.\ \cite{kapranov19942}]
 A \uline{Kapranov-Voevodsky 2-vector space} is a free 2-vector space over an $n$-element set $X$. 
 We have
 \begin{equation}
  \shom_{\CatCat}(\CCB_X,\CatVect)\cong \CatVect^n~.
 \end{equation}
 We call $n$ the \uline{2-rank} of $\CatVect^n$.
\end{definition}
An equivalence $\CCV \cong \CatVect^n$ carries over to $\CCV$ additional properties stemming from those of $\CatVect$, like for instance that $\CCV$ is a semisimple abelian category (cf.\ also~\cite{Baez:9609018}) where binary limits and colimits coincide with the additive monoidal structure.

We shall be mostly interested in free 2-vector spaces over sets. Before continuing, however, let us briefly comment on the more general case \cite{Schreiber:2009:357-401,Schreiber:2008aa}. A $\CatVect$-enriched 2-basis $\CCB=(\CCB_1\rightrightarrows \CCB_0)$ is in fact an algebroid and we regard them as objects in a weak 2-category of algebroids, bibundles and bibundle morphisms. This weak 2-category can be regarded as a linearised version of $\CatBibun$. Restricting to 2-bases with a single object, we obtain the weak 2-category $\CatBimod$ of algebras, bimodules and bimodule morphisms. This gives rise to the following result.
\begin{proposition}\label{prop:representation_functor}
 The $\CatVect$-enriched functor $\shom_{\CatCat}(-,\CatVect)$ embeds the weak 2-cate\-gory $\CatBimod$ into $2\CatVect_\CatVect$. In particular, an algebra $\fra$ is mapped to the category of (left-) $\fra$-modules, which is a left $\CatVect$-module category. A bimodule acts on a right $\fra$-module by right multiplication, which also induces the action of bimodule morphisms.
\end{proposition}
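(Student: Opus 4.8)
The plan is to realise $\shom_{\CatCat}(-,\CatVect)$ as a normalised pseudofunctor $\Phi:\CatBimod\to 2\CatVect_\CatVect$ in the sense of the definition of normalised pseudofunctor above, and then to show that it is injective on objects and \emph{locally fully faithful}, i.e.\ that each induced functor on hom-categories is fully faithful; this is precisely what it means for $\CatBimod$ to embed into $2\CatVect_\CatVect$. First I would fix the data on objects. Viewing a unital algebra $\fra$ as a one-object $\CatVect$-enriched category with hom-object $\fra$, a $\CatVect$-functor $\fra\to\CatVect$ is the choice of a vector space $M$ together with an algebra homomorphism $\fra\to\End(M)$, that is, exactly a left $\fra$-module; hence $\Phi_0(\fra)=\shom_{\CatCat}(\fra,\CatVect)\cong\fra\text{-Mod}$. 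I would then equip $\fra\text{-Mod}$ with the left $\CatVect$-module structure $\rho(W,M)=W\otimes M$, the tensor product over $\FC$ with $\fra$ acting on $M$; the structure isomorphisms $\sfd^{\rm l},\sfd^{\rm r},\sfa,\sfl$ required of a left $\CatVect$-module category are the evident distributivity, associativity and unit constraints of $\otimes$ over the direct sum, so that $\Phi_0(\fra)$ is genuinely a $2$-vector space in the sense of Definition~\ref{def:2-vector_space}.

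Next I would define $\Phi$ on higher cells. A $1$-morphism $B:\fra\to\mathfrak{b}$ in $\CatBimod$ is a bimodule, and I would set $\Phi_1(B)=B\otimes_\fra(-):\fra\text{-Mod}\to\mathfrak{b}\text{-Mod}$, where the relative tensor product pairs the right $\fra$-action of $B$ with the given left module; this is the action by right multiplication named in the statement. A bimodule morphism $\phi:B\to B'$ is sent to $\Phi_1(\phi)=\phi\otimes_\fra(-)$. The coherence datum $\Phi_2$ is supplied by the canonical associativity isomorphism $(B'\otimes_\mathfrak{b}B)\otimes_\fra M\isom B'\otimes_\mathfrak{b}(B\otimes_\fra M)$ of the relative tensor product for $\fra,\mathfrak{b},\frc$, whose naturality and compatibility with a further composite reproduce exactly the pentagon in the definition of a normalised pseudofunctor; normalisation holds because the identity bimodule $\fra$ gives $\fra\otimes_\fra M\cong M$, i.e.\ the identity functor up to canonical isomorphism. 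I would also verify that each $\Phi_1(B)$ is a morphism of $\CatVect$-module categories: the required weak commutativity $B\otimes_\fra(W\otimes M)\isom W\otimes(B\otimes_\fra M)$ is just $\FC$-bilinearity of $\otimes_\fra$, and $\Phi_1(\phi)$ is manifestly a module natural transformation.

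The crux is local full faithfulness, namely that for fixed $\fra,\mathfrak{b}$ the assignment $B\mapsto B\otimes_\fra(-)$, $\phi\mapsto\phi\otimes_\fra(-)$ is fully faithful from the bimodule category $\CatBimod(\fra,\mathfrak{b})$ to the category of $\CatVect$-module functors $\fra\text{-Mod}\to\mathfrak{b}\text{-Mod}$ with module natural transformations. Here I would invoke the Eilenberg--Watts theorem through evaluation at the free rank-one module $\fra$: this sends $B\otimes_\fra(-)$ back to $B\otimes_\fra\fra\cong B$ with its bimodule structure, and a module natural transformation $\eta:B\otimes_\fra(-)\Rightarrow B'\otimes_\fra(-)$ to its component $\eta_\fra:B\to B'$. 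Faithfulness is immediate, since $\phi\otimes_\fra(-)$ recovers $\phi$ upon this evaluation, and fullness follows because naturality of $\eta$ together with the $\CatVect$-module constraints forces $\eta_\fra$ to be a bimodule homomorphism and determines $\eta$ as $\eta_\fra\otimes_\fra(-)$ on all modules. Injectivity on objects is clear as a set-level assignment; note that it is only an embedding onto a sub-2-category, not an essential surjection, consistent with the fact that Morita-equivalent algebras yield equivalent $2$-vector spaces as recorded just above.

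I expect the main obstacle to lie in this last step, specifically in the bookkeeping around the $2$-morphisms of $2\CatVect_\CatVect$: one must pin down that a $2$-morphism there is a \emph{module} natural transformation weakly commuting with $\rho$, and then check that under evaluation at $\fra$ such a transformation is automatically both left $\fra$-linear and right $\mathfrak{b}$-linear, so that the Eilenberg--Watts correspondence restricts to honest bimodule maps rather than merely $\FC$-linear maps. By contrast, the coherence verification for $\Phi_2$ is routine once the associativity constraint for $\otimes_\fra$ is in hand, though it should be written out carefully so as to match the pentagon appearing in the definition of a normalised pseudofunctor.
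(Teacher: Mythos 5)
Your proof is correct, but there is nothing in the paper to compare it against: Proposition~\ref{prop:representation_functor} is stated without proof, as a consequence of the surrounding module-category framework going back to \cite{Schreiber:2009:357-401,Schreiber:2008aa}. Your blind argument supplies exactly the standard justification, namely the Eilenberg--Watts mechanism: identifying $\shom_{\CatCat}(\fra,\CatVect)$ with $\fra$-modules, sending a bimodule $B$ to the relative tensor functor and $\Phi_2$ to the associativity constraint of $\otimes_\fra$, and proving local full faithfulness by evaluating at the free rank-one module $\fra$. This is sound, and your identification of the delicate point --- that a $2$-morphism in $2\CatVect_\CatVect$ must be checked to correspond under evaluation to an honest bimodule map --- is the right place to be careful. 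Three small remarks. First, naturality alone already does this job: left $\frb$-linearity of $\eta_\fra$ is automatic because the components of $\eta$ live in $\frb\text{-Mod}$, and right $\fra$-linearity follows from naturality against the right-multiplication maps $r_a:\fra\to\fra$, so the appeal to the $\CatVect$-module constraints is not actually needed there (they matter only for the weak commutativity data of $\Phi_1(B)$ itself). Second, your convention (left modules, $B\otimes_\fra(-)$) is the mirror of the paper's phrasing, which speaks of right $\fra$-modules acted on by right multiplication; either convention works, but you should fix one and note the dictionary, since the paper's own statement mixes ``(left-) $\fra$-modules'' with a right-module action. Third, your normalisation claim is only an isomorphism $\fra\otimes_\fra M\cong M$, not an equality; this is harmless here because $\CatBimod$ is itself a weak $2$-category whose unit $1$-cells are only weakly unital, but strictly speaking the paper's notion of normalised pseudofunctor is formulated for strict $2$-categories, so one should say the unit constraint is absorbed into the unitors of $\CatBimod$ rather than claim normalisation on the nose.
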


We will not go further into the theory of 2-vector spaces and the
2-category they form, and close the present discussion with the notion of the dual of a 2-vector space.
\begin{definition}
	The \uline{dual 2-vector space of a 2-vector space} $\CCV$ is given by
	\begin{equation}
	\CCV^* = \shom_{2\CatVect_\CatVect}(\CCV,\CatVect) \ .
	\end{equation}
\end{definition}

\subsection{Lie 2-group representations}\label{ssec:lie2grp_reps}

As an immediate application of 2-vector spaces, let us consider some examples of representations of Lie 2-groups on 2-vector spaces. The general discussion is very involved and goes far beyond the scope of this paper, see e.g.\  \cite{Barrett:2004zb,Elgueta:2007:53-92,Bartlett:2008ji,Freidel:2012:0-0}.

A representation of a Lie group $\sG$ is simply a
functor from $\sB\sG$ to $\CatVect$. Accordingly, the higher version
of this notion is as follows.
\begin{definition}
 A \uline{2-representation of a strict Lie 2-group} $\CCG$ is a normalised pseudofunctor from the Lie 2-groupoid $\sB\CCG$ to $2\CatVect_{\CatVect}$.
\end{definition}
Again, since the weak 2-category $2\CatVect_{\CatVect}$ is rather
inconvenient, we have to restrict to meaningful 2-vector spaces. We
are particularly interested in the categorified version of the
fundamental representation of $\sB\sU(1)$. For the Lie group $\sU(1)$,
the relevant vector space is $\shom_{\CatSet}(*,\FC)\cong \FC$. There
is a canonical representation of strict Lie 2-groups, which is given
by the following.
\begin{proposition}[\cite{Schreiber:2009:357-401}]
\label{prop:canonical_rep}
 Given a crossed module of Lie groups $\sH\xrightarrow{ \ \theta \ }\sG$ and
 a representation $\varrho$ of $\sH$ such that the action of $\sG$ on
 $\sH$ extends to algebra automorphisms of the representation algebra
 $\langle \varrho(\sH)\rangle$, there is the following
 2-representation of the corresponding strict Lie 2-group: The single object is mapped to $\langle \varrho(\sH)\rangle$, an element of $\sG$ is mapped to the algebra automorphism corresponding to the action of $\sG$ and an element $h$ of $\sH$ is mapped to $\varrho(h)$. The resulting pseudofunctor $\Psi^\varrho$ is then composed with the pseudofunctor from Proposition~\ref{prop:representation_functor} embedding $\CatBimod$ into $2\CatVect_\CatVect$.
\end{proposition}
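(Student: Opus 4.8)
The plan is to verify directly that the stated assignments define a normalised pseudofunctor $\Psi^\varrho : \sB\CCG \to \CatBimod$ into the weak 2-category of algebras, bimodules and bimodule morphisms, and then to post-compose with the embedding $\CatBimod \hookrightarrow 2\CatVect_\CatVect$ of Proposition~\ref{prop:representation_functor} to obtain the desired 2-representation. Recall that the strict Lie 2-group associated with the crossed module is $\CCG = (\sG \ltimes \sH \rra \sG)$, so that its delooping $\sB\CCG$ has a single object $*$, the elements of $\sG$ as 1-morphisms, and the elements of $\sG \ltimes \sH$ as 2-morphisms, with $\sfs(g,h) = g$ and $\sft(g,h) = \theta(h)\,g$. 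First I would fix the algebra $A := \langle \varrho(\sH)\rangle$ as the image $\Psi^\varrho_0(*)$ and record that, by hypothesis, the $\sG$-action extends to a group homomorphism $g \mapsto \alpha_g \in \mathrm{Aut}(A)$ with $\alpha_g(\varrho(h)) = \varrho(g \triangleright h)$ on generators. To a 1-morphism $g$ I assign the invertible bimodule $M_g := {}_{\alpha_g}A$ (the algebra $A$ with left action twisted by $\alpha_g$), and to a 2-morphism $(g,h) : g \Rightarrow \theta(h)\,g$ the map $\lambda_h : M_g \to M_{\theta(h)g}$ given by left multiplication by $\varrho(h)$.

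The first substantive step is to check that $\lambda_h$ is a well-defined $A$-$A$-bimodule morphism; this is where the crossed-module structure enters decisively. Right $A$-linearity is immediate, while left $A$-linearity amounts to the identity $\alpha_{\theta(h)g}(a)\,\varrho(h) = \varrho(h)\,\alpha_g(a)$, i.e.\ to $\alpha_{\theta(h)}$ being the inner automorphism $a \mapsto \varrho(h)\,a\,\varrho(h)^{-1}$. This follows from the Peiffer identity $\theta(h)\triangleright h' = h\,h'\,h^{-1}$ of Definition~\ref{def:crossed_module_Lie_groups}: applying $\varrho$ gives $\varrho(\theta(h)\triangleright h') = \varrho(h)\,\varrho(h')\,\varrho(h)^{-1}$, and since $A$ is generated by $\varrho(\sH)$ this extends to all of $A$. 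I would then verify that $\Psi^\varrho_1$ is a functor on the hom-category $\CCG$: the identity $\id_g = (g,\unit)$ maps to multiplication by $\varrho(\unit) = \unit_A$, and vertical composition, which is multiplication in $\sH$, is respected because $\varrho(h_2)\,\varrho(h_1) = \varrho(h_2 h_1)$.

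Next I would produce the compositor $\Psi^\varrho_2$. Since $g \mapsto \alpha_g$ is a genuine group homomorphism, $\alpha_{g_1}\alpha_{g_2} = \alpha_{g_1 g_2}$, and the canonical isomorphism ${}_{\alpha_{g_1}}A \otimes_A {}_{\alpha_{g_2}}A \cong {}_{\alpha_{g_1 g_2}}A$ of twisted bimodules supplies an invertible 2-morphism $M_{g_1}\htimes M_{g_2} \Rightarrow M_{g_1 \htimes g_2}$, recalling that $g_1 \htimes g_2 = g_1 g_2$ in $\CCG$ and that the unit $e\in\sG$ has $\alpha_e = \id$, so $M_e = A$ is the trivial bimodule and $\Psi^\varrho$ is normalised. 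The associativity coherence diagram (the hexagon in the definition of a normalised pseudofunctor) then reduces to the associator of $\otimes_A$ in $\CatBimod$ together with the strict associativity of composition in $\mathrm{Aut}(A)$, so it holds essentially formally.

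The main obstacle I anticipate is the naturality of $\Psi^\varrho_2$ with respect to 2-morphisms, equivalently the interchange of the compositor with the horizontal composition $(g_1,h_1)\htimes(g_2,h_2) = (g_1 g_2,\, h_1\,(g_1 \triangleright h_2))$ in the semidirect product. Tracing the two ways around the naturality square, the multiplication by $\varrho(h_2)$ on the second tensor factor is intertwined, through the balancing of $\otimes_A$, with the twisting automorphism and reproduces exactly the factor $\varrho(g_1 \triangleright h_2)$ occurring in $h_1\,(g_1 \triangleright h_2)$; thus the square commutes precisely because the extension satisfies $\alpha_g \circ \varrho = \varrho \circ (g\triangleright -)$ on generators, which is built into the hypothesis. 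Having assembled $\Psi^\varrho$, the final step is routine: composing with the $\CatVect$-enriched embedding of Proposition~\ref{prop:representation_functor} sends $A$ to its category of left modules and each $M_g$ to the induced module functor, yielding the claimed 2-representation in $2\CatVect_\CatVect$.
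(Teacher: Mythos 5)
The paper offers no proof of this proposition at all: it is quoted from~\cite{Schreiber:2009:357-401}, so there is no in-text argument to compare against, and your proposal should be judged as supplying the verification the paper delegates to that reference. Judged as such, it is essentially correct, and it identifies exactly the right load-bearing points: the Peiffer identity $\theta(h)\triangleright h' = h\,h'\,h^{-1}$ forces $\alpha_{\theta(h)}$ to be the inner automorphism $a\mapsto \varrho(h)\,a\,\varrho(h)^{-1}$ of $\langle\varrho(\sH)\rangle$, which is precisely the left-$A$-linearity needed for $\lambda_h$ to be a bimodule morphism; the homomorphism property of $g\mapsto\alpha_g$ produces the compositor; and the equivariance $\alpha_g\circ\varrho = \varrho\circ(g\triangleright-)$ together with the inner-automorphism property is what makes the compositor natural against the semidirect-product horizontal composition $(g_1,h_1)\htimes(g_2,h_2)=(g_1g_2,\,h_1\,(g_1\triangleright h_2))$.

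One detail you should tighten, because as literally written it is false: for left-twisted bimodules the canonical isomorphism reverses the order of the automorphisms, namely ${}_{\alpha_{g_1}}A\otimes_A{}_{\alpha_{g_2}}A\cong{}_{\alpha_{g_2}\alpha_{g_1}}A$ rather than ${}_{\alpha_{g_1}\alpha_{g_2}}A$. (Every elementary tensor can be written as $m\,\alpha_{g_2}^{-1}(n)\otimes 1$, and transporting the outer actions through this identification twists the left action by $\alpha_{g_2}\alpha_{g_1}$.) This does not break the construction, but it obliges you to fix conventions consistently: either declare that horizontal composition in $\CatBimod$ is $M\htimes N := N\otimes_A M$, matching the function-composition reading of $\htimes$ that the paper uses for its \v Cech groupoid, or replace ${}_{\alpha_g}A$ by the right-twisted bimodule $A_{\alpha_g}$, for which $A_{\alpha_{g_1}}\otimes_A A_{\alpha_{g_2}}\cong A_{\alpha_{g_1}\alpha_{g_2}}$ holds in the stated order. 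With either choice your interchange computation goes through verbatim, and the final composition with the embedding of Proposition~\ref{prop:representation_functor} is, as you say, routine.
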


\begin{example}\label{rem:def_rho_f}
Let us specialise to the fundamental representation
$\varrho_{\rm f}$ of $\sU(1)$. The pseudo\-functor
$\Psi^{\varrho_{\rm f}}$ of Proposition~\ref{prop:canonical_rep} maps the single object $*$ to $\FC\in \CatBimod_0$, the single morphism $*$ to $\id_\FC\in \CatBimod_{1,0}$ and the group $\sU(1)$ to an element of $\CatBimod_{1,1}\subset\sGL(1,\FC)=\FC^\times$. The 2-vector space underlying this representation has basis $\FC\rightrightarrows *$ and it is the category of left $\FC$-modules: $\shom_{\CatCat}(\FC\rightrightarrows *,\CatVect)\cong \CatMod_\FC \cong \CatVect$, as expected.
\end{example}

Let us give one more example that will appear in our later discussion.
\begin{example}\label{ex:u(1)u(n)-rep}
 Given a strict Lie 2-group $\CCG:=(\sU(1)\ltimes
 \sU(n)\rightrightarrows \sU(n))$, the fundamental representation
 $\varrho_{\rm f}$ of $\sU(n)$ induces an interesting 2-representation
 on Baez-Crans 2-vector spaces. Recall that the category
 $2\CatVect_{\FC\rightrightarrows\FC}$ consists of Baez-Crans 2-vector
 spaces, cf.\ Example~\ref{ex:Baez-Crans-2-vector-spaces}, linear
 functors between these and natural transformations between those. We
 now have a normalised pseudofunctor $\Psi^{\varrho_{\rm f}}:\sB\CCG\rightarrow
 2\CatVect_{\FC\rightrightarrows\FC}$ with $\Psi^{\varrho_{\rm f}}_0(*)=\FC^n$,
 \begin{equation}
 \begin{aligned}
  \Psi^{\varrho_{\rm f}}_{1,1}\Big(g\xrightarrow{(h,g)} e(h)\,
  g\Big)&=\Big(\varrho_{\rm
    f}(g)\xrightarrow{(\varrho_f(e(h)),\varrho_f(g))} \varrho_{\rm
    f}(e(h)\, g)\Big)~,\\[4pt]
  \Psi^{\varrho_{\rm f}}_{1,0}(g)&=(\varrho_{\rm f}(g))~,
 \end{aligned}
 \end{equation}
 where $e:\sU(1)\rightarrow \sU(n)$ is the diagonal embedding, and
 $\Psi^{\varrho_{\rm f}}_2$ is trivial. 
\end{example}

\section{\texorpdfstring{$\sU(1)$}{U(1)}-bundle
  gerbes}\label{sect:U(1)-bundle-gerbes}

In this section we review the main definitions, results and examples concerning
bundle gerbes that will play a prominent role throughout this paper.

\subsection{Definitions}\label{ssec:gerbes}

The original definition of a gerbe dates back to Giraud \cite{Giraud:1971}. Here we are interested in the geometrically more accessible bundle gerbes, which were introduced by Murray \cite{Murray:9407015}, see
also \cite{Murray:2007ps}, and, in a more restricted form, by Hitchin and Chatterjee~\cite{Hitchin:1999fh,Chatterjee:1998}. We first give the very general
definition of principal $2$-bundles as studied e.g.\ in
\cite{Baez:0801.3843,Nikolaus:1207ab,nikolaus1207}, and then specialise to bundle gerbes.
\begin{definition}\label{def:principal_2_bundles}
 Given a strict Lie 2-group $\CCG$ and a surjective submersion $\sigma^Y:Y\thra M$, a \uline{principal 2-bundle $(\Phi,\sigma^Y)$ with structure 2-group $\CCG$ subordinate to $\sigma^Y$} is a normalised pseudofunctor $\Phi$ from the \v Cech 2-groupoid $\check \CCC (Y\thra M)$ to $\sB\CCG$. We also refer to these bundles simply as \uline{principal $\CCG$-bundles}. An \uline{isomorphism of principal 2-bundles} is a pseudonatural isomorphism between the underlying pseudofunctors. 
\end{definition}
Ordinary principal bundles with structure Lie group $\sG$ are principal 2-bundles with structure 2-group $\sG\rightrightarrows \sG$. We are particularly interested in principal 2-bundles with structure group $\sB\sU(1)$.
\begin{definition}
 A \uline{Hitchin-Chatterjee gerbe subordinate to a surjective
   submersion} $\sigma^Y:Y\thra M$ is a principal $\sB\sU(1)$-bundle
 $(\Phi,\sigma^Y)$. It is  \uline{local} if $Y\thra M$ is an
 ordinary open cover $\frU= \bigsqcup_a\, U_a \thra M$.
\end{definition}
Let us work through the details. A normalised pseudofunctor $\Phi$ between $\check \CCC(Y\thra M)$ and $\sB\sB\sU(1)$ contains a trivial function $\Phi_0:Y\rightarrow *$ together with a trivial functor $\Phi_1:(Y^{[2]}\rightrightarrows Y)\rightarrow \sB\sU(1)=(\sU(1)\rightrightarrows *)$. The only non-trivial information is contained in $\Phi_2:Y^{[3]}\rightarrow \sU(1)$ with
\begin{equation}
 \Phi_2^{(y_1,y_3,y_4)}\,
 \Phi_2^{(y_1,y_2,y_3)}=\Phi_2^{(y_1,y_2,y_4)}\, \Phi_2^{(y_2,y_3,y_4)}
\end{equation}
for $(y_1,y_2,y_3,y_4)\in Y^{[4]}$. In other words, $\Phi_2$ forms a
$\underline{\sU(1)}_M$-valued \v Cech 2-cocycle over $Y$. If we
imagine a trivial principal $\sU(1)$-bundle over $Y^{[2]}$ with fibres
$P_{(y_1,y_2)}$ over $(y_1,y_2)\in Y^{[2]}$, then $\Phi_2$ provides an
isomorphism $\Phi_2^{(y_1,y_2,y_3)}:P_{(y_1,y_2)}\otimes
P_{(y_2,y_3)}\to P_{(y_1,y_3)}$.

Isomorphisms of Hitchin-Chatterjee gerbes given by pseudonatural
transformations are \v Cech $2$-coboundaries, and thus local
Hitchin-Chatterjee gerbes up to isomorphism form \v Cech cohomology classes in $H^2(M,\underline{\sU(1)}_M)\cong H^3(M,\RZ_M)$. This gives rise to the characteristic class of the gerbe.
\begin{definition}\label{def:DD_class_local_HC_gerbe}
 The \uline{Dixmier-Douady class of a local Hitchin-Chatterjee gerbe} $(\Phi,\sigma^{\frU})$ is the cohomology class $[\Phi_2]\in H^2(M,\underline{\sU(1)}_M)\cong H^3(M,\RZ_M)$.
\end{definition}
Inversely, any Dixmier-Douady class defines a non-empty isomorphism class of local Hitchin-Chatterjee gerbes. We will generalise the notion of Dixmier-Douady class below.

We now follow \cite{Murray:9407015} and lift the \v Cech cocycle $\Phi_2$ to an isomorphism of potentially non-trivial $\sU(1)$-bundles over $Y^{[2]}$.
\begin{definition}\label{def:u(1)-bundle_gerbe}
 A \uline{$\sU(1)$-bundle gerbe $(P,\sigma^Y)$ subordinate to a
   surjective submersion} $\sigma^Y: Y\thra M$ is a principal $\sU(1)$-bundle $P$ over $Y^{[2]}$ together with a smooth isomorphism $\mu$ of $\sU(1)$-bundles, called the \uline{bundle gerbe multiplication},
 \begin{equation}
  \mu_{(y_1,y_2,y_3)}:P_{(y_1,y_2)}\otimes
  P_{(y_2,y_3)}\xrightarrow{ \ \cong \ } P_{(y_1,y_3)} \efor (y_1,y_2,y_3)\in Y^{[3]}~,
 \end{equation}
 which is associative in the sense that the diagram 
  \begin{equation}
  \label{eq:assoc_bgrb}
  \begin{tikzpicture}[baseline=(current  bounding  box.center)]
  \node (P12P23P34) at(0,0) {$P_{(y_1,y_2)} \otimes P_{(y_2,y_3)} \otimes P_{(y_3,y_4)}$};
  \node(P13P34) at (7,0) {$P_{(y_1,y_3)} \otimes P_{(y_3,y_4)}$};
  \node(P12P24) at (0,-2.5) {$P_{(y_1,y_2)} \otimes P_{(y_2,y_4)}$};
  \node(P14) at (7,-2.5) {$P_{(y_1,y_4)}$};  
  
  \draw[->](P12P23P34)--(P13P34) node[pos=.5,above]{\scriptsize{$\mu_{(y_1,y_2,y_3)} \otimes \mathbbm{1}$}};
  \draw[->](P12P23P34)--(P12P24) node[pos=.5,left]{\scriptsize{$\mathbbm{1} \otimes \mu_{(y_2,y_3,y_4)}$}};
  \draw[->](P13P34)--(P14) node[pos=.5,right]{\scriptsize{$\mu_{(y_1,y_3,y_4)}$}};
  \draw[->](P12P24)--(P14) node[pos=.5,below]{\scriptsize{$\mu_{(y_1,y_2,y_4)}$}};
  \end{tikzpicture}
  \end{equation}
  commutes. A \uline{local $\sU(1)$-bundle gerbe} is a $\sU(1)$-bundle gerbe subordinate to a surjective submersion given by an open cover.
\end{definition}

Let us briefly recall a few standard constructions. Given a map $f\in
\shom_{\CatMfd}(N,M)$, there is the pullback of a surjective
submersion $\sigma^Y: Y\thra M$,
\begin{equation}
 \xymatrixcolsep{5pc}
\myxymatrix{
 f^*(Y) \ar@{->>}[d] \ar@{->}[r]^{\hat f}  & Y \ar@{->>}[d] \\
 N \ar@{->}[r]_{f}  & M
}
\end{equation}
This induces a map $\hat f^{[2]}:f^*(Y)^{[2]}\rightarrow Y^{[2]}$ and
we can define the \emph{pullback} of a bundle gerbe $(P,\sigma^Y)$
along $f$ as $(\hat f^{[2]*} P,\sigma^{f^*(Y)})$. Every principal
bundle $P$ has a dual bundle $P^*$ with inverse transition functions; therefore every bundle gerbe $(P,\sigma^Y)$ has a \emph{dual} $(P,\sigma^Y)^*=(P^*,\sigma^Y)$. There is a \emph{product} between bundle gerbes $(P,\sigma^Y)$ and $(Q,\sigma^X)$ over a manifold $M$, $(P,\sigma^Y)\otimes (Q,\sigma^X)=(P\otimes Q,Y\times_M X)$, where $P\otimes Q$ is the product of the pullbacks of $P$ and $Q$ to $(Y\times_M X)^{[2]}=Y^{[2]}\times_M X^{[2]}$.

Given a sufficiently fine open cover $\frU= \bigsqcup_a\, U_a
\twoheadrightarrow M$, there are sections $s_a:U_a \rightarrow Y$ such
that one has maps
\begin{equation}
 p_{ab}:U_{ab}\longrightarrow P
\end{equation}
with $p_{ab}(u)\in P_{(s_a(u),s_b(u))}$ for all $u\in U_{ab}$. Using the bundle gerbe multiplication, they define an element $[g]\in H^2(M,\underline{\sU(1)}_M)$ by
\begin{equation}\label{eq:general_DD}
 p_{ab}(u)\, p_{bc}(u)=p_{ac}(u)\, g_{abc}(u)~,
\end{equation}
for all $u\in U_{abc}$, which is independent of the choice of cover and sections, cf.\ \cite{Carey:1997xm}. The following definition then extends Definition~\ref{def:DD_class_local_HC_gerbe}.
\begin{definition}
 The \uline{Dixmier-Douady class of a $\sU(1)$-bundle gerbe} $(P,\sigma^Y)$ is the \v Cech cohomology class of the 2-cocycle $g$ defined in \eqref{eq:general_DD}.
\end{definition}

Let us now come to triviality and stable isomorphism.
\begin{definition}\label{def:gerbes_trivial}
 A $\sU(1)$-bundle gerbe $(P,\sigma^Y)$ is  \uline{trivial} if
 there is a principal $\sU(1)$-bundle $R\to Y$ such that
 $P\cong\check \delta R$.\footnote{Here
   $\check \delta$ is defined in Section~\ref{ssec:Cech_Deligne}.} We say that $R$ is a \uline{trivialisation} of $(P,\sigma^Y)$. 
 
 Two $\sU(1)$-bundle gerbes $(P,\sigma^Y)$ and $(Q,\sigma^X)$ are  \uline{stably isomorphic} if $(P,\sigma^Y)^*\otimes (Q,\sigma^X)$ is trivial.
\end{definition}
We then have the following result \cite{Murray:9407015}.
\begin{proposition}
 A $\sU(1)$-bundle gerbe is trivial if and only if its Dixmier-Douady
 class is trivial. Two $\sU(1)$-bundle gerbes are stably isomorphic if
 and only if their Dixmier-Douady classes agree.
\end{proposition}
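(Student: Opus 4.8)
The plan is to establish the triviality criterion first and then deduce the stable-isomorphism statement from it by a formal computation of how the Dixmier-Douady class behaves under duals and products. Throughout I would work on a sufficiently fine open cover $\frU=\bigsqcup_a\, U_a$ with local sections $s_a:U_a\to Y$ of $\sigma^Y$ and local lifts $p_{ab}:U_{ab}\to P$ with $p_{ab}(u)\in P_{(s_a(u),s_b(u))}$, so that the Dixmier-Douady cocycle $g$ is determined by $p_{ab}\,p_{bc}=p_{ac}\,g_{abc}$ as in \eqref{eq:general_DD}, recalling that $[g]\in H^2(M,\underline{\sU(1)}_M)$ is independent of all these choices. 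For the easy direction, suppose $(P,\sigma^Y)$ is trivial, so $P\cong\check\delta R$ for some $\sU(1)$-bundle $R\to Y$, where $(\check\delta R)_{(y_1,y_2)}\cong\Hom(R_{y_1},R_{y_2})$ carries the canonical multiplication given by composition, which $\mu$ respects. Choosing unit sections $r_a$ of $s_a^*R$ and taking $p_{ab}$ to be the element of $\Hom(R_{s_a},R_{s_b})$ sending $r_a\mapsto r_b$, multiplicativity of composition gives $p_{ab}\,p_{bc}=p_{ac}$, whence $g\equiv1$; a general choice of lifts alters $g$ only by a coboundary, so $[g]=0$ in all cases.

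The substantive direction is the converse, and this is where I expect the real work to lie. Assuming $[g]=0$, after passing to a good cover I may write $g=\check\delta h$ for some $h_{ab}:U_{ab}\to\sU(1)$, and rescaling $\tilde p_{ab}:=p_{ab}\,h_{ab}^{-1}$ produces local sections obeying the strict cocycle $\tilde p_{ab}\,\tilde p_{bc}=\tilde p_{ac}$ under $\mu$. I would then build the trivialising bundle $R\to Y$ fibrewise using reference points: over $V_a:=(\sigma^Y)^{-1}(U_a)$ set $R_y:=P_{(s_a(\sigma^Y(y)),\,y)}$, a $\sU(1)$-torsor. On $V_a\cap V_b$ the multiplication together with the distinguished element $\tilde p_{ab}$ supplies a transition isomorphism $P_{(s_b,y)}\to P_{(s_a,y)}$, $q\mapsto\mu(\tilde p_{ab}\otimes q)$, and the strict cocycle $\tilde p_{ab}\,\tilde p_{bc}=\tilde p_{ac}$ together with associativity \eqref{eq:assoc_bgrb} gives the transition cocycle condition on triple overlaps, so that these glue to a genuine bundle $R$ over $Y$. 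Finally I would exhibit $\check\delta R\cong P$: over $(y_1,y_2)$ with $\sigma^Y(y_i)\in U_a$ one has $(\check\delta R)_{(y_1,y_2)}\cong P_{(s_a,y_1)}^*\otimes P_{(s_a,y_2)}$, and using $P_{(s_a,y_1)}^*\cong P_{(y_1,s_a)}$ (which holds because $\mu$ identifies $P_{(y_1,s_a)}\otimes P_{(s_a,y_1)}$ with the canonically trivial torsor $P_{(y_1,y_1)}$) followed by $\mu$ identifies this with $P_{(y_1,y_2)}$, compatibly with the multiplications.

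The main obstacle is precisely verifying that this construction is well defined and independent of the reference sections $s_a$ and of the splitting $g=\check\delta h$; the required invariance reduces to repeated use of associativity of $\mu$ and of multiplicativity of the $\tilde p_{ab}$, which is routine bookkeeping but must be carried out carefully, and it is the step on which the whole argument hinges.

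With the triviality criterion in hand, the stable-isomorphism statement follows formally. From \eqref{eq:general_DD} the Dixmier-Douady class is a homomorphism on stable-isomorphism classes: dualising inverts the transition cocycle, so $\mathrm{DD}\big((P,\sigma^Y)^*\big)=-\mathrm{DD}(P,\sigma^Y)$, while the product $(P,\sigma^Y)\otimes(Q,\sigma^X)$, computed on the common submersion $Y\times_M X$ and a cover refining both, has cocycle $g^P\,g^Q$, giving $\mathrm{DD}(P\otimes Q)=\mathrm{DD}(P)+\mathrm{DD}(Q)$ after invoking independence of the choices. Hence $\mathrm{DD}\big((P,\sigma^Y)^*\otimes(Q,\sigma^X)\big)=\mathrm{DD}(Q)-\mathrm{DD}(P)$, and by Definition~\ref{def:gerbes_trivial} together with the triviality criterion the two gerbes are stably isomorphic if and only if this vanishes, that is, if and only if their Dixmier-Douady classes agree.
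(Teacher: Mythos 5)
Your proof is correct and follows essentially the same route as the source: the paper itself gives no in-text argument but quotes the result from Murray's original bundle gerbes paper, whose proof is exactly your construction (split $g=\check\delta h$ on a good cover, rescale the lifts to a strict cocycle $\tilde p_{ab}$, glue $R_y=P_{(s_a(\sigma^Y(y)),\,y)}$ over $V_a=(\sigma^Y)^{-1}(U_a)$ via $q\mapsto\mu(\tilde p_{ab}\otimes q)$, identify $\check\delta R\cong P$ through $P^*_{(s_a,y_1)}\cong P_{(y_1,s_a)}$ and $\mu$, and deduce the stable-isomorphism statement from additivity of the Dixmier--Douady class under $\otimes$ and negation under duals). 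One small clarification: the "main obstacle" you flag is milder than you suggest, since the construction need not be independent of the sections $s_a$ or the splitting $h$ (different choices just yield different trivialisations $R$); all that must be checked is consistency of the gluing on overlaps and of the isomorphism $\check\delta R\cong P$, which, as you say, is routine associativity bookkeeping.
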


We can now compare $\sU(1)$-bundle gerbes with local Hitchin-Chatterjee gerbes, cf.\ \cite{Murray:9908135}.
\begin{proposition}\label{prop:gerb_isomorphism_HC}
 Every $\sU(1)$-bundle gerbe is stably isomorphic to a local Hitchin-Chat\-terjee gerbe.
\end{proposition}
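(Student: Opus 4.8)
The plan is to reduce the statement to the classification of $\sU(1)$-bundle gerbes by their Dixmier-Douady class, which is precisely the content of the immediately preceding Proposition. The strategy has four steps: (i) extract the Dixmier-Douady class of $(P,\sigma^Y)$; (ii) realise the same class by a local Hitchin-Chatterjee gerbe; (iii) regard that Hitchin-Chatterjee gerbe as a $\sU(1)$-bundle gerbe and check that its Dixmier-Douady class is unchanged; (iv) invoke the stable isomorphism criterion.

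First I would fix a good open cover $\frU = \bigsqcup_a U_a \thra M$ fine enough to admit local sections $s_a : U_a \to Y$ of the surjective submersion $\sigma^Y$; such sections exist on a sufficiently fine cover precisely because $\sigma^Y$ is a surjective submersion. Choosing lifts $p_{ab} : U_{ab} \to P$ with $p_{ab}(u) \in P_{(s_a(u),s_b(u))}$ and using the bundle gerbe multiplication $\mu$, the identity \eqref{eq:general_DD} produces a $\underline{\sU(1)}_M$-valued \v Cech $2$-cocycle $g = (g_{abc})$ whose class $[g] \in H^2(M,\underline{\sU(1)}_M)$ is by definition the Dixmier-Douady class of $(P,\sigma^Y)$, and which (as recalled after \eqref{eq:general_DD}) is independent of all the choices made.

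Next I would package $g$ into a local Hitchin-Chatterjee gerbe. Since $g_{abc}$ is a $\sU(1)$-valued \v Cech $2$-cocycle on $\frU$, it is exactly the datum $\Phi_2$ of a normalised pseudofunctor $\Phi : \check\CCC(\frU \thra M) \to \sB\sB\sU(1)$, that is, of a local Hitchin-Chatterjee gerbe $(\Phi,\sigma^{\frU})$ with $[\Phi_2] = [g]$; this is the inverse correspondence noted after Definition~\ref{def:DD_class_local_HC_gerbe}. I then regard $(\Phi,\sigma^{\frU})$ as a genuine $\sU(1)$-bundle gerbe by taking the trivial bundle $P' = \frU^{[2]} \times \sU(1) \to \frU^{[2]}$ with bundle gerbe multiplication determined by $\Phi_2$, associativity of which is just the \v Cech cocycle condition on $\Phi_2$. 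Recomputing the Dixmier-Douady class of $(P',\sigma^{\frU})$ via \eqref{eq:general_DD} with the tautological sections $s_a = \id$ and constant lifts returns precisely $[\Phi_2] = [g]$, so $(P,\sigma^Y)$ and $(P',\sigma^{\frU})$ have equal Dixmier-Douady classes. The preceding Proposition then yields that they are stably isomorphic, which is the claim.

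The routine computations here are harmless, but the step requiring genuine care is the comparison of the two Dixmier-Douady classes across the two different presentations — one subordinate to the general surjective submersion $\sigma^Y$ and one to the open cover $\frU$. The main obstacle is to set up the local sections and lifts compatibly so that the cocycle $g$ computed from $(P,\sigma^Y)$ coincides as a cohomology class with the cocycle $\Phi_2$ of the Hitchin-Chatterjee gerbe; this is exactly where the independence-of-choices statement for \eqref{eq:general_DD}, together with the passage to a common refinement of $\frU$ and any auxiliary cover, does the real work. Once the two classes are matched, invoking the stable isomorphism criterion of the preceding Proposition is immediate.
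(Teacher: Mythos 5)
Your proof is correct and follows exactly the route the paper intends: it presents this proposition (citing Murray--Stevenson) as an immediate consequence of the two facts you use, namely that every class in $H^2(M,\underline{\sU(1)}_M)$ is realised by a local Hitchin-Chatterjee gerbe regarded as a $\sU(1)$-bundle gerbe with trivial underlying bundle, and that the preceding proposition classifies $\sU(1)$-bundle gerbes up to stable isomorphism by their Dixmier-Douady class. Your explicit check that the cocycle $g$ of \eqref{eq:general_DD} reproduces $[\Phi_2]$ for the associated Hitchin-Chatterjee gerbe is precisely the consistency the paper asserts when it says the general definition of the Dixmier-Douady class extends Definition~\ref{def:DD_class_local_HC_gerbe}.
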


Finally, let us give an alternative definition in terms of central groupoid extensions that will prove to be useful. This definition readily extends to gerbes over differentiable stacks, cf.\ \cite{Behrend:0605694}.
\begin{definition}[\cite{MR1103911}]
 Let $\CCG=(\CCG_1\rightrightarrows \CCG_0)$ be a Lie groupoid. A
 \uline{$\sU(1)$-central Lie groupoid} \uline{extension of} $\CCG$
 consists of a Lie groupoid $\CCH=(\CCH_1\rightrightarrows \CCG_0)$
 together with a morphism $\phi\in\shom_{\CatBibun}(\CCH,\CCG)$ and a left $\sU(1)$-action on $\CCH_1$, making $\phi_1:\CCH_1\rightarrow \CCG_1$ a (left-) principal $\sU(1)$-bundle, such that $(s_1 g_1)\circ(s_2 g_2)=(s_1s_2)(g_1\circ g_2)$ for all $s_{1},s_{2}\in \sU(1)$ and $(g_1,g_2)\in \CCH_1\times_{\CCG_0}\CCH_1$.
\end{definition}
It is then not too difficult to obtain the following result, cf.\ e.g.\ \cite{Ginot:2008fia}.
\begin{proposition}
 A {$\sU(1)$-bundle gerbe over a manifold} $M$ is a $\sU(1)$-central
 Lie group\-oid extension $\CCH$ of the \v Cech groupoid $\check
 \CCC(Y\thra M)=(Y^{[2]}\rightrightarrows Y)$ of  a surjective
 submersion $\sigma^Y: Y\thra M$.
\end{proposition}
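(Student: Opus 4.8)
The plan is to exhibit the two notions as repackagings of identical geometric data and to write down explicit, mutually inverse assignments between them. A $\sU(1)$-bundle gerbe $(P,\sigma^Y)$ consists of a surjective submersion $\sigma^Y:Y\thra M$, a principal $\sU(1)$-bundle $\pi:P\to Y^{[2]}$ and an associative multiplication $\mu$; a $\sU(1)$-central extension of $\check\CCC(Y\thra M)$ consists of exactly the same submersion together with a groupoid $\CCH=(\CCH_1\rightrightarrows Y)$ whose arrow space $\CCH_1$ is a principal $\sU(1)$-bundle over $Y^{[2]}=\check\CCC(Y\thra M)_1$ equipped with a compatible associative composition. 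First I would construct a central extension from a bundle gerbe: set $\CCH_1:=P$, let the source and target maps be $\sfs=\pr_2\circ\pi$ and $\sft=\pr_1\circ\pi$ (the \v Cech structure maps composed with the bundle projection), and define the composition of $p_1\in P_{(y_1,y_2)}$ with $p_2\in P_{(y_2,y_3)}$ by $p_1\circ p_2:=\mu_{(y_1,y_2,y_3)}(p_1\otimes p_2)\in P_{(y_1,y_3)}$. Associativity of $\circ$ is then precisely the commuting diagram \eqref{eq:assoc_bgrb}, and taking $\phi$ to be the identity on $Y$ together with $\phi_1:=\pi$ on arrows gives the required functor to $\check\CCC(Y\thra M)$, which bundlises to the desired morphism in $\CatBibun$.

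The groupoid identities and inverses are the one point requiring care. Restricting $\mu$ to the diagonal, $\mu_{(y,y,y)}:P_{(y,y)}\otimes P_{(y,y)}\to P_{(y,y)}$ turns each fibre $P_{(y,y)}$ into a group isomorphic to $\sU(1)$, so it possesses a unique unit $e_y$; I would check that $y\mapsto e_y$ is a smooth section $e:Y\to P|_\Delta$ and take it as $\id_y$. Inverses are then forced: since $\mu_{(y_1,y_2,y_1)}:P_{(y_1,y_2)}\otimes P_{(y_2,y_1)}\to P_{(y_1,y_1)}$ is an isomorphism of $\sU(1)$-torsors, for each $p\in P_{(y_1,y_2)}$ there is a unique $p^{-1}\in P_{(y_2,y_1)}$ with $\mu(p\otimes p^{-1})=e_{y_1}$, and associativity yields the remaining unit law. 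Finally, the centrality relation $(s_1\,p_1)\circ(s_2\,p_2)=(s_1s_2)\,(p_1\circ p_2)$ is exactly the statement that $\mu$ is a morphism of principal $\sU(1)$-bundles, where $P_{(y_1,y_2)}\otimes P_{(y_2,y_3)}$ carries the tensor-product $\sU(1)$-action.

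For the converse I would reverse each step: given $\CCH$, set $P:=\CCH_1$ with its bundle structure $\phi_1:\CCH_1\thra Y^{[2]}$, and define $\mu$ from the composition $\circ$. The centrality relation guarantees that $\circ:\CCH_1\times_Y\CCH_1\to\CCH_1$, which is a priori defined on the fibre product $P_{(y_1,y_2)}\times P_{(y_2,y_3)}$, descends along the $\sU(1)$-quotient to the tensor product $P_{(y_1,y_2)}\otimes P_{(y_2,y_3)}$, yielding $\mu$; associativity of $\circ$ gives the diagram \eqref{eq:assoc_bgrb}, and invertibility of arrows in $\CCH$ shows that $\mu$ is an isomorphism. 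Comparing the two constructions shows that they are mutually inverse on the nose, since in both directions the underlying bundle and the multiplication are carried over unchanged. I expect the main obstacle to be the bookkeeping around the unit and inverse structure --- in particular verifying that the diagonal fibre $P_{(y,y)}$ carries a canonical smooth unit section and that this matches the groupoid identities --- together with a careful translation between the $\sU(1)$-biequivariant composition on $P\times_Y P$ and the multiplication $\mu$ defined on the associated tensor-product bundle.
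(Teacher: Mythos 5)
Your proposal is correct: the dictionary you set up --- the bundle gerbe multiplication $\mu$ as groupoid composition, centrality as $\sU(1)$-equivariance of $\mu$ with respect to the tensor-product action, and units/inverses extracted from the group structure that $\mu_{(y,y,y)}$ induces on the diagonal fibres $P_{(y,y)}$ --- is exactly the standard argument, and the points you flag (smoothness of the idempotent unit section, descent of $\circ$ along the $\sU(1)$-quotient to the tensor product) go through as you anticipate. Note that the paper itself offers no proof of this proposition, deferring instead to the citation \cite{Ginot:2008fia}; your argument is precisely the construction that reference supplies, so there is nothing to compare beyond observing that you have filled the gap along the intended lines.
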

Explicitly, there is a sequence of Lie groupoids for a $\sU(1)$-bundle gerbe $(P,\sigma^Y)$ given by
\begin{equation}
 \myxymatrix{
    \sU(1)\ar@<1.5pt>[d] \ar@<-1.5pt>[d]_{}="g1" & & P \ar@<1.5pt>[d] \ar@<-1.5pt>[d]_{}="g2" & & Y^{[2]} \ar@<1.5pt>[d] \ar@<-1.5pt>[d]_{}="g3"& & M \ar@<1.5pt>[d] \ar@<-1.5pt>[d]_{}="g4"\\
  {*} & & Y & & Y & & M   
  \ar@{->} "g1"+<3ex,0.0ex>;"g2"+<-2.5ex,0.0ex>
  \ar@{->} "g2"+<3ex,0.0ex>;"g3"+<-2.5ex,0.0ex>
  \ar@{->}^{\cong} "g3"+<3ex,0.0ex>;"g4"+<-2.5ex,0.0ex>
 }
\end{equation}
where the last map is the bibundle or Morita equivalence \eqref{eq:moritaY}. In this language, stable isomorphisms between
$\sU(1)$-bundle gerbes correspond to bibundle equivalences between
$\sU(1)$-central Lie groupoid extensions.

\subsection{Connective structures}\label{ssec:connective_structure}

From our discussion in Section~\ref{ssec:Cech_Deligne} (cf.\
Remark~\ref{rem:higherDeligne}), we glean that the appropriate notion
of a local Hitchin-Chatterjee gerbe with connection should be
described by a Deligne 2-cocycle in degree 2. Let us now generalise this observation to $\sU(1)$-bundle gerbes following~\cite{Murray:9407015}. This is done by endowing the principal bundle of the $\sU(1)$-bundle gerbe with a connection.
\begin{definition}
 Let $(P,\sigma^Y)$ be a $\sU(1)$-bundle gerbe. A \uline{bundle gerbe
   connection} is a connection $\nabla$
 on $P$ that respects the
 smooth isomorphism of principal bundles over $Y^{[3]}$.
\end{definition}
An immediate consequence of this definition is $\check\delta F_\nabla=0$. We can now use repeatedly the fact that the coboundary operator $\check
\delta:\Omega^p(Y^{[k]})\rightarrow \Omega^p(Y^{[k+1]})$ induces the
exact sequence \eqref{eq:fundamental_sequence}. First of all, there
is a 2-form $B\in \di\, \Omega^2(Y)$ such that $\check\delta B=F_\nabla$,
which is called a \emph{curving} of $\nabla$. The pair $(\nabla,B)$ is
also called a \emph{connective structure}. Given a curving, we note
that $\check\delta\, \dd B=\dd \, \check\delta B =\dd F_\nabla =0$ and
therefore $\dd B=(\sigma^Y)^* H$ for some closed 3-form $H\in
\di\, \Omega^3(M)$, which we call the \emph{curvature} of
$(\nabla,B)$. The de Rham class $[\frac{1}{2\pi\, \di} \, H]\in
H^3_{\rm
  dR}(M)$ is integral as it is the image of the Dixmier-Douady class of the $\sU(1)$-bundle gerbe $(P,\sigma^Y)$ in real cohomology. 

\begin{remark}
All previously introduced notions such as pullback, dual and product readily generalise to $\sU(1)$-bundle gerbes with connective structure.
\end{remark}

We can now define isomorphism classes of $\sU(1)$-bundle gerbes with connective structure, extending Definition~\ref{def:gerbes_trivial}.
\begin{definition}
 A $\sU(1)$-bundle gerbe with connective structure $(P,\sigma^Y,A,B)$
 is  \uline{trivial} if there is a principal $\sU(1)$-bundle
 with connection $(R,a)$ over $Y$ such that $P\cong \check\delta R$
 and $A=\check\delta a$. Then $(R,a)$ is a \uline{trivialisation} of $(P,\sigma^Y,A,B)$. 
 
 Two $\sU(1)$-bundle gerbes with connective structures $(P_1,\sigma^{Y_1},A_1,B_1)$ and $(P_2,\sigma^{Y_2},A_2,$ $B_2)$ are  \uline{stably isomorphic} if $(P_1,\sigma^{Y_1},A_1,B_1)^*\otimes (P_2,\sigma^{Y_2},A_2,B_2)$ is trivial.
\end{definition}

Let us look at a connective structure for a local $\sU(1)$-bundle gerbe $(P,\sigma^\frU)$ over $M$, where $\frU=\bigsqcup_a\, U_a$ is an open cover of $M$, in some more detail. The $\sU(1)$-bundle gerbe is given in terms of principal bundles $P_{ab}$ over $U_{ab}$ together with the bundle gerbe multiplication 
\begin{equation}
 \mu_{abc}:P_{ab}\otimes P_{bc}\longrightarrow P_{ac}
\end{equation}
over $U_{abc}$. Let us now assume that $\frU$ is a
good cover with all double intersections $U_{ab}$ contractible. Then
the principal $\sU(1)$-bundles $P_{ab}$ are trivial, and the bundle gerbe multiplication $\mu$ can be encoded in functions $g_{abc}:U_{abc}\rightarrow \sU(1)$. Associativity of the bundle gerbe multiplication amounts to
\begin{equation}
 g_{acd}\, g_{abc}=g_{abd}\, g_{bcd}~.
\end{equation}
A bundle gerbe connection is given by 1-forms $A_{ab}\in \di\, \Omega^1(U_{ab})$, where the compatibility with $\mu$ reads as
\begin{equation}
  (\check\delta A)_{abc}=A_{bc} -A_{ac}+ A_{ab} = -
  \dd \log g_{abc}
\end{equation}
over $U_{abc}$. Since $\check\delta\, \dd A=\dd\,\check\delta A =0$,
there are 2-forms $B_a \in \di\, \Omega^2(U_a)$ such that
\begin{equation}
  B_{a} - B_{b} = \dd A_{ab}
\end{equation}
over $U_{ab}$. Again $(\check\delta \, \dd B)_{ab} = (\dd \, \check\delta
B)_{ab}=0$, and therefore there is a global 3-form $H\in \di\, \Omega^3(M)$ such that 
\begin{equation}
 H=\dd B_a
\end{equation}
over $U_a$.

Note that $(g_{abc},A_{ab},B_a)$ yields a Deligne cocycle in
$H^2(M,\CD^\bullet(2))$ as defined in
Section~\ref{ssec:Cech_Deligne}. The image of $\pr_\CD$ recovers the
Dixmier-Douady class and the image of ${\rm curv}$ agrees here with
$H$, the curvature 3-form of the connective structure on the $\sU(1)$-bundle gerbe.

Since any $\sU(1)$-bundle gerbe is stably isomorphic to one subordinate to a good open cover, we have the following result.
\begin{proposition}
 Stable isomorphism classes of $\sU(1)$-bundle gerbes with connective structures are classified by the Deligne cohomology group $H^2(M,\CD^\bullet(2))$.
\end{proposition}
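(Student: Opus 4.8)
The plan is to exhibit an explicit bijection between stable isomorphism classes of $\sU(1)$-bundle gerbes with connective structure and the classes of $H^2(M,\CD^\bullet(2))$, using the local description of connective structures derived above. First I would reduce to the local case: since every $\sU(1)$-bundle gerbe with connective structure is stably isomorphic to one subordinate to a good open cover $\frU=\bigsqcup_a\, U_a\thra M$ (the refinement argument already invoked for the Dixmier-Douady classification, now additionally transporting the connection $A$ and curving $B$ along the stable isomorphism), it suffices to set up the correspondence on such covers. On a good cover we have already seen that the local data $(g_{abc},A_{ab},B_a)$ satisfies exactly the relations $g_{acd}\, g_{abc}=g_{abd}\, g_{bcd}$, $(\check\delta A)_{abc}=-\dd\log g_{abc}$ and $B_a-B_b=\dd A_{ab}$, which are precisely the Deligne 2-cocycle conditions, so $(g_{abc},A_{ab},B_a)$ defines a class in $H^2(M,\CD^\bullet(2))$. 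Conversely, any Deligne 2-cocycle reconstructs a local gerbe with connective structure on $\frU$ by taking the $P_{ab}$ to be trivial bundles glued by the $g_{abc}$, with bundle gerbe connection determined by $A_{ab}$ and curving by $B_a$; this yields the candidate map and its inverse on representatives.

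The heart of the argument is then to identify stable isomorphism with the Deligne coboundary relation. I would take two local gerbes with connective structures on a common good cover, with cocycles $(g_{abc},A_{ab},B_a)$ and $(\tilde g_{abc},\tilde A_{ab},\tilde B_a)$, and unwind what a stable isomorphism means: a trivialisation $(R,a)$ of the product gerbe, namely an isomorphism $P_1^*\otimes P_2\cong\check\delta R$ of $\sU(1)$-bundles respecting the bundle gerbe multiplications, together with $\tilde A-A=\check\delta a$ and the matching of curvings. On a good cover the bundle $R\to\frU$ is trivial on each patch, hence encoded by transition functions $h_{ab}:U_{ab}\to\sU(1)$ and local connection 1-forms $\Lambda_a\in\di\,\Omega^1(U_a)$. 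I would then verify that the trivialisation conditions translate, up to the sign conventions fixed by $\delta_{\CD}$, into
\[
\tilde g_{abc} = g_{abc}\,(\check\delta h)_{abc}, \qquad \tilde A_{ab} = A_{ab} + \dd\log h_{ab} + (\check\delta\Lambda)_{ab}, \qquad \tilde B_a = B_a + \dd\Lambda_a ,
\]
which is exactly the statement that $(h_{ab},\Lambda_a)\in C^1(\frU,\CD^\bullet(2))$ is a Deligne 1-cochain whose coboundary $\delta_{\CD}(h,\Lambda)$ equals the difference of the two cocycles. Reading this equivalence in both directions shows that two local gerbes are stably isomorphic if and only if their cocycles differ by a Deligne coboundary.

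Finally I would treat the comparison between different covers. Two gerbes whose connective data live on distinct good covers are compared by pulling both back to a common refinement, and I would check that refinement induces the identity on Deligne cohomology and preserves the stable isomorphism class, since the pullback of a trivialisation is again a trivialisation; this is precisely the direct-limit structure used to define $H^2(M,\CD^\bullet(2))$ in Section~\ref{ssec:Cech_Deligne}. The induced map on stable isomorphism classes is then well defined and injective by the coboundary correspondence, and surjective by the reconstruction step, giving the claimed classification. The main obstacle I expect lies in the bookkeeping of the coboundary step: one must confirm that the single isomorphism $P_1^*\otimes P_2\cong\check\delta R$ is simultaneously compatible with the bundle gerbe multiplications, the connections and the curvings, so that all three coboundary relations emerge consistently and with matching signs. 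The purely topological content of this reduces to the already-established equivalence of stable isomorphism with equality of Dixmier-Douady classes, and the genuinely new work is entirely in tracking the connection datum $a$ and the curving through the isomorphism.
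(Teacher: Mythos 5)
Your proposal is correct and takes essentially the same route as the paper, which after deriving the local data $(g_{abc},A_{ab},B_a)$ on a good cover simply observes that these satisfy exactly the Deligne 2-cocycle conditions and that every $\sU(1)$-bundle gerbe with connective structure is stably isomorphic to one subordinate to a good open cover; you merely make explicit the coboundary bookkeeping, the reconstruction of a gerbe from a cocycle, and the refinement/direct-limit argument that the paper leaves implicit. One point in your write-up deserves emphasis: your insistence on the curving-matching relation $\tilde B_a = B_a + \dd\Lambda_a$ in the stable isomorphism is essential and correct, since with the paper's literal definition of trivialisation of a connective structure (which constrains only $P\cong\check\delta R$ and $A=\check\delta a$) the curving could shift by an arbitrary global 2-form, and the resulting equivalence would classify gerbes only by their Dixmier-Douady class in $H^3(M,\RZ_M)$ rather than by the full Deligne class in $H^2(M,\CD^\bullet(2))$.
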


\subsection{Examples}

Let us briefly list some interesting examples of $\sU(1)$-bundle
gerbes. 

\begin{example}\label{ex:very_trivial_gerbe}
The most trivial example of a trivial $\sU(1)$-bundle gerbe is obtained from the trivial principal $\sU(1)$-bundle $P$ over a manifold $M$. We cover $M$ by itself, $Y=M$, with $\sigma^Y=\id_M$ and $Y^{[2]}\cong M$. The bundle gerbe multiplication over $Y^{[3]}\cong M$ is the identity: $P\otimes P\cong P$. 
\end{example}

\begin{example}
A more general description of trivial $\sU(1)$-bundle gerbes $(P,\sigma^Y)$ over a manifold $M$ arises from a surjective
submersion $\sigma^Y: Y\twoheadrightarrow M$ and a principal $\sU(1)$-bundle
$Q\rightarrow Y$ by setting $P_{(y_1,y_2)}=Q^*_{y_1}\otimes Q_{y_2}$.
\end{example}

\begin{example}\label{ex:BGrb_from_fus_LBdl}
  While surjective submersions based on (good) open covers might be
  most familiar, it is very useful to consider also the \emph{path
    fibration} $\dpar:\CP M\thra M$, see
  Section~\ref{ssec:gq_outline}, statement
  \eqref{it:tautological_line_bundle}. $\sU(1)$-bundle gerbes
  subordinate to the surjective submersion $\dpar$ are defined by
  principal $\sU(1)$-bundles $P$ according to the diagram
  \begin{equation}
  \label{eq:path_space_BGrb}
    \myxymatrix{
	  P \ar@{->}[d] & \\
	  \Omega M  \ar@<1.5pt>[r] \ar@<-1.5pt>[r] & \CP M \ar@{->}[d]^{\dpar} \\
	    & M
  }
  \end{equation}
Such concrete examples of $\sU(1)$-bundle gerbes will frequently appear in the following. 
  
  A principal $\sU(1)$-bundle over the free loop space $\CL M$ of $M$
  with a product structure as above, given for pairs of concatenable
  loops in this case, is called a \emph{fusion bundle} over $\CL
  M$. It is shown in~\cite{Waldorf:0911.3212,Waldorf:2010aa} that
  the category of fusion bundles on $\CL M$ is equivalent to the
  category of $\sU(1)$-bundle gerbes on $M$. This example will be
  central to the discussion in Section~\ref{sect:Transgression}.
\end{example}

\begin{example}
Consider a central extension of groups $\sU(1)\rightarrow
\widehat\sG\rightarrow \sG$. The obstruction to lifting a principal
$\sG$-bundle $\pi: Q\thra M$ to a principal $\widehat \sG$-bundle is a
class in $H^3(M,\RZ_M)$, and we can construct the corresponding
$\sU(1)$-bundle gerbe explicitly \cite{Murray:9407015}. There is a map
$g:Q^{[2]}\rightarrow \sG$ with $y_1\, g(y_1,y_2)=y_2$. Pulling back the $\sU(1)$-bundle $\widehat \sG\rightarrow \sG$ along $g$ yields the \emph{lifting bundle gerbe} $(g^*\widehat \sG,\pi)$.

Of particular interest is the lifting bundle gerbe for the central
extension $\sU(1)\rightarrow \sU(\CH)\rightarrow \sPU(\CH)$ associated
to the unitary group of an infinite-dimensional separable Hilbert
space $\CH$. 
\label{ex:centralext}\end{example}

\begin{example}\label{ex:central_loop_extension}
We can combine Examples~\ref{ex:BGrb_from_fus_LBdl} and~\ref{ex:centralext} in the following way. For a compact 1-connected simple Lie group $\sG$, the central extensions of the based loop group~\cite{Pressley:1988aa},
  \begin{equation}
	  1 \longrightarrow \sU(1) \longrightarrow \widehat{\Omega_k \sG} \longrightarrow \Omega \sG \longrightarrow \unit~,
  \end{equation}
with $k\in H^3(\sG,\RZ_\sG)\cong\RZ$,  provide principal $\sU(1)$-bundles over
$\Omega \sG$ which are compatible with the multiplication on $\Omega
\sG$. Hence this yields the $\sU(1)$-bundle gerbes
$(P,\sigma^Y)=(\widehat{\Omega_k\sG},\partial)$ with $Y=\CP\sG$.
\end{example}

\begin{example}\label{ex:2-grp_to_Bgrb}
  Let $\CCG = (\CCG_1 \rightrightarrows \CCG_0)$ be a {strict} Lie
  2-group with $\shom_{\CCG}(\unit, \unit) \cong \sU(1)$.
  Then there is a $\sU(1)$-bundle gerbe given by
  \begin{equation}
    \myxymatrix{
	  \CCG_1 \ar@{->}[d]_-{\sfs \times \sft} & \\
	  \CCG_0{\times_{h_0\CCG}}\CCG_0 \ar@<1.5pt>[r] \ar@<-1.5pt>[r] & \CCG_0 \ar@{->}[d]^{q} \\
	    & h_0 \CCG
  }
  \end{equation}
  where $h_0 $ assigns to a small category $\CCG$ the set of isomorphism classes of objects
  in $\CCG$, and $q:\CCG_0\thra h_0\CCG$ projects an object of $\CCG$ to
  its equivalence class.
  The bundle gerbe multiplication is given by vertical composition in the associated 2-category $\sB \CCG = (\CCG_1 \rightrightarrows \CCG_0 \rightrightarrows *)$, or equivalently the group product in $\CCG_1$.
  This example is found implicitly in~\cite{Waldorf:1201.5052}.
  We will come back to this particular class of $\sU(1)$-bundle gerbes in Section~\ref{sect:action_of_2-grps_on_their_BGrbs}.
  
In particular, applying this example to the string 2-group model from~\cite{Baez:2005sn}, one obtains the $\sU(1)$-bundle gerbes from Example~\ref{ex:BGrb_from_fus_LBdl}.
\end{example}

\section{The 2-category of line bundle gerbes}\label{sect:2-cat_of_BGrbs}

The central objects in this paper will be line bundle gerbes, which
will take the role of categorified prequantum line bundles. These are
associated to $\sU(1)$-bundle gerbes in the same way that line bundles
are associated to principal $\sU(1)$-bundles. Line bundle gerbes
define a 2-category, which we describe in great detail in this section.

\subsection{Line bundle gerbes}

A vector bundle $E$ associated to a principal $\sG$-bundle
$P$ subordinate to a surjective submersion $\sigma^Y:Y\thra M$ can be defined by composing the functor defining $P$, $\Phi^P:\check \CCC(Y\thra M)\rightarrow \sB\sG$, with the functor $\Psi^\varrho:\sB\sG\rightarrow \CatVect$ defining the representation $\varrho$ of interest. Analogously, we can define 2-vector bundles associated to principal 2-bundles, using 2-representation pseudofunctors as defined in Section~\ref{ssec:lie2grp_reps}. 
\begin{definition}\label{def:2-vector_bundle}
 Let $\CCG$ be a strict Lie 2-group and $\Psi:\sB\CCG\rightarrow
 \CatBimod$ a 2-representation of $\CCG$. Given a principal
 $\CCG$-bundle subordinate to a surjective submersion $\sigma^Y: Y\thra M$
 defined by a normalised pseudofunctor $\Phi:\check \CCC(Y\thra M)\rightarrow \sB\CCG$, the \uline{associated 2-vector bundle} corresponding to $\Psi$ is the composition $\Psi\circ \Phi$. 
\end{definition}

Let us specialise to the case $\CCG=\sB\sU(1)$ with 2-representation
$\Psi^{\varrho_{\rm f}}$ as given in Example~\ref{rem:def_rho_f}. A principal
$\sB\sU(1)$-bundle is simply a Hitchin-Chatterjee gerbe
$(P,\sigma^Y)$, and composing $\Phi$ with the 2-representation
$\Psi^{\varrho_{\rm f}}$ maps $Y^{[2]}$ to the $\FC$-bimodule
$\FC$. We thus obtain a trivial line bundle over $Y^{[2]}$
corresponding to the associated line bundle to $P$ with respect to the
fundamental representation $\varrho_{\rm f}$ of $\sU(1)$. Generalising this to arbitrary line bundles over $Y^{[2]}$ leads
to the more evident definition of line bundle gerbes associated to $\sU(1)$-bundle gerbes.
\begin{definition}\label{def:line_bundle_gerbes}
Given a $\sU(1)$-bundle gerbe $(P,\sigma^Y)$, its associated
\uline{line bundle gerbe} $(L,\sigma^Y)$ consists of the associated line
bundle $L$ to $P$ over $Y^{[2]}$ with respect to the fundamental
representation $\varrho_{\rm f}$ of $\sU(1)$ together with the induced \uline{bundle
  gerbe multiplication}\footnote{With a slight abuse of notation, we use the same symbol for the bundle
  gerbe multiplication on $\sU(1)$-bundle gerbes and their associated line bundle gerbes.}
 \begin{equation}
  \mu_{(y_1,y_2,y_3)}: L_{(y_1,y_2)}\otimes L_{(y_2,y_3)}
  \longrightarrow L_{(y_1,y_3)} \efor (y_1,y_2,y_3)\in Y^{[3]}~.
 \end{equation}
\end{definition}

\begin{remark}
This notion of a line bundle gerbe indeed extends that as a normalised pseudofunctor from the \v Cech 2-groupoid $\check \CCC(Y\thra M)$ to the 2-category of $\FC$-bimodules $\CatBimod^\FC$ constructed above Definition \ref{def:line_bundle_gerbes}.
\end{remark}

Next we add two additional structures to these line bundle gerbes:
hermitean metrics and connections. Unless stated otherwise, pullbacks
of vector bundles are endowed with the pullback metric and connection in the
following.

\begin{definition}
	\begin{enumerate}
		\item A \uline{hermitean line bundle gerbe} is a line
                  bundle gerbe $(L,\sigma^Y)$ together with a hermitean metric $h$ on $L$ and a multiplication $\mu$ which is an isometric isomorphism. 
		\item A connective structure on a $\sU(1)$-bundle
                  gerbe $(P,\sigma^Y)$ induces a 
                  \uline{connective structure} on
                  the associated line bundle gerbe $(L,\sigma^Y)$ consisting
                  of a connection $\nabla^L$ on the line bundle $L$ such
                  that the multiplication $\mu$ is parallel:
		\begin{equation}
		0 = (\nabla^L \mu)_{(y_1,y_2,y_3)}
		= \nabla^L\circ \mu_{(y_1,y_2,y_3)} - \mu_{(y_1,y_2,y_3)} \circ \big(
                \nabla^L\otimes \unit + \unit \otimes
                \nabla^L \big)~, 
		\end{equation}
		together with a curving 2-form $B$, cf.\ Section~\ref{ssec:connective_structure}.
		If $(L,\sigma^Y)$ is hermitean, we call a connective structure \uline{hermitean} if $\nabla^L h = 0$.
	\end{enumerate}
\end{definition}

In the following all line bundle gerbes will be hermitean and endowed
with a connective structure; we shall use the notation $\CL =
(L,\mu,h,\nabla^L,B,\sigma^Y)$ and often omit the adjective `line'
when there is no risk of confusion.

For any such  bundle gerbe $\CL$,
we define its \emph{dual  bundle gerbe} as 
\begin{equation}
	\CL^* = \big(L^*,\mu^{-{\rm
            t}},h^{-1},\nabla^{L^*},-B,\sigma^Y \big)~.
\end{equation}
The connection $\nabla^{L^*}$ is the connection induced on the dual
line bundle $L^*$ which is defined via
\begin{equation}
\big( \nabla^{L^*}_X\beta \big) (\varepsilon) = \pounds_X
\big(\beta(\varepsilon)\big) - \beta \big( \nabla^L_X \varepsilon) \ ,
\end{equation}
where $X\in \frX(Y^{[2]})$, $\beta$ and $\eps$ are sections of $L^*$
and $L$, respectively, and $\pounds$ denotes the Lie derivative. For
any morphism $\phi: E \rightarrow F$ of vector bundles, we denote by
$\phi^{\rm t}: F^* \rightarrow E^*$ its \emph{transpose}, in contrast
to the (Hilbert space) adjoint $\phi^*: F \rightarrow E$ which depends on the choice of metric.

The \emph{tensor product} of two  bundle gerbes $\CL_1 \otimes \CL_2$ is defined in the obvious manner as
\begin{equation}
\label{eq:def_otimes_objs}
\CL_1\otimes \CL_2:=\big( L_{1}\otimes L_{2},\, \mu_{1}\otimes
\mu_{2},\, h_{1}\otimes h_{2},\, \nabla^{L_1 \otimes L_2},\, B_{1} + B_{2} ,\, \sigma^{Y_1{\times_M}Y_2} \big)~,
\end{equation}
where $\nabla^{L_1 \otimes L_2} =
\nabla^{L_1}\otimes\unit+\unit\otimes\nabla^{L_2}$.

The \emph{standard trivial bundle gerbe of curving $\rho$} is the line bundle gerbe associated to the $\sU(1)$-bundle gerbe in Example~\ref{ex:very_trivial_gerbe}. It is described by the data
\begin{equation}\label{eq:trivial_line_bundle_gerbe}
\CI_\rho = \big( M \times \FC,\, \mu_\CI,\, h_\CI,\, \dd,\, \rho,\, \id_M \big)~,
\end{equation}
where $\mu_\CI$ is multiplication in the fibres of the trivial line
bundle $M \times \FC$, and the trivial metric, connection and surjective submersion are used.
There is, however, freedom in the choice of curving since for the
trivial surjective submersion $\id_M: M \rightarrow M$ every 
2-form $\rho \in \di\, \Omega^2(M)$ satisfies $\check{\delta} \rho =
\rho - \rho = 0$.

A hermitean  bundle gerbe with connective structure $\CL =
(L,\mu,h,\nabla^L,B,\sigma^Y)$ is \emph{trivial} if there is a hermitean
line bundle with connection subordinate to $\sigma^Y$ such that $\CL$ is isomorphic to the obvious
pullback. Two  bundle gerbes $\CL_{i} =
(L_{i},\mu_{i},h_{i},\nabla^{L_i},B_i,\sigma^{Y_i})$, $i=1,2$, are
\emph{stably isomorphic} if the tensor product $\CL_1^*\otimes \CL_2$
is trivial. Clearly two  bundle gerbes associated to stably
isomorphic $\sU(1)$-bundle gerbes are stably isomorphic, as stable isomorphisms factor through the representation included.

\subsection{1-morphisms}

A \emph{morphism} from a line bundle gerbe $(L,\sigma^Y)$ to a line bundle gerbe $(J,\sigma^Z)$ is a pair of maps $(f,g)$, where $f:Y\rightarrow Z$ is a map covering the surjective submersions, and $g:L\rightarrow J$ is a bundle map compatible with the bundle gerbe multiplication and covering the induced map $f^{[2]}:Y^{[2]}\rightarrow Z^{[2]}$. By restricting these morphisms to isometric connection
  preserving morphisms, we obtain a {category of hermitean 
  bundle gerbes} on $M$ with hermitean connective structure. However, this definition is not satisfactory, as the isomorphism classes of objects would not be in bijective correspondence with the Deligne cohomology group $H^2(M,\CD^\bullet(2))$: Whereas isometric connection preserving isomorphisms induce stable isomorphisms, the converse is not generally true.

On the other hand, \emph{morphisms of associated 2-vector bundles} from Definition~\ref{def:2-vector_bundle} are given by pseudonatural transformations between the underlying pseudofunctors. These include the morphisms above in the case that the line bundles $L$ and $J$ are trivial, but further allow for a larger class of morphisms. 

In~\cite{Waldorf:2007aa,Waldorf:0702652}, Waldorf extended the stable isomorphisms to a suitable notion of morphisms of bundle gerbes.

\begin{definition}[\cite{Waldorf:2007aa}]
	\label{def:lbgr_morph} 
	Given two  bundle gerbes $\CL_i =
        (L_i,\mu_i,h_i,\nabla^{L_i}, B_i,\sigma^{Y_i})$, $i=1,2$, a
        \uline{1-morphism from $\CL_1$ to $\CL_2$} is a quintuple $(E,\alpha,g,\nabla^E,\zeta^Z)$ of the following data:
	$\zeta^Z: Z \thra Y_1{\times_M}Y_2$ is a surjective
        submersion with projections $\zeta_i^Z=\mathrm{pr}_{Y_i} \circ \zeta^Z :Z\to
        Y_i$ inducing maps $Z^{[2]}\rightarrow Y^{[2]}_i$, $E \rightarrow Z$ is a
        hermitean vector bundle with metric $g$, $\alpha$ is an isometric\footnote{This
          constrains the bundle metric $g$ in terms of $h_i$ and
          $\alpha$.} isomorphism over $Z^{[2]}$ of hermitean line
        bundles with connection such that\footnote{The fibres of the line
          bundles $L_{i,(z_1,z_2)}$ arise from pullbacks from
          $Y_i^{[2]}$ to $Y^{[2]}_1\times_M Y^{[2]}_2$ and further to
          $Z^{[2]}$ along the obvious maps.} 
	\begin{equation}
		\alpha_{(z_1,z_2)}:L_{1,(z_1,z_2)} \otimes E_{z_2} \longrightarrow  E_{z_1} \otimes L_{2,(z_1,z_2)}
	\end{equation}
	 is compatible with the bundle gerbe multiplications on
         $\CL_1$ and $\CL_2$ in the sense that the diagram
	\begin{equation}
	\label{eq:comp_hom_bgrb}
	\begin{tikzpicture}[baseline=(current  bounding  box.center)]
		\node (L12L23E3) at(0,0) {$L_{1,(z_1,z_2)}\otimes L_{1,(z_2,z_3)}\otimes E_{z_3}$};
		\node (L13E3) at(8,0) {$L_{1,(z_1,z_3)}\otimes E_{z_3}$};
		\node (L12E2L23) at(0,-2.5) {$L_{1,(z_1,z_2)}\otimes E_{z_2}\otimes L_{2,(z_2,z_3)}$};
		\node (E1L12L23) at(0,-5) {$E_{z_1} \otimes L_{2,(z_1,z_2)}\otimes L_{2,(z_2,z_3)}$};
		\node (L13E1) at(8,-5) {$E_{z_1} \otimes L_{2,(z_1,z_3)}$};
		
		\draw[->](L12L23E3)--(L13E3) node[pos=.5,above]{\scriptsize{$\mu_{1,(z_1,z_2,z_3)} \otimes \unit$}};
		\draw[->](L13E3)--(L13E1) node[pos=.5,right]{\scriptsize{$\alpha_{(z_1,z_3)}$}};
		\draw[->](L12L23E3)--(L12E2L23) node[pos=.5,left]{\scriptsize{$\unit\otimes \alpha_{(z_2,z_3)}$}};
		\draw[->](L12E2L23)--(E1L12L23) node[pos=.5,left]{\scriptsize{$\alpha_{(z_1,z_2)} \otimes \unit$}};
		\draw[->](E1L12L23)--(L13E1) node[pos=.5,below]{\scriptsize{$\unit \otimes \mu_{2,(z_1,z_2,z_3)}$}};
	\end{tikzpicture}
	\end{equation}
	commutes over $(z_1,z_2,z_3)\in Z^{[3]}$, and $\nabla^E$ is a
        hermitean connection on $E$ compatible with $\alpha$, that is,
        $\alpha$ is parallel with respect to the induced connection.
	
	Given 1-morphisms $(E,\alpha,g,\nabla^E,\zeta^Z): \CL_1 \rightarrow \CL_2$ and $(E',\alpha',g',\nabla^{E'},\zeta^{Z'}): \CL_2 \rightarrow \CL_3$, their composition $(E,\alpha,g,\nabla^E,\zeta^Z)\bullet(E',\alpha',g',\nabla^{E'},\zeta^{Z'})$ is given by
	\begin{equation}\label{eq:comp_1hom}
	\big( \pr_Z^*E \otimes \pr_{Z'}^*E'\, , \, (\fpmap{}{\pr}{[2]*}{Z} \alpha) \circ (\fpmap{}{\pr}{[2]*}{Z'} \alpha'\,) \, , \,  (\fpmap{}{\zeta}{Z}{~} \circ \pr_Z) {\times_{Y_2}} (\fpmap{}{\zeta}{Z'}{~} \circ \pr_{Z'}) \big)
	\end{equation}
	for the induced surjective submersion $Z\times_{Y_2} Z'\thra
        Y_1\times_M Y_3$, together with the corresponding pullback connection and metric.
\end{definition}
For brevity, we will often refer to such a morphism $(E,\alpha,g,\nabla^E,\zeta^Z)$ simply as $(E,\alpha)$. 

\begin{remark}
These 1-morphisms are generalisations of stable isomorphisms,
which are recovered in the case $Z=Y_1\times_M Y_2$ and the vector
bundle $E$ is of rank~$1$. Relaxing the former restriction renders the
composition of morphisms associative which is desirable for purely
geometric constructions, while relaxing the latter condition enables a
description of bundle gerbe modules by morphisms. We shall return to these points in some more detail later on.
\end{remark}

\begin{example}
	Consider local  bundle gerbes $\CL_1$ and $\CL_2$ both
        subordinate to an open cover $\frU$ of $M$. Then a 
        morphism $(E,\alpha): \CL_1\to\CL_2$ is a family $(E_a
        \rightarrow U_a)$ of hermitean vector bundles of rank $n$ with connections together with bundle isomorphisms
	\begin{equation}
		\alpha_{ab}: L_{1,ab} \otimes E_{b} \isom E_{a} \otimes L_{2,ab}
	\end{equation}
	over $U_{ab}$. If the cover is good, we can just as well use trivial bundles so that the transition maps collapse to $\sU(n)$-valued functions $\alpha_{ab}: U_{ab} \rightarrow \sU(n)$.
	The compatibility condition \eqref{eq:comp_hom_bgrb}
        translates to a $\sU(1)$-twisted version of the cocycle
        identity over triple overlaps given by
	\begin{equation}\label{eq:comp_2_vector_bdl}
		g_{2,abc}\, \alpha_{ab}\, \alpha_{bc} = \alpha_{ac}\, g_{1,abc}~.
	\end{equation}
	The same result is obtained if we consider $\CL_{i}$ as
        associated 2-vector bundles to a local Hitchin-Chatterjee
        gerbe, cf.\ Definition~\ref{def:2-vector_bundle}. Then pseudonatural transformations between the defining pseudofunctors are encoded by twisted hermitean vector bundles as described in \eqref{eq:comp_2_vector_bdl}.
		
	The morphisms induced by $\alpha_{ab}$ are parallel, but they do not live on the bundles $E_a$.
	Instead, the line bundles constituting the bundle gerbes come into play, such that the corresponding local expression for the compatibility of $\alpha$ with the respective connections 
	\begin{equation}
	 \nabla^{L_{i,ab}}=\dd+A_{i,ab} \eand \nabla^{E_a}=\dd+a_a
	\end{equation}
	reads as
	\begin{equation}
		(A_{1,ab} + a_b) = {\rm Ad}(\alpha_{ab}^{-1}) \circ (A_{2,ab} + a_a) + \alpha_{ab}^{-1}\, \dd\alpha_{ab}~,
	\end{equation}
	where the last term is shorthand for the Maurer-Cartan form on $\sU(n)$ at $\alpha_{ab}$. For $\CL_1\cong \CI_0$ trivial, these constitute the relations defining \emph{twisted vector bundles with connections}, as studied for instance in~\cite{Karoubi:1012.2512,Rogers:2011zc}, and the definition of morphism in~\cite{Rogers:2011zc} agrees with ours restricted to these twisted vector bundles with connection.
	
	On curvature 2-forms $F_{\nabla^{L_{i,ab}}}=\dd A_{i,ab}$ we obtain
	\begin{equation}
		(F_{\nabla^{L_{1,ab}}} + F_{\nabla^{E_b}}) = {\rm
                  Ad}(\alpha_{ab}^{-1}) \circ (F_{\nabla^{L_{2,ab}}} + F_{\nabla^{E_a}})
	\end{equation}
	over $U_{ab}$. Substituting the curvatures on $L_{i,ab}$ by the respective curvings, this is equivalent to
	\begin{equation}
		\big( F_{\nabla^{E_a}} - (B_{2,a} - B_{1,a})\unit \big) = {\rm Ad}(\alpha_{ab}^{-1}) \circ \big( F_{\nabla^{E_b}} - (B_{2,b} - B_{1,b})\, \unit \big)~,
	\end{equation}
	where $\unit$ denotes the central embedding of the generator of $\au(1)$ into $\au(n)$.	In particular, since the adjoint representation forgets the
        $\sU(1)$-twists, the homomorphism bundles $\sHom(E,F)$ descend,
        for any pair of morphisms $(E,\alpha), (F,\beta): \CL_1 \rightarrow \CL_2$, to a vector bundle on $M$.
	From the local expression above, it follows that the family $(F_{\nabla^{E_a}} - (B_{2,a} - B_{1,a})\unit)$ then induces a 2-form on $M$ with values in this descent bundle.
	The machinery of this section enables us to see that these
        statements do in fact hold true for 1-morphisms
        between  bundle gerbes in the general case.
\label{ex:local1mor}\end{example}

\begin{remark}\label{rem:twist_vec_bundles}
The twisted vector bundles mentioned in Example~\ref{ex:local1mor}, which
describe morphisms from the trivial bundle gerbe $\CI_0$ into
$\CL_2$, can also be regarded as associated 2-vector bundles to a
principal $\CCG$-bundle with structure 2-group $\CCG=(\sU(1)\ltimes
\sU(n)\rightrightarrows \sU(n))$ with respect to the 2-representation
$\Psi^{\varrho_{\rm f}}$ induced by the fundamental representation
$\varrho_{\rm f}$ of $\sU(n)$, cf.\
Example~\ref{ex:u(1)u(n)-rep}. This observation is relevant for a
discussion of constraints on the curvature of the hermitean vector
bundles underlying morphisms of  bundle gerbes. We shall return to
this point later on.
\end{remark}

\begin{remark}
	The morphism sets introduced in~\cite{Waldorf:2007aa,Waldorf:0702652} are very closely related to the notion of bundle gerbe modules from~\cite{Carey:2002xp,Bouwknegt:2001vu}.
	The main differences are the introduction of an additional
        surjective submersion
        in~\cite{Waldorf:0702652,Waldorf:2007aa}, on which the vector
        bundle $E$ is defined, and the inclusion of infinite-rank Hilbert bundles in \cite{Carey:2002xp,Bouwknegt:2001vu}.
	In the latter case the focus is on twisted K-theory,
        so that the categorical structures of bundle gerbes, such as the composition of morphisms, are not relevant there.
	In fact, we can already see that the composition of morphisms
        relies crucially on the flexibility in the choice of the
        surjective submersion $\zeta^Z: Z \thra Y_1\times_M Y_2$.
	
	In general, allowing for infinite rank seems to cure certain drawbacks of the morphisms as defined above, but at the cost of several features which are desirable for our approach to 2-Hilbert spaces.
	Throughout this article, we will make several remarks on the infinite-rank case.
\end{remark}

The identity morphism on a bundle gerbe $\CL=(L,\mu,h,\nabla^L,B,\sigma^Y)$ is given by
\beq
\id_\CL= \big(L,\beta,h,\nabla^L,\id_{Y^{[2]}}\big) \ ,
\eeq
with $\beta_{(y_1,y_2,y_3,y_4)}=\mu^{-1}_{(y_1,y_2,y_4)}\circ \mu_{(y_1,y_3,y_4)}$ over $(y_1,y_2,y_3,y_4)\in Y^{[4]}$.
Similarly to the case of line bundles, there exists a canonical isomorphism
\begin{equation}
\label{eq:delta_G}
\delta_{\CL} = \big(L,\alpha,h,\nabla^L, \id_{Y^{[2]}} \big)\ \in\ \sisom_{\CatLBGrb^\nabla(M)}(\CL^* \otimes \CL, \CI_0)~,
\end{equation}
where
\beq
\alpha_{(y_1,y_2,y_3,y_4)}=\big(\unit\otimes\delta_{L,(y_2,y_3)}\otimes \delta_{L,(y_2,y_4)}\big)\circ\big(\mu^{-1}_{(y_1,y_2,y_3)}\otimes\mu^{\rm t}_{(y_2,y_3,y_4)}\otimes \unit\big) \ .
\eeq
The morphism $\delta_L$ is the canonical isomorphism from $L^* \otimes L$ to the trivial line bundle, inducing the natural pairing $L^*_{(y_1,y_3)} \otimes L_{(y_2,y_4)} \rightarrow \FC$.

\begin{example}
	The morphisms $\CI_{\rho_1} \rightarrow \CI_{\rho_2}$ consist of a surjective submersion $Z \thra M$, together with a descent datum for a hermitean vector bundle with connection on $Z$.
	Since these form a stack, every morphism induces a unique hermitean vector bundle with connection on $M$.
	
	In particular, we can consider trivial gerbes on $M= \FR^3$,
        and choose $\rho_1 = 0$ and $\rho_2$ such that $\dd\rho_2 =
        -2\pi\,\di\, 
        \dd x \wedge \dd y \wedge \dd z = -2\pi\,\di\, {\rm
          vol}_{\FR^3}$; that is, $\CI_{\rho_2}$ provides a bundle
        gerbe with connection on $\FR^3$ whose curvature 3-form $H$ is
        $-2\pi\, \di$ times the volume form on $\FR^3$.
\end{example}

\begin{remark}
In our definition of 1-morphisms of bundle gerbes we do not impose a condition on
      the trace of the curvature of the vector bundle $E \rightarrow Z$, in contrast to the original
      definition of~\cite{Waldorf:2007aa,Waldorf:0702652}.
There 1-morphisms were defined to satisfy
      \begin{equation}
            \tr \big(F_{\nabla^E} - (\fpmap{}{\zeta}{Z*}{2} B_2 - \fpmap{}{\zeta}{Z*}{1}
            B_1)\,\unit \big) = 0 \ .
\label{eq:tr1morphisms}\end{equation}
      However, the 1-morphisms as we chose to define them can be recast in the formalism
      of \emph{bimodules} of bundle gerbes~\cite{Waldorf:2007aa}.
      In our convention, an $\CL_1$--$\CL_2$-bimodule of curvature $\rho \in
      \di\, \Omega^2(M)$ would be a morphism $(E,\alpha): \CL_1 \rightarrow \CL_2 \otimes
      \CI_\rho$ which satisfies the condition \eqref{eq:tr1morphisms} on the trace of $F_{\nabla^E}$ with the
      respective curvings of its source and target.
      If we start from an arbitrary 1-morphism $(E,\alpha): \CL_1 \rightarrow \CL_2$,
      one can show that the 2-form $\omega^E = \tr (F_{\nabla^E} - (\fpmap{}{\zeta}{Z*}{2} B_2
      - \fpmap{}{\zeta}{Z*}{1} B_1)\,\unit)$ satisfies $\check\delta \omega^E = 0$, which follows from the fact that $\alpha$ is parallel.
      Therefore $\omega^E = (\sigma^{Z})^* \rho$ for some $\rho \in \di\,
      \Omega^2(M)$, and
      $(E,\alpha)$ provides an $\CL_1$--$\CL_2$-bimodule of curvature $\rho$.
\end{remark}

\begin{definition}
\label{def:flat_isomps_and_trivialisations}
	An isomorphism $(E,\alpha,g,\nabla^E,\zeta^Z) \in
        \sisom_{\CatLBGrb^\nabla(M)}(\CL_1,\CL_2)$ is \uline{flat} if it satisfies
	\begin{equation}
		F_{\nabla^E} = (\fpmap{}{\zeta}{Z*}{2} B_2 - \fpmap{}{\zeta}{Z*}{1} B_1)\,\unit~,
	\end{equation}
	where $B_1$ and $B_2$ are the curvings of $\CL_1$ and $\CL_2$, respectively.%
	\footnote{
		This refined definition is the notion of \emph{isomorphism} of bundle gerbes used in e.g.~\cite{Waldorf:2007aa,Waldorf:0702652,Waldorf:2010aa}.
	}
	
	A \uline{trivialisation} of a bundle gerbe $\CL$ with connective structure is a flat isomorphism $\CL \rightarrow \CI_\rho$ for some $\rho \in \di\,\Omega^2(M)$.
	A bundle gerbe is \uline{trivial} if it admits a trivialisation in this sense.
\end{definition}

Flat isomorphism classes of bundle gerbes form a group under the tensor
product. We then have the following statements.

\begin{theorem}[\cite{Waldorf:2007aa}]
	\label{st:BGrbs_and_Deligne_coho}
	\begin{enumerate}
		\item Two bundle gerbes are flat isomorphic if and only if they define the same Deligne class.
		\item The assignment of the Dixmier-Douady class induces a group isomorphism from the group of flat isomorphism classes of bundle gerbes to $H^2(M,\CD^\bullet(2))$.
		\item A bundle gerbe $\CL$ is trivialisable if and
                  only if its Dixmier-Douady class vanishes, that is,
                  ${\rm dd}(\CL) = 0$.
		\item If $(T,\beta) \in
                  \sisom_{\CatLBGrb^\nabla(M)}(\CL,\CI_\rho)$ and
                  $(T',\beta'\,) \in
                  \sisom_{\CatLBGrb^\nabla(M)}(\CL,\CI_{\rho'})$ are
                  two trivialisations of $\CL$, then $\rho - \rho' \in
                  2\pi\, \di\, \Omega^2_{{\rm cl},\RZ}(M)$.
	\end{enumerate}
\label{thm:triv}\end{theorem}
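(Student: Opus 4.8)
The plan is to reduce every assertion to local Deligne data and then read the four claims off the exact sequences of Proposition~\ref{st:Deligne_ex_seq_1}. For any two bundle gerbes to be compared I first extract their local Deligne data over a common good open cover $\frU=\bigsqcup_a U_a$ of $M$, using local sections of the relevant surjective submersions as in the discussion preceding Example~\ref{ex:local1mor}: a gerbe $\CL_i$ is then encoded by a Deligne $2$-cocycle $(g_{i,abc},A_{i,ab},B_{i,a})\in C^2(\frU,\CD^\bullet(2))$, whose class $[\CL_i]\in H^2(M,\CD^\bullet(2))$ is the invariant I compare, and $\pr_\CD[\CL_i]={\rm dd}(\CL_i)$.

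The crux is claim~(1). Here I would first note that a $1$-isomorphism in $\CatLBGrb^\nabla(M)$ must have an underlying bundle $E$ of rank one, since composing it with its inverse multiplies the ranks of the underlying bundles and must produce a morphism $2$-isomorphic to the identity, whose bundle has rank one. Over the good cover such an isomorphism therefore collapses, by Example~\ref{ex:local1mor}, to $\sU(1)$-valued data $\alpha_{ab}$ and local connection $1$-forms $a_a$ obeying the twisted cocycle identity $g_{2,abc}\,\alpha_{ab}\,\alpha_{bc}=\alpha_{ac}\,g_{1,abc}$ and the connection-compatibility relation. The flatness condition of Definition~\ref{def:flat_isomps_and_trivialisations}, which in this rank-one local form reads $\dd a_a=B_{2,a}-B_{1,a}$, is precisely the third component needed so that the three relations together state that $(\alpha_{ab},a_a)$ is a Deligne $1$-cochain whose coboundary $\delta_\CD(\alpha,a)$ equals the difference $(g_2,A_2,B_2)-(g_1,A_1,B_1)$ of the defining cocycles (written additively); hence a flat isomorphism forces $[\CL_1]=[\CL_2]$. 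Conversely, given $[\CL_1]=[\CL_2]$, a Deligne $1$-cochain trivialising the difference supplies local data $(\alpha_{ab},a_a)$ which I glue into a genuine flat $1$-isomorphism using the stack property of $\CatVBdl$ recorded in the Notation section.

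Claims~(2) and~(3) then assemble quickly. Well-definedness and injectivity of $[\CL]\mapsto[\CL]\in H^2(M,\CD^\bullet(2))$ on flat-isomorphism classes are the two directions of~(1); the homomorphism property follows because the tensor product and dual of bundle gerbes multiply, respectively invert, the local cocycles, and surjectivity holds since any Deligne $2$-cocycle is realised by the local gerbe built from its data, giving~(2). For~(3), trivialisability means flat isomorphy to some $\CI_\rho$ of \eqref{eq:trivial_line_bundle_gerbe}, whose cocycle $(1,0,\rho)$ has class $t^2(\rho)\in\mathrm{im}\,t^2=\ker\pr_\CD$, so ${\rm dd}(\CL)=0$; conversely ${\rm dd}(\CL)=\pr_\CD[\CL]=0$ places $[\CL]$ in $\mathrm{im}\,t^2$ by Proposition~\ref{st:Deligne_ex_seq_1}, whence $[\CL]=t^2(\rho)=[\CI_\rho]$ for some $\rho\in\di\,\Omega^2(M)$, and~(1) upgrades this equality of classes to an honest trivialisation.

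For~(4) I would compose the two given trivialisations: since flat isomorphisms are closed under composition and inversion, $(T,\beta)^{-1}\bullet(T',\beta'\,)$ is a flat isomorphism $\CI_\rho\to\CI_{\rho'}$, so~(1) gives $t^2(\rho)=[\CI_\rho]=[\CI_{\rho'}]=t^2(\rho')$; as $\ker t^2=2\pi\,\di\,\Omega^2_{{\rm cl},\RZ}(M)$ by Proposition~\ref{st:Deligne_ex_seq_1}, this yields $\rho-\rho'\in 2\pi\,\di\,\Omega^2_{{\rm cl},\RZ}(M)$. The main obstacle is the converse half of~(1): reconstructing a bona fide flat $1$-isomorphism from an abstract Deligne coboundary requires descending the local bundles across $\frU$ and carefully handling the auxiliary surjective submersion $\zeta^Z$, and it is only in the homomorphism-bundle picture of Example~\ref{ex:local1mor}, where the $\sU(1)$-twists cancel, that these data glue. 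The genuinely differential point — that \emph{flatness}, rather than mere topological triviality, is what pins down the full Deligne class and not only its Dixmier-Douady image — is what distinguishes flat isomorphisms from arbitrary ones.
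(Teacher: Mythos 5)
Your proposal is correct in outline, but it helps to separate what the paper actually proves from what it merely cites. Items (1)--(3) are quoted from Waldorf's thesis without proof, and your local-data argument for them --- extract Deligne $2$-cocycles over a good cover, note that an isomorphism is forced to have a rank-one bundle (this is Proposition~\ref{prop:morphisms_weakly_invertible}, which you re-derive rather than cite), and observe that its local data $(\alpha_{ab},a_a)$ exhibit the difference of the two $2$-cocycles as a Deligne coboundary, with the flatness equation $\dd a_a = B_{2,a}-B_{1,a}$ supplying precisely the curving component --- is essentially the standard (Waldorf's) proof, so there is nothing to compare there beyond the descent technicalities you yourself flag. The genuine divergence is in item (4), the only part proved in the text. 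You obtain (4) as a formal corollary of (1) plus the exact sequence \eqref{eq:DD_exact_sequence}: the composite of the two trivialisations is a flat isomorphism $\CI_\rho\to\CI_{\rho'}$ (flat isomorphisms are indeed closed under composition and inversion, since curvatures of rank-one bundles add under tensor product), so $t^2(\rho)=t^2(\rho'\,)$ and hence $\rho-\rho'\in\ker t^2 = 2\pi\,\di\,\Omega^2_{{\rm cl},\RZ}(M)$. The paper instead argues directly from the definitions: flatness of any single trivialisation gives $F_{\nabla^T}=(\fpmap{}{\zeta}{Z*}{2}\rho - \fpmap{}{\zeta}{Z*}{1}B)\,\unit$, integrality of the first Chern class of the rank-one bundle $T$ then forces $\fpmap{}{\zeta}{Z*}{2}\rho-\fpmap{}{\zeta}{Z*}{1}B\in 2\pi\,\di\,\Omega^2_{{\rm cl},\RZ}(Z)$, and subtracting the two such relations and descending along the surjective submersion $\zeta^Z$ yields the claim. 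The trade-off is clear: the paper's proof of (4) is self-contained --- it does not rest on the cited classification (1), whose converse direction is exactly the technically heavy gluing step you call your ``main obstacle'' --- whereas your route is softer and exhibits (4) as pure homological bookkeeping, at the price of requiring the identification $[\CI_\rho]=t^2(\rho)$, the homomorphism property of $t^2$, and a complete proof of (1).
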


For any trivialisation $(T,\beta): \CL \rightarrow \CI_\rho$ we have
$F_{\nabla^T} - (\fpmap{}{\zeta}{Z*}{2} \rho - \fpmap{}{\zeta}{Z*}{1}
B)\,\unit = 0$, where $F_{\nabla^T}$ defines an integral cohomology class.
Therefore $\fpmap{}{\zeta}{Z*}{2} \rho - \fpmap{}{\zeta}{Z*}{1} B \in
2\pi\, \di\, \Omega^2_{{\rm cl},\RZ}(Z)$.
This shows that only certain $2$-forms $\rho$ are admissible as
trivialisations of $\CL$, and they depend on the curving $B$ of $\CL$.
From this we can again deduce that if there exist two trivialisations
$(T,\beta):\CL\to\CI_{\rho}$ and $(T',\beta'\, ):\CL\to\CI_{\rho'}$
of $\CL$ over the same surjective submersion $\zeta^Z$,
then\footnote{Over different surjective submersions $\zeta^Z \neq \zeta^{Z'}$ we form the fibred product
  $Z\times_M Z'$ and pull the remaining data back to this space.}
\begin{equation}
\fpmap{}{\zeta}{Z*}{2} \rho - \fpmap{}{\zeta}{Z*}{2} \rho'
= \big( \fpmap{}{\zeta}{Z*}{2} \rho - \fpmap{}{\zeta}{Z*}{1} B \big) -
\big( \fpmap{}{\zeta}{Z*}{2} \rho' - \fpmap{}{\zeta}{Z*}{1} B \big) \
\in \ 2\pi\, \di\, \Omega^2_{{\rm cl},\RZ}(Z)~,
\end{equation}
whence, as $\zeta^Z$ is a surjective submersion, $\rho - \rho' \in
2\pi\, \di\, \Omega^2_{{\rm cl},\RZ}(M)$ as well; this establishes statement
(4) in Theorem~\ref{thm:triv}.

\begin{remark}
	Let us come back to Definition~\ref{def:lbgr_morph} of 1-morphisms of bundle gerbes.
	As noted for example in~\cite{Bouwknegt:2001vu}, from any morphism of bundle gerbes $(E,\alpha): \CL_1 \rightarrow \CL_2$ with $\rk(E) = n$, one can construct an isomorphism
	\begin{equation}
		\big( \det(E),\, \det(\alpha) \big): \CL_1^{\otimes n} \isom \CL_2^{\otimes n}~.
	\end{equation}
	In particular, the difference
	\begin{equation}
		{\rm dd}(\CL_2^{\otimes n}) - {\rm dd}(\CL_1^{\otimes n})
		= n \, \big( {\rm dd}(\CL_2) - {\rm dd}(\CL_1) \big)
		= 0\ \in\ H^3(M,\RZ_M)
	\end{equation}
	of the Dixmier-Douady classes of the gerbes vanishes.
	Hence ${\rm dd}(\CL_2) - {\rm dd}(\CL_1)$ is a torsion element of $H^3(M,\RZ_M)$.
	
	However, this isomorphism is, in general, not \emph{flat}
        since we do not necessarily have $F^{\det(E)} = \tr(F_{\nabla^E}) = n\, (\fpmap{}{\zeta}{Z*}{2}  B_2 - \fpmap{}{\zeta}{Z*}{1}  B_1)$.
	This additional condition on the curvature is much stronger
        than the restriction on the difference of the Dixmier-Douady
        classes of the gerbes: Since the trace of that curvature is a
        representative of the first Chern class of $E$, and hence a closed form, we obtain
	\begin{equation}
		\fpmap{}{\zeta}{Z*}{2} \dd B_2 - \fpmap{}{\zeta}{Z*}{1} \dd B_1 = 0~.
	\end{equation}
	This implies that the \emph{representatives} $H_i$ of the Dixmier-Douady classes of $\CL_i$ obtained on $M$ via $(\sigma^{Y_i})^* H_i = \dd B_i$ coincide as forms on $M$, rather than just defining the same de Rham cohomology class.
	
	One possibility would be to allow for morphisms of infinite
        rank, that is, to construct morphisms from bundles of separable Hilbert spaces rather than just from finite-rank hermitean vector bundles.
	Much of the theory of bundle gerbes still goes through (see e.g.~\cite{Bouwknegt:2001vu} for an approach to sections).
	The trace condition, however, becomes incompatible with tensor products in the case where one allows for infinite rank and requires trace-class properties on the connection.
	As we will point out in
        Section~\ref{sect:2-Hilbert_spaces_from_Bgerbes}, more serious
        problems then arise in our construction of a 2-Hilbert space from a bundle gerbe.
\label{rem:ddtorsion}\end{remark}

\subsection{2-morphisms}

In~\cite{Waldorf:2007aa,Waldorf:0702652}, Waldorf introduced 2-morphisms corresponding to these morphisms of bundle gerbes, which give rise to a (strictly associative) 2-category of bundle
gerbes $\CatLBGrb^\nabla(M)$. The 1-morphisms employ vector bundles, which come with natural morphisms between them.
\begin{definition}[\cite{Waldorf:0702652, Waldorf:2007aa}]
	\label{def:2hom_bgrb}
	Let $(E,\alpha,g,\nabla^E,\zeta^Z)$ and
        $(E',\alpha',g',\nabla^{E'},\zeta^{Z'})$ be 1-morphisms
        between bundle gerbes $\CL_1$ and $\CL_2$. A \uline{2-morphism from
        $(E,\alpha)$ to $(E',\alpha'\, )$} is an equivalence
        class of pairs $(\phi,\omega^W)$, where $\omega^W: W
        \thra  \hat{Z}:= Z\times_{Y_{12}}Z'$ with $Y_{12}:=Y_1\times_M
        Y_2$ is a surjective submersion, and $\phi \in
        \shom_{\CatHVBdl^\nabla(M)}(\fpmap{}{\omega}{W*}{Z}E,
        \fpmap{}{\omega}{W*}{Z'}E'\,)$ is a morphism of hermitean vector
        bundles with connection compatible with $\alpha$ and
        $\alpha'$, where $\omega_Z^W:={\rm pr}_Z\circ\omega^W$; that is, there is a commutative diagram
	\begin{equation}
	\label{eq:comp_2hom_bgrb}
	\begin{tikzpicture}[baseline=(current  bounding  box.center)]
	\node (L1E) at(0,0) {$L_{1,(w_1,w_2)} \otimes E_{w_2}$};
	\node (EL2) at(8,0) {$E_{w_1} \otimes L_{2,(w_1,w_2)}$};
	\node (L1E') at(0,-2.5) {$L_{1,(w_1,w_2)}\otimes E'_{w_2}$};
	\node (E'L2) at(8,-2.5) {$E'_{w_1}\otimes L_{2,(w_1,w_2)}$};
	
	\draw[->] (L1E)--(EL2) node[pos=.5,above]{\scriptsize{$\alpha_{(w_1,w_2)}$}};
	\draw[->] (L1E')--(E'L2) node[pos=.5,below]{\scriptsize{$\alpha'_{(w_1,w_2)}$}};
	\draw[->] (L1E)--(L1E') node[pos=.5,left]{\scriptsize{$\mathbbm{1} \otimes \phi_{w_2}$}};
	\draw[->] (EL2)--(E'L2) node[pos=.5,right]{\scriptsize{$\phi_{w_1} \otimes \mathbbm{1}$}};
	\end{tikzpicture}
	\end{equation}
 in $\CatHVBdl^\nabla(W^{[2]})$, where $W^{[2]} = W{\times_{M}}W$.\footnote{Here the fibres of the line bundles
   $L_{i,(w_1,w_2)}$ come
   from pullbacks to $\hat Z^{[2]}$ and then further to $W^{[2]}$,
   while the fibres of the vector bundles $E_{w_i}$ and $E_{w_i}'$ arise from pullbacks to
   $W$ and then further to $W^{[2]}$.}
	
	Two pairs $(\phi,\omega^W)$ and $(\widetilde{\phi},{\omega}^{\widetilde{W}})$ are \uline{equivalent} if there is a commutative diagram
	\begin{equation}
	\begin{tikzpicture}[baseline=(current  bounding  box.center)]
	\node (X) at(0,0) {$X$};
	\node (W) at(-2,-1) {$W$};
	\node (tW) at(2,-1) {$\widetilde{W}$};
	\node (hZ) at(0,-2) {$\hat{Z}$};
	
	\draw[->] (X)--(W) node[pos=0.2,left] {\scriptsize{$\xi~$}};
	\draw[->] (X)--(tW) node[pos=0.2,right] {\scriptsize{$~~\widetilde{\xi}$}};
	\draw[->] (W)--(hZ) node[pos=0.6,left] {\scriptsize{$\omega^W~$}};
	\draw[->] (tW)--(hZ) node[pos=0.6,right] {\scriptsize{$~{\omega}^{\widetilde{W}}$}};
	\end{tikzpicture}
	\label{eq:2morequiv}\end{equation}
	of surjective submersions such that $\xi^*\phi = \widetilde{\xi}^* \widetilde{\phi}$.
\end{definition}
Next we want to define identity 2-morphisms; this requires a rather technical discussion, introducing two special isomorphisms of vector bundles, which we hide in Appendix~\ref{app:special_morphisms}.

\begin{proposition} [\cite{Waldorf:2007aa}, p.~42]
	Let $(E,\alpha,g,\nabla^E,\zeta^Z) \in \shom_{\CatLBGrb^\nabla(M)}(\CL_1,\CL_2)$ be a 1-mor\-phism of bundle gerbes.
	The pair $\unit_{(E,\alpha)} := (\dd_{(E,\alpha)}, \id_{Z\times_{Y_{12}} Z}) $, where $\dd_{(E,\alpha)}$ is defined in Lemma~\ref{st:dd-def_and_properties}, represents the identity 2-morphism on $(E,\alpha)$.
\end{proposition}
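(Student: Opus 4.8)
The plan is to verify the two defining properties of an identity 2-morphism for the candidate $\unit_{(E,\alpha)} = (\dd_{(E,\alpha)}, \id_{Z\times_{Y_{12}}Z})$: first that it is a legitimate 2-endomorphism of $(E,\alpha)$, and second that it is a two-sided unit for vertical composition. By Lemma~\ref{st:dd-def_and_properties} the morphism $\dd_{(E,\alpha)}\colon \pr_1^*E \to \pr_2^* E$ over $\hat{Z}:=Z\times_{Y_{12}}Z$ is an isometric parallel isomorphism of hermitean vector bundles with connection compatible with $\alpha$ in the sense of diagram~\eqref{eq:comp_2hom_bgrb} (taking $E'=E$, $\alpha'=\alpha$); hence $(\dd_{(E,\alpha)},\id_{\hat Z})$ is indeed a 2-morphism $(E,\alpha)\Rightarrow(E,\alpha)$. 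The substantive content is therefore the unit laws, and the single geometric input I shall use from Lemma~\ref{st:dd-def_and_properties} is that $\dd_{(E,\alpha)}$ restricts to the identity along the diagonal, $\Delta^*\dd_{(E,\alpha)} = \id_E$, where $\Delta\colon Z\to \hat Z$ is the diagonal embedding.

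For the right unit law, fix an arbitrary 2-morphism $(\phi,\omega^W)\colon (E,\alpha)\Rightarrow (E',\alpha')$ over $\omega^W\colon W\thra Z\times_{Y_{12}}Z'$, with projections $\omega_Z^W=\pr_Z\circ\omega^W$ and $\omega_{Z'}^W=\pr_{Z'}\circ\omega^W$. The vertical composite $(\phi,\omega^W)\circ\unit_{(E,\alpha)}$ is computed over the common refinement $V := \hat Z\times_Z W$ formed from $\pr_2\colon \hat Z\to Z$ and $\omega_Z^W\colon W\to Z$; its underlying morphism is the composition of the pullbacks of $\dd_{(E,\alpha)}$ and of $\phi$ to $V$, and the associated submersion maps $V$ to $Z\times_{Y_{12}}Z'$ via $(\pr_1,\omega_{Z'}^W)$. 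To identify this with $(\phi,\omega^W)$ I would exhibit the equivalence~\eqref{eq:2morequiv} using the diagonal section $s\colon W\to V$, $w\mapsto\big((\omega_Z^W(w),\omega_Z^W(w)),w\big)$, together with $\id_W\colon W\to W$. One checks that the two induced submersions $W\to Z\times_{Y_{12}}Z'$ agree and that, because the $\hat Z$-component of $s$ lands in the diagonal, the restriction of the pulled-back $\dd_{(E,\alpha)}$ along $s$ equals $(\omega_Z^W)^*\big(\Delta^*\dd_{(E,\alpha)}\big) = \id$; hence $s^*$ of the composite equals $\phi\circ\id = \phi = \id_W^*\phi$, which is precisely the condition in~\eqref{eq:2morequiv}. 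Thus $(\phi,\omega^W)\circ\unit_{(E,\alpha)} = (\phi,\omega^W)$.

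The left unit law $\unit_{(E',\alpha')}\circ(\phi,\omega^W) = (\phi,\omega^W)$ is entirely analogous: the composite now lives over $V' := W\times_{Z'}(Z'\times_{Y_{12}}Z')$, its morphism is the pullback of $\dd_{(E',\alpha')}\circ\phi$, and the diagonal section $s'\colon W\to V'$, $w\mapsto\big(w,(\omega_{Z'}^W(w),\omega_{Z'}^W(w))\big)$, collapses it to $\phi$ using $(\Delta')^*\dd_{(E',\alpha')}=\id_{E'}$, where $\Delta'\colon Z'\to Z'\times_{Y_{12}}Z'$ is the diagonal. Together the two laws show that $\unit_{(E,\alpha)}$ represents the identity 2-morphism on $(E,\alpha)$.

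The step I expect to be the main obstacle is organisational rather than conceptual: correctly bookkeeping the fibre products and the many projection maps entering the vertical composition, and recognising that the equivalence~\eqref{eq:2morequiv} is realised only along the diagonal section $s$ (over the off-diagonal part of $V$ the composite genuinely differs from $\phi$, since there $\dd_{(E,\alpha)}$ is not the identity). Once the diagonal property $\Delta^*\dd_{(E,\alpha)}=\id_E$ from Lemma~\ref{st:dd-def_and_properties} is in hand, the collapse of the composite is immediate; the remaining care is only to confirm that $V$ and $V'$ are surjective submersions over the appropriate fibre products, which holds because pullbacks of surjective submersions are surjective submersions, and that the metrics and connections match, which follows since $\dd_{(E,\alpha)}$ is parallel and isometric.
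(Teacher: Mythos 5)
Your setup is sound: Lemma~\ref{st:dd-def_and_properties}~(2) is precisely the compatibility condition \eqref{eq:comp_2hom_bgrb} in the case $E'=E$, $\alpha'=\alpha$, so $(\dd_{(E,\alpha)},\id_{Z\times_{Y_{12}}Z})$ is a legitimate 2-endomorphism; and the diagonal triviality $\Delta^*\dd_{(E,\alpha)}=\id_E$, while not an explicit item of that lemma, does follow from the cocycle relation \eqref{eq:compat_dd} exactly as the paper extracts it at the start of the proof of Proposition~\ref{st:2-mor_prop}. The genuine gap lies in your verification of the unit laws. The equivalence \eqref{eq:2morequiv} of representatives requires a commutative diagram of \emph{surjective submersions} $\xi$, $\widetilde{\xi}$, but your witness $s: W \rightarrow V = (Z\times_{Y_{12}}Z)\times_Z W$, $w \mapsto \big((\omega_Z^W(w),\omega_Z^W(w)),w\big)$, is a section of $\pr_W$, hence an embedding whose image is only the diagonal slice of $V$; it is not a surjective submersion unless $\pr_2: Z\times_{Y_{12}}Z \rightarrow Z$ is a diffeomorphism. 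Your own parenthetical observation that over the off-diagonal part of $V$ the composite genuinely differs from $\phi$ is exactly the problem: a legitimate witness must pull both representatives back over covers of \emph{all} of $V$ and of $W$, so the off-diagonal locus cannot be sidestepped, and the diagonal property of $\dd_{(E,\alpha)}$ alone cannot close the argument as the definition stands.

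The missing ingredient is Lemma~\ref{st:dd_and_other_2-mors}, which you never invoke. Take $X = V\times_{(Z\times_{Y_{12}}Z')}W$ with its two projections, which \emph{are} surjective submersions commuting over $Z\times_{Y_{12}}Z'$. A point of $X$ is $\big(((z_1,z_2),w),w'\big)$ with $z_2=\omega_Z^W(w)$, $\omega_Z^W(w')=z_1$ and $\omega_{Z'}^W(w')=\omega_{Z'}^W(w)$, and the required identity $\pr_V^*\psi=\pr_W^*\phi$ becomes $\phi_w\circ\dd_{(E,\alpha),(z_1,z_2)}=\phi_{w'}$. This is Lemma~\ref{st:dd_and_other_2-mors} evaluated at $(w',w)\in W\times_{Y_{12}}W$, combined with the diagonal triviality of $\dd_{(E',\alpha')}$ on the target side (since $\omega_{Z'}^W(w')=\omega_{Z'}^W(w)$): one gets $\phi_w\circ\dd_{(E,\alpha),(w',w)}=\dd_{(E',\alpha'),(w',w)}\circ\phi_{w'}=\phi_{w'}$. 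The mirror argument, with the roles of the two $\dd$'s exchanged, gives the left unit law. With this substitution your proof closes, and it then coincides with the route in Waldorf's proof that the paper cites (the paper itself gives no proof but states Lemma~\ref{st:dd_and_other_2-mors} immediately afterwards precisely because it is the naturality statement this proposition needs).
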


We can now package our definitions into a 2-category~\cite{Waldorf:2007aa,Waldorf:0702652}.
\begin{definition}\label{def:lbgrb_2-category}
 The \uline{2-category of line bundle gerbes $\CatLBGrb^\nabla(M)$} has objects hermitean line bundle gerbes with connection. Its 1-morphisms are those of Definition~\ref{def:lbgr_morph} and the 2-morphisms are those of Definition~\ref{def:2hom_bgrb}.
 
 Given 1-morphisms $(E,\alpha,\zeta^Z)$, $(E',\alpha',\zeta^{Z'})$,
 $(E'',\alpha'',\zeta^{Z''}):\CL_2\to \CL_3$, and $(F, \beta,
 \zeta^X)$, $(F', \beta', \zeta^{X'}):\CL_1 \to \CL_2$,
 the \uline{vertical composition} of 2-morphisms $(\phi,\omega^W): (E',\alpha',\zeta^{Z'}) \Rightarrow (E'',\alpha'',\zeta^{Z''})$ and $(\widetilde \phi,\omega^{\widetilde W}): (E,\alpha,\zeta^Z) \Rightarrow (E',\alpha',\zeta^{Z'})$ is the equivalence class of
  \begin{equation}
	  (\phi, \omega^W) \circ (\widetilde \phi,\omega^{\widetilde W}) = \big((\pr_W^* \phi) \circ (\pr_{\widetilde W}^* \widetilde \phi) \,,\,(\pr_Z\circ\omega^{W}\circ \pr_W) \times_{Y_{12}} (\pr_{Z''}\circ \omega^{\widetilde W}\circ\pr_{\widetilde W}) \big)~.
  \end{equation}

 \uline{Horizontal composition} of a pair of 2-morphisms $(\widetilde \phi,\omega^{\widetilde W}): (E,\alpha,\zeta^Z) \Rightarrow (E',\alpha',\zeta^{Z'})$ and $(\phi,\omega^W): (F, \beta, \zeta^X) \Rightarrow (F', \beta', \zeta^{X'})$ is given by%
  \begin{equation}
	  (\widetilde \phi,\omega^{\widetilde W}) \htimes (\phi,\omega^W) = (\psi,\omega^U) \ ,
  \end{equation}
with
\begin{equation}
\begin{aligned}
U&= (X\times_{Y_2}Z)\times_{Y_{13}} (W\times_{Y_2}\widetilde{W}\,)\times_{Y_{13}} (X'\times_{Y_2}Z'\,) \ , \\[4pt]
\omega^U&= \pr_{X\times_{Y_2}Z}\times_{Y_{13}}\pr_{X'\times_{Y_2}Z'} \ , \\[4pt]
\psi&= \dd_{(E',\alpha'\,)\htimes (F',\beta'\,)}\circ (\phi\otimes\widetilde{\phi}\,)\circ\dd_{(E,\alpha)\htimes(F,\beta)} \ ,
\end{aligned}
\end{equation}
where the horizontal composition of 1-morphisms is given in \eqref{eq:comp_1hom}.
\end{definition}

\begin{remark}\label{rem:trivial_line_bundle_gerbes}
The 2-category $\CatLBGrb^\nabla(M)$ has a full sub-2-category $\CatLBGrb^\nabla_{\rm triv}(M)$ having as objects all trivial bundle gerbes $\CI_\rho$ with $\rho \in \di\, \Omega^2(M)$. This 2-category is closed under the tensor product of gerbes, since $\CI_\rho \otimes \CI_{\rho'} = \CI_{\rho + \rho'}$. Hence $\CatLBGrb^\nabla_{\rm triv}(M)$ has the same 2-categorical structure as the full $\CatLBGrb^\nabla(M)$.
\end{remark}

The fact that $\CatHVBdl^\nabla$ is a stack has strong implications for the structure of the $2$-category $\CatLBGrb^\nabla(M)$. Before moving on, however, let us comment on the reasons for introducing the additional surjective submersions $\zeta^Z:Z\thra
Y_{12}$ in Definition~\ref{def:lbgr_morph} and $\omega^W:W\thra
\hat Z$ in Definition~\ref{def:2hom_bgrb}. To this end, let $\shom_{\CatLBGrb^{\nabla}_{\rm FP}(M)}(\CL_1,\CL_2)$ be the subcategory of the category of morphisms $\shom_{\CatLBGrb^{\nabla}(M)}(\CL_1,\CL_2)$ in which we allow only 1-morphisms with $Z=Y_{12}$ and only 2-morphisms with $W=\hat Z$. We then have the following statement.
\begin{theorem}[\cite{Waldorf:2007aa}, Theorem~2.4.1]
	\label{st:Red_is_an_equivalence}
The inclusion functor 
\begin{equation}
  \sfS: \shom_{\CatLBGrb^{\nabla}_{\rm FP}(M)}(\CL_1,\CL_2) \longrightarrow \shom_{\CatLBGrb^{\nabla}(M)}(\CL_1,\CL_2)
\end{equation}
is an equivalence of categories. By adding the identity morphism on objects in $\CatLBGrb^{\nabla}(M)$, we can extend this inclusion functor to a lax 2-functor of bicategories $\CatLBGrb^{\nabla}_{\rm FP}(M)\rightarrow \CatLBGrb^{\nabla}(M)$. We denote the inverse constructed in~\cite{Waldorf:0702652,Waldorf:2007aa} by $\sfR$. In particular, the respective $2$-isomorphism classes of these bicategories are identical.
\end{theorem}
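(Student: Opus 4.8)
The plan is to exploit the fact, emphasised immediately after the statement, that $\CatHVBdl^\nabla$ is a stack on $\CatMfd$, so that the auxiliary surjective submersions in Definitions~\ref{def:lbgr_morph} and~\ref{def:2hom_bgrb} can be removed by descent. The inverse functor $\sfR$ is built as follows. Given a $1$-morphism $(E,\alpha,g,\nabla^E,\zeta^Z):\CL_1\to\CL_2$ with $\zeta^Z:Z\thra Y_{12}$, I first restrict $\alpha$ to the fibre product $Z\times_{Y_{12}}Z$. Over this space one has $\zeta_1^Z(z_1)=\zeta_1^Z(z_2)$ and $\zeta_2^Z(z_1)=\zeta_2^Z(z_2)$, so the line bundles $L_{1,(z_1,z_2)}$ and $L_{2,(z_1,z_2)}$ restrict to the diagonals of $Y_1^{[2]}$ and $Y_2^{[2]}$, where they are canonically trivialised by the units of the bundle gerbe multiplications $\mu_1,\mu_2$. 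Through these trivialisations $\alpha$ collapses to an isometric, connection-preserving isomorphism $\varphi:\pr_2^*E\isom\pr_1^*E$ over $Z\times_{Y_{12}}Z$, and the restriction of~\eqref{eq:comp_hom_bgrb} to the triple fibre product becomes exactly the cocycle condition, so that $(E,\varphi)$ is an object of $\CatDes(\CatHVBdl^\nabla,\zeta^Z)$. Applying the descent functor $D_{\zeta^Z}$ yields a hermitean vector bundle with connection $\tilde E\to Y_{12}$; compatibility of $\alpha$ with $\varphi$ in both arguments, again encoded in~\eqref{eq:comp_hom_bgrb} over $Z^{[3]}$, then makes $\alpha$ equivariant for the descent, so that it descends along $\zeta^Z\times\zeta^Z:Z\times_MZ\thra Y_{12}^{[2]}$ to an isomorphism $\tilde\alpha$, producing $\sfR(E,\alpha):=(\tilde E,\tilde\alpha,\tilde g,\nabla^{\tilde E},\id_{Y_{12}})$. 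On $2$-morphisms $\sfR$ is defined by the analogous descent along $\omega^W:W\thra\hat Z$.

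Next I would verify that $\sfS$ and $\sfR$ are mutually quasi-inverse. The composite $\sfR\circ\sfS$ is the identity on the nose, since when $\zeta^Z=\id_{Y_{12}}$ the descent datum is trivial and $D_{\id_{Y_{12}}}(E,\id)=E$. For the other direction, the canonical isomorphism $\zeta^{Z*}\tilde E\cong E$ furnished by the stack axioms is precisely a flat, $\alpha$-compatible isomorphism between $E$ and the pullback of its descent, and hence assembles into a $2$-isomorphism $(E,\alpha)\twoisom\sfS\circ\sfR(E,\alpha)$ in $\shom_{\CatLBGrb^\nabla(M)}(\CL_1,\CL_2)$. Naturality of this family, and its independence from the choice of representing surjective submersion, is guaranteed by the equivalence relation~\eqref{eq:2morequiv} on $2$-morphisms, under which any two refinements are identified on a common refinement. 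This establishes that $\sfS$ is an equivalence of categories.

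To extend $\sfS$ to a lax $2$-functor of bicategories I would keep it the identity on objects and supply comparison $2$-cells $\sfS_2$. These are required because composing two $1$-morphisms already in fibre-product form produces a morphism over $Y_{12}\times_{Y_2}Y_{23}$ rather than over $Y_{13}$, cf.~\eqref{eq:comp_1hom}; the comparison cell is again obtained by descending along $Y_{12}\times_{Y_2}Y_{23}\thra Y_{13}$, and the coherence pentagon reduces to the associativity of descent together with the associativity~\eqref{eq:assoc_bgrb} of the bundle gerbe multiplications. Finally, since $\sfS$ is the identity on objects and an equivalence on each morphism category, it induces a bijection between isomorphism classes of $1$-morphisms with fixed source and target, and thereby identifies the $2$-isomorphism classes $h_1$ of the two bicategories.

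The hard part will be the explicit reconstruction of $\tilde\alpha$ over $Y_{12}^{[2]}$ and the proof that it inherits compatibility with the bundle gerbe multiplications, isometry and parallelism; in other words, checking that the descent of the \emph{twisted} object $(E,\alpha)$ is well defined and lands in the correct morphism category. The conceptual legitimacy of this step is carried entirely by the stack property of $\CatHVBdl^\nabla$, but the labour lies in bookkeeping the various pullbacks of the line bundles $L_i$ through the tower $Z^{[k]}\to Y_{12}^{[k]}$ and tracking how the canonical diagonal trivialisations interact with $\check\delta$. Verifying that the equivalence relation on $2$-morphisms renders $\sfR$ well defined on $2$-cells (independence of the chosen $\omega^W$) is the other point that requires genuine care.
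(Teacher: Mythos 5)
Your proposal is correct and follows essentially the same route as the proof the paper relies on (Waldorf's construction, whose technical ingredients the paper reproduces in Appendix~\ref{app:special_morphisms}): your $\varphi$ is precisely the isomorphism $\dd_{(E,\alpha)}$ of Lemma~\ref{st:dd-def_and_properties}, built from $\alpha_{\vert Z\times_{Y_{12}}Z}$ and the diagonal trivialisations $t_{\mu_i}$, and the reduction functor $\sfR$ is obtained by descent in the stack $\CatHVBdl^\nabla$ exactly as you describe, with $\sfR\circ\sfS=\unit$ on the nose and $\sfS\circ\sfR\cong\unit$ via the canonical descent isomorphisms. The only minor imprecision is that on $2$-morphisms the descent must be carried all the way down to $Y_{12}$ (first to $\hat Z$ as in Proposition~\ref{st:2-mor_prop}, then along $\hat Z\thra Y_{12}$ using the descent data $\dd_{(E,\alpha)}$, $\dd_{(E',\alpha'\,)}$), not merely along $\omega^W:W\thra\hat Z$.
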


Although the morphism categories are equivalent, $\CatLBGrb^{\nabla}(M)$ has the advantage that it is a 2-category in which horizontal composition of 1-morphisms is associative, while this is not the case for the bicategory $\CatLBGrb^{\nabla}_{\rm FP}(M)$. In this way, the extension of $\CatLBGrb_{\rm FP}^\nabla(M)$ to $\CatLBGrb^\nabla(M)$ evades the use of descent theory in the definition of compositions.
Further details on this matter can be found in Appendix~\ref{app:special_morphisms}.

\begin{example}\label{ex:equiv_sec_trivial_hvbld}
	Let us revisit the example of morphisms $\shom_{\CatLBGrb^\nabla(M)}(\CI_0,\CI_\rho)$.
	The subcategory $\shom_{\CatLBGrb_{\rm FP}^{\nabla}(M)}(\CI_0,\CI_\rho)$ consists of hermitean vector bundles with connection on $M$ and parallel morphisms of vector bundles.
	 Then 
	\begin{equation}
		\shom_{\CatLBGrb^\nabla(M)}(\CI_0,\CI_\rho) \cong \shom_{\CatLBGrb_{\rm FP}^{\nabla}(M)}(\CI_0,\CI_\rho) \cong \CatHVBdl^\nabla(M)
	\end{equation}
by Theorem~\ref{st:Red_is_an_equivalence}.
\end{example}

We can also give an easy characterisation of the (weakly) invertible morphisms in $\shom_{\CatLBGrb^\nabla(M)}(\CL_1,\CL_2)$ as follows.

\begin{proposition}[\cite{Waldorf:2007aa}]\label{prop:morphisms_weakly_invertible}
	A 1-morphism $(E,\alpha):\CL_1\to\CL_2$ of bundle gerbes is (weakly) invertible if and only if its vector bundle $E$ is of rank~$1$.
\end{proposition}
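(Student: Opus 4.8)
The plan is to prove the equivalence by tracking the ranks of the hermitean vector bundles underlying composites of $1$-morphisms; the easy direction is a rank count, while the substantial direction is the construction of an explicit weak inverse when $\rk(E)=1$.

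First I would show that weak invertibility forces $\rk(E)=1$. Suppose $(E,\alpha):\CL_1\to\CL_2$ admits a weak inverse $(E',\alpha'):\CL_2\to\CL_1$, so that there is a $2$-isomorphism $(E,\alpha)\bullet(E',\alpha'\,)\Rightarrow\id_{\CL_1}$. By the composition formula \eqref{eq:comp_1hom}, the bundle underlying the left-hand side is $\pr_Z^*E\otimes\pr_{Z'}^*E'$, of rank $\rk(E)\,\rk(E'\,)$, whereas the bundle underlying $\id_{\CL_1}=(L_1,\beta,h_1,\nabla^{L_1},\id_{Y_1^{[2]}})$ is the line bundle $L_1$, of rank $1$. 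A $2$-morphism in the sense of Definition~\ref{def:2hom_bgrb} restricts, over a surjective submersion $\omega^W$, to a morphism $\phi$ of the pulled-back hermitean vector bundles, and a $2$-isomorphism requires $\phi$ to be an isomorphism; since $\omega^W$ is surjective and rank is preserved by pullback, this forces $\rk(E)\,\rk(E'\,)=1$. As both factors are positive integers, $\rk(E)=1$.

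For the converse I would assume $\rk(E)=1$, so that $E\to Z$ is a hermitean line bundle, and build a weak inverse $(E^*,\hat\alpha):\CL_2\to\CL_1$ over the surjective submersion $Z\thra Y_2\times_M Y_1$ obtained by postcomposing $\zeta^Z$ with the swap of the two factors. Because every bundle occurring in $\alpha_{(z_1,z_2)}:L_{1,(z_1,z_2)}\otimes E_{z_2}\to E_{z_1}\otimes L_{2,(z_1,z_2)}$ is of rank $1$, the datum $\alpha$ is equivalent to an isomorphism of line bundles $E_{z_1}\otimes E^*_{z_2}\cong L_{1,(z_1,z_2)}\otimes L^*_{2,(z_1,z_2)}$; inverting and dualising the $E$-factors yields the desired
\[
 \hat\alpha_{(z_1,z_2)}:L_{2,(z_1,z_2)}\otimes E^*_{z_2}\longrightarrow E^*_{z_1}\otimes L_{1,(z_1,z_2)}~.
\]
The compatibility diagram \eqref{eq:comp_hom_bgrb} for $\hat\alpha$, together with its isometry and parallelism, follow by dualising the corresponding properties of $\alpha$ and using that the canonical pairings $\delta_E,\delta_{L_2}$ are parallel isometries. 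It then remains to exhibit the unit and counit $2$-isomorphisms. The key point is that the bundle underlying $(E,\alpha)\bullet(E^*,\hat\alpha)$ is the line bundle $(z_1,z_2)\mapsto E_{z_1}\otimes E^*_{z_2}$ over $Z\times_{Y_2}Z$, where the $Y_2$-components of $z_1,z_2$ necessarily coincide; the isomorphism above identifies it with $L_{1,(z_1,z_2)}\otimes L^*_{2,(z_1,z_2)}$, and on the diagonal of $Y_2^{[2]}$ the bundle gerbe multiplication $\mu_2$ provides a canonical trivialisation of $L_2$. What survives is precisely $L_1$ together with the twist $\beta$ defining $\id_{\CL_1}$, so the evaluation morphism assembles into a $2$-isomorphism $(E,\alpha)\bullet(E^*,\hat\alpha)\Rightarrow\id_{\CL_1}$; interchanging the roles of $\CL_1$ and $\CL_2$ gives the other.

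I expect the main obstacle to be exactly this last bookkeeping step: one must verify that, after cancelling the diagonal $L_2$-factors against the trivialisation furnished by $\mu_2$, the residual isomorphism reproduces the specific twisting $\beta$ of the identity $1$-morphism rather than merely landing on the trivial gerbe $\CI_0$, and that all the surjective submersions involved pull back to a common refinement $W$ so that the equivalence relation of Definition~\ref{def:2hom_bgrb} applies. Checking compatibility of the pairings with the diagrams \eqref{eq:comp_hom_bgrb} and with the hermitean connections is otherwise routine but notationally heavy.
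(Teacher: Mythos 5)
Your proof is correct and takes the same route as the source: the paper itself gives no proof of this proposition but imports it from Waldorf's thesis \cite{Waldorf:2007aa}, where the argument is precisely your two steps — multiplicativity of rank under $\bullet$ compared against the rank-$1$ bundle $L_1$ underlying $\id_{\CL_1}$, and the dual line bundle $(E^*,\hat\alpha)$ as the weak inverse. The bookkeeping you defer does close, and more easily than you fear: since $E$, $L_1$ and $L_2$ are all line bundles, every map involved is a nonvanishing section of a line bundle and all composites commute multiplicatively, so the compatibility of your evaluation morphism with the twist $\beta_{(y_1,y_2,y_3,y_4)}=\mu_{1,(y_1,y_2,y_4)}^{-1}\circ\mu_{1,(y_1,y_3,y_4)}$ of $\id_{\CL_1}$ reduces to an identity of sections that follows by applying \eqref{eq:comp_hom_bgrb} to the triples $(z_1,z_1',z_2')$ and $(z_1,z_2,z_2')$ in $Z^{[3]}$ and cancelling the resulting diagonal factors of $\mu_2$ against the trivialisation $t_{\mu_2}$ using \eqref{eq:prop1_t} of Lemma~\ref{st:t_mu-lemma}.
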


The notion of {stable isomorphisms} of bundle gerbes therefore just corresponds to isomorphisms in $\shom_{\CatLBGrb_{\rm FP}^{\nabla}(M)}(\CL_1,\CL_2)$.
A consequence of the equivalence of categories from Theorem~\ref{st:Red_is_an_equivalence} is therefore the following.

\begin{corollary}
	Two bundle gerbes $\CL_1,\, \CL_2 \in \CatLBGrb^\nabla(M)$ are 1-isomorphic\footnote{This means that there is a 1-isomorphism in $\CatLBGrb^\nabla(M)$ between them.} if and only if they are stably isomorphic.
\end{corollary}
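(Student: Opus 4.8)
The plan is to obtain the corollary by threading together Proposition~\ref{prop:morphisms_weakly_invertible}, Theorem~\ref{st:Red_is_an_equivalence}, and the observation (recorded in the Remark following Definition~\ref{def:lbgr_morph}) that a stable isomorphism is precisely a $1$-morphism $(E,\alpha)$ whose surjective submersion is $Z=Y_{12}$ and whose hermitean vector bundle $E$ has rank one. The overall strategy is to reduce the property ``$1$-isomorphic in $\CatLBGrb^\nabla(M)$'' to the property ``$1$-isomorphic in $\CatLBGrb^{\nabla}_{\rm FP}(M)$'', and then to identify the latter with stable isomorphism via a rank count.

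For the implication that stable isomorphism implies $1$-isomorphism I would argue directly, with no use of the reduction. A stable isomorphism between $\CL_1$ and $\CL_2$ is, by the cited Remark, a $1$-morphism $(E,\alpha)\in\shom_{\CatLBGrb^\nabla(M)}(\CL_1,\CL_2)$ with $\rk(E)=1$. Proposition~\ref{prop:morphisms_weakly_invertible} then yields that $(E,\alpha)$ is weakly invertible, so $\CL_1$ and $\CL_2$ are $1$-isomorphic.

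For the converse I would start from a weakly invertible $1$-morphism $(E,\alpha)\colon\CL_1\to\CL_2$ in the full $2$-category. Proposition~\ref{prop:morphisms_weakly_invertible} forces $\rk(E)=1$, but its surjective submersion $\zeta^Z$ need not satisfy $Z=Y_{12}$, so $(E,\alpha)$ is not yet visibly a stable isomorphism. I would therefore apply the inverse functor $\sfR$ of Theorem~\ref{st:Red_is_an_equivalence} to obtain $\sfR(E,\alpha)\in\shom_{\CatLBGrb^{\nabla}_{\rm FP}(M)}(\CL_1,\CL_2)$, which by construction has $Z=Y_{12}$. Since $\sfS$ and $\sfR$ constitute an equivalence of categories, there is an invertible $2$-morphism $\sfS\,\sfR(E,\alpha)\cong(E,\alpha)$; by Definition~\ref{def:2hom_bgrb} an invertible $2$-morphism is an isomorphism of the underlying pulled-back hermitean vector bundles, hence rank-preserving, so the vector bundle of $\sfR(E,\alpha)$ again has rank one. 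Thus $\sfR(E,\alpha)$ is a rank-one morphism with $Z=Y_{12}$, i.e.\ a stable isomorphism, and $\CL_1$ and $\CL_2$ are stably isomorphic.

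The only point that requires care — and where I expect the sole genuine obstacle to lie — is this passage back to the reduced picture: one must ensure that both weak invertibility and the rank of the representing bundle are preserved under $\sfR$, which is exactly what Theorem~\ref{st:Red_is_an_equivalence} (an equivalence of categories preserves isomorphisms) together with the rank-preservation built into Definition~\ref{def:2hom_bgrb} provide. A cleaner alternative that avoids handling $\sfR$ explicitly is to invoke the final clause of Theorem~\ref{st:Red_is_an_equivalence} directly, namely that $\CatLBGrb^{\nabla}_{\rm FP}(M)$ and $\CatLBGrb^\nabla(M)$ have identical $2$-isomorphism classes. This gives at once that $\CL_1$ and $\CL_2$ are $1$-isomorphic in the full $2$-category if and only if they are $1$-isomorphic in the reduced bicategory, where a weakly invertible $1$-morphism is (again by Proposition~\ref{prop:morphisms_weakly_invertible}) exactly a rank-one morphism with $Z=Y_{12}$, that is, a stable isomorphism.
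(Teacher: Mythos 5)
Your proof is correct and takes essentially the same route as the paper: the paper's (deliberately terse) argument is exactly the combination of Proposition~\ref{prop:morphisms_weakly_invertible}, the identification of stable isomorphisms with rank-one morphisms over $Z=Y_{12}$, and the equivalence $\sfS$/$\sfR$ of Theorem~\ref{st:Red_is_an_equivalence}, which you have merely spelled out in detail. In particular, the rank-preservation step you isolate (invertible 2-morphisms are fibrewise isomorphisms of the pulled-back bundles, so passing to $\sfR(E,\alpha)$ keeps the rank equal to one) is precisely what makes the paper's one-line deduction from the equivalence of categories go through.
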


As a further application, we can now construct the kernel of a 2-morphism.
\begin{proposition}
	\label{st:kernels_of_2mors}
	Let $(\phi,\omega^W): (E,\alpha) \Rightarrow (E',\alpha'\,)$ be a 2-morphism of bundle gerbes on~$M$.
	\begin{enumerate}
		\item The kernel of $\phi$ defines a hermitean vector bundle with connection on $W$, which descends to $Z$ along $\omega^W_Z = \pr_Z \circ \omega^W$.
		This descent bundle has a natural embedding $\imath$ into $E$, and we set
		\begin{equation}
		K_\phi = \imath \Big( D_{\omega_Z^W} \big( \ker(\phi),\, (\omega^{W[2]*}_Z \dd_{(E,\alpha)})_{|\ker(\phi)} \big) \Big)~.
		\end{equation}
		\item The vector bundle $K_\phi\to Z$, together with the restriction of $\alpha$ to $L_{1,(z_1,z_2)} \otimes K_{\phi,z_2}$ over $(z_1,z_2)\in Z^{[2]}$, is a 1-morphism of bundle gerbes from the source $\sfs(E,\alpha)$ to the target $\sft(E,\alpha)$.
		\item This 1-morphism of bundle gerbes is a kernel of the 2-morphism $(\phi,\omega^W)$.
	\end{enumerate}
\end{proposition}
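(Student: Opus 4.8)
The plan is to handle the three parts in sequence: first build the underlying vector bundle $K_\phi$ over $Z$, then promote it to a $1$-morphism, and only afterwards check the universal property, which will be essentially formal once the object is in place.

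For part (1), I would first note that the hermitean-with-connection structure is exactly what forces $\ker(\phi)$ to be a genuine subbundle. Since $\phi\in\shom_{\CatHVBdl^\nabla(M)}(\omega^{W*}_ZE,\omega^{W*}_{Z'}E'\,)$ is parallel, parallel transport intertwines $\phi$ on the two bundles, so $\dim\ker(\phi)$ is locally constant on $W$ and $\ker(\phi)\subseteq\omega^{W*}_ZE$ is a subbundle. The same parallelism makes it a \emph{parallel} subbundle: for a local section $s$ of $\ker(\phi)$ one has $\phi(\nabla^E s)=\nabla^{E'}(\phi s)=0$, so $\nabla^E$ restricts to a hermitean connection on $\ker(\phi)$, and the metric restricts as well. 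To descend $\ker(\phi)$ along $\omega^W_Z\colon W\thra Z$ I would invoke that $\CatHVBdl^\nabla$ is a stack, so that it suffices to exhibit a descent datum. The compatibility square \eqref{eq:comp_2hom_bgrb} is precisely the naturality of $\phi$ with respect to the canonical isomorphisms $\dd_{(E,\alpha)}$ and $\dd_{(E',\alpha'\,)}$ of Lemma~\ref{st:dd-def_and_properties}; this naturality forces the two pullbacks of $\ker(\phi)$ to $W\times_Z W$ to agree, so that the restriction $(\omega^{W[2]*}_Z\dd_{(E,\alpha)})_{|\ker(\phi)}$ is a well-defined descent datum whose cocycle condition is inherited from that of $\dd_{(E,\alpha)}$. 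Applying $D_{\omega^W_Z}$ yields a bundle on $Z$, and functoriality of $D_{\omega^W_Z}$ applied to the parallel isometric inclusion $\ker(\phi)\hookrightarrow\omega^{W*}_ZE$ realises it, via $\imath$, as a subbundle of the descent of $\omega^{W*}_ZE$, namely $E$ itself; its image is $K_\phi\subseteq E$.

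For part (2), the key fact is that $K_\phi$ is an $\alpha$-invariant subbundle. Pulling back to $W^{[2]}$ and reading \eqref{eq:comp_2hom_bgrb} fibrewise, for $v\in\ker(\phi_{w_2})$ the relation $\alpha'_{(w_1,w_2)}\circ(\unit\otimes\phi_{w_2})=(\phi_{w_1}\otimes\unit)\circ\alpha_{(w_1,w_2)}$ gives $\alpha_{(w_1,w_2)}(\ell\otimes v)\in\ker(\phi_{w_1})\otimes L_2$, so $\alpha$ sends $L_1\otimes\ker(\phi)$ into $\ker(\phi)\otimes L_2$. As $\ker(\phi)=\omega^{W*}_ZK_\phi$ and $\omega^W_Z$ is a surjective submersion, this descends to a restriction $\alpha|_{K_\phi}\colon L_{1,(z_1,z_2)}\otimes K_{\phi,z_2}\to K_{\phi,z_1}\otimes L_{2,(z_1,z_2)}$ over $Z^{[2]}$, which is an isomorphism by a rank count since $\alpha$ is and both sides have rank $\rk K_\phi$. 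Compatibility with the bundle gerbe multiplications \eqref{eq:comp_hom_bgrb}, parallelism and the isometry property are then inherited by restricting the corresponding properties of $(E,\alpha)$ to the invariant parallel subbundle $K_\phi$. Thus $(K_\phi,\alpha|_{K_\phi},\zeta^Z)$ is a $1$-morphism with the same source $\sfs(E,\alpha)=\CL_1$ and target $\sft(E,\alpha)=\CL_2$ as $(E,\alpha)$.

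For part (3), the inclusion $\imath$ is parallel and $\alpha$-intertwining by construction, hence represents a $2$-morphism $\kappa\colon(K_\phi,\alpha|_{K_\phi})\Rightarrow(E,\alpha)$; pulling back to $W$ gives a map with image $\ker(\phi)$, so $(\phi,\omega^W)\circ\kappa=0$. For universality I would take any $2$-morphism $\psi\colon(G,\gamma)\Rightarrow(E,\alpha)$ with $(\phi,\omega^W)\circ\psi=0$; passing to a common refinement of the underlying surjective submersions, its bundle map satisfies $\phi\circ\psi_0=0$, so $\psi_0$ factors through $\ker(\phi)=\omega^{W*}_ZK_\phi$. Because $\psi_0$ is a parallel morphism of hermitean bundles factoring through the parallel hermitean subbundle $K_\phi$, the induced map is again a morphism in $\CatHVBdl^\nabla$ and descends to a $2$-morphism $\bar\psi\colon(G,\gamma)\Rightarrow(K_\phi,\alpha|_{K_\phi})$ with $\kappa\circ\bar\psi=\psi$, and uniqueness is immediate since $\imath$, hence $\kappa$, is a monomorphism. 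The hard part is concentrated entirely in part (1): keeping the tower $W\thra Z\thra Y_{12}$ straight while checking that $\dd_{(E,\alpha)}$ genuinely restricts to a descent datum on $\ker(\phi)$ and that the descended bundle embeds into $E$ compatibly with metric and connection. Once this descent step is done, parts (2) and (3) reduce to restricting structures to an invariant subbundle and invoking the ordinary universal property of a kernel in $\CatHVBdl^\nabla$.
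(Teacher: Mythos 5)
Your proof is correct and follows essentially the same route as the paper: kernels of parallel morphisms are parallel hermitean subbundles, the restriction of $\dd_{(E,\alpha)}$ (via its compatibility with 2-morphisms, Lemma~\ref{st:dd_and_other_2-mors}) furnishes the descent datum along $\omega^W_Z$, the compatibility square \eqref{eq:comp_2hom_bgrb} gives $\alpha$-invariance of $K_\phi$, and universality is the fibrewise factorisation through the kernel. The only cosmetic difference is that the paper first builds the descent datum over $W\thra Y_{12}$ and restricts along $W\times_Z W\hookrightarrow W\times_{Y_{12}}W$, whereas you descend along $W\thra Z$ directly.
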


\begin{proof}
	(1):
	The kernel of $\phi$ is a hermitean vector bundle with connection over $W$, since $\phi$ is a parallel homomorphism.
	It has a canonical embedding into $\omega_Z^{W*}E$ as a sub-bundle.
	From Lemma~\ref{st:dd_and_other_2-mors} it follows that the bundle $\ker(\phi)$ together with the morphism $\fpmap{}{\omega}{W[2]*}{Z} \dd_{(E,\alpha)}$ is a descent datum in the stack $\CatHVBdl^\nabla$ for the surjective submersion $W \thra Y_{12}$.
	As $Z \thra Y_{12}$ is a surjective submersion as well, there is an inclusion $W{\times_Z}W \hookrightarrow W\times_{Y_{12}}W$, so that by restriction we obtain a descent datum for $W \thra Z$.
	The embedding $\ker(\phi) \hookrightarrow \omega_Z^{W*}E$ is a descent morphism, so that the descent bundle carries a canonical embedding into $E$ as vector bundles over $Z$.
	
\noindent	(2):
	The compatibility condition~\eqref{eq:comp_2hom_bgrb} for 2-morphisms in $\CatLBGrb^\nabla(M)$ implies that $\alpha$ induces a morphism of vector bundles
	\begin{equation}
	\big(\fpmap{}{\omega}{W[2]*}{Z} \alpha\big)_{(w_1,w_2)} :\ L_{1,(w_1,w_2)} \otimes \ker(\phi)_{w_2} \isom \ker(\phi)_{w_1} \otimes L_{2,(w_1,w_2)}
	\end{equation}
	over $(w_1,w_2)\in W^{[2]}$.
	By Lemma~\ref{st:dd-def_and_properties}~(2), it is compatible with the descent datum induced by the identity on $L_i$ and $\fpmap{}{\omega}{W[2]*}{Z} \dd_{(E,\alpha)}$ on $\omega_Z^{W*}E$.
	Thus $\alpha$ restricts to give a bundle gerbe 1-morphism $(K_\phi,\alpha_{|K_\phi},\zeta^Z)$.
	
\noindent (3):
	For any other 2-morphism $(\widetilde{\phi},\omega^{\widetilde W}): (E'',\alpha''\,) \Rightarrow (E,\alpha)$ such that $(\phi,\omega^W) \circ (\widetilde{\phi},\omega^{\widetilde W}) = 0$, the pullback of $\widetilde{\phi}$ to $W{\times_Z}\widetilde W$ factors through the pullback of the kernel of $\phi$.
	This factor morphism of vector bundles is uniquely defined since the kernel is defined fibrewise, i.e.\ in the abelian category of vector spaces, and hence yields a unique 2-morphism $(E'',\alpha''\, ) \Rightarrow \ker(\phi,\omega^W)$.
\end{proof}

\begin{remark}
	We can similarly define the cokernel $\coker(\phi,\omega^W)$ as the descent of $\coker(\phi)$ along $\omega_{Z'}^W$ to $Z'$; due to the unitarity of $\alpha'$, we may use the orthogonal complement $\im(\phi)^\perp$ to construct the cokernel.
	A 2-morphism is monic (epic) if its kernel (cokernel) is the zero 1-morphism.
	Using orthogonal complements together with kernels of vector bundles and descent as above, we infer that every monic (epic) 2-morphism is a kernel (cokernel).
\end{remark}

The existence of kernels and cokernels of 2-morphisms, together with the enrichment over abelian groups, suggests that there may be an abelian structure on the morphism categories.
This point will be addressed in detail in the following.

\subsection{Monoidal structure}

As stated in Section~\ref{ssec:gq_outline}, the prequantum Hilbert space on a quantisable symplectic manifold is obtained from sections of the prequantum line bundle $L$. Thus we wish to generalise the notion of section from line bundles to line bundle gerbes. A definition of sections of a vector bundle $E$ which readily categorifies is that of a morphism from the trivial line bundle $M\times\FC$ to $E$. In particular, the evaluation
\begin{equation}
\mathsf{ev}_1: \shom_{\CatVBdl(M)}(M \times \FC, E) \overset{\cong}{\longrightarrow} \Gamma(M,E) \ , \qquad \varepsilon_x \longmapsto \varepsilon_x(1)
\end{equation}
is an isomorphism of modules over $C^\infty(M)$. In fact, the module action of $C^\infty(M)$ on the sections $\shom_{\CatVBdl(M)}(M \times \FC, E)$ can be viewed as being induced by pre-composition with elements of $\shom_{\CatVBdl(M)}(M\times \FC, M \times \FC) \cong C^\infty(M)$. Since $\shom_{\CatVBdl(M)}(M \times \FC, E)$ forms a vector space over the field of constant functions, we recover the vector space structure underlying the prequantum Hilbert space.

For the more general module action of $C^\infty(M)$ on $\shom_{\CatVBdl(M)}(E,F)$ for a pair of vector bundles $E$ and $F$ on $M$, we can use the monoidal structure on the category $\CatVBdl(M)$: We map a pair consisting of a section $\varepsilon \in \shom_{\CatVBdl(M)}(E,F)$ and a function $f \in \shom_{\CatVBdl(M)}(M\times \FC, M \times \FC)$ to
\begin{equation}
	 f \otimes \varepsilon \ \in \ \shom_{\CatVBdl(M)} \big( (M\times \FC) \otimes E,\, (M \times \FC) \otimes F \big) =  \shom_{\CatVBdl(M)}(E,F)~.
\end{equation}

In the following we make explicit the straightforward lift of the above constructions to the case of line bundle gerbes.
\begin{definition}
	The \uline{category of sections} of a line bundle gerbe $\CL$ is
	\begin{equation}
	\Gamma(M,\CL) = \shom_{\CatLBGrb^\nabla(M)}(\CI_0,\CL)~.
	\end{equation}
\end{definition}
This category is a module category over $\shom_{\CatLBGrb^\nabla(M)}(\CI_0,\CI_0)$ under the tensor product in $\CatLBGrb^\nabla(M)$, which is equivalent to the category $ \CatHVBdl^\nabla(M)$, cf.\ Example~\ref{ex:equiv_sec_trivial_hvbld}.\footnote{Waldorf showed in~\cite{Waldorf:2007aa} that in fact any morphism category in $\CatLBGrb^\nabla(M)$ is a module category over $\CatHVBdl^\nabla_0(M)$, the category of hermitean vector bundles with connection of vanishing traced curvature, with the module action stemming from the monoidal structure in $\CatLBGrb^\nabla(M)$.}

To obtain a 2-vector space underlying our eventual prequantum 2-Hilbert space, we need an additional monoidal structure. An obvious candidate here is the direct sum, which has some nice properties.
\begin{theorem}
	\label{st:Direct_sum_as_fctr}
	Let $(E,\alpha,g,\nabla^E,B,\zeta^Z)$ and $(E',\alpha',g', \nabla^{E'},B',\zeta^{Z'})$ be 1-morphisms in $\shom_{\CatLBGrb^\nabla(M)}(\CL_1,\CL_2)$. 
	\begin{enumerate}
	 \item Consider the assignment
	 \begin{equation}
	  (E,\alpha,g,\nabla^E,B,\zeta^Z)\oplus(E',\alpha',g', \nabla^{E'},B',\zeta^{Z'})=(F,\beta,h,\nabla^{F},B^F,\zeta^{\hat Z})
	 \end{equation}
	 with $(F,h,\nabla^F)$ the direct sum of the pullbacks of $(E,g,\nabla^E)$ and $(E',g',\nabla^{E'})$ to $\hat Z=Z\times_{Y_{12}}Z'$, $B^F$ the sum of the pullbacks of $B$ and $B'$, and 
	 \begin{equation}
	  \beta_{(\hat z_1,\hat z_2)}:=\big(\sfd^{\rm r}_{L_{2,(\hat z_1,\hat z_2)}}\big)^{-1}\circ (\alpha_{(\hat z_1,\hat z_2)}\oplus\alpha'_{(\hat z_1,\hat z_2)})\circ \sfd^{\rm l}_{L_{1,(\hat z_1,\hat z_2)}}~,
	 \end{equation}
	 where $\sfd^{\rm l,r}_L$ denote the distribution isomorphisms $\sfd^{\rm l}_L: L \otimes (E \oplus F) \rightarrow (L\otimes E)\oplus(L\otimes F)$ and $\sfd^{\rm r}_L: (E \oplus F) \otimes L \rightarrow (E\otimes L)\oplus(F\otimes L)$. This assignment turns $\shom_{\CatLBGrb^\nabla(M)}(\CL_1,\CL_2)$ into a symmetric monoidal category.
	  \item The tensor product on $\CatLBGrb^\nabla(M)$ is a monoidal functor on the morphism categories in $\CatLBGrb^\nabla(M)$ (in addition to the horizontal composition $\bullet$); that is, the tensor product induced on the morphism categories by the monoidal structure on $\CatLBGrb^\nabla(M)$ distributes over the direct sum in the sense that there exist natural transformations acting as
	  \begin{equation}\label{eq:left-distribution}
	  \sfd^{\rm l}_{F,E,E'}: F \otimes (E \oplus E'\, ) \twoisom (F \otimes E) \oplus (F \otimes E'\, )
	  \end{equation}
	  and
	  \begin{equation}\label{eq:right-distribution}
	  \sfd^{\rm r}_{E,E',F}: (E \oplus E'\, ) \otimes F \twoisom (E \otimes F) \oplus (E' \otimes F)~.
	  \end{equation}
	\end{enumerate}
\end{theorem}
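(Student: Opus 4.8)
The guiding principle is that both monoidal operations and their interaction are inherited from the symmetric monoidal (in fact rig) structure carried by $\CatHVBdl^\nabla(N)$ under $\oplus$ and $\otimes$ for any manifold $N$; the only genuine work lies in checking that this structure is compatible with the extra datum $\alpha$ of a $1$-morphism and descends correctly through the surjective submersions. To keep the bookkeeping of submersions under control, the plan is to first establish everything in the subcategory $\shom_{\CatLBGrb^{\nabla}_{\rm FP}(M)}(\CL_1,\CL_2)$, where every $1$-morphism lives over the fixed base $Y_{12}$ and $\oplus$ is literally the direct sum of vector bundles over $Y_{12}$, and then to transport the result to $\shom_{\CatLBGrb^\nabla(M)}(\CL_1,\CL_2)$ along the equivalence $\sfS$ of Theorem~\ref{st:Red_is_an_equivalence}. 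This equivalence, together with the stack property of $\CatHVBdl^\nabla$, is precisely what removes the dependence of the construction on the chosen refinement $\hat Z = Z\times_{Y_{12}}Z'$.

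For part (1) I would first verify that the assignment produces a genuine $1$-morphism. That $F$ is hermitean with connection is immediate; the content is that $\beta$ satisfies the compatibility diagram~\eqref{eq:comp_hom_bgrb} and is parallel and isometric. Since $\beta$ is assembled from $\alpha\oplus\alpha'$ by pre- and post-composing with the distribution isomorphisms $\sfd^{\rm l}_{L_1}$ and $(\sfd^{\rm r}_{L_2})^{-1}$, and since these are \emph{natural} and \emph{parallel} isometries of vector bundles, the compatibility diagram for $\beta$ reduces to those for $\alpha$ and $\alpha'$ after inserting the naturality squares for $\sfd^{\rm l}$ and $\sfd^{\rm r}$; parallelism and the isometry property pass through for the same reason. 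Next I would promote $\oplus$ to a functor by setting $\phi\oplus\phi'$ to be the direct sum of the underlying bundle morphisms and checking that it is a $2$-morphism, i.e.\ that it satisfies~\eqref{eq:comp_2hom_bgrb}, which is again naturality of $\sfd^{\rm l},\sfd^{\rm r}$. The unit object is the rank-$0$ $1$-morphism, and the associator, unitors and symmetry are taken to be those of $(\CatHVBdl^\nabla,\oplus)$. The one thing requiring proof is that each such coherence isomorphism is a $2$-morphism, i.e.\ that it intertwines the corresponding $\beta$'s; because every $\beta$ is built functorially from the $\alpha$'s through the same universal distribution data, this is once more a consequence of naturality, and the pentagon, triangle and hexagon axioms then hold because they hold in $\CatHVBdl^\nabla$ and composition of $2$-morphisms is computed on underlying bundle maps.

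For part (2) the distributors~\eqref{eq:left-distribution} and~\eqref{eq:right-distribution} are defined to be the $2$-morphisms whose underlying bundle maps are the ordinary distribution isomorphisms $\sfd^{\rm l}_{F,E,E'}$ and $\sfd^{\rm r}_{E,E',F}$ of $\CatHVBdl^\nabla$, pulled back to a common refinement of the submersions appearing in $F$, $E$ and $E'$. I would check that they are $2$-morphisms (compatibility~\eqref{eq:comp_2hom_bgrb}, again from naturality) and natural in all three arguments, which follows from $\sfd^{\rm l},\sfd^{\rm r}$ being natural transformations of functors on vector bundles; the coherence hexagons expressing that $\otimes$ distributes over $\oplus$ are then inherited from the rig structure of $\CatHVBdl^\nabla$.

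The main obstacle I anticipate is organisational rather than conceptual: tracking the surjective submersions. The direct sum lives over $\hat Z = Z\times_{Y_{12}}Z'$, the tensor product replaces the base by a further fibre product, and the distributors must be formed over yet another common refinement, so one must repeatedly pull bundles and morphisms back along submersions and invoke the stack property of $\CatHVBdl^\nabla$ to see that the resulting $2$-morphisms are independent of these choices and glue correctly. Working in $\CatLBGrb^{\nabla}_{\rm FP}(M)$ first and then applying $\sfS$ and its inverse $\sfR$ is exactly the device that quarantines this bookkeeping, reducing the genuinely mathematical steps to the standard symmetric-monoidal and distributivity coherences of hermitean vector bundles with connection.
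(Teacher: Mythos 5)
Your part (1) verification that $\beta$ satisfies \eqref{eq:comp_hom_bgrb} is essentially the paper's computation, but the proposal has a genuine gap that runs through both parts: you never account for the fact that pullbacks of $E$, $E'$, $F$ along \emph{different} projections to $Z$, $Z'$, $U$ are not equal, only canonically isomorphic via the parallel isomorphisms $\dd_{(E,\alpha)}$ of Lemma~\ref{st:dd-def_and_properties}. This is fatal in part (2): the source $F\otimes(E\oplus E'\,)$ lives over $U\times_M(Z\times_{Y_{12}}Z'\,)$ while the target $(F\otimes E)\oplus(F\otimes E'\,)$ lives over $(U\times_M Z)\times_{Y_{1234}}(U\times_M Z'\,)$, and over the common refinement $W\cong (U\times_{Y_{34}}U\times_{Y_{34}}U)\times_M\big((Z\times_{Y_{12}}Z)\times_{Y_{12}}(Z'\times_{Y_{12}}Z'\,)\big)$ the fibres to be compared are $F_{u_1}\otimes(E_{z_1}\oplus E'_{z_1'})$ and $(F_{u_2}\otimes E_{z_2})\oplus(F_{u_3}\otimes E'_{z_2'})$, with $u_1,u_2,u_3$ distinct points over the same point of $Y_{34}$ and similarly for the $z$'s. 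The ``ordinary distribution isomorphism pulled back to a common refinement'' simply does not define a bundle map between these; the correct distributor is necessarily
\begin{equation}
 \sfd^{\rm l}_{F,E,E'} = \big( (\dd_{(F,\beta)}\otimes\dd_{(E,\alpha)}) \oplus (\dd_{(F,\beta)}\otimes\dd_{(E',\alpha'\,)}) \big)\circ \sfd^{\rm l}_{F}~,
\end{equation}
and checking that it satisfies \eqref{eq:comp_2hom_bgrb} and is natural requires Lemma~\ref{st:dd_and_other_2-mors}, not merely naturality of $\sfd^{\rm l,r}$ in $\CatHVBdl^\nabla$.

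The same omission undermines functoriality in part (1): identity 2-morphisms in $\CatLBGrb^\nabla(M)$ are not identity bundle maps but are represented by $\dd_{(E,\alpha)}$ over $Z\times_{Y_{12}}Z$, so showing that $\oplus$ preserves identities amounts to proving $\dd_{(E,\alpha)\oplus(E',\alpha'\,)}=\dd_{(E,\alpha)}\oplus\dd_{(E',\alpha'\,)}$, a separate diagram chase in the paper which your naturality argument does not supply. Nor does the fallback of working in $\CatLBGrb^{\nabla}_{\rm FP}(M)$ and transporting along $\sfS$, $\sfR$ rescue the concrete claim: transport along the equivalence of Theorem~\ref{st:Red_is_an_equivalence} yields structure maps conjugated by descent isomorphisms, and unwinding $\sfR$ reintroduces exactly the $\dd$-maps (cf.\ Proposition~\ref{st:2-mor_prop}); at best this shows that \emph{some} symmetric monoidal structure exists, not that the specific assignment $(F,\beta,\zeta^{\hat Z})$ and the specific distributors \eqref{eq:left-distribution}, \eqref{eq:right-distribution} asserted in the theorem are well defined. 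In short, the ``organisational bookkeeping'' you defer is precisely where the mathematical content of this theorem lies.
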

\noindent The proof of Theorem~\ref{st:Direct_sum_as_fctr} is rather involved and hence is deferred to Appendix~\ref{app:Proof_of_direct_sum_thm}.

Next we work out further structure of the morphism categories $\shom_{\CatLBGrb^\nabla(M)}(\CL_1,\CL_2)$.
\begin{theorem}
	\label{st:hom_cats_are_semisimple_Abelian}
	Every morphism category $\shom_{\CatLBGrb^\nabla(M)}(\CL_1,\CL_2)$ is a semisimple abelian and cartesian symmetric monoidal category, with binary products and coproducts coinciding and given by the direct sum.
\end{theorem}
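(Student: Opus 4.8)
The plan is to verify the four defining properties of an abelian category in turn (additive structure with biproducts, existence of kernels and cokernels, and the coincidence of monics with kernels and epics with cokernels), then to upgrade this to \emph{semisimplicity} using the hermitean metrics, and finally to observe that ``cartesian symmetric monoidal'' is an automatic consequence of the identification of the direct sum with the categorical biproduct. Almost all of the analytic content is already packaged in Theorem~\ref{st:Direct_sum_as_fctr}, Proposition~\ref{st:kernels_of_2mors} and the remark following it, so the proof is largely a matter of assembling these pieces and checking that the relevant structure maps genuinely lift to $2$-morphisms of bundle gerbes.

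First I would exhibit $\shom_{\CatLBGrb^\nabla(M)}(\CL_1,\CL_2)$ as an additive category. The zero object is the zero $1$-morphism, carried by the rank-$0$ vector bundle; the hom-sets of $2$-morphisms are abelian groups, since a $2$-morphism is represented by a parallel bundle morphism and any two such may be pulled back to a common surjective submersion over $\hat Z$ and added there, with vertical and horizontal composition bilinear by construction. Theorem~\ref{st:Direct_sum_as_fctr}(1) provides the symmetric monoidal direct sum $\oplus$. To see that $\oplus$ is a \emph{biproduct}, I would note that the canonical inclusions $E\hookrightarrow E\oplus E'$ and projections $E\oplus E'\twoheadrightarrow E$ of the underlying hermitean vector bundles are parallel and, via the distributivity isomorphisms $\sfd^{\rm l}_L,\sfd^{\rm r}_L$ appearing in the definition of $\beta$, are compatible with $\alpha,\alpha'$; hence they descend to $2$-morphisms, and the biproduct identities $\pr_i\circ\imath_j=\delta_{ij}$ hold fibrewise. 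Thus binary products and coproducts both exist and are computed by $\oplus$.

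Next, kernels and cokernels of $2$-morphisms exist by Proposition~\ref{st:kernels_of_2mors} and the remark immediately after it; that same remark records that every monic $2$-morphism is a kernel and every epic $2$-morphism is a cokernel, which is the final abelian axiom. This establishes that each $\shom_{\CatLBGrb^\nabla(M)}(\CL_1,\CL_2)$ is abelian. For semisimplicity, the essential input is unitarity: given a sub-$1$-morphism $(K,\alpha_{|K})\hookrightarrow(E,\alpha)$, the fibrewise orthogonal complement $K^\perp$ with respect to the hermitean metric $g$ is again a parallel, $\alpha$-compatible sub-bundle that descends along $\zeta^Z$ (exactly as $\im(\phi)^\perp$ is used to build cokernels in the remark), giving a splitting $E\cong K\oplus K^\perp$. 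Hence every subobject is a direct summand; since $E$ has finite rank, iterating this splitting terminates, exhibiting every object as a finite direct sum of simple (equivalently, indecomposable) objects, which is semisimplicity.

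Finally, because the category is additive, its binary products coincide with its binary coproducts (both being the biproduct $\oplus$), so the symmetric monoidal structure of Theorem~\ref{st:Direct_sum_as_fctr}(1) is cartesian, as claimed. The step I expect to be the main obstacle is the semisimplicity argument: one must check carefully that the orthogonal complement $K^\perp$ inherits a \emph{compatible} hermitean connection and $\alpha$-action and descends correctly through the surjective submersion $\zeta^Z$, so that the fibrewise splitting is genuinely a splitting of $1$-morphisms of bundle gerbes rather than merely of vector bundles. The bookkeeping of the auxiliary submersions, handled as in Proposition~\ref{st:kernels_of_2mors}, is what makes this more than a routine linear-algebra statement.
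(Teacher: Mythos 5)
Your proposal is correct, and for the additive and abelian parts it follows the paper's own route: the paper likewise realises the inclusions and projections of the Whitney sum as $2$-morphisms by pulling back to $\hat Z$ and using the fact that a parallel, $\alpha$-intertwining bundle map over a common surjective submersion induces a unique $2$-morphism, and the abelian axioms (kernels, cokernels, monic $=$ kernel, epic $=$ cokernel) come from Proposition~\ref{st:kernels_of_2mors} and the remark following it, exactly as you invoke them. Where you genuinely diverge is semisimplicity. The paper does not split subobjects by orthogonal complements; it instead diagonalises $2$-endomorphisms: any $(\phi,\omega^W)$ on $(E,\alpha)$ yields a splitting
\begin{equation}
(E,\alpha) \cong \bigoplus_{\lambda \in \sigma(\phi)}\, \ker \big( (\phi,\omega^W) - \lambda\,\unit_{(E,\alpha)} \big)~,
\end{equation}
with each eigenspace again a $1$-morphism by Proposition~\ref{st:kernels_of_2mors}, so that iterating produces summands whose only $2$-endomorphisms are scalar multiples of the identity. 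That route buys the Schur-type characterisation of simple objects ($2\shom = \FC\,\unit$) in the form the paper needs later, when normalised and simple objects are identified in Lemma~\ref{st:decomposition_prop_in_Gamma(M,G)}. Your route --- every monic splits off via the orthogonal complement $K^\perp$, which is parallel and $\alpha$-invariant because $\nabla^E$ is metric and $\alpha$ is isometric, followed by induction on the finite rank and the observation that indecomposable objects are then simple --- is the standard argument for unitary categories, and it is arguably more robust: it makes explicit exactly where the hermitean data enter, and it sidesteps the fibrewise diagonalisability implicitly assumed in the paper's eigenspace decomposition (for a general parallel $\phi$, which need not be normal, one would first pass to the self-adjoint $\phi + \phi^*$ or $\phi^*\phi$, both again $2$-morphisms since $\alpha$ is isometric). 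The price is the one you correctly flag, and it is the same descent bookkeeping as in Proposition~\ref{st:kernels_of_2mors}: an abstract subobject must first be realised, over a common refinement of the auxiliary submersions, as a parallel descent sub-bundle before the fibrewise complement can be taken; having checked that, your argument is complete.
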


\begin{proof}
	Let us first show that the direct sum is both a product and a coproduct in\linebreak $\shom_{\CatLBGrb^\nabla(M)}(\CL_1,\CL_2)$. We can construct 2-monomorphisms from two summands $(E,\alpha,\zeta^Z)$ and $(E',\alpha',\zeta^{Z'})$ into the direct sum morphism $(E,\alpha,\zeta^Z)\oplus (E',\alpha',\zeta^{Z'})$ by pulling the summand back to $\hat Z$ using the usual embedding of a vector bundle into a direct sum.
	We now use the fact that given bundle gerbe morphisms $(E,\alpha,\zeta^Z)$, $(F,\beta,\zeta^Z)$ over the same surjective submersion $\zeta^Z$, and given a parallel morphism $\phi$ between these vector bundles intertwining $\alpha$ and $\beta$, then $\phi$ induces a unique 2-morphism $(E,\alpha,\zeta^Z) \Rightarrow (F,\beta,\zeta^Z)$.
	Similarly, the projections to the summands are constructed via the projections from the Whitney sum of the bundles over $\hat Z=Z{\times_{Y_{12}}}Z'$.
	Then, when pulled back to appropriate surjective submersions as indicated above, the situation reduces to that of ordinary vector bundles with connections, and it follows that the direct sum is in fact the product and coproduct.
	Hence the morphism categories are cartesian monoidal, and furthermore symmetric by Theorem~\ref{st:Direct_sum_as_fctr}.
	
	Semisimplicity follows from the fact that every 2-endomorphism $(\phi,\omega^W)$ induces a splitting of a morphism into the eigenspaces of $\phi$ as
	\begin{equation}
	(E,\alpha) \cong \bigoplus_{\lambda \in \sigma(\phi)}\, \ker \big( (\phi,\omega^W) - \lambda\,\unit_{(E,\alpha)} \big)~.
	\end{equation}
	By Proposition~\ref{st:kernels_of_2mors}, kernels of 2-morphisms again form 1-morphisms in $\CatLBGrb^\nabla(M)$.
	Therefore, as long as there exists a 2-morphism on $(E,\alpha)$ which is not a non-zero multiple of the identity 2-morphism, then $(E,\alpha)$ can be decomposed into its eigenspaces.
	This means that every 1-morphism decomposes (non-canonically) into 1-morphisms with the property that their only 2-endomorphisms are constant multiples of the identity 2-morphism on them, and the result now follows.
\end{proof}

\begin{remark}
	The existence of kernels of 2-morphisms, and therefore the fact that the category $\shom_{\CatLBGrb^\nabla(M)}(\CL_1,\CL_2)$ is abelian, is lost if we do not require the morphisms of vector bundles appearing in the representatives of 2-morphisms to be parallel.
\end{remark}

Finally, for every triple of bundle gerbes $\CL_1,\, \CL_2,\, \CL_3$ there is an equivalence of categories
\begin{equation}
\label{eq:Dual_of_BGrb_functor}
\begin{aligned}
 &\Delta_{\CL_2}: \shom_{\CatLBGrb^\nabla(M)} \big( \CL_1 \otimes \CL_2,\, \CL_3 \big) \isom \shom_{\CatLBGrb^\nabla(M)} \big( \CL_1,\, \CL_2^* \otimes \CL_3 \big)~,\\
  &\Delta_{\CL_2} = \big( - \otimes \, \id_{\CL_2^*} \big)\circ \delta^{-1}_{\CL_2} ~,
\end{aligned}
\end{equation}
where $\delta_{\CL}: \CL^* \otimes \CL \rightarrow \CI_0$ is the canonical isomorphism \eqref{eq:delta_G}. As a composition of functors with (weak) inverses, this is indeed an equivalence.
This provides us with an adjunction of endo-2-functors on $\CatLBGrb^\nabla(M)$,
\begin{equation}
- \otimes\, \CL_2  \ \dashv \ \CL_2^* \otimes -~,
\end{equation}
and thereby shows that the monoidal structure on $\CatLBGrb^\nabla(M)$ is closed.

The structures found so far are now summarised in the following statement.
\begin{theorem}\label{thm:4.28}
The 2-category $\CatLBGrb^\nabla(M)$ is a closed symmetric monoidal 2-category whose morphism categories are semisimple abelian and cartesian symmetric monoidal categories.
\end{theorem}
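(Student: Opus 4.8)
The plan is to read this theorem as a \emph{packaging} statement and assemble it from the structural results already in place, verifying its three assertions---symmetric monoidal, closed, and the intrinsic properties of the morphism categories---in turn. The third assertion requires essentially no new work: the claim that each $\shom_{\CatLBGrb^\nabla(M)}(\CL_1,\CL_2)$ is a semisimple abelian and cartesian symmetric monoidal category, with binary products and coproducts both realised by the direct sum, is precisely Theorem~\ref{st:hom_cats_are_semisimple_Abelian}, which in turn rests on the direct-sum functoriality of Theorem~\ref{st:Direct_sum_as_fctr} and the construction of kernels in Proposition~\ref{st:kernels_of_2mors}. So I would simply invoke these.

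For the symmetric monoidal 2-category structure, I would first check that the tensor product of gerbes defined on objects in \eqref{eq:def_otimes_objs}, with monoidal unit $\CI_0$, upgrades to a 2-functor $\otimes\colon \CatLBGrb^\nabla(M)\times\CatLBGrb^\nabla(M)\to\CatLBGrb^\nabla(M)$. On 1-morphisms $(E,\alpha,\zeta^Z)$ and $(E',\alpha',\zeta^{Z'})$ one takes the tensor product $E\otimes E'$ over the fibre product of the surjective submersions, with $\alpha\otimes\alpha'$ as the intertwiner, and analogously on 2-morphisms; functoriality then reduces to the corresponding statement in the stack $\CatHVBdl^\nabla$. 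The associator, left/right unitors and the braiding all descend from the canonical structure isomorphisms on hermitean vector bundles and line bundles, pulled back along the evident maps of fibre products. Since these satisfy the pentagon, triangle and hexagon coherences at the level of $\CatHVBdl^\nabla$, the same identities hold after transport to the gerbe level, and the braiding is an involution because the underlying swap is; this furnishes the symmetric monoidal structure.

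For closedness, I would take the internal hom to be $[\CL_2,-]:=\CL_2^*\otimes-$ and exhibit the adjunction $-\otimes\,\CL_2 \ \dashv\ \CL_2^*\otimes-$ via the equivalence $\Delta_{\CL_2}$ of \eqref{eq:Dual_of_BGrb_functor}. The evaluation (counit) is built from the canonical isomorphism $\delta_{\CL_2}\colon\CL_2^*\otimes\CL_2\to\CI_0$ of \eqref{eq:delta_G}, and the coevaluation (unit) from its inverse. The content is to verify that $\Delta_{\CL_2}$ is pseudonatural in both $\CL_1$ and $\CL_3$ and that the triangle coherences for $\delta_{\CL_2}$ and $\delta_{\CL_2}^{-1}$ hold up to coherent 2-isomorphism, so that one obtains a genuine biadjunction of endo-2-functors. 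Since $\Delta_{\CL_2}=(-\otimes\id_{\CL_2^*})\circ\delta_{\CL_2}^{-1}$ is a composite of functors with weak inverses, it is already an equivalence; what remains is its compatibility with horizontal composition and with the monoidal coherence data.

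The main obstacle I anticipate is exactly this last point: closedness is stated at the level of a 2-category, so the given \emph{1-categorical} equivalence $\Delta_{\CL_2}$ of hom-categories must be promoted to a (bi)natural equivalence whose unit and counit 2-cells satisfy the triangle identities. The delicacy lies entirely in the bookkeeping over the several surjective submersions $\zeta^Z,\omega^W$ entering the definitions of 1- and 2-morphisms; one must form the appropriate iterated fibre products and check that the coherence 2-isomorphisms witnessing naturality of $\Delta_{\CL_2}$ are compatible with vertical and horizontal composition. This is where I would expect the argument to be technically heaviest, although conceptually it merely transports the closed structure of $\CatHVBdl^\nabla$ and the rigidity encoded by $\delta_{\CL}$ through Waldorf's 2-category.
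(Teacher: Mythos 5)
Your proposal is correct and follows essentially the same route as the paper: the theorem there is stated as a summary, assembled from Theorem~\ref{st:hom_cats_are_semisimple_Abelian} (itself resting on Theorem~\ref{st:Direct_sum_as_fctr} and Proposition~\ref{st:kernels_of_2mors}) for the morphism categories, the tensor product \eqref{eq:def_otimes_objs} for the symmetric monoidal structure, and the adjunction $-\otimes\,\CL_2 \dashv \CL_2^*\otimes-$ exhibited through $\Delta_{\CL_2}$ of \eqref{eq:Dual_of_BGrb_functor} and $\delta_{\CL}$ of \eqref{eq:delta_G} for closedness. Your flagged concern about promoting $\Delta_{\CL_2}$ to a coherent 2-categorical adjunction is legitimate, but the paper itself treats this point just as briefly, so you are not missing anything the authors supply.
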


\subsection{Closed structure}
\label{sect:2-bundle_with_2-metric}

The next step towards a definition of prequantum 2-Hilbert spaces from sections of line bundle gerbes is the introduction of a local inner product on the 1- and 2-morphisms. 

Recall that for any hermitean line bundle $(L,h)$ over a manifold $M$, the Riesz representation theorem provides us with an isometric anti-isomorphism $\flat_h:L\rightarrow L^*$ over each point $x\in M$ as well as its inverse $\sharp_h:L^*\rightarrow L$. We can use this to define an inner product on morphisms between hermitean line bundles $(L_1,h_1)$ and $(L_2,h_2)$ as
\begin{equation}
\begin{aligned}
	&\frh: \shom_{\CatHVBdl(M)}(L_1,L_2) \otimes \shom_{\CatHVBdl(M)}(L_1,L_2) \longrightarrow C^\infty(M)~, \\
	&\frh(\psi,\phi)\big|_x = (\flat_{h_2}\circ\psi\circ \sharp_{h_1})(\tilde \ell\, ) \big(\phi(\ell)\big)~,
\end{aligned}
\end{equation}
where $\ell\in L_1|_x$ and $\tilde \ell\in L^*_1|_x$ such that $\tilde \ell(\ell)=1$.

By restricting to morphisms $\shom_{\CatHVBdl(M)}(M\times \FC, L)$ corresponding to sections $\Gamma(M,L)$ and choosing vectors $\ell$ of unit length $h_1(\ell,\ell)=1$, this inner product simplifies to
\begin{equation}
	\frh(\psi,\phi)\big|_x = h(\psi(\ell),\phi(\ell))~.
\end{equation}

Let us now categorify these observations. We wish to establish some functor $\Theta$ from the category $\shom_{\CatLBGrb^\nabla(M)}(\CL_1,\CL_2)$ to $\shom_{\CatLBGrb^\nabla(M)}(\CL^*_1,\CL^*_2)$. A natural choice on 1-morphisms $(E,\alpha,g,\nabla^E,\zeta^Z)\in \shom_{\CatLBGrb^\nabla(M)}(\CL_1,\CL_2)$ is given by 
\begin{equation}
\label{eq:def_Theta_morphisms}
	\Theta \big( E,\alpha,g,\nabla^E, \zeta^Z \big) = \big( E^*, \alpha^{-{\rm t}}, g^{-1}, \nabla^{E^*}, \zeta^Z \big)~.
\end{equation}
This satisfies the compatibility condition with the bundle gerbe multiplications on $\CL_1^*$ and $\CL_2^*$ as we see from
\begin{equation}
\begin{aligned}
	\alpha^{-{\rm t}}_{(z_1,z_3)} \circ \big( \mu_{1,(z_1,z_2,z_3)}^{-{\rm t}} \otimes \mathbbm{1} \big)
	&= \big( \alpha_{(z_1,z_3)} \circ (\mu_{1,(z_1,z_2,z_3)} \otimes \mathbbm{1} ) \big)^{-{\rm t}}\\[4pt]
	&= \big( ( \mathbbm{1} \otimes \mu_{2,(z_1,z_2,z_3)} ) \circ ( \alpha_{(z_1,z_2)} \otimes \mathbbm{1} ) \circ ( \mathbbm{1} \otimes \alpha_{(z_2,z_3)} ) \big)^{-{\rm t}}\\[4pt]
	&= \big( \mathbbm{1} \otimes \mu_{2,(z_1,z_2,z_3)}^{-{\rm t}} \big) \circ \big(  \alpha^{-{\rm t}}_{(z_1,z_2)} \otimes \mathbbm{1} \big) \circ \big( \mathbbm{1} \otimes \alpha^{-{\rm t}}_{(z_2,z_3)} \big)
\end{aligned}
\end{equation}
over $(z_1,z_2,z_3)\in Z^{[3]}$. Here we used the fact that pullbacks of morphisms of vector bundles respect taking the transpose and inverse.

Next let us check that $\Theta$ is compatible with mapping 2-morphisms to 2-morphisms. Consider a representative $(\phi,\omega^W): (E,\alpha) \Rightarrow (F,\beta)$ of a 2-morphism in $\CatLBGrb^\nabla(M)$. Then
\begin{equation}
(\phi_{w_1} \otimes \mathbbm{1}) \circ \alpha_{(w_1,w_2)} = \beta_{(w_1,w_2)} \circ (\mathbbm{1} \otimes \phi_{w_2})
\end{equation}
which implies
\begin{equation}
(\phi^{\rm t}_{w_1} \otimes\mathbbm{1}) \circ \beta^{-{\rm t}}_{(w_1,w_2)} = \alpha^{-{\rm t}}_{(w_1,w_2)} \circ ( \mathbbm{1}\otimes \phi^{\rm t}_{w_2})
\end{equation}
over $(w_1,w_2)\in W^{[2]}$. This is the compatibility relation for a {\em contravariant} functor. We can readily extend this contravariant functor on the morphism category to a 2-contravariant 2-functor\footnote{By this we mean a 2-functor which is contravariant on 2-morphisms.} on all of $\CatLBGrb^\nabla(M)$ by using the evident Riesz isomorphisms to map $\CL$ to $\CL^*$.

\begin{definition}
	\label{def:Theta}
	The \uline{Riesz functor} $\Theta: \CatLBGrb^\nabla(M) \rightarrow \CatLBGrb^\nabla(M)$ is the strict 2-contra\-variant 2-functor acting on objects, 1-morphisms, and 2-morphisms as
	\begin{equation}
	\begin{aligned}
	\Theta(\CL) &= \CL^*~,\\[4pt]
	\Theta \big( E, \alpha, g, \nabla^E, \zeta^Z \big) &= \big( E^*, \alpha^{-{\rm t}}, g^{-1}, \nabla^{E^*}, \zeta^Z \big)~,\\[4pt]
	\Theta \big(\phi , \omega^W\big) &= \big( \phi^{\rm t} ,\sfsw \circ \omega^W \big)~,
	\end{aligned}
	\end{equation}
	with $\sfsw$ the swap map
	\begin{equation}
	 \sfsw: Z {\times_{Y_{12}}} Z' \longrightarrow Z' {\times_{Y_{12}}} Z \ , \quad (z,z'\,) \longmapsto (z',z) \ .
	\end{equation}
\end{definition}

\begin{proposition}
	With the above definition, $\Theta: \CatLBGrb^\nabla(M) \rightarrow \CatLBGrb^\nabla(M)$ is a 2-contravariant 2-functor which respects the monoidal structure on $\CatLBGrb^\nabla(M)$ as well as those on the morphism categories.
\end{proposition}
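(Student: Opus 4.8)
The plan is to verify in turn that $\Theta$ is well-defined at each categorical level, that it is strictly functorial with the orientation reversal confined to $2$-morphisms, and that it intertwines the two monoidal products. Well-definedness on $1$- and $2$-morphisms has essentially already been established in the two displayed computations preceding Definition~\ref{def:Theta}: the first shows that $\alpha^{-{\rm t}}$ is compatible with the dual bundle gerbe multiplications $\mu_1^{-{\rm t}},\mu_2^{-{\rm t}}$, so that $(E^*,\alpha^{-{\rm t}},g^{-1},\nabla^{E^*},\zeta^Z)$ is a genuine $1$-morphism $\CL_1^*\to\CL_2^*$, while the second shows that $\phi^{\rm t}$ satisfies the diagram \eqref{eq:comp_2hom_bgrb} for a $2$-morphism with the arrow reversed. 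One uses here that $\nabla^{E^*}$ is the induced dual connection and $g^{-1}$ the induced metric, so parallelism and the hermitean conditions are preserved. It remains to check that the $2$-morphism assignment descends to equivalence classes: if $(\phi,\omega^W)\sim(\tilde\phi,\omega^{\tilde W})$ through a common refinement $X$ as in \eqref{eq:2morequiv}, then pulling the transpose back along $\sfsw$ and the same refinement gives $\xi^*\phi^{\rm t}=\tilde\xi^*\tilde\phi^{\rm t}$, so $\Theta$ respects \eqref{eq:2morequiv}.

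Next I would verify the functoriality axioms. Vertical composition reverses orientation: from $(\psi\circ\phi)^{\rm t}=\phi^{\rm t}\circ\psi^{\rm t}$ together with the behaviour of $\sfsw$ on the fibre products defining the composite surjective submersion, one reads off $\Theta\big((\phi,\omega^W)\circ(\tilde\phi,\omega^{\tilde W})\big)=\Theta(\tilde\phi,\omega^{\tilde W})\circ\Theta(\phi,\omega^W)$. Horizontal composition of $1$-morphisms, by contrast, preserves orientation, and the only input is the canonical isometric identification $(\pr_Z^*E\otimes\pr_{Z'}^*E')^*\cong\pr_Z^*E^*\otimes\pr_{Z'}^*(E')^*$ intertwining the induced connections, under which the composite data in \eqref{eq:comp_1hom} match; the same distributivity of $(-)^{\rm t}$ over $\otimes$, now applied to $(\phi\otimes\tilde\phi)^{\rm t}$ and to the morphisms $\dd_{(-)\htimes(-)}$ entering the horizontal composite of $2$-morphisms, handles horizontal composition of $2$-morphisms. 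Finally, the identity $1$-morphism $\id_\CL$ is sent to $\id_{\CL^*}$ because transpose-inversion of its structure isomorphism reproduces the identity data of $\CL^*$, and $\unit_{(E,\alpha)}=(\dd_{(E,\alpha)},\id)$ is sent to $\unit_{\Theta(E,\alpha)}$ since $\dd_{(E,\alpha)}$ is assembled from pullback and descent isomorphisms that commute with transposition.

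Finally I would treat the two monoidal compatibilities. For the tensor product of gerbes, $\Theta(\CL_1\otimes\CL_2)=(\CL_1\otimes\CL_2)^*=\CL_1^*\otimes\CL_2^*=\Theta(\CL_1)\otimes\Theta(\CL_2)$ is immediate from the definitions of $\CL^*$ and of the product \eqref{eq:def_otimes_objs}, since dualization distributes over $\otimes$ on the underlying line bundles, metrics and connections, and the curving obeys $-(B_1+B_2)=(-B_1)+(-B_2)$; on $1$- and $2$-morphisms the distributivity of $(-)^{\rm t}$ over $\otimes$ supplies the required coherence isomorphisms, which one takes canonically so that the compatibility is strict. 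For the direct sum on each category $\shom_{\CatLBGrb^\nabla(M)}(\CL_1,\CL_2)$, I would use the canonical isomorphism $(E\oplus E')^*\cong E^*\oplus(E')^*$ and check that it carries the structure isomorphism $\beta$ of Theorem~\ref{st:Direct_sum_as_fctr} to the one of $\Theta(E,\alpha)\oplus\Theta(E',\alpha')$. Taking transpose-inverses in the defining formula for $\beta$ and using $(\alpha\oplus\alpha')^{-{\rm t}}=\alpha^{-{\rm t}}\oplus(\alpha')^{-{\rm t}}$, this reduces to the compatibility of the distribution isomorphisms with dualization, namely that $(\sfd^{\rm l}_{L})^{-{\rm t}}$ and $(\sfd^{\rm r}_{L})^{-{\rm t}}$ reproduce the distribution isomorphisms attached to $L^*$.

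The main obstacle is essentially bookkeeping rather than conceptual: tracking the surjective submersions and the swap map $\sfsw$ through the fibre products in the vertical and horizontal composition formulas, and verifying that the canonical isomorphisms used throughout — dual-of-tensor, dual-of-direct-sum and dual-of-pullback — are \emph{simultaneously} compatible with the hermitean metrics, the induced connections $\nabla^{E^*}$, and the bundle gerbe multiplications. Each such step rests on the single fact that transposition is an anti-isomorphism intertwining the monoidal structures on finite-rank hermitean vector bundles with connection; the genuine care needed is to fix these identifications canonically once and for all, so that $\Theta$ is strict rather than merely pseudo-functorial.
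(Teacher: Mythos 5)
Your proof is correct and follows essentially the same route as the paper's: both reduce everything to the compatibility of transposition and dualization with tensor products, direct sums, pullbacks and connections on hermitean vector bundles, with the two displayed computations before Definition~\ref{def:Theta} supplying well-definedness. The only point you gloss is unit preservation, which the paper handles via the explicit identity $\dd_{(E,\alpha)}^{-{\rm t}}=\dd_{\Theta(E,\alpha)}$ from Lemma~\ref{st:dd-def_and_properties}~(3) together with the observation that $\Theta$ produces the \emph{inverse} of $\unit_{\Theta(E,\alpha)}$, which coincides with $\unit_{\Theta(E,\alpha)}$ itself; your appeal to transposition-compatibility of the constituent isomorphisms is exactly the content of that lemma, so this is a matter of precision rather than a gap.
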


\begin{proof}
	Functoriality on 1-morphisms follows from the fact that taking duals of vector bundles is compatible with tensor products, and that the trivial line bundle is self-dual.
	On 2-morphisms, compatibility with vertical composition is just the well-known fact that $(\phi \circ \rho)^{\rm t} = \rho^{\rm t} \circ \phi^{\rm t}$.
	Preservation of units follows because of $\dd_{(E,\alpha)}^{-{\rm t}}= \dd_{(E^*,\alpha^{-{\rm t}})} = \dd_{\Theta(E,\alpha)}$, cf.\ Lemma~\ref{st:dd-def_and_properties}, so that
	\begin{equation}
	\begin{aligned}
	\Theta (\unit_{(E,\alpha)}) &=\Theta \big( \dd_{(E,\alpha)},\id_{Z_{\times_{Y_{12}}}Z} \big)\\[4pt]
	&= \big(\dd_{(E,\alpha)}^{\rm t}, \sfsw \circ \id_{Z_{\times_{Y_{12}}}Z} \big)\\[4pt]
	&= \big( \dd_{(E^*,\alpha^{-{\rm t}})}^{-1},\sfsw \circ \id_{Z_{\times_{Y_{12}}}Z} \big)\\[4pt]
	&= \big(\dd_{(E^*,\alpha^{-{\rm t}})}, \id_{Z_{\times_{Y_{12}}}Z} \big)^{-1}\\[4pt]
	&= \big(\unit_{(E^*,\alpha^{-{\rm t}})} \big)^{-1}\\[4pt]
	&= \unit_{\Theta(E,\alpha)}~.
	\end{aligned}
	\end{equation}
	Compatibility of $\Theta$ with the monoidal structures on $\CatLBGrb^\nabla(M)$ and its morphism categories follows from the compatibility of tensor products and direct sums with taking duals, transposes, and inverses of morphisms of hermitean vector bundles with connections.
\end{proof}

The Riesz functor $\Theta$ differs significantly from the dual functor
      defined in~\cite{Waldorf:2007aa,Waldorf:0702652}. Not only does
      it act on the directions of 1- and 2-morphisms in the opposite way,
      but, more importantly, when acting on $\CatHVBdl^\nabla(M)$ as a subcategory of
      $\shom_{\CatLBGrb^\nabla(M)}(\CI_0,\CI_\rho)$ the Riesz functor yields the dual vector
      bundle rather than acting as the
      identity.
      This can again be understood as the difference between the Hilbert space adjoint
      of a map (which is obtained from Riesz duals) and the vector or Banach space
      transpose:
      Sending an operator to the former is an antilinear map, whereas sending it to the
      latter is a linear map.
      In particular, on scalars the first map is complex conjugation, which is crucial
      for constructing hermitean inner products, while the second map is the identity.
The functor $\Theta$ is more closely related to what is called the
\emph{adjoint} of a 1-morphism in~\cite{Gawedzki:2008um}. The difference in our definition is that $\Theta$ is
constructed to act 2-contravariantly on 2-morphisms, since we allow for
non-invertible 2-morphisms which prevents us from reversing the direction of
the 2-morphism in Definition~\ref{def:Theta} by adding an inverse on
$\phi^{\rm t}$.

\begin{lemma}
	The functors $\sR$ and $\sfS$ establishing the equivalence from Theorem~\ref{st:Red_is_an_equivalence} satisfy
	\begin{equation}
	\sR \circ \Theta = \Theta \circ \sR \eand \sfS \circ \Theta = \Theta \circ \sfS~.
	\end{equation}
They are compatible with the monoidal structure $\otimes$ on $\CatLBGrb^{\nabla}(M)$, as well as the monoidal structure $\oplus$ on the morphism categories.
\end{lemma}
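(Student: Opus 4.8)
The plan is to reduce both identities to the single observation that $\Theta$ touches the combinatorial data of a morphism—its surjective submersions—only by relabelling, while acting on the linear-algebraic data (bundles, metrics, connections and intertwiners) by fibrewise dualisation and transposition; and that each of these operations is compatible with descent. First I would dispose of $\sfS\circ\Theta=\Theta\circ\sfS$. The key point is that $\Theta$ restricts to an endofunctor of $\CatLBGrb^\nabla_{\rm FP}(M)$: on a $1$-morphism $(E,\alpha,g,\nabla^E,\zeta^Z)$ the Riesz functor leaves $\zeta^Z$ unchanged, so the condition $Z=Y_{12}$ survives; on a $2$-morphism $(\phi,\omega^W)$ it replaces $\omega^W$ by $\sfsw\circ\omega^W$, and $\sfsw$ is exactly the canonical identification of $Z\times_{Y_{12}}Z'$ with $Z'\times_{Y_{12}}Z$, so the condition $W=\hat Z$ is carried to the corresponding condition for the reversed $2$-morphism. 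Since $\sfS$ is the inclusion and $\Theta$ is given by the same formulas in both bicategories, the two composites agree strictly.

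Next I would establish $\sfR\circ\Theta=\Theta\circ\sfR$, which is the substantive part. Recall that $\sfR$ returns on a $1$-morphism the descent bundle $D_{\zeta^Z}(E,\dd_{(E,\alpha)})$ over $Y_{12}$ with its descended intertwiner, and on a $2$-morphism the descent of the underlying vector-bundle morphism $\phi$. The crucial input is the identity $\dd_{(E^*,\alpha^{-{\rm t}})}=\dd_{(E,\alpha)}^{-{\rm t}}$ from Lemma~\ref{st:dd-def_and_properties}, already invoked in the preceding proposition: it says that the descent datum of $\Theta(E,\alpha)$ is the fibrewise inverse-transpose of that of $(E,\alpha)$. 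Since $\CatHVBdl^\nabla$ is a stack and dualisation $E\mapsto(E^*,g^{-1},\nabla^{E^*})$ is a (contravariant) morphism of stacks, it commutes with the gluing functor $D_{\zeta^Z}$, giving
\begin{equation}
 D_{\zeta^Z}\big(E^*,\dd_{(E,\alpha)}^{-{\rm t}}\big)\cong D_{\zeta^Z}\big(E,\dd_{(E,\alpha)}\big)^*~,
\end{equation}
which is precisely $\sfR(\Theta(E,\alpha))\cong\Theta(\sfR(E,\alpha))$. The same argument, now using that descent commutes with transposition $\phi\mapsto\phi^{\rm t}$ of descent morphisms, handles the $2$-morphisms; the variances match because $\Theta$ reverses $2$-morphisms on both sides while $\sfR$ preserves direction.

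For the monoidal compatibility I would note that $\otimes$ and $\oplus$ are computed fibrewise on the underlying hermitean vector bundles with connection, and that both $\sfS$ and $\sfR$ respect them—$\sfS$ trivially, and $\sfR$ because $\CatHVBdl^\nabla$ is a monoidal stack under each of $\otimes$ and $\oplus$, so that $D_{\zeta^Z}$ is (strong) monoidal for both. The hard part will be the $\sfR$-statement: one must unwind Waldorf's explicit construction of the inverse functor and of $\dd_{(E,\alpha)}$ to confirm that the descent datum genuinely transforms by $(-)^{-{\rm t}}$ and that descent respects the \emph{induced} dual connection $\nabla^{E^*}$ and dual metric $g^{-1}$, not merely the bare bundles. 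Casting dualisation as a morphism of the stack $\CatHVBdl^\nabla$ and appealing to the stack axioms is the clean route, but the verification of the compatibility identity for $\dd$ is where the real work lies.
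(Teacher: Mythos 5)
Your proposal is correct and takes essentially the same approach as the paper: the paper's proof consists of two sentences observing that $\sfR$ is given by descent and that descent is compatible with duals, direct sums, tensor products of vector bundles with connections, and with transposes and inverses of descent morphisms. The extra detail you supply---that $\Theta$ preserves the FP-subcategory (making the $\sfS$-identity immediate) and that the descent datum of $\Theta(E,\alpha)$ is $\dd_{(E,\alpha)}^{-{\rm t}}$ by Lemma~\ref{st:dd-def_and_properties}---is precisely what the paper's terse argument implicitly relies on.
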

\begin{proof}
	The functor $\sR$ is descent for each morphism.
	Descent is compatible with duals, direct sums, and tensor products of vector bundles with connections, and with transposes and inverses of descent morphisms.
\end{proof}

Using the Riesz functor, we can now define a ``categorified inner product'' on the morphism categories in $\CatLBGrb^\nabla(M)$ as
\begin{equation}
\lsb (E,\alpha), (F,\beta) \rsb = \Theta(E,\alpha) \otimes (F,\beta) \ \in \  \shom_{\CatLBGrb^\nabla(M)}(\CL_1^* \otimes \CL_3,\, \CL_2^* \otimes \CL_4)
\end{equation}
for $(E,\alpha)\in \shom_{\CatLBGrb^\nabla(M)}(\CL_1,\CL_2)$ and $(F,\beta)\in \shom_{\CatLBGrb^\nabla(M)}(\CL_3,\CL_4)$. This inner product extends to the following 2-contravariant 2-functor.
\begin{definition}
\label{def:internal_hom}
	The \uline{internal 2-hom-functor} on $\CatLBGrb^\nabla(M)$ is the 2-contravariant 2-functor given by
	\begin{equation}
	\begin{aligned}
	&\lsb - , - \rsb : \CatLBGrb^\nabla(M) \times \CatLBGrb^\nabla(M) \longrightarrow \CatLBGrb^\nabla(M)~,\\
	&\lsb - ,- \rsb = \otimes_{\CatLBGrb^\nabla(M)} \circ \big( \Theta \times \unit_{\CatLBGrb^\nabla(M)} \big)~. 
	\end{aligned}
	\end{equation}
\end{definition}
The 2-contravariant functoriality of $\lsb- ,-\rsb$ is immediate, since it is constructed from 2-contravariant 2-functors.
By the same argument, it is sesquilinear with respect to the tensor product and direct sum in the sense that there are natural 2-isomorphisms
\begin{equation}
\begin{aligned}
\lsb (E,\alpha) \otimes (E',\alpha'\,),\, (F,\beta) \rsb &\twoisom \Theta(E,\alpha) \otimes \lsb (E',\alpha'\,),\, (F,\beta) \rsb~,\\[4pt]
\lsb (E,\alpha) \oplus (E',\alpha'\,),\, (F,\beta) \rsb &\twoisom \lsb (E,\alpha),\, (F,\beta) \rsb \oplus \lsb (E',\alpha'\,),\, (F,\beta) \rsb~,\\[4pt]
\lsb (E,\alpha), (F,\beta) \rsb &\twoisom \Theta \Big( \lsb (F,\beta), (E,\alpha) \rsb \Big)~.
\end{aligned}
\end{equation}
The first natural 2-isomorphism stems from the associativity of the tensor product in\linebreak $\CatLBGrb^\nabla(M)$, the second amounts to the functorial distributivity of the tensor product over the direct sum, and the natural 2-isomorphism in the last identity just swaps the factors in the tensor product.
On 2-morphisms it acts as follows:
If $(\phi,\omega^W): (E',\alpha'\,) \Rightarrow (E,\alpha)$ and $(\psi,\omega^V): (F,\beta) \Rightarrow (F',\beta'\,)$ are 2-morphisms in $\CatLBGrb^\nabla(M)$, then
\begin{equation}
\lsb (\phi,\omega^W),\, (\psi,\omega^V) \rsb = \Theta(\phi,\omega^W) \otimes (\psi,\omega^V): \lsb (E,\alpha),\, (F,\beta) \rsb \Longrightarrow \lsb (E',\alpha'\,),\, (F',\beta'\,) \rsb~.
\end{equation}

The terminology `internal 2-hom-functor' is justified by the following statement.

\begin{theorem}
	\label{st:internal-2-hom-adjunction}
	For every 1-morphism $(F,\beta) \in \shom_{\CatLBGrb^\nabla(M)}(\CL_3,\CL_4)$ there is an adjoint pair
	\begin{equation}
		\xymatrixcolsep{1.5cm}
		\myxymatrix{
			\shom_{\CatLBGrb^\nabla(M)}(\CL_1,\CL_2) \ \ar@<0.2cm>[rr]^-{- \otimes (F,\beta)} & \perp & \ \shom_{\CatLBGrb^\nabla(M)}(\CL_1 \otimes \CL_3, \CL_2 \otimes \CL_4) \ar@<0.2cm>[ll]-^{\delta_{\CL_4} \circ \lsb (F,\beta) , \, - \rsb \circ \delta^{-1}_{\CL_3}}~.
		}
	\end{equation}
This endows the morphism category $\sHom_{\CatLBGrb^\nabla(M)}$, consisting of all morphism categories in $\CatLBGrb^\nabla(M)$, with an internal hom-functor.
\end{theorem}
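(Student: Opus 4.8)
The plan is to exhibit the claimed right adjoint explicitly and then realise the adjunction as the interplay of two dualities already available to us. Abbreviating
$R_{(F,\beta)} := \delta_{\CL_4} \circ \lsb (F,\beta), - \rsb \circ \delta^{-1}_{\CL_3}$,
where $\circ$ denotes horizontal composition of $1$-morphisms, the identity $1$-morphisms on $\CL_1,\CL_2$ are tensored in, and the symmetry reorderings of the monoidal factors are suppressed: the target of $\lsb(F,\beta),-\rsb=\Theta(F,\beta)\otimes(-)$ is $\shom_{\CatLBGrb^\nabla(M)}(\CL_3^*\otimes\CL_1\otimes\CL_3,\, \CL_4^*\otimes\CL_2\otimes\CL_4)$, and pre-/post-composition with the invertible $1$-morphisms $\delta^{-1}_{\CL_3}$ and $\delta_{\CL_4}$ of \eqref{eq:delta_G} lands the result back in $\shom_{\CatLBGrb^\nabla(M)}(\CL_1,\CL_2)$. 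Since $\lsb-,-\rsb$ is a $2$-functor and each $\delta_\CL$ is a canonical isomorphism, $R_{(F,\beta)}$ is a well-defined functor of morphism categories, so the task reduces to producing a bijection of $2$-morphism sets $2\shom\big((E,\alpha)\otimes(F,\beta),\, G\big)\cong 2\shom\big((E,\alpha),\, R_{(F,\beta)}G\big)$, natural in $(E,\alpha)$ and $G$.

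First I would isolate the two pieces of duality that feed into this bijection. The gerbe-level piece is the closed structure of Theorem~\ref{thm:4.28}: the equivalence $\Delta_{\CL}$ of \eqref{eq:Dual_of_BGrb_functor} and the counit $\delta_\CL:\CL^*\otimes\CL\isom\CI_0$ exhibit $\CL^*$ as a two-sided dual of $\CL$ with zig-zag identities, and this accounts for the factors $\delta^{-1}_{\CL_3}$ and $\delta_{\CL_4}$. The bundle-level piece is the dualisability of the fixed $1$-morphism $(F,\beta)$ itself: because $F$ is a \emph{finite-rank} hermitean vector bundle, the canonical pairings $F^*\otimes F\to\FC$ and $\FC\to F\otimes F^*$ are morphisms of hermitean bundles with connection, and together with the sesquilinearity $2$-isomorphisms recorded after Definition~\ref{def:internal_hom} they promote $\Theta(F,\beta)=(F^*,\beta^{-{\rm t}})$ to a genuine dual of $(F,\beta)$ inside the symmetric monoidal morphism categories of Theorem~\ref{st:Direct_sum_as_fctr}. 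I would then define the counit $\varepsilon_G: R_{(F,\beta)}G\otimes(F,\beta)\Rightarrow G$ as the composite evaluation assembled from both pairings, and the unit $\eta_{(E,\alpha)}:(E,\alpha)\Rightarrow R_{(F,\beta)}\big((E,\alpha)\otimes(F,\beta)\big)$ as the composite coevaluation, and verify the two triangle identities. Each triangle identity factors as the zig-zag identity for $\delta_\CL$ times the zig-zag identity for the vector-bundle pairing, both of which hold; naturality follows from $2$-functoriality of $\otimes$ and $\Theta$ and from the naturality of $\delta_\CL$. Since $\lsb-,-\rsb$ is functorial (and $2$-contravariant) in its first argument, this family of adjunctions is itself natural in $(F,\beta)$, which is precisely what is meant by endowing $\sHom_{\CatLBGrb^\nabla(M)}$ with an internal hom-functor.

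The main obstacle I anticipate is coherence bookkeeping rather than any conceptual difficulty. One must carry the auxiliary surjective submersions $\zeta^Z$ and the swap maps $\sfsw$ through every step and check that the evaluation and coevaluation are compatible with the bundle gerbe multiplications and descend correctly; here the reduction to $\CatLBGrb^\nabla_{\rm FP}(M)$ afforded by Theorem~\ref{st:Red_is_an_equivalence}, together with the compatibility $\sfR\circ\Theta=\Theta\circ\sfR$ established above, lets me take $Z=Y_{12}$ and $W=\hat Z$ and so removes most of the descent bookkeeping. A second delicate point, which I would flag explicitly, is that $\Theta$ is $2$-\emph{contravariant}, so the naturality squares for $\lsb-,-\rsb$ carry reversed $2$-cells and the triangle identities must be read in the correct variance. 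Finally I would emphasise that the argument uses $\rk(F)<\infty$ in an essential way: the coevaluation $\FC\to F\otimes F^*$ has no rank-free substitute, so the internal hom-functor genuinely breaks down for the infinite-rank morphisms discussed in Remark~\ref{rem:ddtorsion}.
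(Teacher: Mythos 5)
Your proposal is correct in substance, but it follows a genuinely different route from the paper's proof. The paper never constructs a unit or counit and never checks triangle identities: it writes down the hom-set bijection $\Psi_{E,F,G}$ directly, by passing through the reduction functor $\sfR$ of Theorem~\ref{st:Red_is_an_equivalence}, currying the underlying vector bundle morphism fibrewise via $\widetilde{\sR\phi}(e)(f)=\sR\phi(e\otimes f)$, descending the curried map along $\pr_{Y_{12}}$ to obtain a morphism $\sR E\to\sR(\Theta F\otimes G)$ over $Y_{12}$, and then conjugating by the natural isomorphism $\eta:\sfS\circ\sfR\twoisom\unit_{\CatLBGrb^\nabla(M)}$; bijectivity is immediate from finite-dimensional linear algebra, and naturality follows from functoriality of descent. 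Your version packages the very same fibrewise duality of the finite-rank bundle $F$ (your evaluation and coevaluation are indeed parallel, and they intertwine $\beta^{-{\rm t}}\otimes\beta$ with the $\delta$-data, so they are legitimate $2$-morphisms) into an abstract dualisability argument with unit, counit and zig-zag identities. What your route buys: the structural origin of the adjunction becomes visible --- it is forced by invertibility of $\delta_{\CL}$ together with dualisability of finite-rank $1$-morphisms --- naturality in $(F,\beta)$ and hence the internal-hom statement come essentially for free, and it isolates exactly where $\rk(F)<\infty$ enters, consistent with Remark~\ref{rem:ddtorsion}. What it costs: the coherence bookkeeping you flag (interchange of $\otimes$ and $\bullet$, symmetry reorderings, whiskering of the evaluation/coevaluation $2$-cells, and the fact that composition in $\CatLBGrb^{\nabla}_{\rm FP}(M)$ itself requires descent) is precisely what the paper's direct construction of $\Psi_{E,F,G}$ is engineered to avoid. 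One caution: you must not shortcut the construction of your counit by citing the paper's later lemma that $\delta_{\CL_2}\bullet\lsb(T,\beta,\nabla^T),(T,\beta,\nabla^T)\rsb\bullet\delta_{\CL_1}^{-1}\twoisom\id_{\CI_0}$, since in the paper that lemma is itself deduced from this theorem; your plan avoids the circularity by building the evaluation $2$-cell from the fibrewise pairing, and it should stay that way.
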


\begin{proof}
	Let $(E,\alpha): \CL_1 \rightarrow \CL_2$, $(F,\beta): \CL_3 \rightarrow \CL_4$, and $(G,\gamma): \CL_1 \otimes \CL_3 \rightarrow \CL_2 \otimes \CL_4$ be 1-morphisms in $\CatLBGrb^\nabla(M)$.
	We have to show that there exist bijections
	\begin{equation}
	\begin{aligned}
	&\Psi_{E,F,G}: 2\shom_{\CatLBGrb^\nabla(M)} \big( (E,\alpha) \otimes (F,\beta),\, (G,\gamma) \big)\\
	&\hspace{3cm}\isom 2\shom_{\CatLBGrb^\nabla(M)} \Big( (E,\alpha),\, {\delta_{\CL_4}} \bullet \lsb (F,\beta) ,\, (G,\gamma) \rsb \bullet {\delta^{-1}_{\CL_3}} \Big)~.
	\end{aligned}
	\end{equation}
	For this, we have to establish a natural correspondence from
	\begin{equation}
	\xymatrixcolsep{0.3cm}
	\myxymatrix{
		\CL_1 \otimes \CL_3 \rrtwocell<10>^{(E,\alpha) \otimes (F,\beta)}_{(G,\gamma)}{\omit} & \Downarrow {\scriptstyle{(\phi,\omega^W)}} & \CL_2 \otimes \CL_4
	}
\end{equation}
to
\begin{equation}
	\xymatrixcolsep{0.cm}
	\myxymatrix{
		\CL_1 \ar@{->}[rr]^-{(E,\alpha)} \ar@{->}[dd]_-{\delta_{\CL_3}^{-1} \otimes \id_{\CL_1}} & & \CL_2 \\
		& \Downarrow {\scriptstyle{\Psi_{E,F,G}(\phi,\omega^W)}} & \\
		\CL_3^* \otimes \CL_1 \otimes \CL_3 \ar@{->}[rr]_-{\lsb(F,\beta),\,(G,\gamma)\rsb} & & \CL_4^* \otimes \CL_2 \otimes \CL_4 \ar@{->}[uu]_-{\delta_{\CL_4} \otimes \id_{\CL_2}}
	}
	\end{equation}
	The proof is complicated by the fact that $(E,\alpha)$ and $\lsb(F,\beta),(G,\gamma)\rsb$ are not defined over the same surjective submersion.
	To cure this, we employ the equivalence of $\CatLBGrb^\nabla(M)$ to its reduced version $\CatLBGrb^{\nabla}_{\rm FP}(M)$ given in Theorem~\ref{st:Red_is_an_equivalence}, which implies the existence of natural isomorphisms of 2-functors
	\begin{equation}
	\eta: \sfS \circ \sfR \twoisom \unit_{\CatLBGrb^\nabla(M)} \eand \eps = \unit: \sfR \circ \sfS \twoisom \unit_{\CatLBGrb^{\nabla}_{\rm FP}(M)}~.
	\end{equation}
	In particular, restricting to each morphism category yields natural isomorphisms of functors $\eta_{\CL_1\to\CL_2}$. This equivalence is compatible with the tensor product.
	
We write $Y_{ij\cdots}=Y_i\times_M Y_j\times_M\cdots$. For the first diagram, the reduced 1-morphisms $\sR(E,\alpha)$ and $\sR (F,\beta)$ live over $Y_{12}$ and $Y_{34}$, respectively, while $\sR (G,\gamma)$ is defined on $Y_{1324}$.
	Thus the reduced 2-morphism $\sR (\phi,\omega^W)$ amounts to a morphism of vector bundles
	\begin{equation}
	\myxymatrix{
		\sR (E \otimes F) = \pr_{Y_{12}}^* \sR E \otimes \pr_{Y_{34}}^* \sR F \ar@{->}[rr]^-{\sR \phi} \ar@{->}[dr] & & \sR G \ar@{->}[dl]\\
		& Y_{1324} &
	}
	\end{equation}
	Denote by $\widetilde{\sR \phi}: \pr_{Y_{12}}^* \sR E \rightarrow \pr_{Y_{34}}^* \sR F^* \otimes \sR G$ the morphism of vector bundles defined by $\widetilde{\sR \phi}(e)(f) = \sR \phi (e \otimes f)$.
	This defines a 2-morphism in $\CatLBGrb^\nabla(M)$, and by applying the reduction functor it gives rise to the descent
	\begin{equation}
	\xymatrixcolsep{2cm}
	\myxymatrix{
		\pr_{Y_{12}}^* \sR E \ar@{->}[rr]^-{\widetilde{\sR \phi}} \ar@{->}[dr] & & \pr_{Y_{34}}^* \sR\, \Theta F \otimes \sR G \ar@{->}[dl] \\
		& Y_{1324} \ar@{->}'[d][dd]^-{\pr_{Y_{12}}} & \\
		\sR E \ar@{->}[rr]^(.2){D_{\pr_{Y_{12}}}(\widetilde{\sR \phi})} \ar@{->}[dr] & & D_{\pr_{Y_{12}}}(\pr_{Y_{34}}^* \sR\, \Theta F \otimes \sR G) = \sR(\Theta F \otimes G) \ar@{->}[dl] \\
		& Y_{12} &
	}
	\end{equation}
We set
	\begin{equation}
	\begin{aligned}
	&\Psi_{E,F,G}(\phi,\omega^W) = \eta_{\Theta F \otimes G} \circ \big( D_{\pr_{Y_{12}}}(\widetilde{\sR \phi}),\,\id_{Y_{12}}  \big) \circ \eta_{E}^{-1}\\
	& \qquad \qquad \qquad \qquad \in \
        2\shom_{\CatLBGrb^\nabla(M)} \Big( (E,\alpha),\,
        \delta_{\CL_4} \circ \lsb (F,\beta) , \, (G,\gamma) \rsb \circ \delta^{-1}_{\CL_3} \Big)~,
	\end{aligned}
	\end{equation}
so that the morphism $\Psi_{E,F,G}(\phi,\omega^W)$ is the composition
	\begin{equation}
		\xymatrixrowsep{2cm}
		\xymatrixcolsep{1.5cm}
		\xymatrix{
			E \ar@{-->}[r]^-{\Psi_{E,F,G}(\phi,\omega^W)} \ar@{->}[d]_{\eta_E^{-1}} & \Theta F \otimes G\\
			\sfS \, \sfR E \ar@{->}[r]_-{\sfS\, \sfR \widetilde{\Psi(\phi,\omega^W)}}  & \sfS \, \sfR (\Theta F \otimes G) \ar@{->}[u]_-{\eta_{\Theta F \otimes G}}
		}
	\end{equation}
Naturality of this map is evident from the fact that descent, and hence $\sR$, are functors and therefore respect composition of 2-morphisms. Compatibility with the monoidal structures in $\CatLBGrb^\nabla(M)$ is a consequence of descent and $\sR$ being compatible with direct sums, tensor products, and duals.
\end{proof}

The restriction of the internal 2-hom-functor to sections of a fixed bundle gerbe $\CL$ deserves a special name.
\begin{definition}
	\label{def:2-bdl_metric}
	The functor
	\begin{equation}
	\frh = \sfl_{\delta_{\CL}} \circ \lsb- ,- \rsb \circ \sfr_{\delta^{-1}_{\CI_0}} : \Gamma(M,\CL)^{\rm op} \times \Gamma(M,\CL) \longrightarrow \Gamma(M,\CI_0)
	\end{equation}
is the \uline{line bundle gerbe metric} on $\CL$.
\end{definition}
The line bundle gerbe metric $\frh$ inherits the functoriality and, most notably, the sesqui\-linearity of the internal 2-hom-functor.
It maps pairs of sections of $\CL$ sequilinearly into the rig category $\shom_{\CatLBGrb^\nabla(M)}(\CI_0,\CI_0)$.
Hence it can be viewed as a categorified version of a bundle metric.
It is further possible to compose $\frh$ with the equivalence given by $\shom_{\CatLBGrb^\nabla(M)}(\CI_0,\CI_0) \cong \CatHVBdl^\nabla(M)$, if one desires to use the latter as the underlying rig category.

\begin{remark}
\label{rem:multiples_of_2-bdl_metric}
  Recall that in the case of line bundles, the inner product on sections arose from the hermitean structure on the bundle. In the case of line bundle gerbes, the hermitean structure on the defining line bundle enters via the condition that the line bundle isomorphisms $\alpha$ in sections $(E,\alpha)$ is isometric. 
  
  Also, note that we could generalise the line bundle metric by a positive function as mentioned in Section~\ref{ssec:gq_outline}. In the case of line bundle gerbes, this amounts to the freedom to compose $\frh$ with the tensor product by an endomorphism $(E,\alpha)$ of $\CI_0$ for which there exists a 2-isomorphism $\Theta(E,\alpha) \twoisom (E,\alpha)$.
\end{remark}

The following result shows that the line bundle gerbe metric is canonically natural in the 2-category $\CatLBGrb^\nabla(M)$.
\begin{proposition}
	\label{st:natural_property_of_2-bdl_metric}
	Let $\Gamma_{\rm par}(M,- )$ denote the parallel section functor
	\begin{equation}
		\Gamma_{\rm par}(M,-) = \shom_{\CatHVBdl^\nabla(M)}(M \times \FC,-)~,
	\end{equation}
	and let $\sfR$ be as in Theorem~\ref{st:Red_is_an_equivalence},
	acting on $\shom_{\CatLBGrb^\nabla(M)}(\CI_0,\CI_0)$ as the descent functor to $\CatHVBdl^\nabla(M)$.
	There is a canonical natural isomorphism
	\begin{equation}
	\label{eq:natural_property_of_2-bdl_metric}
	\eta:\ \Gamma_{\rm par}(M,- ) \circ \sfR \circ \frh \twoisom \big( 2\shom_{\CatLBGrb^\nabla(M)} \big)_{|\shom_{\CatLBGrb^\nabla(M)}(\CI_0,\CL)}~.
	\end{equation}
\end{proposition}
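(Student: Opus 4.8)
The plan is to reduce the statement to the elementary fact that, for hermitean vector bundles with connection $E,F$ over a manifold, the parallel sections of the internal hom bundle $\sHom(E,F)=E^*\otimes F$ are exactly the parallel bundle morphisms $E\to F$. The bridge to the gerbe setting is the equivalence $\sfR$ of Theorem~\ref{st:Red_is_an_equivalence}, which allows me to replace every 1- and 2-morphism of sections by a representative over the canonical fibred product. Under this reduction a section $(E,\alpha)\in\shom_{\CatLBGrb^\nabla(M)}(\CI_0,\CL)$ becomes a hermitean vector bundle with connection $E\to Y$ together with an isometric parallel isomorphism $\alpha_{(y_1,y_2)}\colon E_{y_2}\to E_{y_1}\otimes L_{(y_1,y_2)}$ over $Y^{[2]}$, and a 2-morphism $(E,\alpha)\Rightarrow(F,\beta)$ becomes a parallel morphism $\phi\colon E\to F$ over $Y$ intertwining $\alpha$ and $\beta$ through the diagram~\eqref{eq:comp_2hom_bgrb}.

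First I would compute $\sfR\circ\frh$ on a pair of sections. Unwinding Definition~\ref{def:Theta} and the tensor product of 1-morphisms, $\lsb(E,\alpha),(F,\beta)\rsb=\Theta(E,\alpha)\otimes(F,\beta)$ has underlying bundle $E^*\otimes F$ over $Y$ with gluing datum $\alpha^{-\mathrm{t}}\otimes\beta$; post-composing with the canonical pairing $\delta_\CL$, which on the defining line bundles is the evaluation $L^*\otimes L\to\FC$, and pre-composing with $\delta^{-1}_{\CI_0}$ cancels the two line-bundle factors and produces a descent datum $(\sHom(E,F),\gamma)$ for the surjective submersion $\sigma^Y\colon Y\thra M$, where $\gamma_{(y_1,y_2)}$ sends $\phi_{y_2}$ to the $\sHom$-part of $\beta_{(y_1,y_2)}\circ\phi_{y_2}\circ\alpha^{-1}_{(y_1,y_2)}$. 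Since this adjoint action forgets the $\sU(1)$-twist, as in Example~\ref{ex:local1mor}, $(\sHom(E,F),\gamma)$ is a genuine descent datum in the stack $\CatHVBdl^\nabla$, and its descent bundle $D_{\sigma^Y}(\sHom(E,F),\gamma)$ is precisely the hermitean vector bundle with connection on $M$ represented by $\sfR\circ\frh\big((E,\alpha),(F,\beta)\big)$.

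Next I would apply $\Gamma_{\mathrm{par}}(M,-)$. A parallel section of $D_{\sigma^Y}(\sHom(E,F),\gamma)$ pulls back to a parallel section $s$ of $\sHom(E,F)$ over $Y$ satisfying $\gamma_{(y_1,y_2)}(s_{y_2})=s_{y_1}$, and conversely. Writing $s$ as a bundle morphism $\phi\colon E\to F$, parallelism of $s$ is equivalent to $\phi$ being parallel, via the identity $(\nabla^{\sHom}_X\phi)(e)=\nabla^F_X(\phi(e))-\phi(\nabla^E_X e)$ already used to define the dual connection in Section~\ref{sect:2-cat_of_BGrbs}, while the descent condition $\gamma(s)=s$ is exactly the commutativity of~\eqref{eq:comp_2hom_bgrb}. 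Hence parallel sections of the descent bundle are in bijection with parallel morphisms $E\to F$ intertwining $\alpha$ and $\beta$, i.e.\ with elements of $\big(2\shom_{\CatLBGrb^\nabla(M)}\big)\big((E,\alpha),(F,\beta)\big)$ in the reduced picture; since $\sfR$ is an equivalence, the same bijection holds for arbitrary, non-reduced representatives. This defines $\eta$ on objects.

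Finally I would check naturality. Both $\Gamma_{\mathrm{par}}(M,-)\circ\sfR\circ\frh$ and $2\shom_{\CatLBGrb^\nabla(M)}$ act on a pair of 2-morphisms $\phi\colon(E',\alpha')\Rightarrow(E,\alpha)$ and $\psi\colon(F,\beta)\Rightarrow(F',\beta')$ by pre- and post-composition: the internal 2-hom functor acts as $\Theta(\phi)\otimes\psi$, which under the identification above is the conjugation $s\mapsto\psi\circ s\circ\phi$, matching the action on $2\shom$. Because $\sfR$, descent, and $\Gamma_{\mathrm{par}}$ are all functorial and compatible with these compositions, the naturality squares commute and $\eta$ is a natural isomorphism. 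The main obstacle is the bookkeeping in the second step: one must track the surjective submersions through $\Theta$, the tensor product, and the unitors built from $\delta_\CL$ and $\delta_{\CI_0}$, and verify that the resulting gluing $\gamma$ is the descent datum whose invariant parallel sections are precisely the morphisms satisfying~\eqref{eq:comp_2hom_bgrb}; once this identification is secured, the remaining steps are the standard vector-bundle fact and formal functoriality.
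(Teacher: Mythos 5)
Your proof is correct and takes essentially the same route as the paper: both arguments identify $\sfR\circ\frh\big((E,\alpha),(F,\beta)\big)$ with the descent of the homomorphism bundle $E^*\otimes F$ along the gluing datum $\alpha^{-{\rm t}}\otimes\beta$, and then match parallel sections of the descent bundle with parallel, descent-compatible intertwiners, i.e.\ with 2-morphisms. The only difference is how the surjective-submersion bookkeeping is discharged: you pass to fibred-product representatives via the reduction equivalence $\sfR$ of Theorem~\ref{st:Red_is_an_equivalence} and transfer back along the equivalence, whereas the paper works with arbitrary representatives $(E,\alpha,\zeta^Z)$, $(F,\beta,\zeta^{Z'})$ directly and invokes Proposition~\ref{st:2-mor_prop} to conclude that representatives over $\hat Z = Z\times_{Y_{12}}Z'$ exhaust all 2-morphisms bijectively --- the same underlying mechanism in different packaging.
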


\begin{proof}
	Let $(E,\alpha,\zeta^{Z})$ and $(E',\alpha',\zeta^{Z'})$ be two sections of $\CL$. The bundle of homomorphisms from $E$ to $E'$ can be identified with $E^* \otimes E'$, and its sections which descend to sections of $D_{\zeta^Z{\times_M}\zeta^{Z'}}(E^* \otimes E'\,)$ are those compatible with the descent datum $(\pr_Z^*\alpha^{-{\rm t}}) \otimes (\pr_{Z'}^* \alpha'\, )$.
	Restricting this to sections which are parallel with respect to the induced connections on $E^* \otimes E'$ and $D_{\zeta^Z{\times_M}\zeta^{Z'}}(E^* \otimes E'\,)$ yields precisely the 2-morphisms from $(E,\alpha,\zeta^{Z})$ to $(E',\alpha',\zeta^{Z'})$ in $\CatLBGrb^\nabla(M)$.
This only yields representatives of 2-morphisms over $Z{\times_{Y^{[2]}}}Z'$, but according to Proposition~\ref{st:2-mor_prop} from Appendix~\ref{app:special_morphisms} this does indeed cover all 2-morphisms $(E,\alpha,\zeta^{Z}) \Rightarrow (F,\beta,\zeta^{Z'})$ bijectively.
\end{proof}

\begin{remark}
	If we had defined the 2-morphisms in $\CatLBGrb^\nabla(M)$ to be general morphisms of vector bundles, compatible only with the descent datum $(\pr_Z^* \alpha^{-{\rm t}}) \otimes (\pr_{Z'}^* \alpha'\, )$, then the isomorphism of Proposition~\ref{st:natural_property_of_2-bdl_metric} would read as
	\begin{equation}
	\Gamma(M,- ) \circ \sfR \circ \frh \twoisom \big( 2\shom_{\CatLBGrb^\nabla(M)} \big)_{|\shom_{\CatLBGrb^\nabla(M)}(\CI_0,\CL)}
	\end{equation}
	in terms of the ordinary section functor.
\end{remark}

\begin{remark}
	The line bundle gerbe metric was obtained via the constructions of a dual functor $\Theta$ and an adjoint functor $\lsb-,-\rsb$ of the tensor product of bundle gerbe morphisms.
	It produces as outputs the bundles of algebras and respective bundles of modules obtained without this categorical detour in~\cite{Karoubi:1012.2512,Schweigert:2014nia}.
	However, as we restrict ourselves to parallel homomorphisms here, we only obtain the corresponding subalgebras of the Azumaya algebras constructed there.
\end{remark}

\subsection{Prequantum line bundle gerbes}\label{ssec:multisymplectic}

We can now come to a categorification of statements \eqref{it:observables}, \eqref{it:prequantum_line_bundle} and \eqref{it:tautological_line_bundle} of Section~\ref{ssec:gq_outline}.

\begin{definition}
 A \uline{multisymplectic manifold} $(M,\omega)$ is a manifold $M$ endowed with a closed differential form $\omega$, which is non-degenerate in the sense that $\iota_X \omega=0$ is equivalent to $X=0$ in $\frX(M)$. If $\omega$ is of degree $p+1$, we also call $(M,\omega)$ \uline{$p$-plectic}.
\end{definition}
A 2-plectic structure now induces a Lie 2-algebra structure on functions and Hamiltonian 1-forms \cite{Baez:2008bu}. A \emph{Hamiltonian 1-form} $\alpha\in \Omega^1_{\rm Ham}(M)$ on a 2-plectic manifold $(M,\omega)$ is a 1-form such that $\iota_{X_\alpha}\omega=-\dd \alpha$ for some \emph{Hamiltonian vector field} $X_\alpha\in \frX(M)$. We then obtain a Baez-Crans 2-vector space in $2\CatVect_{\FC\rightrightarrows\FC}$ given by 
\begin{equation}
 \Pi_{M,\omega}:=\big(\Omega^1_{\rm Ham}(M)\ltimes C^\infty(M)\rightrightarrows \Omega^1_{\rm Ham}(M)\big)
\end{equation}
with source and target maps $\sfs(\alpha_1,f_1)=\alpha_1$, $\sft(\alpha_1,f_1)=\alpha_1+\dd f_1$. This 2-vector space becomes a semistrict Lie 2-algebra when endowed with the Lie bracket functor
\begin{equation}
 [(\alpha_1,f_1),(\alpha_2,f_2)]:=([\alpha_1,\alpha_2],0)=(-\iota_{X_{\alpha_1}}\iota_{X_{\alpha_2}}\omega,0)
\end{equation}
and the Jacobiator
\begin{equation}\label{eq:jacobiator_exact}
 J_{\alpha_1,\alpha_2,\alpha_3}:[\alpha_1,[\alpha_2,\alpha_3]]~\xrightarrow{~([\alpha_1,[\alpha_2,\alpha_3]]~,~\iota_{X_{\alpha_1}}\iota_{X_{\alpha_2}}\iota_{X_{\alpha_3}}\omega)~}~[[\alpha_1,\alpha_2],\alpha_3]+[\alpha_2,[\alpha_1,\alpha_3]]
\end{equation}
for all $f_{i}\in C^\infty(M)$ and $\alpha_{i}\in \Omega^1_{\rm Ham}(M)$, $i=1,2,3$. Since
\begin{equation}
 [X_{\alpha_1},X_{\alpha_2}]=X_{[\alpha_1,\alpha_2]}~,
\end{equation}
the Lie 2-algebra $\Pi_{M,\omega}$ is a central extension of the Lie algebra of Hamiltonian vector fields. We identify $\Pi_{M,\omega}$ as the {\em Lie 2-algebra of classical observables on $(M,\omega)$}. 

The categorified analogue of the prequantum line bundle of statement \eqref{it:prequantum_line_bundle} is then readily defined.
\begin{definition}\label{eq:def_prequantum_line_bundle_gerbe}
 A 2-plectic manifold $(M,\omega)$ is \uline{quantisable} if its 2-plectic form $\omega$ represents a class in integer cohomology. A \uline{prequantum line bundle gerbe} over a quantisable 2-plectic manifold is a hermitean line bundle gerbe with curvature 3-form $H=-2\pi \, \di \, \omega$.
\end{definition}
Under a complete higher quantisation, the Lie 2-algebra
$\Pi_{M,\omega}$ should then be mapped to a quasi-2-isomorphic Lie
2-algebra of quantum observables, which act on a suitably defined
2-Hilbert space from polarised sections of a prequantum gerbe with
curvature 3-form $-2\pi\,\di\,\omega$. We will come back to this point in Section~\ref{ssec:observables}.

Analogously to statement \eqref{it:tautological_line_bundle}, we would like to construct a tautological line bundle gerbe. If $M$ is 2-connected and the 2-plectic form $\omega$ represents an element in $H^3(M,\RZ_M)$, we can construct the corresponding tautological gerbe \cite{Murray:9407015,Carey:1995yc} which takes the role of a prequantum line bundle gerbe. We first recall the following definition.
\begin{definition}
 The \uline{transgression map} $\CT:\Omega^{p+1}(M)\rightarrow \Omega^p(\Omega M)$ is $\CT(\eta):=\int_{S^1}\,\ev^* \eta$, where $\ev:\Omega M\times S^1\rightarrow M$ is the evaluation map.
\end{definition}
This map is a chain map and therefore descends to de Rham cohomology; for details on the properties of the transgression map between differential forms on a manifold and those on its mapping spaces, see Appendix~\ref{sect:mapping_space_geometry}. We can now transgress the 2-plectic form $\omega$ to a closed 2-form $\CT(\omega)$ on $\Omega M$. The corresponding tautological $0$-gerbe on $\Omega M$ defines a line bundle $L_{\CT(\omega)}$ with connection over $\Omega M$, which in turn defines a line bundle gerbe $(L_{\CT(\omega)},\partial)$ over $M$ associated to a principal $\sU(1)$-bundle gerbe from Example~\ref{ex:BGrb_from_fus_LBdl} with Dixmier-Douady class $[\omega]$. For details, see \cite{Murray:9407015,Carey:1995yc,Johnson:2003aa}.

\begin{example}
 The tautological gerbe over $M=S^3\cong \sSU(2)$, with $\frac{1}{k}\, \omega$ the unit volume form ${\rm vol}_{S^3}$ on $S^3$ for some $k\in \RZ_{>0}$, yields the loop group central extensions discussed in~\cite{Pressley:1988aa}. The bundle gerbe from Example~\ref{ex:central_loop_extension} is given by
\begin{equation}
\begin{gathered}
 \xymatrix{
\widehat{\Omega_k\sSU(2)} \ar[d] & \\
\Omega\sSU(2)\ \ar@< 2pt>[r] \ar@< -2pt>[r] & \ \CP \sSU(2) \ar[d]^\dpar \\
 & \sSU(2)
}
\end{gathered}
\end{equation}
which defines the prequantum line bundle gerbe of the 2-plectic manifold $(S^3,k\, {\rm vol}_{S^3})$ with Dixmier-Douady class $k$.
\end{example}

\section{2-Hilbert spaces from line bundle gerbes}
\label{sect:2-Hilbert_spaces_from_Bgerbes}

All the geometric structure we have found so far on the 2-category of bundle gerbes naturally ties together with the theory of {\em 2-Hilbert spaces}, as developed e.g.\ in~\cite{Baez:9609018} and, differently, in~\cite{kapranov19942}. We will use aspects from both of these approaches, but give a definition of a 2-Hilbert space which differs from these in certain respects. We will not discuss a categorified analogue of Cauchy completeness as done in \cite{Bartlett:2008ji}.

\subsection{Module categories and 2-Hilbert spaces}
\label{sect:Module_cats_and_2Hspaces}

The difference between a Hilbert space and a mere vector space is the datum of a positive-definite non-degenerate inner product, which is a sesquilinear map into the field $\FC$.
As we replace the latter by the rig category $\CatVect$, an inner product on a 2-vector space $\CCV$ should assign a vector space to a pair of objects of $\CCV$, cf.~\cite{Baez:9609018}.
Therefore, we would like to have some functor $\langle -,- \rangle$ mapping into $\CatVect$ together with natural isomorphisms
\begin{equation}
\begin{aligned}
\langle \CV, U \otimes \CW \rangle &\cong U \otimes \langle \CV, \CW \rangle~,\\[4pt]
\langle \CV, \CW \rangle &\cong \langle \CW, \CV \rangle^*~,\\[4pt]
\langle \CU, \CV \oplus \CW \rangle &\cong \langle \CU, \CV \rangle \oplus \langle \CV, \CW \rangle~,
\end{aligned}
\label{eq:innprodfunctor}\end{equation}
for all $\CU,\CV,\CW \in \CCV$, $U \in \CatVect$, where we write the module action $\CatVect\times\CCV\to\CCV$ as $(U,\CV)\mapsto U\otimes\CV$.
We further require non-degeneracy in demanding that $\langle \CV, \CV \rangle = 0$ if and only if $\CV = 0$.
Positive definiteness is obtained automatically since there are no negatives in $\CatVect$.
The second property implies that switching the arguments of $\langle -,- \rangle$ is the same as taking the dual vector space after applying $\langle -,- \rangle$, i.e.
\begin{equation}
\langle -,- \rangle \circ \sfsw \cong * \circ \langle -,- \rangle~,
\end{equation}
where $\sfsw$ is the functor which swaps the factors in a tensor product. Since $\sfsw$ is a covariant functor, we conclude from \eqref{eq:innprodfunctor} that the functor $\langle -,- \rangle$ should be a map
\begin{equation}
\langle -,- \rangle: \CCV^{\rm op} \times \CCV \longrightarrow \CatVect~.
\end{equation}

\begin{example}
	The category $\CatVect$ is endowed with an inner product functor coming from its involution $*: \CatVect^{\rm op} \rightarrow \CatVect,\, V \mapsto V^*,\, \phi \mapsto \phi^{\rm t}$. Set
	\begin{equation}
	\langle U, V \rangle := U^* \otimes V = \shom_\CatVect(U,V)~.
	\end{equation}
	Analogously, every $\CatVect^n$ is endowed with an inner product given by this functor on every summand and zero on pairs of objects from different summands.
	This can be regarded as the categorification of the canonical inner product on $\FC^n$.
	By composition with the canonical inner product, any equivalence $\CCV \isom \CatVect^n$ induces an inner product on $\CCV$.
\end{example}

\begin{remark}
Up to now, our discussion worked for left $\CatVect$-module categories, but the module structure requires a further restriction.
	The category $\CatHilb_{\rm sep}$ of separable Hilbert spaces is not generically a $\CatVect$-module category.
	This is because objects of $\CatVect$ do not carry canonical inner products, so there is no module action of $\CatVect$ on $\CatHilb_{\rm sep}$.\footnote{At least none which does not factor through an inclusion $\CatVect \hookrightarrow \CatHilb_{\rm sep}$ of {\em choosing} an inner product on every vector space.}
	
	The category $\CatHilb_{\rm sep}$ is, however, a module category over itself, enriched over $\CatVect$ (in fact $\shom_{\CatHilb_{\rm sep}}(\CH_1,\CH_2) = \CB(\CH_1,\CH_2)$ is even a Banach space), and has a canonical $\CatHilb_{\rm sep}$-valued inner product given by $(\CH_1,\CH_2) \mapsto \CH_1^* \otimes \CH_2 =  L^2(\CH_1,\CH_2)$, the Hilbert-Schmidt operators from $\CH_1$ to $\CH_2$.
\label{rem:Hilbsepcat}\end{remark}

The remark above suggests that the natural choice for a rig category for the definition of 2-Hilbert spaces is the category of Hilbert spaces $\CatHilb$.
In the following we restrict $\CatHilb$ to the category of {\em finite-dimensional} Hilbert spaces, with morphisms being the morphisms of the underlying vector spaces. We are therefore interested in elements of the 2-category $2\CatVect_\CatHilb$.
\begin{definition}\label{def:2-Hilbert_space}
	A \uline{2-Hilbert space} is a $\CatHilb$-module category $\CCH$, together with a $\CatHilb$-sesquilinear inner product functor $\langle -,- \rangle: \CCH^{\rm op} \times \CCH \rightarrow \CatHilb$. We denote the 2-category of 2-Hilbert spaces, linear functors, and natural transformations by $2\CatHilb$. 
\end{definition}	
There is an obvious forgetful functor from $2\CatVect_\CatHilb$ into $2\CatVect_\CatVect$ and therefore every 2-Hilbert space has an underlying 2-vector space. This allows the following definition.
\begin{definition}\label{def:free_2_Hilbert_space}
A \uline{free 2-Hilbert space} is a 2-Hilbert space $\CCH$ whose underlying 2-vector space is free.
\end{definition}

The various statements of Section~\ref{ssec:2-vector-spaces} concerning 2-vector spaces hold analogously for 2-Hilbert spaces.

\begin{example}
	The category $\CatHilb$ is a 2-Hilbert space with inner product given by $\langle -,- \rangle = \shom_\CatHilb(-,-)$.
	In fact, every $\CatHilb^n$ is a 2-Hilbert space as well, just as $\FC^n$ is a Hilbert space obtained as the direct sum of $n$ copies of $\FC$.
\end{example}

\begin{remark}
	\label{rmk:2Hspaces_and_Hilbert-Schmidt_ops}
	Consider the category of separable Hilbert spaces $\CatHilb_{\rm sep}$ with its canonical module action on itself via the tensor product.
	An inner product on $\CatHilb_{\rm sep}$ is given by Hilbert-Schmidt operators, $\langle \CH_1,\CH_2 \rangle = \CH_1^* \otimes \CH_2$,
	but the category $\CatHilb_{\rm sep}$ is not abelian. For instance, on an infinite-dimensional separable Hilbert space $\CH$, not every monomorphism is a kernel as there exist monomorphisms $T: \CH \rightarrow \CH$ with dense image which are not epic:
	Since $\im(T)$ is not closed, it cannot be the kernel of a bounded operator.
	
	However, since $L^2(\CH_1,\CH_2) \subset \CK(\CH_1,\CH_2) = L^\infty(\CH_1,\CH_2)$, Hilbert-Schmidt operators have the spectral properties of compact operators.
	In particular, for every $T \in  L^2(\CH,\CH)$ there exists a decomposition into its eigenspaces, all of which are closed.
	Therefore, here {\em every} object $\CH$ decomposes (non-canonically) into sub-objects $\CH_i$ of norm $\langle \CH_i, \CH_i \rangle \cong \FC$, e.g.\ consider $T$ to be a projection onto a finite-dimensional subspace of $\CH$.\footnote{Hence, finding a decomposition amounts to finding an isometric isomorphism $\CH \cong \FC^n$, or $\CH \cong \ell^2(\RZ)$.}
\end{remark}

\subsection{The 2-Hilbert space of sections of a line bundle gerbe}\label{ssec:2Hilbert_space_line_bgr}

Let us now come to the 2-Hilbert space structure on the sections of line bundle gerbes. First, we categorify statements \eqref{it:vector_space} in Section~\ref{ssec:gq_outline}. We saw in Section~\ref{sect:2-cat_of_BGrbs} that there exists a monoidal structure on the 2-category $\CatLBGrb^\nabla(M)$ for a generic manifold $M$, and Theorem~\ref{st:Direct_sum_as_fctr} says that the morphisms in this category are monoidal categories. Consider a line bundle gerbe $\CL$. With $\CI_0 \otimes \CL = \CL$ it follows that $\shom_{\CatLBGrb^\nabla(M)}(\CI_0,\CI_0) = \Gamma(M,\CI_0)$ is a rig category, and that $\shom_{\CatLBGrb^\nabla(M)}(\CI_0,\CL) = \Gamma(M,\CL)$ is a module category over $\Gamma(M,\CI_0)$, with module action given by the tensor product in $\CatLBGrb^\nabla(M)$. We can now regard a Hilbert space $V$ as a trivial hermitean vector bundle with trivial connection, which is analogous to the embedding of $\FC$ as constant functions in $C^\infty(M)$. In particular, $M \times V = (M \times V) \otimes (M \times \FC)$. This extends to an embedding of $\CatHilb$ into $\CatHVBdl^\nabla(M)$ as rig categories,
\begin{equation}
I: \CatHilb \longrightarrow \CatHVBdl^\nabla(M) \ ,\quad
\big( \phi: V \rightarrow W \big) \longmapsto \big( I(\phi) : (M \times V) \rightarrow (M \times W) \big)~.
\end{equation}
Since the morphisms in $\CatHVBdl^\nabla(M)$ are taken to be connection preserving, $I$ is a fully faithful functor.
It is moreover injective on objects, so that it provides an inclusion of $\CatHilb$ into $\CatHVBdl^\nabla(M)$ as a full subcategory.
Together with Theorem~\ref{st:Red_is_an_equivalence}, we arrive at composable morphisms of rig categories
\begin{equation}\label{eq:embedding_Hilb}
\xymatrix{
\CatHilb\ \ar@{^{(}->}[r] & \CatHVBdl^\nabla(M) \isom \Gamma(M,\CI_0)~.
}
\end{equation}

\begin{lemma}
	\begin{enumerate}
		\item The inclusion $\CatHilb \hookrightarrow \Gamma(M,\CI_0)$ is fully faithful and injective on objects. Moreover, it makes every section category $\Gamma(M,\CL)$ into a module category over $\CatHilb$.
		\item Taking global sections defines a monoidal
                  functor of 2-categories\footnote{The first functor
                    is not expected to be an equivalence of
                    categories, since the analogous functor for vector
                    bundles would only be defined on line bundles but not generic vector bundles.}
		\begin{equation}
\xymatrix{
		\Gamma(M,-): \CatLBGrb^\nabla(M) \longrightarrow 2\CatVect_{\Gamma(M,\, \CI_0)} \cong 2\CatVect_{\CatHVBdl^\nabla(M)}\ \ar@{^{(}->}[r] & 2\CatVect_{\CatHilb}~,
		}
		\end{equation}
		which is represented by the trivial line bundle gerbe $\CI_0$ on $M$.
	\end{enumerate}
\end{lemma}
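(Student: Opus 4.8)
The plan is to deduce both parts from the representability $\Gamma(M,-) = \shom_{\CatLBGrb^\nabla(M)}(\CI_0,-)$ together with the closed symmetric monoidal structure on $\CatLBGrb^\nabla(M)$ from Theorem~\ref{thm:4.28}. For part (1), I would start from the functor $I\colon \CatHilb \to \CatHVBdl^\nabla(M)$, which was already observed above to be fully faithful (connection-preserving morphisms between trivial bundles with trivial connection are exactly the constant, i.e.\ linear, bundle maps) and injective on objects. Post-composing with the equivalence $\CatHVBdl^\nabla(M) \isom \Gamma(M,\CI_0)$ of Example~\ref{ex:equiv_sec_trivial_hvbld} (an instance of Theorem~\ref{st:Red_is_an_equivalence}) yields the inclusion $\CatHilb \hookrightarrow \Gamma(M,\CI_0)$, which remains fully faithful and injective on objects since equivalences are fully faithful. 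One checks that $I$ is a morphism of rig categories using $M\times(V\otimes W) = (M\times V)\otimes(M\times W)$ and $M\times(V\oplus W) = (M\times V)\oplus(M\times W)$, with trivial metric and connection behaving compatibly. The module-category structure on $\Gamma(M,\CL) = \shom_{\CatLBGrb^\nabla(M)}(\CI_0,\CL)$ over $\Gamma(M,\CI_0) = \shom_{\CatLBGrb^\nabla(M)}(\CI_0,\CI_0)$ is the one induced by the tensor product: a section $(E,\alpha)$ and an endomorphism $s$ of $\CI_0$ are sent to $s\otimes(E,\alpha)$, which lands in $\shom(\CI_0\otimes\CI_0,\CI_0\otimes\CL) = \shom(\CI_0,\CL)$ because $\CI_0\otimes\CI_0 = \CI_0$ and $\CI_0\otimes\CL=\CL$; the module axioms are exactly the monoidal coherences of Theorem~\ref{thm:4.28}. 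Restricting scalars along $\CatHilb \hookrightarrow \Gamma(M,\CI_0)$ then makes every $\Gamma(M,\CL)$ a $\CatHilb$-module category.

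For part (2), functoriality is formal from representability. A 1-morphism $(E,\alpha)\colon\CL_1\to\CL_2$ is sent to the post-composition functor $(E,\alpha)\bullet(-)\colon\Gamma(M,\CL_1)\to\Gamma(M,\CL_2)$, and a 2-morphism to the natural transformation obtained by horizontal composition with $\id_{\CI_0}$; the 2-functor axioms reduce to associativity of horizontal composition in $\CatLBGrb^\nabla(M)$. These post-composition functors are $\Gamma(M,\CI_0)$-linear: by the compatibility of the tensor product with horizontal composition built into the monoidal 2-category of Theorem~\ref{thm:4.28}, there is a natural isomorphism $(E,\alpha)\bullet(s\otimes(-)) \cong s\otimes((E,\alpha)\bullet(-))$ for $s\in\Gamma(M,\CI_0)$, so each $\Gamma(M,f)$ is a 1-morphism in $2\CatVect_{\Gamma(M,\CI_0)}$. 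Composing with the equivalence $\Gamma(M,\CI_0)\cong\CatHVBdl^\nabla(M)$ and the embedding induced by $\CatHilb\hookrightarrow\CatHVBdl^\nabla(M)$ lands the whole construction in $2\CatVect_{\CatHilb}$, as claimed.

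It then remains to exhibit the (lax) monoidal structure, which is the one carried by any functor represented by the monoidal unit. For $(E,\alpha)\in\Gamma(M,\CL_1)$ and $(F,\beta)\in\Gamma(M,\CL_2)$, the tensor product $(E,\alpha)\otimes(F,\beta)$ is a 1-morphism $\CI_0 = \CI_0\otimes\CI_0 \to \CL_1\otimes\CL_2$, hence a section of $\CL_1\otimes\CL_2$; this defines the comparison $\Gamma(M,\CL_1)\boxtimes\Gamma(M,\CL_2)\to\Gamma(M,\CL_1\otimes\CL_2)$, with unit constraint given by the identity section $\id_{\CI_0}\in\Gamma(M,\CI_0)$. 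Coherence of these comparison maps is inherited from the associator and unitors of the monoidal structure on $\CatLBGrb^\nabla(M)$.

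The main obstacle I expect is not the representability, which renders functoriality and the existence of the comparison maps formal, but the verification that these comparison data are genuinely \emph{linear} natural transformations satisfying monoidal coherence. Concretely, one must control the interchange between the two operations on the morphism categories, horizontal composition $\bullet$ and the tensor product $\otimes$, and check that the induced 2-isomorphisms assemble coherently; all of this is underwritten by, but still requires carefully invoking, the monoidal 2-category axioms of Theorem~\ref{thm:4.28} together with the equivalence $\sfR,\sfS$ of Theorem~\ref{st:Red_is_an_equivalence} needed to place sections defined over different surjective submersions on a common footing.
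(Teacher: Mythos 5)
Your proposal is correct and follows essentially the same route as the paper, whose proof simply notes that part (1) was established in the discussion preceding the lemma (the embedding $I$, the equivalence of Example~\ref{ex:equiv_sec_trivial_hvbld}, and the module structure via $\CI_0 \otimes \CL = \CL$) and that part (2) follows from the properties of the hom-functor $\shom_{\CatLBGrb^\nabla(M)}(\CI_0,-)$. You have merely unpacked, in welcome detail, the representability and monoidal-coherence arguments that the paper leaves implicit.
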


\begin{proof}
	The first statement has been shown above already.
	The second statement follows from the properties of the hom-functor on $\CatLBGrb^\nabla(M)$.
\end{proof}
Thus every line bundle gerbe $\CL$ on $M$ defines a $\CatHilb$-module. It remains to construct the inner product functor to arrive at the desired 2-Hilbert space, which is the categorification of statement \eqref{it:inner_product} of Section~\ref{ssec:gq_outline}.

Recall the line bundle gerbe metric $\frh: \Gamma(M,\CL)^{\rm op} \times \Gamma(M,\CL) \rightarrow \Gamma(M,\CI_0)$ from Definition~\ref{def:2-bdl_metric}, which by Proposition~\ref{st:natural_property_of_2-bdl_metric} yields a natural isomorphism
\begin{equation}
\eta: \Gamma_{\rm par}(M,-) \circ \sfR \circ \frh \twoisom \big( 2\shom_{\CatLBGrb^\nabla(M)} \big)_{|\shom_{\CatLBGrb^\nabla(M)}(\CI_0,\CL)}~.
\end{equation}
The functor $\sfR \circ \frh$ produces a finite-rank hermitean vector bundle with connection from a pair $(E,\alpha),\, (F,\beta)$ of sections of $\CL$.

We would like to map the image of the bundle gerbe metric to an element of $\CatHilb$, just as the integral maps the result of the hermitean metric of a line bundle to an element in $\FC$, cf.\ statement \eqref{it:inner_product} of Section~\ref{ssec:gq_outline}. The natural categorified analogue would appear to be the \emph{direct integral} over the image. Recall that the direct integral over a Hilbert bundle $H\rightarrow M$ is given by the set of square-integrable sections of $H$, 
\begin{equation}
 \int_M^\oplus \, \dd \mu_M(x) \ H_x=L^2_{\dd \mu_M}(M,H)~,
\end{equation}
where $\dd \mu_M$ is some volume form on $M$. This space becomes a
Hilbert space with the natural inner product
$(\varepsilon_1,\varepsilon_2):=\int_M\, \dd \mu_M(x)\
(\varepsilon_1(x),\varepsilon_2(x))$ for $\varepsilon_{1}, \varepsilon_{2}\in L^2_{\dd \mu_M}(M,H)$. However, there are two issues with this approach. Firstly, the resulting Hilbert space will be infinite-dimensional in general. Secondly, and more severely, we expect the connection on the hermitean vector bundle in the image of the inner product functor to enter the definition of the map from $\CatHVBdl^\nabla(M)$ to $\CatVect$.

A natural cure to both problems is to restrict to parallel sections. Then, however, it is sensible to restrict to local data and we can omit the direct integral altogether. We thus define
\begin{equation}
\label{eq:inner_product_on_2Hspace_of_BGrb}
\langle -,- \rangle = \Gamma_{\rm par}(M,-) \circ \sfR \circ \frh:\ \Gamma(M,\CL)^{\rm op} \times \Gamma(M,\CL) \longrightarrow \CatVect~.
\end{equation}
The image of this map is canonically endowed with a hermitean inner product. For this, note that $\langle (E,\alpha), (F,\beta) \rangle$ is a hermitean vector bundle with hermitean metric $h_{\langle (E,\alpha), (F,\beta) \rangle}$. For two parallel sections $\varepsilon_1,\varepsilon_2$, we then define
\begin{equation}
\prec \varepsilon_1,\varepsilon_2\succ_{\langle (E,\alpha), (F,\beta) \rangle}\ :=\ h_{\langle (E,\alpha),(F,\beta) \rangle _{|x}}(\varepsilon_1,\varepsilon_2)
\end{equation}
for any $x \in M$.
This is well-defined since $\varepsilon_1$ and $\varepsilon_2$ are parallel, and the connection on $\langle (E,\alpha), (F,\beta) \rangle$ preserves $h_{\langle (E,\alpha),(F,\beta) \rangle}$.
Sesquilinearity, positive definiteness, and non-degeneracy of $\prec -,-\succ_{\langle (E,\alpha),(F,\beta) \rangle}$ follow immediately from the respective \emph{fibrewise} properties of $h_{\langle (E,\alpha), (F,\beta) \rangle}$.
Altogether, we conclude that
\begin{equation}
	\big( \langle (E,\alpha), (F,\beta) \rangle,\, \prec-,-\succ_{\langle (E,\alpha), (F,\beta) \rangle} \big)
\label{eq:innerprodmor}\end{equation}
forms a finite-dimensional Hilbert space.
This notion of inner product is compatible with the embedding \eqref{eq:embedding_Hilb} of $\CatHilb$ into $\Gamma(M,\CI_0)$.

Let us now turn to morphisms between sections. Consider two 2-morphisms $(\phi,\omega^W): (E',\alpha'\, ) \Rightarrow (E,\alpha)$ and $(\psi,\omega^U): (F,\beta) \Rightarrow (F',\beta'\, )$ in $\Gamma(M,\CL) = \shom_{\CatLBGrb^\nabla(M)}(\CI_0,\CL)$.
Then we get a morphism
\begin{equation}
\langle (\phi,\omega^W), (\psi,\omega^U) \rangle: \langle (E,\alpha), (F,\beta) \rangle \longrightarrow \langle (E',\alpha'\,), (F',\beta'\,) \rangle
\end{equation}
initially in $\CatVect$, which immediately extends to $\CatHilb$. The functor $\langle -,- \rangle$ becomes valued in $\CatHilb$ if we endow the vector spaces it generates with the inner product of \eqref{eq:innerprodmor}.
We have thus constructed a functor
\begin{equation}\label{eq:inner_product_functor}
\langle -,- \rangle = \Gamma_{\rm par}(M,-) \circ \sfR \circ \frh:\ \Gamma(M,\CL)^{\rm op} \times \Gamma(M,\CL) \longrightarrow \CatHilb~.
\end{equation}
By the $\Gamma(M,\CI_0)$-sesquilinearity of $\frh$, and the respective linearity properties of $\sfR$ and $\Gamma_{\rm par}(M,-)$, this functor is $\CatHilb$-sesquilinear.

Moreover, for every $(E,\alpha) \in \Gamma(M,\CL)$ there exists a non-zero element in\linebreak $2\shom_{\CatLBGrb^\nabla(M)}((E,\alpha),(E,\alpha))$, namely the identity morphism $\unit_{(E,\alpha)}$ on $(E,\alpha)$. Consequently, for every non-zero section of $\CL$, the dimension of the Hilbert space $\langle (E,\alpha), (E,\alpha) \rangle$ is at least $1$ and the functor \eqref{eq:inner_product_functor} is also non-degenerate. 

We can additionally show that the resulting 2-Hilbert space is free in the sense of Definition~\ref{def:free_2_Hilbert_space}. Recall that a simple object $(E,\alpha)\in \Gamma(M,\CL)$ is one whose inner product is $2\shom_{\CatLBGrb^\nabla(M)}((E,\alpha),(E,\alpha))=\FC\, \unit_{(E,\alpha)}$. 
\begin{lemma}
    \label{st:decomposition_prop_in_Gamma(M,G)}
    Normalised and simple objects in $\Gamma(M,\CL)$ coincide, and $\Gamma(M,\CL)$ is free as a 2-vector space.
\end{lemma}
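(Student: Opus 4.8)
The plan is to reduce both assertions to Proposition~\ref{st:natural_property_of_2-bdl_metric} and to the semisimplicity established in Theorem~\ref{st:hom_cats_are_semisimple_Abelian}. First I would unwind the two notions: an object $(E,\alpha)\in\Gamma(M,\CL)$ is \emph{normalised} when $\langle(E,\alpha),(E,\alpha)\rangle\cong\FC$ in $\CatHilb$, and \emph{simple} when $2\shom_{\CatLBGrb^\nabla(M)}((E,\alpha),(E,\alpha))=\FC\,\unit_{(E,\alpha)}$. Proposition~\ref{st:natural_property_of_2-bdl_metric} provides a natural isomorphism $\Gamma_{\rm par}(M,-)\circ\sfR\circ\frh\twoisom(2\shom_{\CatLBGrb^\nabla(M)})_{|\shom_{\CatLBGrb^\nabla(M)}(\CI_0,\CL)}$, so that $\langle(E,\alpha),(E,\alpha)\rangle\cong 2\shom_{\CatLBGrb^\nabla(M)}((E,\alpha),(E,\alpha))$ as vector spaces. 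Since $\unit_{(E,\alpha)}$ is a nonzero endomorphism, the right-hand side is one-dimensional precisely when it equals $\FC\,\unit_{(E,\alpha)}$; hence normalised and simple objects coincide, which settles the first assertion essentially for free.

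For freeness, the key inputs are that $\Gamma(M,\CL)$ is semisimple abelian (Theorem~\ref{st:hom_cats_are_semisimple_Abelian}) and that it is $\FC$-linear with finite-dimensional morphism spaces --- the latter because $\langle(E,\alpha),(F,\beta)\rangle=\Gamma_{\rm par}(M,\sfR\,\frh((E,\alpha),(F,\beta)))$ consists of parallel sections of a finite-rank hermitean vector bundle on $M$, hence is finite-dimensional. I would first check that the paper's notion of simple (scalar endomorphisms) agrees with the usual categorical one: in a semisimple category every subobject is a direct summand, so a nontrivial splitting $(E,\alpha)\cong(E_1,\alpha_1)\oplus(E_2,\alpha_2)$ with both summands nonzero would produce the associated idempotent projections in $2\shom_{\CatLBGrb^\nabla(M)}((E,\alpha),(E,\alpha))$, which are neither $0$ nor $\unit_{(E,\alpha)}$; thus scalar endomorphisms force indecomposability, and in a semisimple category indecomposable objects are simple in the subobject sense. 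Conversely, simple objects have only scalar endomorphisms by Schur's lemma together with $\FC$-linearity and finite-dimensionality. Moreover the proof of Theorem~\ref{st:hom_cats_are_semisimple_Abelian} already exhibits every object as a finite direct sum of such simple objects via eigenspace decompositions of $2$-endomorphisms.

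With this in hand I would construct the free structure explicitly. Let $X$ be the set of isomorphism classes of simple objects in $\Gamma(M,\CL)$ and fix representatives $\{S_x\}_{x\in X}$, so that Schur's lemma gives $2\shom_{\CatLBGrb^\nabla(M)}(S_x,S_y)\cong\FC$ when $x=y$ and $0$ otherwise. The assignment
\begin{equation}
  \Gamma(M,\CL)\longrightarrow\shom_{\CatCat}(\CCB_X,\CatVect)=\CCV_X\ ,\qquad (E,\alpha)\longmapsto\big(2\shom_{\CatLBGrb^\nabla(M)}(S_x,(E,\alpha))\big)_{x\in X}\ ,
\end{equation}
is a functor into the free 2-vector space over $X$; by semisimplicity it is essentially surjective, and it is fully faithful because Hom-spaces in a semisimple $\FC$-linear category are recovered from multiplicities over the simples. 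Since tensoring a section by $V\in\CatHilb$ scales each multiplicity space by $V$, this equivalence is compatible with the module actions, identifying $\Gamma(M,\CL)$ with $\CCV_X$ as a $\CatHilb$-module category; exhibiting $\CCB_X$ as a 2-basis is exactly the statement that $\Gamma(M,\CL)$ is free (and Kapranov--Voevodsky precisely when $X$ is finite).

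The main obstacle I anticipate is the bookkeeping needed to make Schur's lemma genuinely rigorous here: one must confirm that the spaces $2\shom_{\CatLBGrb^\nabla(M)}(S_x,S_y)$ are finite-dimensional $\FC$-vector spaces, so that the endomorphism division algebra of a simple object is $\FC$ rather than a larger skew field, and that the eigenspace decomposition invoked in Theorem~\ref{st:hom_cats_are_semisimple_Abelian} terminates, yielding honestly finite decompositions. Both points reduce to the fact that all relevant morphism spaces arise as parallel sections of finite-rank bundles on $M$; once this finiteness is recorded, the remaining checks --- Schur, essential surjectivity, full faithfulness and module-compatibility of the comparison functor --- are routine.
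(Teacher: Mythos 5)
Your proposal is correct and takes essentially the same route as the paper: the coincidence of normalised and simple objects follows from the natural isomorphism of Proposition~\ref{st:natural_property_of_2-bdl_metric}, and freeness follows from the semisimple decomposition of Theorem~\ref{st:hom_cats_are_semisimple_Abelian} with the simple objects furnishing a 2-basis of point categories $(*\rightrightarrows *)$. The only difference is one of detail: where the paper simply asserts that the disjoint union $\bigsqcup_{i}(*\rightrightarrows *)$ is a 2-basis, you make this precise via the multiplicity functor $(E,\alpha)\mapsto\big(2\shom_{\CatLBGrb^\nabla(M)}(S_x,(E,\alpha))\big)_{x\in X}$ together with the Schur and finiteness checks (finite-dimensionality of the morphism spaces as parallel sections of finite-rank bundles), which is a legitimate fleshing-out of the same argument rather than a different one.
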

\begin{proof}
    The first assertion is a consequence of the existence of the natural isomorphism $\eta: \langle -,- \rangle \twoisom \big(2\shom_{\CatLBGrb^\nabla(M)} \big)_{|\Gamma(M,\CL)}$.
    Because of Theorem~\ref{st:hom_cats_are_semisimple_Abelian}, which states that $\Gamma(M,\CL) = \shom_{\CatLBGrb^\nabla(M)}(\CI_0,\CL)$ is semisimple and abelian, every object $\CV$ of $\Gamma(M,\CL)$ decomposes (non-canonically) into objects $\CE_i$, $i\in I$, with 1-dimensional inner product: $\CV=\bigoplus_{i\in I}\, \CE_i$. Each of these objects has basis $\CB_{*,i}:=(* \rightrightarrows *)$, and the disjoint union of categories $\CCB_I=\bigsqcup_{i\in I}\, \CB_{*,i}$ forms a 2-basis for $\Gamma(M,\CL)$.
\end{proof}

We can now conclude the following statement.
\begin{theorem}\label{thm:5.9}
	\begin{enumerate}
		\item The bundle 2-metric $\frh$, or any of its multiples, on a given bundle gerbe $\CL$ defines a $\Gamma(M,\CI_0)$- (and thus also $\CatHilb$-)sesquilinear, non-degenerate functor
		\begin{equation}
		\langle -,- \rangle:\, \Gamma(M,\CL)^{\rm op} \times \Gamma(M,\CL) \longrightarrow \CatHilb
		\end{equation}
		constructed as above.
		\item This makes $\big( \Gamma(M,\CL),\, \langle -,- \rangle \big)$ into a 2-Hilbert space.
		\item The assignment to a bundle gerbe $\CL$ of the 2-Hilbert space from item (2) defines a $\otimes$-monoidal functor $\CatLBGrb^\nabla(M) \rightarrow 2\CatHilb$.
	\end{enumerate}
\end{theorem}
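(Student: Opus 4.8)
The plan is to prove the three items in order, building on the structures already established. For item (1), the functor $\langle -,-\rangle = \Gamma_{\rm par}(M,-)\circ \sfR\circ \frh$ was essentially constructed in the paragraphs preceding the theorem, so the work here is to verify the stated properties: $\Gamma(M,\CI_0)$-sesquilinearity follows from the sesquilinearity of the internal 2-hom-functor $\lsb -,-\rsb$ (inherited by $\frh$ from Definition~\ref{def:2-bdl_metric}) together with the linearity of $\sfR$ (descent) and $\Gamma_{\rm par}(M,-)$. Non-degeneracy I would argue exactly as in the text: for any non-zero $(E,\alpha)$ the identity 2-morphism $\unit_{(E,\alpha)}$ is a non-zero element of $2\shom_{\CatLBGrb^\nabla(M)}((E,\alpha),(E,\alpha))$, so by the natural isomorphism $\eta$ of Proposition~\ref{st:natural_property_of_2-bdl_metric} the Hilbert space $\langle (E,\alpha),(E,\alpha)\rangle$ has dimension at least $1$. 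Passage from $\Gamma(M,\CI_0)$-sesquilinearity to $\CatHilb$-sesquilinearity uses the fully faithful embedding \eqref{eq:embedding_Hilb} of $\CatHilb$.

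For item (2), I would check that $(\Gamma(M,\CL),\langle -,-\rangle)$ satisfies Definition~\ref{def:2-Hilbert_space}. By the lemma preceding the theorem, $\Gamma(M,\CL)$ is a $\CatHilb$-module category, and Theorem~\ref{st:hom_cats_are_semisimple_Abelian} gives that it is semisimple abelian with the direct sum as biproduct; combined with Lemma~\ref{st:decomposition_prop_in_Gamma(M,G)} this shows the underlying 2-vector space is free, so the structural hypotheses hold. The three required natural isomorphisms \eqref{eq:innprodfunctor} for the inner product functor are precisely the sesquilinearity 2-isomorphisms of $\lsb -,-\rsb$ recorded after Definition~\ref{def:internal_hom}, transported through $\sfR$ and $\Gamma_{\rm par}$; I would note that these are compatible because $\sfR$ commutes with $\Theta$, tensor products and direct sums by the Lemma following Definition~\ref{def:Theta}. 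The fibrewise inner product \eqref{eq:innerprodmor} supplies the $\CatHilb$-valuedness, with positive-definiteness automatic.

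For item (3), the assignment $\CL\mapsto (\Gamma(M,\CL),\langle -,-\rangle)$ must be upgraded to a $\otimes$-monoidal 2-functor $\CatLBGrb^\nabla(M)\to 2\CatHilb$. The underlying functor $\Gamma(M,-)=\shom_{\CatLBGrb^\nabla(M)}(\CI_0,-)$ into $2\CatVect_\CatHilb$ is the monoidal functor already produced in the preceding lemma; what remains is to check it lifts to $2\CatHilb$, i.e.\ that it intertwines the inner products. The monoidal coherence datum is the natural transformation $\Gamma(M,\CL_1)\otimes \Gamma(M,\CL_2)\to \Gamma(M,\CL_1\otimes \CL_2)$ given on sections by the tensor product $\bullet$-composition, $(E,\alpha)\otimes(F,\beta)\mapsto (E,\alpha)\otimes(F,\beta)$, using $\CI_0\otimes\CI_0\cong\CI_0$; I would verify it is compatible with the inner products via the first sesquilinearity 2-isomorphism in the list after Definition~\ref{def:internal_hom}, which says $\lsb (E,\alpha)\otimes(E',\alpha'),-\rsb \cong \Theta(E,\alpha)\otimes\lsb(E',\alpha'),-\rsb$, together with multiplicativity of $\Theta$ under $\otimes$ and self-duality of $\CI_0$.

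The main obstacle I expect is item (3), specifically keeping the surjective submersions and the equivalence $\sfR,\sfS$ of Theorem~\ref{st:Red_is_an_equivalence} under control when comparing $\langle -,-\rangle$ on a product gerbe with the tensor product of the inner products: the functor $\frh$ involves $\Theta$ and $\lsb -,-\rsb$, each carrying its own surjective submersion data, and the monoidal structure constraint must be shown to be a genuine natural isomorphism rather than merely defined pointwise. I would handle this by passing to the reduced bicategory $\CatLBGrb^\nabla_{\rm FP}(M)$ via $\sfR$ (where morphisms live over the canonical fibre products $Y_{12}$, so the bookkeeping collapses), establishing the coherence there, and transporting back using $\eta:\sfS\circ\sfR\twoisom\unit$, exactly the strategy used in the proof of Theorem~\ref{st:internal-2-hom-adjunction}. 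The remaining coherence (hexagon/pentagon-type) axioms for monoidality then reduce to the corresponding associativity and unit coherences of $\otimes$ on $\CatLBGrb^\nabla(M)$ from Theorem~\ref{thm:4.28}, which I would invoke rather than re-prove.
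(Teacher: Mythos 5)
Your proposal is correct and follows essentially the same route as the paper: the paper's own proof simply cites the preceding construction and Lemma~\ref{st:decomposition_prop_in_Gamma(M,G)} for items (1) and (2), and for item (3) observes that the functor is $\shom_{\CatLBGrb^\nabla(M)}(\CI_0,-)$ endowed with the inner product from item (1), which is exactly what you spell out in more detail (your extra care with the reduction functor $\sfR$ and the sesquilinearity 2-isomorphisms is a faithful elaboration of the discussion the paper points to).
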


\begin{proof}
Items (1) and (2) follow from our discussion above together with Lemma~\ref{st:decomposition_prop_in_Gamma(M,G)}. Item~(3) follows from the fact that this functor is just the functor $\shom_{\CatLBGrb^\nabla(M)}(\CI_0,-)$ followed by endowing the objects it produces with the inner product from item (1).
\end{proof}

\begin{example}
	The category of sections $\Gamma(S^1,\CI_0)$ of the trivial bundle gerbe $\CI_0$ on the circle $S^1$ is equivalent to the category of hermitean vector bundles with connection on $S^1$.
	Normalised objects in the corresponding 2-Hilbert space are those vector bundles  whose parallel endomorphisms are all proportional to the identity.
	In particular, every flat line bundle on $S^1$ is a simple object.
	Consequently, the 2-rank of $\Gamma(S^1,\CL)$ is at least equal to the cardinality of $h_0\CatHLBdl_0^\nabla(S^1)$, the isomorphism classes of flat hermitean line bundles on $S^1$.
	The set of these classes is isomorphic as an abelian group to
        $\sU(1) \cong S^1$, see e.g.~\cite{Waldorf:2010aa}.
	Hence even over this simple compact base manifold, the 2-rank of the 2-Hilbert space $\Gamma(S^1,\CI_0)$ is uncountably infinite.
\end{example}
We shall discuss a further example of more physical significance in great detail in Section~\ref{ssec:R3_2_Hilbert_space}.

\subsection{Observables in 2-plectic quantisation}\label{ssec:observables}

For a full quantisation, we are missing a notion of polarisation that we can apply on the sections forming the objects of the 2-Hilbert space. Nevertheless, we can make a few statements about the categorification of statement \eqref{it:rep_observables} of Section~\ref{ssec:gq_outline}. Ideally, the 2-vector space of observables $\Pi_{M,\omega}$ on a 2-plectic manifold $(M,\omega)$ as described in Section~\ref{ssec:multisymplectic} would be recovered as certain sections of the trivial line bundle gerbe $\CI_0$ over $M$: We expect an equivalence of categories between a subcategory of $\Gamma(M,\CI_0)\cong \CatHVBdl^\nabla(M)$ and the 2-vector space $\Pi_{M,\omega}= (\Omega^1_{\rm Ham}(M) \ltimes C^\infty(M) \rightrightarrows \Omega^1_{\rm Ham}(M))$. This is true to some extent as we discuss in the following.

The restriction of $\Pi_{M,\omega}$ to Hamiltonian 1-forms can be
eliminated by focusing on 2-plectic manifolds with dimensions a
multiple of $3$.\footnote{Another approach would be to take the
  proposal of \cite{Ritter:2015ffa} seriously and generalise
  quantisation of 2-plectic manifolds to quantisation of higher
  spaces, in particular appropriate Courant algebroids. This, however,
  is beyond the scope of the present paper.}

\begin{proposition}\label{prop:5.11}
As a category, the Lie 2-algebra $\Pi_{M,\omega}$ can be embedded in $\CatHVBdl^\nabla(M)$ as the full subcategory of topologically trivial hermitean line bundles with connection. Under this embedding, the additive abelian monoidal structure on $\Pi_{M,\omega}$ is taken to the multiplicative abelian monoidal structure $\otimes$ on $\CatHVBdl^\nabla(M)$.
\end{proposition}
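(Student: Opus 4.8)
The plan is to write down the evident functor $\Xi\colon \Pi_{M,\omega}\to\CatHVBdl^\nabla(M)$ and to verify that it is monoidal, injective on objects, full, and essentially surjective onto the full subcategory of topologically trivial hermitean line bundles with connection. On objects I would send a Hamiltonian $1$-form $\alpha\in\Omega^1_{\rm Ham}(M)$ to the trivial hermitean line bundle $L_\alpha:=(M\times\FC,h_{\rm std})$ equipped with the hermitean connection $\nabla_\alpha=\dd+2\pi\,\di\,\alpha$; since $\alpha$ is real, $2\pi\,\di\,\alpha$ is $\di\,\FR$-valued and $\nabla_\alpha$ is indeed metric-compatible. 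On a morphism $(\alpha,f)\colon\alpha\to\alpha+\dd f$ of $\Pi_{M,\omega}$ I would set $\Xi(\alpha,f)$ to be fibrewise multiplication by $\exp(-2\pi\,\di\,f)$. A direct computation gives $\nabla_{\alpha+\dd f}\circ\exp(-2\pi\,\di\,f)=\exp(-2\pi\,\di\,f)\circ\nabla_\alpha$, so this is connection-preserving, and it is isometric because $|\exp(-2\pi\,\di\,f)|=1$; hence it lands in the morphisms of $\CatHVBdl^\nabla(M)$. Functoriality is immediate: the identity $(\alpha,0)$ is sent to multiplication by $1$, and because composition in $\Pi_{M,\omega}$ adds the functions $f$, the homomorphism property $\exp(-2\pi\,\di\,(f_1+f_2))=\exp(-2\pi\,\di\,f_1)\,\exp(-2\pi\,\di\,f_2)$ turns composition into composition.

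The compatibility of monoidal structures is the cleanest part. The additive monoidal structure on $\Pi_{M,\omega}$ adds $1$-forms on objects and functions on morphisms, while the tensor product of trivial line bundles with connection adds connection $1$-forms: under the canonical isomorphism $\FC\otimes\FC\cong\FC$ one has $L_{\alpha_1}\otimes L_{\alpha_2}=(M\times\FC,\dd+2\pi\,\di\,(\alpha_1+\alpha_2))=L_{\alpha_1+\alpha_2}$. Thus $\Xi(\alpha_1+\alpha_2)=\Xi(\alpha_1)\otimes\Xi(\alpha_2)$, and on morphisms $\exp(-2\pi\,\di\,(f_1+f_2))$ corresponds to $\exp(-2\pi\,\di\,f_1)\otimes\exp(-2\pi\,\di\,f_2)$ under the same identification. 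I would record the resulting coherence isomorphisms and check the monoidal-functor axioms, which reduce to the strict associativity and unitality of fibrewise multiplication of functions. This establishes the second assertion, that the additive abelian structure is carried to the multiplicative $\otimes$.

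For the object-level statement I would argue as follows. The functor is manifestly injective on objects. Essential surjectivity uses the hypothesis, recalled just above the proposition, that on the relevant $2$-plectic manifolds (those built from $3$-dimensional factors, where $\omega$ restricts to a volume form and $X\mapsto\iota_X\omega$ is an isomorphism) every $1$-form is Hamiltonian: trivializing any topologically trivial hermitean line bundle with connection by a global unit section writes its connection as $\dd+2\pi\,\di\,\alpha$ with $\alpha$ real, hence Hamiltonian, so it is isomorphic to some $L_\alpha$. Fullness then follows because any parallel isometric isomorphism $L_{\alpha_1}\to L_{\alpha_2}$ is multiplication by a unit-modulus $g$ with $\dd\log g=-2\pi\,\di\,(\alpha_2-\alpha_1)$, which forces $\alpha_2-\alpha_1$ to be exact and $g=\exp(-2\pi\,\di\,f)$ for some $f$ with $\dd f=\alpha_2-\alpha_1$, i.e.\ $g=\Xi(\alpha_1,f)$.

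I expect the genuine obstacle to be \emph{faithfulness}, and it is here that the word ``embedding'' must be read with care. Because line-bundle morphisms multiply while the morphisms of $\Pi_{M,\omega}$ are the additive data $f$, the assignment $f\mapsto\exp(-2\pi\,\di\,f)$ has kernel the integer-valued locally constant functions: $f$ and $f+n$ with $n\in\RZ$ define distinct morphisms of $\Pi_{M,\omega}$ but the same parallel isometry. Thus the natural functor is full and essentially surjective but only faithful after quotienting the morphism functions by this $\RZ$ of large gauge transformations; on hom-sets it realizes $\Pi_{M,\omega}$ as a covering of the full subcategory, with $\Hom_{\Pi}(\alpha_1,\alpha_2)\cong\FR$ (for connected $M$) mapping onto $\Hom(L_{\alpha_1},L_{\alpha_2})\cong\sU(1)$. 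The cleanest presentation is therefore to exhibit $\Xi$ as a full, essentially surjective monoidal functor identifying $\Pi_{M,\omega}$ with the stated full subcategory up to this kernel; alternatively one restricts to the sub-torsor of functions determined up to honest ($\RZ$-free) constants, or allows non-isometric parallel morphisms and post-composes $f\mapsto\exp(-2\pi\,\di\,f)$ with a basepoint character to restore injectivity at the cost of isometry. I would flag this choice explicitly, since it is the only point where the correspondence is not literally an equivalence of categories.
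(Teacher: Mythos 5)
Your construction is exactly the paper's proof: Proposition~\ref{prop:5.11} is established there in two lines by sending $\alpha\in\Omega^1_{\rm Ham}(M)$ to a connection $1$-form on the trivial hermitean line bundle with canonical metric and the morphism $(\alpha,f)$ to the global gauge transformation with group element $\de^{-\di\,f}$ (your $2\pi$ normalisation is immaterial), with the monoidal compatibility declared ``evident'' --- i.e.\ your first two paragraphs in compressed form. Your closing caveat is also well taken: the paper never addresses faithfulness, and indeed $f$ and $f+2\pi$ (a constant shift) are distinct morphisms of $\Pi_{M,\omega}$ with the same image, so the functor is an ``embedding'' only up to quotienting the morphism functions by constants in $2\pi\,\RZ$ --- a genuine subtlety in the statement that your proposal makes explicit while the paper's proof silently elides it.
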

\begin{proof}
 We map Hamiltonian 1-forms $\alpha\in \Omega^1_{\rm Ham}(M)$ to connection 1-forms on trivial hermitean line bundles with canonical bundle metric and the image of a morphism $(\alpha,f)\in\Omega^1_{\rm Ham}(M) \ltimes C^\infty(M)$ is mapped to a global gauge transformation with group element $\de^{-\di\, f}$. The second statement is then evident.
\end{proof}

As we know from Proposition~\ref{prop:morphisms_weakly_invertible}, however, this subcategory of line bundles encodes 1-automorphisms of $\CI_0$. We do not expect all higher endomorphisms of $\CI_0$ to correspond to automorphisms. Recall that a functor only yields an equivalence of categories if it is full, faithful and essentially surjective. As is readily seen, the embedding functor from $\Pi_{M,\omega}$ to $\CatHVBdl^\nabla(M)$ fails to be essentially surjective, and hence does not yield an equivalence of categories.

\section{Example: Higher prequantisation of \texorpdfstring{$\FR^3$}{R3}}
\label{sect:action_of_2-grps_on_their_BGrbs}

Let us now discuss an example in great detail. The most interesting
spaces from a string theory perspective are the 3-manifolds $\FR^3$,
$S^3$ and $T^3$. Since our framework is restricted to the case of
Dixmier-Douady classes which are torsion, we will focus on the case of
$\FR^3$: Here the degree~$3$ cohomology is trivial and the cohomology class of any 2-plectic
form, in particular the volume form ${\rm vol}_{\FR^3}$, is $1$-torsion. We start with a
general discussion, and then specialise to $\FR^3$.

\subsection{The 2-Hilbert space of a trivial prequantum line bundle gerbe}\label{ssec:general_trivial_pqlbg}

Consider the trivial prequantum line bundle gerbe $\CI_\rho$ over some manifold $M$ as defined in \eqref{eq:trivial_line_bundle_gerbe}. The sections of $\CI_\rho$ form a module over $\shom_{\CatLBGrb^\nabla(M)}(\CI_0,\CI_0) = \Gamma(M,\CI_0)$. As explained in Example~\ref{ex:equiv_sec_trivial_hvbld}, every section is 2-isomorphic to a hermitean vector bundle with connection on $M$, since
\begin{equation}
	\shom_{\CatLBGrb^\nabla(M)}(\CI_\rho,\CI_{\rho'}) \cong
        \shom_{\CatLBGrb^{\nabla}_{\rm FP}(M)}(\CI_\rho,\CI_{\rho'}) \cong \CatHVBdl^\nabla(M)~.
\end{equation}
In the latter smaller category, the morphisms are just the parallel
morphisms of vector bundles, and the direct sum is the Whitney sum of
vector bundles. In the subcategory $\CatLBGrb_{{\rm
    triv,FP}}^{\nabla}(M)$ of trivial line bundle gerbes (cf.\ Remark~\ref{rem:trivial_line_bundle_gerbes}), composition of 1-morphisms coincides with the tensor product of vector bundles. We therefore do not require additional surjective submersions in the 1-morphisms in order to achieve associativity of composition.
	
The Riesz functor $\Theta$ acts on $\CatLBGrb_{{\rm triv,FP}}^{\nabla}(M)$ by mapping a bundle gerbe to its dual bundle gerbe, a hermitean vector bundle with connection to its dual bundle, and a morphism of vector bundles to its transpose.
Note that $\CI_\rho^* = \CI_{-\rho}$.
Therefore the internal 2-hom-functor from Definition~\ref{def:internal_hom} reduces here to the usual internal hom-functor on $\CatHVBdl^\nabla(M)$,
\begin{equation}
  \lsb E,F\rsb = E^* \otimes F~,
\end{equation}
and already coincides with the line bundle gerbe metric, $\frh = \lsb -,-\rsb$.
The natural isomorphism $\shom_{\CatHVBdl^\nabla(M)}(E,F) \cong
\Gamma_{\rm par}(M,E^* \otimes F)$ for vector bundles then establishes
the natural isomorphism from
Proposition~\ref{st:natural_property_of_2-bdl_metric}.

The 2-Hilbert space $\Gamma(M,\CI_\rho)$ is thus equivalent to $\CatHVBdl^\nabla(M)$ endowed with the usual direct sum and module action of $\CatHVBdl^\nabla(M)$ given by the tensor product of vector bundles.
This induces a module action of $\CatHilb$ via the assignment to a
Hilbert space of the corresponding trivial hermitean vector bundle with connection over $M$.
The inner product of two objects is given by the vector space of all parallel morphisms between the respective vector bundles, 
\begin{equation}
	\langle E,F \rangle = \Gamma_{\rm par}(M,E^* \otimes F)~.
\end{equation}
This vector space is a Hilbert space with respect to the inner product
$\prec\varepsilon_1,\varepsilon_2\succ_{\langle E,F\rangle} = h_{E^* \otimes F_{|x}}(\varepsilon_1,\varepsilon_2)$ for an arbitrary $x \in M$.
Recall that $\prec\varepsilon_1,\varepsilon_2\succ_{\langle E,F\rangle} $ is independent of the choice of $x \in M$, since we require the morphisms to be parallel and the connections on the bundles to be metric-preserving.

\subsection{The 2-Hilbert space over \texorpdfstring{$\FR^3$}{R**3}}\label{ssec:R3_2_Hilbert_space}

Let now $\omega={\rm vol}_{\FR^3}=\dd x^1\wedge \dd x^2\wedge \dd x^3$
be the volume form on $M=\FR^3$ in standard cartesian coordinates
$(x^1,x^2,x^3)$. Then $(\FR^3,\omega)$ is a 2-plectic manifold. The Lie
2-algebra of observables $\Pi_{\FR^3,\omega}$ consists of the 2-vector
space $ \Omega^1(\FR^3)\ltimes C^\infty(\FR^3) \rightrightarrows
\Omega^1(\FR^3)$; in particular, all $1$-forms are Hamiltonian. The source and target maps in $\Pi_{\FR^3,\omega}$ are
\begin{equation}
\sfs(\alpha,f)=\alpha~,~~~\sft(\alpha,f)=\alpha+\dd f
\end{equation}
for $(\alpha,f)\in \Omega^1(\FR^3)\ltimes C^\infty(\FR^3)$. The
Hamiltonian vector field of $\alpha=\alpha_i\, \dd x^i$ reads
as\footnote{Throughout we use implicit summation over repeated upper
  and lower indices, and $\epsilon^{ijk}$ denotes the totally
  antisymmetric tensor of rank~$3$ with $\epsilon^{123}=+1$. We also
  sometimes abbreviate $\partial_i:=\der{x^i}$.}
\begin{equation}
 X_\alpha=X^i_\alpha \, \der{x^i}=-\epsilon^{ijk}\, \dpar_j\alpha_k\, \der{x^i}~,
\end{equation}
leading to the Lie bracket functor and Jacobiator
\begin{equation}
\begin{aligned}\label{eq:R^3_products}
 [(\alpha,f),(\beta,g)]&=([\alpha,\beta],0)=(\epsilon^{ijk}\,
 \dpar_i\alpha_k\, (\dpar_j\beta_l-\dpar_l\beta_j)\, \dd x^l,0)~,\\[4pt]
 J_{\alpha,\beta,\gamma}&=\big([\alpha,[\beta,\gamma]]~,~\epsilon^{ijk}\,
 \epsilon^{mnp}\, \dpar_m\alpha_n\, \dpar_j\beta_k\, (\dpar_i\gamma_p-\dpar_p\gamma_i)\big)~.
\end{aligned}
\end{equation}

Consider now the standard trivial line bundle gerbe $\CI_\rho$ with
curving $\rho\in \di\, \Omega^2(\FR^3)$ given by $\rho=-\tfrac{2\pi\, \di}{3!}\, \epsilon_{ijk} \, x^i\, \dd
x^j\wedge \dd x^k$. This is a prequantum line
bundle gerbe over $(\FR^3,\omega)$ according to Definition~\ref{eq:def_prequantum_line_bundle_gerbe}. Each of its sections in $\Gamma(\FR^3,\CI_\rho) = \shom_{\CatLBGrb^\nabla(\FR^3)}(\CI_0,\CI_\rho)$ consist of a surjective submersion $Y \rightarrow \FR^3$ together with a hermitean vector bundle with connection $E \rightarrow Y$ and a parallel isometric isomorphism $\alpha_{(y_1,y_2)}: E_{y_2} \isom E_{y_1}$ for all $(y_1,y_2)\in Y^{[2]}$. Over $(y_1,y_2,y_3)\in Y^{[3]}$ we have $\alpha_{(y_1,y_2)}\circ \alpha_{(y_2,y_3)}=\alpha_{(y_1,y_3)}$.

According to the discussion of
Section~\ref{ssec:general_trivial_pqlbg}, the category of sections of
$\CI_\rho$ is equivalent to the category of hermitean vector bundles with connection on $\FR^3$.
Since $\FR^3$ is contractible, every hermitean vector bundle on $\FR^3$ is isomorphic to the trivial hermitean vector bundle of the same rank.
Upon choosing such a trivialisation for every bundle, we conclude that
$\Gamma(\FR^3,\CI_\rho)$ is equivalent to the category having as
objects $\fru(n)$-valued 1-forms on $\FR^3$, for any $n \in \RZ_{>0}$, and
as morphisms $f: \omega \rightarrow \eta$ maps $f \in
C^\infty(\FR^3,\MM_{n,m}(\FC))$ valued in $n{\times}m$ complex
matrices such that $f \, \omega = \eta \, f +\dd f $.
Direct sum and tensor product reduce canonically to this category, and
the dual of a section is given by $\omega \mapsto -\omega^{\rm t}$.
The internal hom-functor on two sections is given by $\lsb \omega,
\eta \rsb = -\omega^{\rm t} \otimes \mathbbm{1} + \mathbbm{1} \otimes \eta$.
Hence the 2-Hilbert space inner product functor
$\langle\omega,\eta\rangle$ is canonically isomorphic to the vector space of all morphisms $f: \omega \rightarrow \eta$.
The Hilbert space inner product on the latter vector space is given by $\prec f,g\succ_{\langle\omega,\eta\rangle} = \tr(f^*(x)\, g(x))$ for an arbitrary $x \in \FR^3$. This category can be regarded as a higher-rank analogue of the Lie 2-algebra $\Pi_{\FR^3,\omega}$.

\subsection{Symmetries and the string 2-group}\label{ssec:symmetries}

Let us now work out the induced action of the spacetime isotropy group
$\sSO(3)$ on the 2-Hilbert space $\Gamma(\FR^3,\CI_\rho)$,
categorifying statement~\eqref{it:symmetry_group_action} of
Section~\ref{ssec:gq_outline}. The appropriate higher
analogue of the spin group is called the string group. There are
strict Lie 2-group models of the string group which indeed act on the 2-Hilbert space $\Gamma(\FR^3,\CI_\rho)$ described in Section~\ref{ssec:R3_2_Hilbert_space}. Instead of developing a full theory of equivariant prequantum line bundle gerbes, let us merely describe the 2-group action within our framework.

The string group of $\sSO(n)$ is the 3-connected cover of $\sSpin(n)$. This defines the group only up to homotopy, and there are various models for the string group, cf.\ \cite{Stolz:1996:785-800,Stolz:2004aa,Nikolaus:2011zg}.
\begin{definition}
 A \uline{smooth string group model} for a simple Lie group $\sG$ is a Lie group $\hat{\sG}$ together with a smooth homomorphism $q:\hat \sG \rightarrow \sG$ such that $\pi_k(\hat \sG)=0$ for $k\leq 3$ and $\pi_i(\hat \sG)\cong \pi_i(\sG)$ for $i>3$.
\end{definition}
A smooth string group model cannot be finite-dimensional. The situation is improved by looking at 2-group models, cf.\
\cite{Nikolaus:2011zg}. Let us give a slightly simpler definition
here. There is a canonical semistrict Lie 2-algebra $\aso(n)\oplus
\FR\rightrightarrows \aso(n)$ with source and target maps the trivial
projection, Lie bracket functor the usual Lie bracket and Jacobiator
given by a Lie group 3-cocycle
$J(g_1,g_2,g_3):=\big([g_1,[g_2,g_3]],k\, \langle
g_1,[g_2,g_3]\rangle\big)$ for $g_{i}\in \aso(n)$, $i=1,2,3$, and
$k\in\FR$, where $\langle-,-\rangle$ is the Killing form on $\aso(n)$
\cite{Baez:2003aa}. For $n=3$ and $n\geq 5$, $H^3(\sSpin(n))\cong
\RZ$, and we choose the group cocycle $k\, \langle g_1,[g_2,g_3]\rangle$ to be the generator $k=1$ of $H^3(\sSpin(n))$. In these cases, the semistrict Lie 2-algebra is called the \emph{string Lie 2-algebra}. We then define the following.
\begin{definition}
 A \uline{string 2-group model} is a Lie 2-group whose Lie
 2-algebra is equivalent to the string Lie 2-algebra.
\end{definition}
The model we shall be interested in here is the strict 2-group model
of \cite{Baez:2005sn}. Let $\Omega\sG$ and $\CP  \sG$ denote again the
loop and path spaces of some compact $1$-connected simple Lie group $\sG$, respectively,
based at the identity element $\unit\in\sG$. These fit into the short exact sequence 
\begin{equation}\label{eq:ses1}
 \unit\longrightarrow \Omega \sG\xrightarrow{~\sigma~} \CP \sG\xrightarrow{~\dpar~} \sG\longrightarrow \unit~,
\end{equation}
where $\sigma$ is the embedding and $\dpar$ is again the evaluation
map at the endpoint of a loop. Recall from
Example~\ref{ex:central_loop_extension} that there is a non-trivial central extension of $\Omega \sG$ by $\sU(1)$, giving the Kac-Moody group $\widehat{\Omega_k\sG}$ as a non-trivial principal $\sU(1)$-bundle over $\Omega \sG$: 
\begin{equation}\label{eq:ses2}
 1 \longrightarrow \sU(1)\longrightarrow \widehat{\Omega_k\sG} \xrightarrow{~\phi~} \Omega \sG\longrightarrow  \unit~.
\end{equation}
At the level of Lie algebras, this central extension is well under
control and can be lifted to Lie groups by using a
theorem\footnote{Here enters the condition that $\sG$ is simple: It
  guarantees that all invariant symmetric bilinear forms on its Lie
  algebra are proportional to each other.} due to Pressley and Segal~\cite{Pressley:1988aa}. 
\begin{proposition}[\cite{Baez:2005sn}]
 There is a string 2-group model which is given by the crossed module of Lie groups $\widehat{\Omega_1\sG}\xrightarrow{~\theta~}\CP \sG$, where the homomorphism $\theta$ is the composition $\sigma\circ \phi$ of homomorphisms from the short exact sequences \eqref{eq:ses1} and \eqref{eq:ses2}.
\end{proposition}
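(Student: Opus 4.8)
The plan is to verify the two defining properties in turn: first that the data $\widehat{\Omega_1\sG}\xrightarrow{~\theta~}\CP\sG$ genuinely assembles into a crossed module of Lie groups in the sense of Definition~\ref{def:crossed_module_Lie_groups}, and second that the semistrict Lie 2-algebra obtained from it by the tangent functor is equivalent to the string Lie 2-algebra $\ag\oplus\FR\rightrightarrows\ag$ with $\ag=\sLie(\sG)$. By the equivalence between strict Lie 2-groups and crossed modules of Lie groups \cite{Baez:0307200}, together with the equivalence between crossed modules of Lie algebras and semistrict Lie 2-algebras \cite{Baez:2003aa}, establishing these two facts proves that $\widehat{\Omega_1\sG}\xrightarrow{~\theta~}\CP\sG$ is a string 2-group model.

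For the crossed module structure the missing datum is the action $\triangleright:\CP\sG\times\widehat{\Omega_1\sG}\to\widehat{\Omega_1\sG}$ by automorphisms. First I would note that $\sigma$ realises $\Omega\sG$ as a normal subgroup of $\CP\sG$ with quotient $\sG$ by \eqref{eq:ses1}, so $\CP\sG$ acts on $\Omega\sG$ by conjugation. This conjugation action preserves the level-one Kac-Moody cocycle, so by the theorem of Pressley and Segal \cite{Pressley:1988aa} (where simplicity of $\sG$ guarantees that the relevant invariant bilinear form is unique up to scale) it lifts to an action on the central extension $\widehat{\Omega_1\sG}$ fixing the central $\sU(1)$. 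With this action in hand, the first axiom $\theta(g\triangleright h)=g\,\theta(h)\,g^{-1}$ follows because $\phi$ intertwines $\triangleright$ with conjugation in $\CP\sG$ and $\sigma$ is the normal inclusion, while the Peiffer identity $\theta(h_1)\triangleright h_2=h_1\,h_2\,h_1^{-1}$ follows because the lifted action of an element of $\sigma\circ\phi(\widehat{\Omega_1\sG})$ coincides with inner conjugation in $\widehat{\Omega_1\sG}$, the central $\sU(1)$-ambiguity of the lift being invisible to conjugation.

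Next I would apply the tangent functor to obtain the crossed module of Lie algebras $\widehat{\Omega_1\ag}\xrightarrow{~\dd\theta~}\CP\ag$ and compute its homotopy invariants. Since $\dd\phi$ kills the central $\FR$ and maps onto $\Omega\ag$, and $\dd\sigma$ includes $\Omega\ag$ as the kernel of the endpoint map $\dpar:\CP\ag\to\ag$, one finds $\pi_0=\coker(\dd\theta)=\CP\ag/\Omega\ag\cong\ag$ and $\pi_1=\kernel(\dd\theta)=\FR$. These agree with the underlying object and morphism data of the string Lie 2-algebra, and the bracket induced on $\pi_0$ is the Lie bracket of $\ag$.

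The main obstacle is the final identification of the Postnikov class. Two 2-term crossed modules of Lie algebras with the same $\pi_0$ and $\pi_1$ are equivalent precisely when their classes in $H^3(\ag,\FR)$ agree, so I would compute the 3-cocycle obtained by choosing a linear splitting of $\dd\theta$ and measuring the failure of the induced bracket on $\ag$ to satisfy the Jacobi identity strictly. This failure is controlled by the defining 2-cocycle of $\widehat{\Omega_1\ag}$, and the computation amounts to a transgression showing that this affine Kac-Moody cocycle descends to the Cartan 3-cocycle $(g_1,g_2,g_3)\mapsto\langle g_1,[g_2,g_3]\rangle$ on $\ag$. Simplicity of $\sG$ forces $H^3(\ag,\FR)\cong\FR$ to be one-dimensional, so the resulting class is a single scalar multiple of the generator, and the level-one normalisation $k=1$ pins it down to exactly the Jacobiator $J(g_1,g_2,g_3)=\big([g_1,[g_2,g_3]],\,\langle g_1,[g_2,g_3]\rangle\big)$ of the string Lie 2-algebra. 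This transgression identification, rather than the axiom checks, is the technical heart of the argument.
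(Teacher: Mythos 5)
First, a point of comparison: the paper does not prove this proposition at all --- it is imported verbatim from \cite{Baez:2005sn}, with only the footnote recording where simplicity of $\sG$ enters (via Pressley--Segal \cite{Pressley:1988aa}). So your proposal has to be measured against the argument of \cite{Baez:2005sn}. Your overall architecture --- verify the crossed-module axioms, apply the tangent functor, identify the resulting Lie 2-algebra with the string Lie 2-algebra --- is indeed the architecture of that proof, and your use of the Baez--Crans classification of Lie 2-algebras by the data $(\pi_0,\pi_1,\text{action},[j]\in H^3(\pi_0,\pi_1))$ in place of the explicit quasi-isomorphism constructed in \cite{Baez:2005sn} is a legitimate repackaging: extracting the class $[j]$ requires exactly the splitting-and-cocycle computation they perform (and you must also match the $\pi_0$-action on $\pi_1$, trivial on both sides), so nothing is gained or lost by that substitution.

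There is, however, a genuine gap in your verification of the crossed-module structure. The assertion that conjugation ``preserves the level-one Kac--Moody cocycle'' is false on the nose: for $\alpha\in\CP\sG$ one has $(\alpha\ell\alpha^{-1})'=\alpha\bigl(\ell'+[\alpha^{-1}\alpha',\ell]\bigr)\alpha^{-1}$, so the cocycle is preserved only up to a coboundary involving $\alpha^{-1}\alpha'$ --- and this correction term is not a nuisance but the essential ingredient of the lifted action. More seriously, the Peiffer identity does not follow from your remark that ``the central $\sU(1)$-ambiguity of the lift is invisible to conjugation'': that remark only shows that conjugation by a lift of $\phi(h_1)$ is well defined, whereas what must be proved is that this conjugation \emph{agrees} with the lifted action of $\theta(h_1)$, which is precisely the Peiffer identity. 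The natural abstract repair --- uniqueness of lifts of a given automorphism fixing the centre --- fails here, because the \emph{based} loop group is not character-free: $\gamma\mapsto\exp\bigl(2\pi\,\di\,\lambda(\dot\gamma(0))\bigr)$ for $\lambda\in\ag^*$ is a nontrivial smooth homomorphism $\Omega\sG\to\sU(1)$ (a homomorphism exactly because $\gamma(0)=\delta(0)=\unit$ for based loops), so two lifts of the same automorphism can genuinely differ. This is why \cite{Baez:2005sn} do not argue at the group level: they write the action down explicitly on Lie algebras, $p\triangleright(\ell,c)=\bigl([p,\ell],\,\omega(p,\ell)\bigr)$ with $\omega(p,\ell)=2k\int_0^1\langle p,\ell'\rangle\,\dd t$ the Kac--Moody integrand extended to paths in its first slot, for which the Peiffer identity is the visible statement that this formula restricted to loops \emph{is} the Kac--Moody cocycle; they then integrate to the group level using contractibility of $\CP\sG$ and simple connectedness of $\widehat{\Omega_1\sG}$ (the level matters here, since $\pi_1(\widehat{\Omega_k\sG})\cong\RZ/k\RZ$). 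Your plan needs this Lie-algebra construction, or an equivalent rigidifying structure, inserted into its first step; with that repair, and granting the deferred transgression computation pinning the class to the Cartan generator at $k=1$ --- the other place where \cite{Baez:2005sn} do real work --- the argument goes through.
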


To understand the action of this strict string 2-group model on the
2-Hilbert space constructed in Section~\ref{ssec:R3_2_Hilbert_space},
we present the trivial line bundle gerbe $\CI_\rho$ over $\FR^3$ in a
slightly different way. We regard $\FR^3$ as the coset space
$\sSpin(3)\ltimes \FR^3/\sSpin(3)$ and choose the surjective
submersion $(\sigma^Y: Y\thra \FR^3)= (\CP (\sSpin(3)\ltimes \FR^3)\xthra{~\pi~} \sSpin(3)\ltimes \FR^3/\sSpin(3))$, where $\pi$ is the map to the orbit of the endpoint of a path. Then $Y^{[2]}=\Omega(\sSpin(3)\ltimes \FR^3)$, yielding a trivial $\sU(1)$-bundle gerbe $(P,\sigma^Y)$ with $P=\tau^*\widehat{\Omega_1\sSpin(3)}$, where we pulled back the $\sU(1)$-bundle defined by the Kac-Moody group along the trivial projection $\tau:\Omega(\sSpin(3)\ltimes \FR^3)\rightarrow \Omega \sSpin(3)$: 
  \begin{equation}\label{eq:symmetry_bdl_gerbe}
    \xymatrixcolsep{0.4cm}
    \myxymatrix{
	  \widehat{\Omega_1\sSpin(3)} \ar@{->}[d]
          &P=\tau^*\widehat{\Omega_1\sSpin(3)}\cong
          \widehat{\Omega_1\sSpin(3)} \ltimes \Omega \FR^3  \ar@{->}[d] & \\
	  \Omega\sSpin(3) & \Omega(\sSpin(3)\ltimes \FR^3)\ar@<1.5pt>[r] \ar@<-1.5pt>[r] \ar@{->}[l]_{\tau} & \CP (\sSpin(3)\ltimes \FR^3) \ar@{->}[d]^{\pi} \\
	    & & \FR^3\cong \sSpin(3)\ltimes \FR^3/\sSpin(3)
  }
  \end{equation}
The associated line bundle to $(P,\sigma^Y)$ with respect to the fundamental
representation $\varrho_{\rm f}$ of $\sU(1)$ defines a line bundle gerbe which is stably isomorphic to $\CI_\rho$.

The action of the string 2-group is now readily read off, using the pointwise products in path and loop space as well as the product in the crossed module of Lie groups forming the string 2-group model.
\begin{lemma}\label{lem:action}
The crossed module of Lie groups $\widehat{\Omega_1\sSpin(3)}\xrightarrow{~\theta~}\CP \sSpin(3)$ acts on $Y$ and $P$. The action on $Y$ is given by
 \begin{equation}
  g_1(g_2,p):=(g_1\, g_2,g_1 p) \ \in \ Y
 \end{equation}
 for $p\in \CP \FR^3$ and $g_{1},g_{2}\in \CP  \sSpin(3)$, which
 induces an action on $Y^{[2]}$. The latter action is covered by the action
 on the $\sU(1)$-bundle $P$ given by
\begin{equation}
 h_1(h_2,\ell):=(h_1 \, h_2,\phi(h_1)\, \ell) \ \in \ P
\end{equation}
for $\ell\in\Omega\FR^3$ and $h_1,h_2\in \widehat{\Omega_1\sSpin(3)}$.
\end{lemma}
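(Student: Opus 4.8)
The statement of Lemma~\ref{lem:action} asserts that the crossed module $\widehat{\Omega_1\sSpin(3)}\xrightarrow{\theta}\CP\sSpin(3)$ acts on the surjective submersion total space $Y=\CP(\sSpin(3)\ltimes\FR^3)$ and on the principal $\sU(1)$-bundle $P=\tau^*\widehat{\Omega_1\sSpin(3)}$, with the two formulae given. My plan is to verify directly that the prescribed formulae define genuine group actions that are compatible with all the structure maps of the bundle gerbe in \eqref{eq:symmetry_bdl_gerbe}: the action on $Y$ must descend appropriately to $Y^{[2]}$ via the two projections, and the action on $P$ must cover the induced action on $Y^{[2]}$ while commuting with the principal $\sU(1)$-action and the bundle gerbe multiplication.

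\medskip

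First I would unpack the geometry. Since $\sSpin(3)\ltimes\FR^3$ is the Euclidean motion group, a path $(g_2,p)\in\CP(\sSpin(3)\ltimes\FR^3)$ is a pair of a path $g_2$ in $\sSpin(3)$ and a path $p$ in $\FR^3$ (the semidirect product structure makes the group multiplication $g_1(g_2,p)=(g_1g_2,g_1p)$, where $g_1p$ denotes the pointwise rotation action of the path $g_1\in\CP\sSpin(3)$ on the path $p\in\CP\FR^3$). First I would check that $g\mapsto\big((g_2,p)\mapsto(gg_2,gp)\big)$ is a left action: associativity $g_1'(g_1(g_2,p))=(g_1'g_1)(g_2,p)$ follows from associativity of pointwise multiplication in $\CP(\sSpin(3)\ltimes\FR^3)$, and the identity acts trivially. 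Next I would confirm that this action is compatible with the endpoint-orbit projection $\pi:Y\thra\FR^3$, hence induces a well-defined action on $Y^{[2]}=\Omega(\sSpin(3)\ltimes\FR^3)$: two paths with the same $\pi$-image are sent to two paths again sharing a $\pi$-image, because left translation by $g$ permutes $\pi$-fibres (it intertwines $\pi$ with the induced action on the coset space $\sSpin(3)\ltimes\FR^3/\sSpin(3)\cong\FR^3$). Concretely, a point of $Y^{[2]}$ is a loop $\bar\gamma_2\circ\gamma_1$ and the action multiplies it pointwise, so it restricts to the pointwise product action on $\Omega(\sSpin(3)\ltimes\FR^3)$.

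\medskip

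Then I would turn to the action on $P$. Using the identification $P\cong\widehat{\Omega_1\sSpin(3)}\ltimes\Omega\FR^3$ from \eqref{eq:symmetry_bdl_gerbe}, an element of $P$ over a loop $\ell\in\Omega\FR^3$ is represented as $(h_2,\ell)$ with $h_2\in\widehat{\Omega_1\sSpin(3)}$, and I would verify that $h_1(h_2,\ell):=(h_1h_2,\phi(h_1)\ell)$ defines a left action of $\widehat{\Omega_1\sSpin(3)}$: associativity uses associativity of the Kac--Moody group product together with the fact that $\phi:\widehat{\Omega_1\sSpin(3)}\to\Omega\sSpin(3)$ is a group homomorphism, so $\phi(h_1'h_1)=\phi(h_1')\phi(h_1)$ and the pointwise action $\phi(h_1')(\phi(h_1)\ell)=(\phi(h_1')\phi(h_1))\ell$ is consistent. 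Crucially, since the $\Omega\FR^3$-component only sees $h$ through $\phi(h)$, and $\phi$ kills the central $\sU(1)$, the action on $P$ commutes with the principal $\sU(1)$-action on fibres; combined with the fact that the central extension is a homomorphism, this shows the $P$-action covers the $Y^{[2]}$-action under $\tau$ and is equivariant for the bundle gerbe multiplication (which is group multiplication in $\widehat{\Omega_1\sSpin(3)}$).

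\medskip

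I expect the main obstacle to be the bookkeeping of compatibilities rather than any deep difficulty: one must keep careful track of how the three levels (the path space $Y$, the loop space $Y^{[2]}$, and the extension $P$) interact through $\tau$, $\pi$, $\sigma$ and $\phi$, and in particular verify that the action on $P$ genuinely covers the action on $Y^{[2]}$ so that the whole diagram \eqref{eq:symmetry_bdl_gerbe} is equivariant. The cleanest way to organise this is to note that all three formulae are restrictions or pullbacks of \emph{left translation} in the groups $\CP(\sSpin(3)\ltimes\FR^3)$ and $\widehat{\Omega_1\sSpin(3)}\ltimes\Omega\FR^3$ via the pointwise products; once the homomorphism properties of $\phi$, $\sigma$, $\theta$ and the centrality of the extension are in hand, all the required compatibilities follow formally, so the proof reduces to checking that these pointwise products are associative and that $\phi$ is $\sU(1)$-invariant on fibres.
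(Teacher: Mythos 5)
Your proposal is correct and takes essentially the same approach as the paper: the paper offers no separate proof of this lemma, asserting only that the action is ``readily read off, using the pointwise products in path and loop space as well as the product in the crossed module of Lie groups,'' which is exactly what you verify -- left translation via pointwise products, with the homomorphism property of $\phi$ and the centrality of $\sU(1)$ supplying the compatibilities. Your elaboration of the covering property and the $\sU(1)$-equivariance is a faithful expansion of that one-line sketch.
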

This action on the $\sU(1)$-bundle gerbe $(P,\sigma^Y)$ over $\FR^3$ induces an action on the associated line bundle gerbe by the fundamental representation, which in turn induces an action on the corresponding 2-Hilbert space.
\begin{theorem}\label{thm:6.5}
 The strict string 2-group $\widehat{\Omega_1\sSpin(3)} \xrightarrow{~\theta~}\CP \sSpin(3)$ acts on the 2-Hilbert space obtained from sections of the line bundle gerbe associated to the $\sU(1)$-bundle gerbe $(P,\sigma^Y)$.
\end{theorem}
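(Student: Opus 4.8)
The plan is to read the assertion ``acts on the 2-Hilbert space'' through Definition~\ref{def:2-Hilbert_space}: an action of the strict string 2-group $\CCG$, presented by the crossed module $\sH\xrightarrow{\theta}\sG$ with $\sH=\widehat{\Omega_1\sSpin(3)}$ and $\sG=\CP\sSpin(3)$, on the 2-Hilbert space $\Gamma(\FR^3,\CL)$ of Theorem~\ref{thm:5.9} should mean a normalised pseudofunctor $\CA:\sB\CCG\to 2\CatHilb$ with $\CA(*)=\Gamma(\FR^3,\CL)$. Concretely this assigns an auto-equivalence $\CA(g)$ of $\Gamma(\FR^3,\CL)$ to each object $g\in\sG$, a natural isomorphism $\CA(g)\twoisom\CA(\theta(h)\,g)$ to each morphism $(g,h)\in\sG\ltimes\sH$, and compositors relating $\CA(g)\htimes\CA(g'\,)$ with $\CA(g\,g'\,)$, all subject to the coherence axioms. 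I would produce $\CA$ by transporting the geometric action of Lemma~\ref{lem:action} from the $\sU(1)$-bundle gerbe $(P,\sigma^Y)$ of~\eqref{eq:symmetry_bdl_gerbe} first to the associated line bundle gerbe $\CL$ and then to its category of sections.

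First I would pass from $(P,\sigma^Y)$ to $\CL$. The action $g_1(g_2,p)=(g_1g_2,g_1p)$ on $Y=\CP(\sSpin(3)\ltimes\FR^3)$ is a strict left action covering, via endpoint evaluation $\dpar:\CP\sSpin(3)\to\sSpin(3)$, the rotation $\lambda_{\dpar g}$ of $\FR^3$; since rotations are isometries preserving $\omega=\mathrm{vol}_{\FR^3}$, each $\lambda_{\dpar g}$ preserves the prequantum data. Applying the associated-bundle construction of Definition~\ref{def:line_bundle_gerbes} with $\varrho_{\rm f}$ to the covering action of $\sH$ on $P$ (which is compatible with the bundle gerbe multiplication, as that multiplication is the group product in $\widehat{\Omega_1\sSpin(3)}$) turns each $g\in\sG$ into an isomorphism $\Phi_g$ of $\CL$ covering $\lambda_{\dpar g}$, and each $h\in\sH$ into a 2-isomorphism supplied by the residual $\sU(1)$-action on the fibres of $P$. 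Two paths $g,g'\in\CP\sSpin(3)$ with the same endpoint induce the same rotation of $\FR^3$ but a priori different lifts $\Phi_g,\Phi_{g'}$; the difference is a loop in $\Omega\sSpin(3)$, whose lift to $\widehat{\Omega_1\sSpin(3)}$ through $\theta=\sigma\circ\phi$ is precisely the 2-isomorphism identifying them. This is the structural reason the full string 2-group, and not merely $\sSpin(3)$, is required.

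Next I would induce the action on sections. Each $\Phi_g$ is an invertible 1-morphism of gerbes (cf.\ Proposition~\ref{prop:morphisms_weakly_invertible}) covering the diffeomorphism $\lambda_{\dpar g}$, so transport of structure---pullback along $\lambda_{\dpar g}$, the canonical identifications, and post-composition with $\Phi_g$---defines an auto-equivalence $\CA(g)$ of $\Gamma(\FR^3,\CL)=\shom_{\CatLBGrb^\nabla(\FR^3)}(\CI_0,\CL)$, while the 2-isomorphisms attached to $\sH$ give the natural isomorphisms $\CA(g)\twoisom\CA(\theta(h)\,g)$ by whiskering. Because the actions on both $Y$ and $P$ are \emph{strict} left actions of groups, $g\mapsto\CA(g)$ is strictly compatible with composition, so the compositors may be taken to be identities and the coherence diagrams hold trivially; thus $\CA$ is in fact a strict 2-functor $\sB\CCG\to 2\CatHilb$.

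Finally I would check that $\CA$ lands in $2\CatHilb$, i.e.\ that each $\CA(g)$ preserves the inner product functor $\langle-,-\rangle$ of~\eqref{eq:inner_product_functor}: this holds because $\CA(g)$ is assembled from isometric isomorphisms (the maps $\alpha$ in sections are isometric and $\sU(1)$ acts by phases) and from pullback along the isometry $\lambda_{\dpar g}$, which preserves both the hermitean metrics and the connections entering $\Gamma_{\rm par}(\FR^3,-)\circ\sfR\circ\frh$; the natural isomorphism of Proposition~\ref{st:natural_property_of_2-bdl_metric} then intertwines $\CA$ with the inner product. The hard part will be purely organisational rather than conceptual: tracking the auxiliary surjective submersions $\zeta^Z$ of the sections under pullback by $\lambda_{\dpar g}$ and under post-composition with $\Phi_g$, and verifying that the induced identifications of the inner-product Hilbert spaces are the canonical ones. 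All of this should reduce, through the reduction equivalence $\CatLBGrb^\nabla(\FR^3)\simeq\CatHVBdl^\nabla(\FR^3)$ exploited throughout Section~\ref{ssec:R3_2_Hilbert_space}, to the elementary fact that rotations of $\FR^3$ act compatibly on trivial hermitean vector bundles with connection.
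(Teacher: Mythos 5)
Your proposal follows essentially the same route as the paper: there the theorem is deduced in a single sentence from Lemma~\ref{lem:action}, by passing from the action on the $\sU(1)$-bundle gerbe $(P,\sigma^Y)$ to the associated line bundle gerbe via the fundamental representation $\varrho_{\rm f}$ and then to the category of sections, exactly as you do. Your extra material --- the packaging as a (pseudo)functor $\sB\CCG\to 2\CatHilb$, the observation that two paths with the same endpoint differ by a based loop so that $\theta$ supplies the identifying 2-isomorphisms, and the check of compatibility with the inner product of Theorem~\ref{thm:5.9} --- correctly makes explicit what the paper leaves implicit.
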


\section{The transgression functor}\label{sect:Transgression}

In this section we describe the transgression to loop space of the
various structures we have constructed on the 2-category
$\CatLBGrb^\nabla(M)$ of line bundle gerbes, and how it is used
to further our construction of higher geometric prequantisation.

\subsection{Overview}

Let us begin by briefly recalling some general facts about holonomy,
cf.\ \cite{Baez:2004in}. Consider, for simplicity, a trivial principal $\sU(1)$-bundle with
connection $(P,\nabla)$ over a manifold $M$. The path groupoid $\CCP
M$ of $M$ has as its set of objects the manifold $M$ and as its
morphisms parameterised paths between the points of $M$. Composition
of paths is well-defined up to some technical details, such as the
introduction of ``sitting instances''. The connection $\nabla$ on $P$
then induces a holonomy functor from $\CCP M$ to $\sB \sU(1)$ which
assigns to every path an element of $\sU(1)$ and to composed paths the
product of their group elements; it determines a $\sU(1)$-valued
function on the path space $\CP M$.\footnote{For non-trivial $\sU(1)$-bundles with connection $(P,\nabla)$, one would have to replace the target category $\sB \sU(1)$ with the category of $\sU(1)$-torsors~\cite{Baez:2004in}.} We can reduce the space of paths to loops without losing any information.

Let us now sketch the generalisation of this picture to higher
principal $n$-bundles on a manifold $M$. Here we consider the obvious
categorification $\CCP_n M$ of the path groupoid $\CCP M$ with higher
morphisms between paths corresponding to homotopies and homotopies
between homotopies. A connective structure on a
$\sB^{n-1}\sU(1)$-principal $n$-bundle then yields an $n$-functor from
$\CCP_n M$ to $\sB^n\sU(1)$ which assigns to every $(n-1)$-homotopy an
element of $\sU(1)$. We can again restrict to higher homotopies described by submanifolds without boundaries.

Instead of looking at higher homotopies, which are maps from the path
$n$-homotopies to $\sU(1)$, we can regard these as $(n-1)$-functors
from the $(n-1)$-groupoid of path $(n-1)$-homotopies on the path space
$\CP M$. It is not hard to show that such an $(n-1)$-functor defines a
principal $(n-1)$-bundle over the space of closed paths, cf.\ the
discussion in Section~\ref{ssec:gerbes}; that is, a principal
$n$-bundle with connective structure on $M$ gives rise to a principal
$(n-1)$-bundle on the free loop space $\CL M$. This transition from $M$ to
its loop space $\CL M$ is what is usually called {\em transgression}.

Let us now focus on the $\sU(1)$-bundle gerbes corresponding to
$n=2$. The {\em holonomy} of a gerbe with connective structure
$(P,\sigma^Y,A,B)$ is a function from $\CP\CP M$ to $\sU(1)$ and its
{\em transgression} is a principal $\sU(1)$-bundle with connection
over $\CL M$. 
Several constructions have been found which produce a transgression of a gerbe to loop space.
The first approach, due to Gawedzki~\cite{Gawedzki:1987ak}, used triangulations of surfaces to define higher-dimensional holonomy.
Starting from the stacky definition of a gerbe, Brylinski gave a construction which can be found in~\cite{0817647309}.
This inspired the related procedure for bundle gerbes elaborated on by
Waldorf in~\cite{Waldorf:0804.4835,Waldorf:2010aa}.
Both of these constructions yield a principal $\sU(1)$-bundle with connection on $\CL M$.
A transgression construction producing a line bundle on loop space has
been outlined by Carey, Johnson and Murray in~\cite{Carey:2002xp}.
Each of these constructions is functorial in a certain sense.

However, each of these constructions of a transgression functor has certain disadvantages.
While in particular cases the explicitness of Gawedzki's approach may
be welcome, the choices of triangulations make proofs rather
complicated, and tend to conceal the global categorical picture behind indices and local expressions.
The construction of Waldorf avoids both these problems, but the
considerations are limited to the 2-groupoid underlying the 2-category
$\CatLBGrb^\nabla(M)$.
Finally, the construction outlined by Carey, Johnson and Murray yields
a functor defined on the full 2-category of bundle gerbes on $M$, but
it is limited to simply-connected spaces and does not, so far, include the construction of a connection on the transgression line bundle.
Therefore it does not work on spaces like the $3$-torus, or generic
principal torus bundles, which would be desirable in constructions
related to T-duality in string theory.
Generally speaking, all considerations so far have not dealt with the
internal structures of the categories $\CatLBGrb^\nabla(M)$ and
$\CatHLBdl^\nabla(\CL M)$, such as the Riesz duals of morphisms, the
additive monoidal structures, or the bundle metrics.

Using the methods and results accumulated so far, we develop a transgression functor which is a hybrid of the latter two constructions.
This extends the transgression defined by Waldorf to the full
2-category of bundle gerbes, thus no longer producing a circle bundle
but rather a line bundle on loop space.
At the same time, it will extend the transgression as defined by
Carey, Johnson and Murray to include a connection and hermitean metric
on their line bundle, while working on multiply-connected spaces as well.
Furthermore, we will see that the line bundle gerbe metric from
Section~\ref{sect:2-bundle_with_2-metric} naturally induces a
hermitean metric on the transgression line bundle, that the Riesz
functor fits nicely into this framework, and that the direct sum of morphisms transgresses to the sum of the transgressed morphisms.

\subsection{Holonomy of bundle gerbes and their sections}
\label{sect:bgrb_hol}

We begin by recalling the definitions of holonomy of a line bundle gerbe and of D-brane holonomy.
We will not make explicit use of these notions in the sequel, but they serve as motivation for the content of the remainder of this section.

\begin{definition}[\cite{Waldorf:2007aa}]
	\label{def:Hol_of_BGrb}
	Let $\CL = (L,\mu,h,\nabla^L,B,\sigma^Y)$ be a line bundle gerbe over $M$, and $f: \Sigma \rightarrow M$ a smooth map from a closed oriented surface $\Sigma$ into $M$.
	This induces maps $\hat{f}: f^*(Y) \rightarrow Y$ covering $f$ on the bases.
	The pullback bundle gerbe
	\begin{equation}
		f^*\CL = \big(\hat{f}^{[2]*} L,\, \hat{f}^{[3]*}
                \mu,\, \hat{f}^{[2]*} h,\, \nabla^{\hat{f}^{[2]*}
                  L},\, \hat{f}^* B ,\, \sigma^{f^*(Y)} \big)
	\label{eq:pullback-bundle-gerbe}\end{equation}
	is trivialisable%
	\footnote{
		Recall that we define a trivialisation to be a {\em
                  flat} isomorphism, cf.\ Definition~\ref{def:flat_isomps_and_trivialisations}.
	}
	since $H^3(\Sigma,\RZ_\Sigma) = 0$ for dimensional reasons.
	Let $(T,\beta,g,\nabla^T,\zeta^Z) \in \sisom_{\CatLBGrb^\nabla(\Sigma)}(f^*\CL,\CI_\rho)$ be a trivialisation of $f^*\CL$.
	Then the \uline{surface holonomy} of $\CL$ around $(\Sigma,f)$ is
	\begin{equation}
	\hol_\CL(\Sigma, f) = \exp \Big( - \int_\Sigma\, \rho \Big)~.
	\end{equation}
\end{definition}

This definition is independent of the choice of trivialisation due to
item (4) of Theorem~\ref{st:BGrbs_and_Deligne_coho}: Given another
trivialisation in $\sisom_{\CatLBGrb^\nabla(\Sigma)}(f^*\CL, \CI_{\rho'})$ with $\rho'-\rho\neq 0$, the difference is a closed
$2$-form which vanishes mod $2\pi\,\di\,\RZ$ upon integration over the closed surface $\Sigma$.

Let us now come to D-brane holonomy. A {\em D-brane} in $M$ is an
embedded submanifold $\imath: Q \embd M$ together with a section
$(E,\alpha,\zeta^Z): \CI_\omega \rightarrow \imath^*\CL$ of $\CL$ over
$Q$. We call $Q$ the {\em worldvolume} of the D-brane and $\omega$ its {\em curvature}.

Let $f: D^2 \rightarrow M$ be a smooth map from the disc $D^2$ into $M$ with restriction $\partial f = f_{\vert \partial D^2}$ to the boundary of $D^2$. Assume that $\partial f: \partial D^2 \rightarrow Q$, i.e.\ that the image of the boundary of $D^2$ under $f$ is contained in $Q$.
In this situation, we first pull back the gerbe $\CL$ to $D^2$ along
$f$ to obtain $f^*\CL$ analogously to \eqref{eq:pullback-bundle-gerbe}. We also pull back $(E,\alpha,\zeta^Z)$ to $\partial D^2$,
\begin{equation}
(\dpar f)^* (E,\alpha,g,\nabla^E,\zeta^Z) = \big( \fpmap{}{\widehat{\dpar f}}{*}{} E,\, \fpmap{}{\widehat{\dpar f}}{[2]*}{} \alpha,\, \fpmap{}{\widehat{\dpar f}}{*}{} g,\, \nabla^{\fpmap{}{\widehat{\dpar f}}{*}{} E},\, \fpmap{}{\zeta}{\fpmap{}{\widehat{\dpar f}}{*}{} (Z)}{} \big)~.
\end{equation}
Since $H^3(D^2,\RZ_{D^2}) = 0$, there exists a trivialisation $(T, \beta, \zeta^W) \in \sisom_{\CatLBGrb^\nabla(D^2)}(f^*\CL, \CI_\rho)$.
The composition $(T, \beta)|_{\partial D^2} \bullet (\dpar f)^*(E,\alpha) \in
\shom_{\CatLBGrb^\nabla(\partial D^2)}(\CI_\omega,\CI_\rho)$ defines a vector
bundle $\sfR ((T, \beta) \bullet (\dpar f)^*(E,\alpha)) \in
\CatHVBdl^\nabla(\partial D^2)$, where $\sfR$ is the equivalence of categories $\sfR:
\shom_{\CatLBGrb^\nabla(\partial D^2)}(\CI_\omega, \CI_{\rho}) \isom
\CatHVBdl^\nabla(\partial D^2)$ from
Theorem~\ref{st:Red_is_an_equivalence}.

\begin{definition}[\cite{Waldorf:2007aa,Carey:2002xp,Gawedzki:2002se}]
	The \uline{(D-brane) holonomy} of $(\CL,E,\alpha)$ around $f$ is 
	\begin{equation}
	\label{eq:BGM_hol}
	\hol_{(\CL,E,\alpha)}(D^2, f) := \tr \big( \hol_{\sfR ((T,
          \beta) \bullet (\partial f)^*(E,\alpha))}(\partial D^2) \big) \ \exp
        \Big( - \int_{D^2} \, \rho \Big)~.
	\end{equation}
\end{definition}

This is, again, independent of the choice of trivialisation as we shall show now.
If $(S,\delta) \in \sisom_{\CatLBGrb^\nabla(D^2)}(f^*\CL, \CI_{\rho'})$ is another trivialisation of $f^* \CL$, then
\begin{equation}
(S,\delta) \bullet (\dpar f)^*(E,\alpha) \cong (S,\delta) \bullet (T, \beta)^{-1} \bullet (T, \beta) \bullet (\dpar f)^*(E,\alpha)
\end{equation}
and therefore
\begin{equation}
\begin{aligned}
	\sfR \big( (S,\delta) \bullet (\dpar f)^*(E,\alpha) \big) &\cong \sfR \big( (S,\delta) \bullet (T, \beta)^{-1} \bullet (T, \beta) \bullet (\dpar f)^*(E,\alpha) \big)\\[4pt]
	&\cong \sfR \big( (S,\delta) \bullet (T, \beta)^{-1} \big) \otimes \sfR \big( (T, \beta) \bullet (\dpar f)^*(E,\alpha) \big)~,
\end{aligned}
\end{equation}
where the first factor is a line bundle $J$ with curvature $F_{\nabla^J}= \rho' - \rho$.
Since the holonomy of a tensor product is the tensor product of the holonomies, and since isomorphic vector bundles have identical traced holonomies (Wilson loops), we obtain
\begin{equation}
	\tr \big( \hol_{\sfR ((S, \delta) \bullet (\partial f)^*(E,\alpha))}(\partial D^2) \big)
	= \hol_J(\partial D^2) \ \tr \big( \hol_{\sfR ((T, \beta) \bullet (\partial f)^*(E,\alpha))}(\partial D^2) \big)~.
\end{equation}
As $D^2$ is contractible, we may write $\hol_J(\partial D^2) = \exp( -
\int_{\partial D^2}\, \varepsilon^*A)$ for a global section $\varepsilon$ of the frame bundle of $J$.
By Stokes' theorem we obtain $\hol_J(\partial D^2) = \exp( -
\int_{D^2} \, \varepsilon^*F_{\nabla^J})$.
Using this, we may rewrite
\begin{equation}
\hol_J(\partial D^2) = \exp \Big(\, \int_{D^2} \, (\rho - \rho'\,) \, \Big)~.
\end{equation}
Combining this with the second factor in~\eqref{eq:BGM_hol} yields the invariance of $\hol_{(\CL,E,\alpha)}(D^2,f)$ under the change of trivialisation of $f^*\CL$.
Note that it is really only the {\em combination} of those two factors which is invariant, since $D^2$ is not a closed manifold.

\subsection{Transgression of bundle gerbes}
\label{sect:Transgression_definition}

For a line bundle gerbe $\CL$ on $M$, the $\sU(1)$ fibre over a loop $\gamma \in \CL M$ of the bundle on
loop space constructed by Waldorf is given by the set of 2-isomorphism
classes of trivialisations of $\gamma^*\CL$. This set is non-empty,
since line bundle gerbes over the circle are flat isomorphic to
$\CI_0$. Given any two such trivialisations, say $(T, \beta),\,
(T',\beta'\, ): \gamma^*\CL \isom \CI_0$, we have
\begin{equation}
	 (T, \beta) \cong (T,\beta) \bullet (T',\beta'\, )^{-1} \bullet (T' , \beta'\, )
\end{equation}
in $\sisom_{\CatLBGrb^\nabla(S^1)}(\gamma^*\CL,\CI_0)$.
Therefore $(T, \beta) \bullet (T',\beta'\, )^{-1}$ descends to a flat
hermitean line bundle $\sfR ((T, \beta) \bullet (T',\beta'\, )^{-1}) \in \CatHLBdl^\nabla_0(S^1)$ on the circle.
Isomorphism classes of flat hermitean line bundles on the circle are in turn classified by elements in $H^1(S^1,\sU(1)) = \sU(1)$, see Proposition~\ref{st:Deligne_ex_seq_2} and~\cite{Waldorf:2007aa,Taubes:2011aa}.
A classifying group isomorphism is given by
\begin{equation}
 \hol: h_0 \CatHLBdl^\nabla_0(S^1) \isom \sU(1)~.
\end{equation}
This means that the holonomy map $\hol$ above factors through the projection
to isomorphism classes, whence the desired group isomorphism is given
by the dashed arrow in the diagram
\begin{equation}
	\myxymatrix{
		\CatHLBdl^\nabla_0(S^1) \ar@{->}[r]^-{\hol}	\ar@{->}[d]_-{q}	&	\sU(1)\\
		h_0 \CatHLBdl^\nabla_0(S^1) \ar@{-->}[ur]
	}
\end{equation}
Hence the fibre thus constructed is indeed isomorphic to $\sU(1)$, and transitively acted on by $\sU(1)$.
We denote this transgression $\sU(1)$-bundle by $\CT^{\sU(1)}\CL$.

We would like to pass from the $\sU(1)$-bundle $\CT^{\sU(1)}\CL$ to
its associated line bundle; that is, we would like to transgress the {\em line} bundle gerbe on $M$ to a line bundle on $\CL M$ rather than the $\sU(1)$-bundle gerbe to a $\sU(1)$-bundle.
The line bundle can of course be obtained via the Borel construction from $\CT^{\sU(1)}\CL$, but we would like to relate it explicitly to the 2-line bundle defined by the bundle gerbe on $M$.
Consider a section $(E,\alpha,\nabla^E) \in
\shom_{\CatLBGrb^\nabla(M)}(\CI_0, \CL)$ of the bundle gerbe $\CL$.
For $\gamma \in \CL M$ and a representative $(T,\beta,\nabla^T)$ of the fibre of the $\sU(1)$-bundle on loop space, set
\begin{equation}
\label{eq:transgression_of_section}
 \Big( \CT^{(T,\beta,\nabla^T)} \big( E,\alpha,\nabla^E \big)
 \Big)_{|\gamma} = \Big[ \big[T,\beta,\nabla^T \big],\, \tr \big(
 \hol_{\sfR (T \bullet \gamma^*E)}(S^1) \big) \Big] \ ,
\end{equation}
where $[ T,\beta,\nabla^T ]$ denotes the 2-isomorphism class of the trivialisation.
If we choose another trivialisation $(T',\beta',\nabla^{T'})$, we obtain
\begin{align}
 &\Big( \CT^{(T',\beta',\nabla^{T'})} \big( E,\alpha,\nabla^E \big) \Big)_{|\gamma}\notag \\*
 &= \Big[ \big[ T',\beta',\nabla^{T'} \big],\, \tr \big( \hol_{\sfR (T' \bullet \gamma^*E)}(S^1) \big) \Big]\notag \\[4pt]
 &= \Big[ \big[ (T',\beta',\nabla^{T'}) \bullet (T,\beta,\nabla^T)^{-1} \bullet (T,\beta,\nabla^T) \big] ,\,
	 \hol_{\sfR (T' \bullet T^{-1})}(\gamma)\ \tr \big( \hol_{\sfR (T \bullet \gamma^*E)}(S^1) \big) \Big]\notag \\[4pt]
 &= \Big[ R_{\hol_{\sfR (T' \bullet T^{-1})}(\gamma)^{-1}}\, \big[T,\beta,\nabla^T \big] ,\,
 \hol_{\sfR (T' \bullet T^{-1})}(\gamma)\ \tr \big( \hol_{\sfR (T \bullet \gamma^*E)}(S^1) \big) \Big]\notag \\[4pt]
 &= \Big( \CT^{(T,\beta,\nabla^T)} \big( E,\alpha,\nabla^E \big)
   \Big)_{|\gamma} \ .
\end{align}
Via the identification of the right $\sU(1)$-action%
	\footnote{Here the action of $(T',\beta',\nabla^{T'}) \bullet (T,\beta,\nabla^T)^{-1}$ is from the left, so that it is naturally interpreted as the right action of the inverse, cf.\ also~\cite{Taubes:2011aa}.}
and the left action on $\FC$ as indicated above, the second to last line identifies with the equivalence relation $[R_g p, \varrho(g^{-1})(v)] = [p,v]$ in the generic Borel construction of an associated bundle to a (right) principal bundle.
Therefore the prescription in~\eqref{eq:transgression_of_section} maps a section of the bundle gerbe on $M$ to a section of the line bundle on loop space associated to the $\sU(1)$-bundle constructed by Waldorf.
The inverse in the definition of $\CT$ is necessary in order for the
$\sU(1)$-bundle to be a right principal bundle.

More generally, let $(E,\alpha,\nabla^E) \in
\shom_{\CatLBGrb^\nabla(M)}(\CL_1,\CL_2)$ be an arbitrary morphism
of bundle gerbes, and let $(T_i,\beta_i,\nabla^{T_i}): \gamma^*\CL_i \isom \CI_0$ for $i = 1,2$ be two trivialisations.
Defining
\begin{equation}
\label{eq:transgression_of_morphism}
\CT \big( E,\alpha,\nabla^E \big) \Big( \Big[ \big[ T_1,\beta_1,\nabla^{T_1} \big],\, \lambda \Big] \Big):= \Big[ \big[ T_2,\beta_2,\nabla^{T_2} \big],\, \tr \big( \hol_{\sfR (T_2 \bullet \gamma^*E \bullet T_1^{-1})}(S^1) \big)\, \lambda \Big]
\end{equation}
makes $\CT$ into a map
\begin{equation}
	 \CT: \shom_{\CatLBGrb^\nabla(M)}(\CL_1,\CL_2) \longrightarrow \shom_{\CatHLBdl(\CL M)}(\CT \CL_1, \CT \CL_2)~,
\end{equation}
where we use the notation
\begin{equation}
	\CT \CL = \CT^{\sU(1)} \CL \times_{\sU(1)} \FC
\end{equation}
for the transgression line bundle on $\CL M$.

While the domain of this map is a morphism category in $\CatLBGrb^\nabla(M)$, its range is only a set with some additional structure. If we recall that bundles which are isomorphic as bundles with connections have the same holonomy up to an inner automorphism on the holonomy group, and that 2-isomorphic sections of a trivial bundle gerbe descend to isomorphic vector bundles with connections, we see that the Wilson loops of such bundles are actually equal.
This means that 2-isomorphic 1-morphisms of bundle gerbes transgress to the same homomorphisms of line bundles over loop space.
Therefore $\CT$ can actually be defined on the (1-)category
$h_1\CatLBGrb^\nabla(M)$, which is the category obtained from the
full 2-category $\CatLBGrb^\nabla(M)$ by collapsing 1-morphisms to 2-isomorphism classes of 1-morphisms, and forgetting the 2-morphisms altogether.

Recall from Theorem~\ref{st:Direct_sum_as_fctr} that every morphism
category $\shom_{\CatLBGrb^\nabla(M)}(\CL_1,\CL_2)$ is a closed symmetric strict monoidal category enriched over abelian groups.
This structure of the morphism categories collapses to that of a commutative monoid upon taking the quotient by 2-isomorphisms.
Likewise, the category $h_1\CatLBGrb^\nabla(M)$ inherits a closed symmetric monoidal structure from the tensor product on $\CatLBGrb^\nabla(M)$.

\begin{lemma}
The category $h_1\CatLBGrb^\nabla(M)$ is a closed symmetric $\otimes$-monoidal category enriched over the category of commutative $\oplus$-monoids.
\end{lemma}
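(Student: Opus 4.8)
The plan is to show that the structure on $h_1\CatLBGrb^\nabla(M)$ is inherited directly from the corresponding structure on the full $2$-category $\CatLBGrb^\nabla(M)$, already established in Theorems~\ref{st:Direct_sum_as_fctr} and~\ref{thm:4.28}, upon passing to $2$-isomorphism classes of $1$-morphisms. The essential observation is that taking the quotient of each morphism category by $2$-isomorphisms is a \emph{monoidal} decategorification: a symmetric monoidal category enriched over abelian groups, when one collapses its objects to isomorphism classes and discards morphisms, descends to a commutative monoid. Thus the first step is to verify that the relevant structural data — the tensor product $\otimes$ of bundle gerbes, the direct sum $\oplus$ on morphism categories, the closed (internal $2$-hom) structure, and the associators, unitors and symmetries — all respect the equivalence relation ``being $2$-isomorphic,'' so that they descend to well-defined operations on $h_1\CatLBGrb^\nabla(M)$.

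First I would record that $h_1\CatLBGrb^\nabla(M)$ has the same objects as $\CatLBGrb^\nabla(M)$, while $\shom_{h_1\CatLBGrb^\nabla(M)}(\CL_1,\CL_2) = h_0\, \shom_{\CatLBGrb^\nabla(M)}(\CL_1,\CL_2)$, the set of $2$-isomorphism classes of $1$-morphisms. By Theorem~\ref{st:Direct_sum_as_fctr}, each $\shom_{\CatLBGrb^\nabla(M)}(\CL_1,\CL_2)$ is a symmetric monoidal category under $\oplus$ enriched over abelian groups; its decategorification $h_0$ is therefore a commutative monoid, with addition induced by $\oplus$ and the neutral element given by the class of the zero $1$-morphism. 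Horizontal composition $\bullet$ is a functor between morphism categories (associative on the nose in $\CatLBGrb^\nabla(M)$), hence descends to a map on classes that is associative and unital, making $h_1\CatLBGrb^\nabla(M)$ a genuine category enriched over commutative monoids. Next I would transport the tensor product: $\otimes$ is a $2$-functor on $\CatLBGrb^\nabla(M)$ by Theorem~\ref{thm:4.28}, so it sends $2$-isomorphic $1$-morphisms to $2$-isomorphic $1$-morphisms and descends to a monoidal product on $h_1\CatLBGrb^\nabla(M)$; the associator, unitors and braiding of the symmetric monoidal $2$-category collapse to identities (or to the induced isomorphisms), giving a symmetric monoidal structure. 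The distributivity natural isomorphisms $\sfd^{\rm l}, \sfd^{\rm r}$ of Theorem~\ref{st:Direct_sum_as_fctr} show $\otimes$ is bi-additive with respect to the monoid structure, so the enrichment over commutative monoids is compatible with $\otimes$.

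For the closed structure I would invoke the internal $2$-hom-functor $\lsb-,-\rsb$ of Definition~\ref{def:internal_hom} together with the adjunction of Theorem~\ref{st:internal-2-hom-adjunction}, which exhibits $-\otimes(F,\beta)$ as left adjoint to $\delta_{\CL_4}\bullet\lsb(F,\beta),-\rsb\bullet\delta^{-1}_{\CL_3}$. Because an adjunction of functors between morphism categories induces a bijection on $2$-isomorphism classes — the adjunction isomorphism $\Psi_{E,F,G}$ is natural and hence respects $2$-isomorphisms — the decategorified functors remain adjoint, furnishing $h_1\CatLBGrb^\nabla(M)$ with an internal hom right adjoint to $\otimes$ and thereby with a closed structure. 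Finally I would assemble these pieces: the category is symmetric $\otimes$-monoidal, closed, and enriched over commutative $\oplus$-monoids with $\otimes$ bi-additive, which is precisely the asserted statement.

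The main obstacle I anticipate is purely bookkeeping of a coherence nature: one must check that the numerous structural isomorphisms (associators and distributors for $\oplus$, the unit isomorphisms $\sfl,\sfr$ for $\delta_{\CL}$, and the adjunction $\Psi$) are mutually compatible after decategorification, i.e.\ that nothing obstructs the simultaneous descent of the monoidal, symmetric, closed, and enriched structures. Since all of these are already coherent at the $2$-categorical level by the cited theorems, and decategorification by $h_0$ is a (strict symmetric monoidal) functor that preserves adjunctions and additive structure, each individual check is routine; the only care needed is to confirm that the $2$-functoriality of $\otimes$ and $\Theta$, and hence of $\lsb-,-\rsb$, genuinely sends $2$-isomorphic $1$-morphisms to $2$-isomorphic ones — which follows immediately because a $2$-functor maps invertible $2$-morphisms to invertible $2$-morphisms.
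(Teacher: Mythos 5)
Your proposal is correct and follows essentially the same route as the paper: the paper states this lemma without a separate proof, justifying it in the preceding paragraph by exactly your observation that the structure of Theorem~\ref{st:Direct_sum_as_fctr} on morphism categories collapses to commutative monoids under $h_0$, while the closed symmetric monoidal structure descends from the tensor product and internal 2-hom on $\CatLBGrb^\nabla(M)$ because 2-functors send 2-isomorphic 1-morphisms to 2-isomorphic ones. Your write-up merely makes explicit the bookkeeping the paper leaves implicit.
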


The structure of $h_1\CatLBGrb^\nabla(M)$ is in fact the same as that of $\CatHLBdl(\CL M)$, so that the transgression $\CT$ now has a good chance to actually be a functor.
Establishing this will be one of the main objectives of this
section. For this, let $\CatHLBdl^{\nabla}_{\rm fus}(\CL M)$ denote
the category of hermitean fusion line bundles with hermitean
connection, but with morphisms being just morphisms of vector bundles
(that is, not parallel or respecting the fusion product). This category is only a subcategory of the category of line bundles on loop space, namely the category of (diffeological) line bundles endowed with a {\em fusion product}, see e.g.~\cite{Waldorf:0911.3212,Waldorf:2010aa} and Example~\ref{ex:BGrb_from_fus_LBdl}.

\begin{theorem}
\label{st:Transgression_is_functorial}
Transgression, as defined above, is a functor
\begin{equation}
 \CT: h_1\CatLBGrb^\nabla(M) \longrightarrow \CatHLBdl^{\nabla}_{\rm fus}(\CL M)
\end{equation}
of closed and symmetric monoidal additive categories.
Its restriction to the groupoids\footnote{This means that we restrict the morphisms in the category to isomorphisms.} of the source and target categories is an equivalence.
\end{theorem}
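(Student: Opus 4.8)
The plan is to build on Waldorf's transgression $\CT^{\sU(1)}$ of the underlying $\sU(1)$-bundle gerbe together with the associated-bundle passage $\CT\CL=\CT^{\sU(1)}\CL\times_{\sU(1)}\FC$, taking the hermitean metric, the connection and the fusion product on $\CT\CL$ from the constructions of Sections~\ref{sect:Transgression_of_algebraic_data} and~\ref{sect:connection_on_CTCG}, which I assume. The genuine content to establish is then threefold: that $\CT$ is a well-defined functor on $h_1\CatLBGrb^\nabla(M)$ landing in $\CatHLBdl^{\nabla}_{\rm fus}(\CL M)$, that it intertwines the closed symmetric monoidal additive structures identified on the two sides, and that it restricts to an equivalence of the underlying groupoids.

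For functoriality, well-definedness and descent to $h_1\CatLBGrb^\nabla(M)$ are already in hand from the trivialisation-independence computation preceding the statement and the invariance of Wilson loops under $2$-isomorphism, so it remains to check units and composition. For the identity one observes that $T\bullet\gamma^*\id_\CL\bullet T^{-1}\cong\CI_0$ over $S^1$ with trivial connection, whence its Wilson loop is $1$ and $\CT(\id_\CL)=\id_{\CT\CL}$. For the composite of $(E,\alpha)\colon\CL_1\to\CL_2$ and $(E',\alpha')\colon\CL_2\to\CL_3$ I would insert a factor $T_2^{-1}\bullet T_2\cong\id$ at the intermediate gerbe and use functoriality of pullback together with the monoidality of $\sfR$ (which sends $\bullet$ to $\otimes$ on $\CatHVBdl^\nabla(S^1)$), giving $\sfR\big(T_3\bullet\gamma^*((E,\alpha)\bullet(E',\alpha'))\bullet T_1^{-1}\big)\cong\sfR(T_2\bullet\gamma^*E\bullet T_1^{-1})\otimes\sfR(T_3\bullet\gamma^*E'\bullet T_2^{-1})$; the multiplicativity $\tr(\hol_{V\otimes W})=\tr(\hol_V)\,\tr(\hol_W)$ then yields $\CT((E,\alpha)\bullet(E',\alpha'))=\CT(E',\alpha')\circ\CT(E,\alpha)$.

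The structural compatibilities all reduce to the same mechanism. Monoidality follows because trivialisations tensor, so $T_1\otimes T_2$ trivialises $\gamma^*(\CL_1\otimes\CL_2)$, and the classifying isomorphism $\hol\colon h_0\CatHLBdl^\nabla_0(S^1)\isom\sU(1)$ is a group homomorphism; hence $\CT^{\sU(1)}$ sends $\otimes$ of gerbes to the tensor (Baer sum) of $\sU(1)$-bundles and $\CT$ sends $\otimes$ to $\otimes$ of line bundles, the symmetry being transported along the same identification. Additivity uses that $\sfR$ preserves $\oplus$ (Theorem~\ref{st:Direct_sum_as_fctr}) together with $\tr(\hol_{V\oplus W})=\tr(\hol_V)+\tr(\hol_W)$, so that $\CT(f\oplus g)=\CT(f)+\CT(g)$ as homomorphisms of line bundles. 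The closed structure is handled via the identity $\sfR\circ\Theta=\Theta\circ\sfR$ established above (the lemma preceding Definition~\ref{def:internal_hom}) and the fact that the holonomy of a dual is the inverse-transpose of the holonomy, whose trace is the complex conjugate; this shows that $\CT$ intertwines the Riesz/internal-hom functor on $h_1\CatLBGrb^\nabla(M)$ (from Theorem~\ref{thm:4.28}) with the internal hom on $\CatHLBdl^{\nabla}_{\rm fus}(\CL M)$, matching the closed symmetric $\otimes$-monoidal, $\oplus$-enriched structure identified just before the theorem.

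Finally, for the equivalence on groupoids I would restrict to isomorphisms, which by Proposition~\ref{prop:morphisms_weakly_invertible} are exactly the rank-one (stable) isomorphism classes, and invoke the equivalence between $\sU(1)$-bundle gerbes on $M$ and fusion bundles on $\CL M$ recalled in Example~\ref{ex:BGrb_from_fus_LBdl}, under which $\CT^{\sU(1)}$ is Waldorf's transgression with quasi-inverse given by regression; essential surjectivity and fullness descend from that equivalence, while faithfulness follows from injectivity of $\hol$ on isomorphism classes over $S^1$. Composing with the groupoid equivalence between hermitean $\sU(1)$-bundles and hermitean line bundles furnished by the associated-bundle functor upgrades this to $\CT$ itself. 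The hard part will be the inverse direction: controlling regression so that it is simultaneously compatible with the transgressed connection and with the fusion product, which is precisely what the technical results of Sections~\ref{sect:Transgression_of_algebraic_data} and~\ref{sect:connection_on_CTCG} are designed to supply; granting those, the equivalence of underlying groupoids refines to one respecting connections, metrics and fusion.
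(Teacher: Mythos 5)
Your proposal is correct and takes essentially the same route as the paper's proof: the same Wilson-loop computations (inserting $T_2^{-1}\bullet T_2$ at the intermediate gerbe, multiplicativity and additivity of traced holonomy, and compatibility of $\sfR$ with $\Theta$, $\otimes$ and $\oplus$) establish functoriality and the closed symmetric monoidal additive compatibilities, while the connection, metric and fusion structure are supplied by Sections~\ref{sect:Transgression_of_algebraic_data} and~\ref{sect:connection_on_CTCG} and the groupoid equivalence is imported from Waldorf's transgression--regression theorem, exactly as in the paper. The only cosmetic differences are that you verify unit preservation directly (the paper deduces it from compatibility with composition) and that you flag the fusion-compatibility of the transgressed connection as an assumption, which the paper settles at the end of Section~\ref{sect:connection_on_CTCG} by identifying the holonomy of its connection $\CA$ with the surface holonomy of the gerbe, so that Waldorf's fusion arguments apply.
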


The existence of a fusion product makes the line bundle induced over any based loop space $\Omega M$ together with the path fibration into a (diffeological) bundle gerbe over $M$.
Hence fusion line bundles are precisely those line bundles for which
this construction produces bundle gerbes on $M$; that is, the line
bundles on $\CL M$ which {\em regress} to a bundle gerbe on $M$.
It has been shown in~\cite{Waldorf:2010aa} that transgression and
regression form an equivalence of categories between fusion principal
$\sU(1)$-bundles with compatible connection on $\CL M$ and
$\sU(1)$-bundle gerbes (as opposed to {\em line} bundle gerbes) on
$M$; that is, on the two groupoids underlying the categories in Theorem~\ref{st:Transgression_is_functorial}.

The second statement of 
Theorem~\ref{st:Transgression_is_functorial} has thus already been
proven in~\cite{Waldorf:2010aa}. After restriction to its groupoid,
$\CatHLBdl^{\nabla}_{\rm fus}(\CL M)$ becomes equivalent to the
category of principal $\sU(1)$-bundles over $\CL M$, where the
equivalence is induced by the fundamental representation of
$\sU(1)$. This restriction is necessary, as all morphisms of principal
$\sU(1)$-bundles are isomorphisms. Similarly,
$h_1\CatLBGrb^\nabla(M)$ becomes equivalent to the category of
$\sU(1)$-bundle gerbes over $M$ which appears in~\cite{Waldorf:2010aa}.

\subsection{Transgression of algebraic data}
\label{sect:Transgression_of_algebraic_data}

We will first show that $\CT$ respects composition of morphisms in $\CatLBGrb^\nabla(M)$.
For this, consider two composable morphisms $(E,\alpha,\nabla^E) \in \shom_{\CatLBGrb^\nabla(M)}(\CL_1,\CL_2)$ and $(F,\beta,\nabla^F) \in \shom_{\CatLBGrb^\nabla(M)}(\CL_2,\CL_3)$. Let $\gamma \in \CL M$ be a loop in $M$, and let $(T_i,\zeta_i,\nabla^{T_i}) \in$\linebreak $ \sisom_{\CatLBGrb^\nabla(S^1)}(\gamma^*\CL_i,\, \CI_0)$ be trivialisations of the pullback along $\gamma$ of the bundle gerbes $\CL_i$ for $i = 1,2,3$.
From~\eqref{eq:transgression_of_morphism} we then obtain
\begin{align}
 &\CT (F,\beta,\nabla^F) \circ \CT (E,\alpha,\nabla^E)\, \Big[ \big[ T_1, \zeta_1, \nabla^{T_1} \big],\, \lambda \Big] \nt*
 &\qquad\qquad= \CT (F,\beta,\nabla^F) \Big[ \big[ T_2, \zeta_2, \nabla^{T_2} \big],\, \tr\big( \hol_{\sfR (T_2 \bullet \gamma^*E \bullet T_1^{-1})} (S^1) \big)\, \lambda \Big] \nt[4pt]
 &\qquad\qquad= \Big[ \big[ T_3, \zeta_3, \nabla^{T_3} \big],\, \tr\big( \hol_{\sfR (T_3 \bullet \gamma^*F \bullet T_2^{-1})} (S^1) \big)\, \tr\big( \hol_{\sfR (T_2 \bullet \gamma^*E \bullet T_1^{-1})} (S^1) \big)\, \lambda \Big] \nt[4pt]
 &\qquad\qquad= \Big[ \big[ T_3, \zeta_3, \nabla^{T_3} \big],\, \tr \Big( \hol_{\sfR (T_3 \bullet \gamma^*F \bullet T_2^{-1})} (S^1) \otimes \hol_{\sfR (T_2 \bullet \gamma^*E \bullet T_1^{-1})} (S^1) \Big)\, \lambda \Big] \nt[4pt]
 &\qquad\qquad= \Big[ \big[ T_3, \zeta_3, \nabla^{T_3} \big],\, \tr\big( \hol_{\sfR (T_3 \bullet \gamma^*F \bullet T_2^{-1} \bullet T_2 \bullet \gamma^*E \bullet T_1^{-1})} (S^1) \big)\, \lambda \Big] \nt[4pt]
 &\qquad\qquad= \CT \big( (F,\beta,\nabla^F) \bullet (E,\alpha,\nabla^E) \big)\, \Big[ \big[ T_1, \zeta_1, \nabla^{T_1} \big],\, \lambda \Big]~.
\end{align}
Therefore $\CT$ is compatible with compositions of morphisms.
From this compatibility we also deduce $\CT(\id_{\CL}) = \id_{\CT \CL}$.
This establishes the functoriality of $\CT$.

Taking the holonomy is compatible with Whitney sums and tensor products of vector bundles. Moreover, the trace transforms Whitney sums into sums and tensor products into products in $\FC$.
This, together with the compatibility of the reduction functor $\sfR$ with tensor product and direct sum, implies that $\CT$ maps direct sums of morphisms of line bundle gerbes into the sum of morphisms of line bundles,
\begin{equation}
 \CT \big( (E,\alpha,\nabla^E) \oplus (E',\alpha',\nabla^{E'}) \big) = \CT \big( E,\alpha,\nabla^E \big) + \CT \big( E',\alpha',\nabla^{E'} \big)~.
\end{equation}
Thus it respects the monoidal enriched structure on $h_1\CatLBGrb^\nabla(M)$.
Moreover, the tensor product of morphisms of bundle gerbes gets mapped to the tensor product of morphisms of line bundles,
\begin{equation}
 \CT \big( (E,\alpha,\nabla^E) \otimes (E',\alpha',\nabla^{E'}) \big) = \CT \big( E,\alpha,\nabla^E \big) \otimes \CT \big( E',\alpha',\nabla^{E'} \big)~.
\end{equation}
Consequently $\CT$ is compatible with the full monoidal structure on $h_1 \CatLBGrb^\nabla(M)$, as well as with the monoidal structures on the morphism categories.
As both these structures are functorial, they are compatible with taking 2-isomorphism classes of 1-morphisms, so that $\CT$ is compatible with the respective reduced structures on the (1-)category $h_1\CatLBGrb^\nabla(M)$.

Each line bundle $\CT \CL$ has a natural $\sU(1)$-structure induced by the bases of the fibres given by trivialisations of $\gamma^* \CL$.
This endows these line bundles naturally with a hermitean metric given by
\begin{equation}
h \Big( \Big[ \big[ T,\beta,\nabla^T \big],\, \nu \Big] ,\, \Big[ \big[ T,\beta,\nabla^T \big],\, \lambda \Big] \Big) = \overline{\nu} \, \lambda~.
\end{equation}
Let us evaluate this metric on a pair of transgressed sections of $\CL$.
We claim that this metric is actually induced by the line bundle gerbe metric $\frh$ constructed in Definition~\ref{def:2-bdl_metric} according to
\begin{equation}
 h\, \big( \CT (E,\alpha,\nabla^E),\, \CT (E',\alpha',\nabla^{E'}) \big)
 = \CT \circ \frh\, \big( (E,\alpha,\nabla^E),\, (E',\alpha',\nabla^{E'}) \big)~.
\end{equation}
For this, we need the following lemma.

\begin{lemma}
Let $(T,\beta,\nabla^T): \CL_1 \isom \CL_2$ be an isomorphism of bundle gerbes.
There exists a 2-isomorphism
\begin{equation}
 \delta_{\CL_2} \bullet \lsb (T,\beta,\nabla^T),\, (T,\beta,\nabla^T) \rsb \bullet \delta_{\CL_1}^{-1} \twoisom \id_{\CI_0}~.
\end{equation}
\end{lemma}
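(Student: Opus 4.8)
The plan is to recast the claim as a naturality statement for the canonical pairing $\delta$. Since $\delta_{\CL_1} \in \sisom_{\CatLBGrb^\nabla(M)}(\CL_1^* \otimes \CL_1, \CI_0)$ is invertible, $\bullet$-composing the asserted identity on the right with $\delta_{\CL_1}$ shows that it is equivalent to exhibiting a $2$-isomorphism
\begin{equation}
\delta_{\CL_2} \bullet \lsb (T,\beta,\nabla^T),\, (T,\beta,\nabla^T) \rsb \twoisom \delta_{\CL_1} \ .
\end{equation}
In other words, it suffices to prove that the evaluation $\delta$ is natural with respect to the isometric isomorphism $(T,\beta,\nabla^T)$, the top edge of the relevant square being $\Theta(T,\beta) \otimes (T,\beta) = (T^*,\beta^{-{\rm t}}) \otimes (T,\beta)$. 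Conceptually this is the statement that an isometry preserves the canonical pairing $L^* \otimes L \to \underline{\FC}$ out of which each $\delta_{\CL_i}$ is assembled.

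First I would pass to the reduced model $\CatLBGrb^\nabla_{\rm FP}(M)$ via the equivalence of Theorem~\ref{st:Red_is_an_equivalence}, which is legitimate because $\sfR$ was shown above to commute with $\Theta$ and to respect the tensor product, duals, and hence the canonical morphisms $\delta_{\CL}$. In this model $(T,\beta,\nabla^T)$ is a stable isomorphism, so by Proposition~\ref{prop:morphisms_weakly_invertible} its underlying bundle $T \to Y_{12}$ is a hermitian line bundle with parallel metric $g$. The metric then furnishes the canonical evaluation $\delta_T : T^* \otimes T \isom \underline{\FC}$ together with its inverse (coevaluation), and because $\nabla^T$ is a hermitean connection the section $\delta_T^{-1}(1)$ is parallel. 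This parallel trivialisation of $T^* \otimes T$ is exactly the datum realising the rank-one condition, and it will supply the vector-bundle part of the $2$-isomorphism we seek.

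Next I would run the diagram chase. The $1$-morphism $\delta_{\CL_i}$ is of the form $(L_i,\alpha_i,\ldots)$ with $\alpha_i$ built from $\mu_i^{-1}$, $\mu_i^{\rm t}$ and the canonical pairing $\delta_{L_i}$, as in \eqref{eq:delta_G}. Dualising the compatibility diagram \eqref{eq:comp_hom_bgrb} for $(T,\beta)$ gives the corresponding relation for $\beta^{-{\rm t}}$ on the dual gerbes $\CL_i^*$, and tensoring the two shows that $\beta^{-{\rm t}} \otimes \beta$ intertwines $\delta_{L_1}$ and $\delta_{L_2}$ once the $T^* \otimes T$ factor is contracted by $\delta_T$. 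Feeding the parallel section $\delta_T^{-1}(1)$ through $\beta^{-{\rm t}} \otimes \beta$ and comparing with $\delta_{L_1}$, $\delta_{L_2}$ then collapses the composite $\delta_{\CL_2} \bullet (\beta^{-{\rm t}} \otimes \beta)$ onto $\delta_{\CL_1}$; the resulting parallel bundle isomorphism is the identity on $\CI_0$, i.e.\ the identity $2$-morphism, after identifying it with the $\dd$-data over the relevant fibre product.

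The main obstacle I anticipate is precisely this bookkeeping: $\delta_{\CL}$ lives over $Y^{[4]}$ and is a fourfold composite, so verifying that contraction by $\delta_T$ together with $\beta^{-{\rm t}} \otimes \beta$ reproduces $\delta_{\CL_1}$ on the nose (and not merely up to a scalar function) requires careful use of the hypothesis that $\beta$ is \emph{isometric}. It is exactly the isometry condition that forces the Riesz dual $\beta^{-{\rm t}}$ to coincide with the metric-adjoint of $\beta$, so that the two pairings match without a positive-function discrepancy; dropping it would only yield the identity up to tensoring by an endomorphism of $\CI_0$, in accordance with Remark~\ref{rem:multiples_of_2-bdl_metric}.
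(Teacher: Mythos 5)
Your argument is correct, but it takes a genuinely different route from the paper's. The paper proves this lemma as a two-line corollary of the internal-hom adjunction: taking $(F,\beta)=(T,\beta,\nabla^T)$ and the outer gerbes both equal to $\CI_0$ in Theorem~\ref{st:internal-2-hom-adjunction}, the natural bijection $\Psi$ identifies $2\shom_{\CatLBGrb^\nabla(M)}\big(\id_{\CI_0}\otimes(T,\beta,\nabla^T),\,(T,\beta,\nabla^T)\big)$ with $2\shom_{\CatLBGrb^\nabla(M)}\big(\id_{\CI_0},\,\delta_{\CL_2}\bullet\lsb(T,\beta,\nabla^T),(T,\beta,\nabla^T)\rsb\bullet\delta_{\CL_1}^{-1}\big)$; one simply feeds the canonical 2-isomorphism $\id_{\CI_0}\otimes(T,\beta,\nabla^T)\Rightarrow(T,\beta,\nabla^T)$ into this bijection, and naturality guarantees that its image is again a 2-isomorphism, the concrete content being (as in your argument) the natural pairing $T^*\otimes T\isom Y^{[2]}\times\FC$. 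You instead unwind this by hand: you reduce to $\delta_{\CL_2}\bullet\lsb(T,\beta),(T,\beta)\rsb\twoisom\delta_{\CL_1}$, pass to the reduced model of Theorem~\ref{st:Red_is_an_equivalence}, invoke rank one via Proposition~\ref{prop:morphisms_weakly_invertible}, and chase the compatibility diagrams directly. Both routes are sound; the paper's proof outsources exactly the submersion/descent bookkeeping you flag as the main obstacle to the already-proven adjunction theorem, whereas yours re-derives the relevant special case of that adjunction, which is more self-contained but duplicates work. One small correction to your closing remark: the evaluation $T^*\otimes T\to Y^{[2]}\times\FC$ is the canonical, metric-independent pairing, and contraction with it satisfies $\langle\beta^{-{\rm t}}\xi,\beta v\rangle=\langle\xi,v\rangle$ for purely linear-algebraic reasons, so no positive-function discrepancy can arise at that step; the isometry of $\beta$ is in any case built into Definition~\ref{def:lbgr_morph} and is not the hypothesis doing the work here.
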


\begin{proof}
By Theorem~\ref{st:internal-2-hom-adjunction} we know that there is a natural bijection
\begin{equation}
\begin{aligned}
 &2\shom_{\CatLBGrb^\nabla(M)} \big(\id_{\CI_0} \otimes (T,\beta,\nabla^T),\, (T,\beta,\nabla^T) \big)\\ 
 & \qquad\qquad \isom 2\shom_{\CatLBGrb^\nabla(M)} \big(\id_{\CI_0},\, \delta_{\CL_2} \bullet \lsb (T,\beta,\nabla^T),\, (T,\beta,\nabla^T) \rsb \bullet \delta_{\CL_1}^{-1} \big)~.
\end{aligned}
\end{equation}
The naturality property of the bijection which establishes the adjunction of Theorem~\ref{st:internal-2-hom-adjunction} guarantees that isomorphisms get mapped to isomorphisms.
In the source of this map there is of course at least one 2-isomorphism given by the identity on $(T,\beta,\nabla^T)$.
Its image under the natural bijection of the 2-morphism sets then provides the required isomorphism: The natural pairing $T^* \otimes T \isom Y^{[2]} \times \FC$ yields an isomorphism as desired.
\end{proof}

An immediate consequence of this lemma is that $\delta_{\CL_2} \bullet (\Theta T \otimes T) \bullet \delta_{\CL_1}^{-1}$ has trivial holonomy.
Since the isomorphisms $\delta_{\CI_\rho}$ are defined using the trivial line bundle with the trivial connection, they do not contribute to the holonomy.
Hence we have
\begin{align}
 \hol_{\sfR  (\gamma^* \frh (E,E'\,))}(S^1)
  &=\hol_{\sfR  (\gamma^*(\delta_{\CL} \bullet (\Theta E \otimes E'\,) \bullet \delta_{\CI_0}^{-1}))}(S^1) \nt[4pt]
  &= \underbrace{\hol_{\sfR  ( \delta_{\CI_\rho} \bullet (\Theta T \otimes T) \bullet \delta_{\CL}^{-1})}(S^1)}_{= 1}
  \otimes \, \hol_{\sfR  (\gamma^*(\delta_{\CL} \bullet (\Theta E \otimes E'\,) \bullet \delta_{\CI_0}^{-1}))}(S^1) \nt[4pt]
  &= \hol_{\sfR  (\delta_{\CI_\rho} \bullet (\Theta T \otimes T) \bullet \gamma^*((\Theta E \otimes E'\,) \bullet \delta_{\CI_0}^{-1})}(S^1) \nt[4pt]
  &= \hol_{\sfR ( \Theta (T \bullet \gamma^*E))}(S^1) \otimes \hol_{\sfR (T \bullet \gamma^*E'\,)}(S^1) \nt[4pt]
  &= \hol_{\sfR (T \bullet \gamma^*E)}(S^1)^{-{\rm t}} \otimes \hol_{\sfR (T \bullet \gamma^*E'\,)}(S^1)~.
\end{align}
Now taking the trace of the left-hand side of this expression produces $\CT \circ \frh (E,E'\,)$ at the loop $\gamma$.
On the other hand, recalling that all connections are chosen to be hermitean so that $\hol_{\sfR (T \bullet \gamma^*E)}(S^1)^{-{\rm t}} = \overline{\hol_{\sfR (T \bullet \gamma^*E)}(S^1)}$, taking the trace of the right-hand side yields precisely $h (\CT E,\, \CT E'\,)$.
This proves the following result.

\begin{proposition}
The line bundle gerbe metric $\frh$ (and each of its multiples in the sense of Remark~\ref{rem:multiples_of_2-bdl_metric}) on the category of sections of $\CatLBGrb^\nabla(M)$ induces the canonical bundle metric on the transgression line bundles over loop space via
 \begin{equation}
  \CT \frh = h = \CT \circ \frh~.
 \end{equation}
\end{proposition}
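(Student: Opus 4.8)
The plan is to deduce the statement from the Wilson-loop identity assembled in the computation above, so that the proof reduces to unwinding the definitions of $\CT$, $h$ and $\frh$ and fitting the pieces together; the notation $\CT\frh$ for the metric induced on $\CT\CL$ by $\frh$ is then justified by identifying it with the canonical $h$. First I would unfold the line bundle gerbe metric: by Definition~\ref{def:2-bdl_metric} the object $\frh(E,E'\,)$ is (up to the unitors) the endomorphism $\delta_\CL \bullet \lsb E,E'\rsb \bullet \delta_{\CI_0}^{-1}$ of $\CI_0$, hence a section of $\CI_0$, and by \eqref{eq:transgression_of_section} its transgression $\CT\circ\frh(E,E'\,)$ evaluated at a loop $\gamma\in\CL M$ is the trace of the Wilson loop $\tr\big(\hol_{\sfR(\gamma^*\frh(E,E'\,))}(S^1)\big)$.

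The core of the argument is to compute this Wilson loop. Here I would fix a trivialisation $(T,\beta,\nabla^T)\colon\gamma^*\CL\isom\CI_0$ and insert it using the lemma just established: since $\delta_{\CI_\rho}\bullet(\Theta T\otimes T)\bullet\delta_\CL^{-1}\twoisom\id_{\CI_0}$, this combination has trivial holonomy and so may be introduced at no cost. Then, using that holonomy is multiplicative under tensor products, that $\sfR$ commutes with tensor products and with the Riesz functor $\Theta$ (so that $\Theta$ on a holonomy produces its transpose), and that $\lsb-,-\rsb=\otimes\circ(\Theta\times\unit)$, the Wilson loop factorises as
\[
\hol_{\sfR(\gamma^*\frh(E,E'\,))}(S^1)=\hol_{\sfR(T\bullet\gamma^*E)}(S^1)^{-{\rm t}}\otimes\hol_{\sfR(T\bullet\gamma^*E'\,)}(S^1)~,
\]
which is precisely the chain of equalities displayed above.

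Finally I would take traces on both sides. The left-hand side is $\CT\circ\frh(E,E'\,)|_\gamma$ by the first step. For the right-hand side, I would invoke hermiticity of all connections, giving $\hol^{-{\rm t}}=\overline{\hol}$, and compare with the canonical fibre metric $h\big(\,[\,[T,\dots],\nu\,],\,[\,[T,\dots],\lambda\,]\,\big)=\overline\nu\,\lambda$ together with the definition \eqref{eq:transgression_of_section} of $\CT E$ and $\CT E'$; this identifies the trace of the right-hand side with $h(\CT E,\CT E'\,)|_\gamma$. Equating the two yields $h=\CT\circ\frh$ fibrewise over $\CL M$, and independence of the choice of $T$ is inherited from the earlier verification that $\CT$ is well defined on sections. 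For a multiple of $\frh$ in the sense of Remark~\ref{rem:multiples_of_2-bdl_metric} — obtained by tensoring with an endomorphism $(E,\alpha)$ of $\CI_0$ with $\Theta(E,\alpha)\twoisom(E,\alpha)$ — the same factorisation goes through and contributes a positive real Wilson-loop factor, so the identity persists.

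The step I expect to be the main obstacle is the careful handling of the reduction functor $\sfR$ and of the various surjective submersions when inserting $T$: one must check that the Wilson loop is genuinely insensitive to the descent representatives, and that $\sfR$ really does intertwine $\otimes$ and $\Theta$, so that the factorisation of $\hol_{\sfR(\gamma^*\frh(E,E'\,))}$ is an honest equality rather than holding only up to an uncontrolled isomorphism. Everything else is formal bookkeeping once this compatibility and the triviality of the holonomy of $\delta_{\CI_\rho}\bullet(\Theta T\otimes T)\bullet\delta_\CL^{-1}$ are in place.
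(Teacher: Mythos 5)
Your proposal is correct and takes essentially the same approach as the paper's proof: the same key lemma giving trivial holonomy of $\delta_{\CI_\rho}\bullet(\Theta T\otimes T)\bullet\delta_{\CL}^{-1}$, the same Wilson-loop factorisation $\hol_{\sfR(\gamma^*\frh(E,E'\,))}(S^1)=\hol_{\sfR(T\bullet\gamma^*E)}(S^1)^{-{\rm t}}\otimes\hol_{\sfR(T\bullet\gamma^*E'\,)}(S^1)$, and the same trace-plus-hermiticity step identifying the result with the canonical fibre metric $h$. Your explicit remark on the multiples of $\frh$ goes slightly beyond what the paper writes out, but otherwise the arguments coincide.
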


\subsection{Transgression of connections}
\label{sect:connection_on_CTCG}

Let us now turn to the differential structure on a bundle gerbe, and try to make sense of the transgression functor on connections as well.
The results of~\cite{Waldorf:2010aa} make this plausible.

We start from a section of a bundle gerbe $\CL$, that is, a morphism $(E,\alpha,\nabla^E): \CI_0 \rightarrow \CL$.\footnote{We could also, for instance, consider local sections of $\CL$ in the sense of sections of $\phi^*\CL$ for a map $\phi: I^n {\times} S^1 \rightarrow M$.}
We would like to take the (covariant) derivative of its transgression to loop space.
To this end, let $X \in T_\gamma \CL M$ be a tangent vector to $\gamma$ in $\CL M = C^\infty(S^1,M)$.
According to our findings in Appendix~\ref{sect:mapping_space_geometry}, this corresponds to a vertical vector field $\bbG_* X \in \Gamma(\bbG \gamma,\, T^{\rm ver}(S^1{\times}M))$ on the embedded submanifold $\bbG \gamma \subset S^1 {\times} M$ given by the graph of $\gamma$.
Since smooth sections of vector bundles on manifolds form soft sheaves, we can extend $\bbG_*X$ to a global vertical vector field $\Xi$ on $S^1{\times}M$.
For $\epsilon>0$ small enough, the flow maps $\Phi^\Xi_\tau$ are all diffeomorphisms on $S^1{\times}M$ for $\tau \in (-\epsilon,\epsilon)$.

Define the map
\begin{equation}
	f^\Xi_\gamma: (-\epsilon, \epsilon) \times S^1 \longrightarrow M~,\quad
	(\tau,\sigma) \longmapsto \pr_M \circ \Phi^\Xi_\tau \big( \bbG \gamma(\sigma) \big)~.
\end{equation}
By construction this map satisfies
\begin{equation}
 \frac{\partial}{\partial \tau}_{|(\tau,\sigma)} f_\gamma^\Xi
  = {\dd f^\Xi_\gamma}_{|(\tau,\sigma)} \Big( \,
  \frac{\partial}{\partial \tau}_{|(\tau,\sigma)} \, \Big)
  = \frac{\partial}{\partial \tau}_{|(\tau,\sigma)} \pr_M \circ \Phi^\Xi_\tau \big( \bbG \gamma (\sigma) \big)
  = \Xi_{|f^\Xi_\gamma(\tau,\sigma)}
\end{equation}
as well as
\begin{equation}
 \frac{\partial}{\partial \sigma}_{|(\tau,\sigma)} f_\gamma^{\Xi}
 = {\dd f^{\Xi}_\gamma}_{|(\tau,\sigma)} \Big( \,
 \frac{\partial}{\partial \sigma}_{|(\tau,\sigma)} \, \Big)
 = \pr_{M*} \circ \Phi^\Xi_{\gamma*_{|f^\Xi_\gamma(\tau,\sigma)}}
 \Big( \, \frac{\partial}{\partial \sigma}_{|\sigma} \gamma \, \Big)~.
\end{equation}

Alternatively, we can view the adjunct map $f^{\Xi \dashv}_\gamma:
(-\epsilon,\epsilon) \rightarrow \CL M,\, \tau \mapsto
f^\Xi_\gamma(\tau,-)$ as the integral curve of the vector field
$\bbG^* \Xi$ on $\CL M$; in particular, this is a path in the loop space of $M$.
The parameter space $(-\epsilon, \epsilon) \times S^1$ is a cylinder with boundary, so that we can always find a trivialisation $(T,\beta,\nabla^T) \in \sisom_{\CatLBGrb^\nabla_2 ((-\epsilon,\epsilon){\times}S^1)} (f^{\Xi*}_\gamma \CL,\, \CI_\rho)$.
This induces a section of the principal $\sU(1)$-bundle on loop space
via the pullbacks along the embeddings $\imath_\tau: S^1
\hookrightarrow (-\epsilon, \epsilon) \times S^1,\, \sigma \mapsto
(\tau,\sigma)$ of $S^1$ into $(-\epsilon, \epsilon ) \times S^1$.
By dimensional reasons $\imath_\tau^* \rho = 0$ for all $\tau \in
(-\epsilon,\epsilon)$, and therefore
\begin{equation}
 \imath_\tau^* \big(T, \beta, \nabla^T \big): (f^{\Xi \dashv}_\gamma(\tau))^* \CL
 \isom \CI_0 \ \in \ \CatLBGrb^\nabla(S^1)
\end{equation}
defines an element in the transgression $\sU(1)$-bundle of $\CL$.
Now let $(E,\alpha,\nabla^E)$ be a section of $\CL$ along the surface $f^\Xi_\gamma$, that is, an element in $\shom_{\CatLBGrb^\nabla((-\epsilon,\epsilon){\times}S^1)} (\CI_0,\, f^{\Xi*}_\gamma \CL)$.
This might for instance be given as the the pullback to the surface of a global section of $\CL$.
Given such a section together with a trivialisation $(T,\beta,\nabla^T)$ of $f^{\Xi*}_\gamma \CL$ over $(-\epsilon,\epsilon){\times}S^1$, we obtain from this a section of $\CT \CL \rightarrow \CL M$ along the path $f^{\Xi\dashv}_\gamma$ in $\CL M$.
This section is defined by
\begin{equation}
	 \CT (E,\alpha,\nabla^E) : (-\epsilon,\epsilon) \longrightarrow (f^{\Xi\dashv}_\gamma)^* \CT \CL~,\quad
	 \tau \longmapsto \Big[ \big[ \imath_\tau^*(T,\beta,\nabla^T) \big],\, \tr \big( \hol_{\imath_\tau^* \sfR (T \bullet E)}(S^1) \big) \Big]~.
\end{equation}

The Wilson loop part of the section is a $\FC$-valued function on the interval.
The covariant derivative of $\CT (E,\alpha,\nabla^E)$ will consist of
the Lie derivative of this function together with an additional term
containing a local representative of the connection 1-form. For this, consider a principal bundle $P \rightarrow M$ with connection $\nabla$.
Let $f: I {\times} S^1 \rightarrow M$ be a smooth family of based
loops in $M$ with $f(\tau,0) = p$ for all $ \tau \in I$.
Then the holonomy of $f^*P$, starting at $p$, around the circle at $\tau \in I$ can be written as $\hol_{p, \jmath_\tau^*f^*P}(S^1)$, where $\jmath_\tau: S^1 \hookrightarrow I {\times} S^1,\, \sigma \mapsto (\tau, \sigma)$ is the inclusion of $S^1$ at parameter $\tau$.
The variation of the holonomy induced by the variation of the loop is
given by (cf.\ e.g.~\cite{Taubes:2011aa})
\begin{equation}
\label{eq:variation_of_holonomy}
	\frac{\dd}{\dd\tau}_{|\tau_0} \hol_{p, \jmath_\tau^*f^*P}(S^1) =
        {L_{\hol_{p, \jmath_{\tau_0}^*f^*P}(S^1)*}} \Big( - \int_{S^1}\,  \jmath_{\tau_0}^* \big( \iota_{\partial_\sigma} F_{{f}^*\nabla} \big) \Big)~,
\end{equation}
where $L_-$ here denotes left action by a group element. Employing this identity, we now consider holonomy in associated vector
bundles and compute the Lie derivative of the local representative of
a section to get
\begin{align}
\label{eq:Lie_derivative_of_transgressed_section}
 &\pounds_{\partial_\tau} \bigg( \tr \Big( \hol_{\jmath_\tau^* \sfR (T \bullet E)}(S^1) \Big) \bigg) (\tau) \nt*
  &\qquad = \tr \Big( - \hol_{\jmath_\tau^* \sfR (T \bullet E)}(S^1)\
    \int_{S^1} \, \CP_{\jmath_\tau^* \sfR (T \bullet E)} \circ \jmath_\tau^* \big( \iota_{\partial_\tau} F_{\nabla^{\sfR (T\bullet E)}} \big) \circ \CP_{\jmath_\tau^* \sfR (T \bullet E)}^{-1} \Big)\nt[4pt]
  &\qquad = - \tr \Big( \hol_{\jmath_\tau^* \sfR (T \bullet E)}(S^1)\
    \int_{S^1}\,  \jmath_\tau^* \big( \CP_{\sEnd(\sfR (T \bullet E))} \circ \iota_{\partial_\tau} \sfR ( F_{\nabla^{T\bullet E}} ) \big) \Big)\nt[4pt]
  &\qquad = - \tr \Big( \hol_{\jmath_\tau^* \sfR (T \bullet E)}(S^1)\
    \int_{S^1} \, \jmath_\tau^* \big( \CP_{\sfR (\sEnd(T)) \otimes \sfR (\sEnd(E))} \circ \iota_{\partial_\tau} \sfR ( (\rho - B)\,\unit + F_{\nabla^E}) \big) \Big)\nt[4pt]
  &\qquad = - \tr \Big( \hol_{\jmath_\tau^* \sfR (T \bullet E)}(S^1)\
    \int_{S^1}\,  \jmath_\tau^* \big( \CP_{\sfR (\sEnd(E))} \circ \iota_{\partial_\tau} \sfR ( F_{\nabla^E} - B\,\unit ) \big) \Big)\nt*
  &\qquad \hspace{4cm}- \tr \Big( \hol_{\jmath_\tau^* \sfR (T \bullet
    E)}(S^1)\ \int_{S^1}\,  \jmath_\tau^* \big( \iota_{\partial_\tau} \rho \big) \Big)~.
\end{align}
Here $\CP$ is the parallel transport along the circle parameterised by itself via the identity.
It starts from a chosen basepoint on the circle at which we wish to evaluate the holonomy but which we suppress here.
Its choice is irrelevant for the trace, since upon changing this basepoint the argument of the trace changes by adjoining an isomorphism between two fibres.
The connection used on the descent bundle $\sfR (T \bullet E)$ is the one inherited through the descent from the connections on $T$ and $E$.
Since $(T,\beta,\nabla^T) \bullet (E,\alpha,\nabla^E)$ descends to a
hermitean vector bundle with connection on $M$, and $F_{\nabla^{T
    \bullet E}}$ is compatible with the descent morphism, this 2-form
induces a 2-form $\sfR (F_{\nabla^{T \bullet E}})$ with values in
$\sEnd(\sfR(T \bullet E))$ on $M$; in fact one has $\sfR (F_{\nabla^{T \bullet E}}) = F_{\nabla^{\sfR(T \bullet E)}}$.

In the expression \eqref{eq:Lie_derivative_of_transgressed_section},
the first term transforms homogeneously upon changing
the local trivialisation $(T,\beta,\nabla^T)$ to $(T',\beta',\nabla^{T'})$, since the term multiplying the holonomy is independent of the choice of trivialisation.
In contrast, the second term changes as
\begin{equation}
\begin{aligned}
 &\tr \Big( \hol_{\jmath_\tau^* \sfR (T \bullet E)}(S^1)\ \int_{S^1} \, \jmath_\tau^* \big( \iota_{\partial_\tau} \rho \big) \Big)\\
 &\begin{aligned}
  \qquad = \hol_{\jmath_\tau^* \sfR (T \bullet T'^*)}(S^1)\ \bigg( &\tr \Big( \hol_{\jmath_\tau^* \sfR (T' \bullet E)}(S^1)\ \int_{S^1} \jmath_\tau^* \big( \iota_{\partial_\tau} \rho'\, \big) \Big)\\
  & \ + \, \tr \Big( \hol_{\jmath_\tau^* \sfR (T' \bullet E)}(S^1)\ \int_{S^1} \,
  \jmath_\tau^* \big( \iota_{\partial_\tau} (\rho - \rho'\, ) \big) \Big) \bigg)
 \end{aligned}\\[4pt]
 &\qquad = \hol_{\jmath_\tau^* \sfR (T \bullet T'^*)}(S^1)\ \tr \Big( \hol_{\jmath_\tau^*
   \sfR (T' \bullet E)}(S^1)\ \int_{S^1} \, \jmath_\tau^* \big( \iota_{\partial_\tau} \rho'\, \big) \Big)\\
  & \qquad \hspace{2cm} + \hol_{\jmath_\tau^* \sfR (T \bullet T'^*)}(S^1)^{-2}\ \pounds_{\partial_\tau} \big( \hol_{\jmath_\tau^* \sfR (T \bullet T'^*)}(S^1) \big)\ \tr \big( \hol_{\jmath_\tau^* \sfR (T' \bullet E)}(S^1) \big)~.
\end{aligned}
\end{equation}
In this expression the first term resembles the homogeneous part in the transformation of a local connection 1-form applied to a local representative of a section in an associated line bundle.
The second term is then precisely the corresponding inhomogeneous part.
From these considerations we obtain a natural candidate for a local representative of a connection 1-form $\CA$ on the transgression line bundle.
With the conventions made in the derivation, it reads as
\begin{equation}
\label{eq:connection_on_loop_space}
[T,\beta,\nabla^T]^* \CA_{|\gamma}(X) = \int_{S^1}\, \jmath_\tau^* \big( \iota_{\partial_\tau} \rho \big)~.
\end{equation}

So far we have only shown by somewhat heuristic means that~\eqref{eq:connection_on_loop_space} is a candidate for a connection on the transgression line bundle.
Let us now proceed to show that the choice~\eqref{eq:connection_on_loop_space} does indeed define a connection on $\CT \CL$.
In the approach thus far, we are still left to prove that the connection is defined independently of the choice of an extension $\Xi$ of $\bbG_* X$.
Instead we shall develop a more geometric approach in what follows.

The group $\sU(1)$ acts on the fibres of the $\sU(1)$-bundle on loop space via the isomorphism $\hol: h_0\CatHLBdl^\nabla_0(S^1) \isom \sU(1)$ to the isomorphism classes of flat line bundles on $S^1$.
A flat line bundle on $S^1$ is classified up to isomorphism by a
connection 1-form, which may be written as a global 1-form $A \in
\di\, \Omega^1(S^1)$.
Two such flat line bundles $A$ and $A'$ are isomorphic if and only if there exists a function $g \in C^\infty(S^1,\sU(1))$ such that $A' = A + \dd\log g$.
An element in the Lie algebra of $h_0\CatHLBdl_0^\nabla(S^1)$ is thus
represented by a 1-form $\eta \in \di\, \Omega^1(S^1)$.
Infinitesimal isomorphisms act as $\eta \mapsto \eta + \dd f$ for $f
\in \di\, C^\infty(S^1,\FR)$.
Thus there is an isomorphism of abelian Lie algebras (cf.\ \eqref{eq:variation_of_holonomy} and \cite{Taubes:2011aa})
\begin{equation}
\hol_*: \sLie(h_0 \CatHLBdl^\nabla_0(S^1)) \cong \di\, H^1_{\rm
  dR}(S^1) \isom \di\, \FR~, \quad
[\eta] \longmapsto - \int_{S^1} \, \eta~.
\end{equation}

Any trivialisation $(T,\beta,\nabla^T)$ of the pullback of $\CL$ to a space of the form $I {\times}S^1$ naturally yields a 2-form $\rho$ that lives on the parameter space.
By regarding $f^\Xi_\gamma$ as a path in $\CL M$ (that is, by passing to the adjunct map $f^{\Xi\dashv}$), $\rho$ becomes a map
\begin{equation}
 \rho^\dashv: TI \longrightarrow \di\, \Omega^1(S^1)~,\quad
  \lambda\, \partial_{\tau|\tau} \longmapsto \lambda\, \jmath_\tau^* (\iota_{\partial_\tau} \rho)~,
\end{equation}
where $\lambda \in \FC$.
After composing this with taking the de Rham equivalence class, we are left with a map
\begin{equation}
 [\rho^\dashv\, ]: TI \longrightarrow \sLie(h_0\CatHLBdl^\nabla_0(S^1))~,\quad
 \zeta_{|\tau} \longmapsto \big[\jmath_\tau^* (\iota_\zeta \rho) \big]~.
\end{equation}
Finally, we can compose with $\hol_* = - \int_{S^1}$ to obtain a map
valued in $\di\, \FR$. It reads as
\begin{equation}
 \hol_* \circ [\rho^\dashv\, ] \ \in \ \di\, \Omega^1(I)~,\quad
  \hol_* \circ [\rho^\dashv\, ] (\zeta_{|\tau}) = - \int_{S^1} \, \jmath_\tau^* \big( \iota_{\zeta} \rho \big)~.
\end{equation}

\begin{definition}
\label{def:horizontality_in_transgression_circle_bdl}
	A section of $\CT^{\sU(1)} \CL$ over a curve $f^\dashv: (-\epsilon,\epsilon) \rightarrow \CL M$ given by the 2-isomorphism class of $(T,\beta,\nabla^T): f^*\CL \isom \CI_\rho$ is \uline{horizontal} if and only if
\begin{equation}
\label{eq:horizontality_in_transgression_circle_bundle}
\begin{aligned}
	\hol_* \circ [\rho^\dashv\, ] = 0
\end{aligned}
\end{equation}
in $\di\, \Omega^1(I)$, or equivalently
$[\jmath_\tau^* (\iota_{\partial_{\tau}}\rho)] = 0$ in
$\di\, H_{\rm dR}^1(S^1)$ for all $\tau \in I$.
\end{definition}

With this definition we can now define a connection in the sense of a geometric distribution on the $\sU(1)$-bundle $\CT^{\sU(1)} \CL$ on loop space.
This then corresponds to a unique connection 1-form, which will turn
out to be precisely the one conjectured
in~\eqref{eq:connection_on_loop_space}.

\begin{theorem}
\label{st:geometry_of_lbdl_on_LM}
\begin{enumerate}
	\item Every path $f^\dashv: (-\epsilon,\epsilon) \rightarrow \CL M$ in the loop space has a unique horizontal lift $[T_h,\beta_h,\nabla^{T_h}]$ through any given point in $(\CT^{\sU(1)}\CL)_{|f^\dashv(0)}$.
	\item There is a $\sU(1)$-invariant splitting $T(\CT^{\sU(1)}\CL) = T^{\rm ver}(\CT^{\sU(1)}\CL) \oplus T^{\rm hor}(\CT^{\sU(1)}\CL)$.
	\item If $(T,\beta,\nabla^{T})$ is another section over $f^\dashv$, we define the bundle $J = \sfR (T \bullet T_h^{-1})$ in $\CatHLBdl^\nabla((-\epsilon,\epsilon){\times}S^1)$ with $F_{\nabla^J} = \rho - \rho_h$.
	It induces a map $\jmath_{(-)}^*J: I \rightarrow \CatHLBdl^\nabla(S^1)$, with $ \tau \mapsto \jmath_\tau^*J$, whose differential satisfies
	\begin{equation}
		\big( \hol \circ ( \jmath_{(-)}^*J ) \big)_{*|\tau_0} = \hol_{\jmath_{\tau_0}^*J}(S^1)\, \Big(- \int_{S^1} \, \jmath_\tau^* \big( \iota_{\partial_\tau}(\rho - \rho_h) \big)_{|\tau_0} \Big) \ .
	\end{equation}
	\item With respect to the horizontal distribution constructed above, define
	 \begin{equation}
		 \CA = \varphi^{-1} \circ \pr_{T^{\rm ver}(\CT^{\sU(1)}\CL)} \ ,
	\end{equation}
where $\varphi$ is the map assigning to an element of $\fru(1)$ its fundamental vector field on the total space of $\CT^{\sU(1)}\CL$. Then $\ker(\CA)=T^{\rm hor}(\CT^{\sU(1)}\CL)$.
	 In particular, $\CA$ is a well-defined connection 1-form on $\CT^{\sU(1)}\CL$, and its pullback along a section given by the 2-isomorphism class of a trivialisation $(T,\beta,\nabla^T)$ reads as
	 \begin{equation}
		 \big[ T,\beta,\nabla^T \big]^* \CA = \int_{S^1} \, \jmath_{(-)}^* \big( \iota_{(-)} \rho \big)~.
	 \end{equation}
\end{enumerate}
\end{theorem}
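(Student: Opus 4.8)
The plan is to follow the geometric route indicated just before the theorem, and to prove the four items in the order (3), (1), (2), (4): item (3) supplies the infinitesimal input that drives everything else, so I would establish it first. For item (3), the bundle $J=\sfR(T\bullet T_h^{-1})$ is a genuine hermitean line bundle with connection on the cylinder $(-\epsilon,\epsilon){\times}S^1$ whose curvature is $F_{\nabla^J}=\rho-\rho_h$, and $\hol\circ(\jmath_{(-)}^*J)$ is the $\sU(1)$-valued function $\tau\mapsto\hol_{\jmath_\tau^*J}(S^1)$. Its derivative is computed directly from the variation-of-holonomy formula \eqref{eq:variation_of_holonomy}, applied to the family of loops $\jmath_\tau$ with variation field $\partial_\tau$ and curvature $F_{\nabla^J}$; this is exactly the stated identity. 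This step is essentially the first-order variation of a Wilson line and requires only care about which vector is contracted into the curvature.

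For item (1) I would start from any trivialisation $(T,\beta,\nabla^T)\colon f^*\CL\isom\CI_\rho$ of the pullback over the cylinder, which exists because $H^3$ of a cylinder vanishes; its restriction at $\tau=0$ fixes a reference point in the fibre, and any candidate lift $(T_h,\beta_h,\nabla^{T_h})$ differs from it through the line bundle $J=\sfR(T\bullet T_h^{-1})$ with $F_{\nabla^J}=\rho-\rho_h$. Writing $u(\tau)=\hol_{\jmath_\tau^*J}(S^1)$ and $c(\tau)=\oint_{S^1}\jmath_\tau^*(\iota_{\partial_\tau}\rho)$, the horizontality condition of Definition~\ref{def:horizontality_in_transgression_circle_bdl}, namely $\oint_{S^1}\jmath_\tau^*(\iota_{\partial_\tau}\rho_h)=0$, becomes via item (3) the linear ordinary differential equation $\dot u=-c(\tau)\,u$ on $\sU(1)$, with unique solution $u(\tau)=u(0)\,\exp\!\big(-\int_0^\tau c(s)\,\dd s\big)$; prescribing the initial point in the fibre fixes $u(0)$. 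It then remains to realise this solution by an honest trivialisation, which I would do by building a line bundle with connection $J$ on the cylinder with $\hol_{\jmath_\tau^*J}(S^1)=u(\tau)$ (for instance the trivial bundle with connection form $a(\tau,\sigma)\,\dd\sigma$ whose fibrewise period is a smooth branch of $\frac{1}{\di}\log u(\tau)$) and setting $T_h$ to be the trivialisation related to $T$ by $J$. Uniqueness follows from uniqueness of solutions of the ODE together with the torsor structure of the fibre, so that the lift is determined by $u(\tau)$.

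Item (2) is then formal: the horizontal subspaces are the tangents to these unique lifts, and right $\sU(1)$-invariance of the distribution holds because the equation $\dot u=-c(\tau)\,u$ is left-invariant in $u$ while $c(\tau)$ is independent of the fibre coordinate, so $R_g$ carries horizontal lifts to horizontal lifts; since each horizontal lift projects diffeomorphically onto its base curve, $T^{\rm hor}$ is a complement to $T^{\rm ver}$. For item (4), the map $\varphi\colon\fru(1)\to\Gamma(T^{\rm ver}(\CT^{\sU(1)}\CL))$ is an isomorphism onto the vertical fields since the $\sU(1)$-action is free, so $\CA=\varphi^{-1}\circ\pr_{T^{\rm ver}}$ is well defined with $\ker(\CA)=T^{\rm hor}$ by construction; it reproduces fundamental vector fields and is $\Ad$-invariant, hence a connection $1$-form, because $\sU(1)$ is abelian and $T^{\rm hor}$ is invariant by item (2). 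For the pullback formula I would apply item (3) with $T_h$ the horizontal lift of $f^\dashv$ through $[T,\beta,\nabla^T]_{|0}$: the vertical part of $s_*\partial_\tau$ for the section $s=[T,\beta,\nabla^T]$ is measured by $\frac{\dd}{\dd\tau}\hol_{\jmath_\tau^*J}(S^1)$, which by item (3) and the horizontality of $\rho_h$ equals $-\oint_{S^1}\jmath_\tau^*(\iota_{\partial_\tau}\rho)$ times the holonomy; transporting this through the linearised isomorphism $\hol_*\colon[\eta]\mapsto-\int_{S^1}\eta$ (equivalently via the map $[\rho^\dashv]$ of the discussion preceding the theorem) yields exactly $[T,\beta,\nabla^T]^*\CA=\int_{S^1}\jmath_{(-)}^*(\iota_{(-)}\rho)$, reproducing \eqref{eq:connection_on_loop_space}.

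The main obstacle I anticipate is not a single computation but the well-definedness of the construction in the diffeological setting: one must check that the horizontality condition, and hence the horizontal lift, is independent of the reference trivialisation $(T,\beta,\nabla^T)$ and of the extension $\Xi$ of the loop-space tangent vector $\bbG_*X$ used to build the generating surface $f^\Xi_\gamma$, and that the reconstructed $T_h$ is smooth in $\tau$. The first is controlled by item (4) of Theorem~\ref{st:BGrbs_and_Deligne_coho}, since two trivialisations differ by a closed $2$-form in $2\pi\,\di\,\Omega^2_{{\rm cl},\RZ}$, whose contribution to $c(\tau)$ is exact on each loop and therefore invisible to $\hol_*$; the independence of $\Xi$ is precisely what the passage from the heuristic \eqref{eq:connection_on_loop_space} to the geometric distribution is designed to settle, because the equation $\dot u=-c(\tau)\,u$ sees only the restriction of $\rho$ to the loops $\jmath_\tau(S^1)$ and not the off-loop behaviour of $\Xi$.
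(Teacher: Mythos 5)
Your proposal is correct, and items (2)--(4) together with the uniqueness half of (1) follow essentially the paper's own route: the paper likewise obtains item (3) from the variation-of-holonomy formula \eqref{eq:variation_of_holonomy} (this is its computation \eqref{eq:Rconstrho}), proves uniqueness in (1) by showing that the transition function between two horizontal sections has vanishing logarithmic derivative and value $1$ at $\tau=0$ --- which is your ODE-uniqueness argument in different clothing --- and derives the pullback formula in (4) by exactly your factorisation $(T,\beta,\nabla^T)\cong (T\bullet T_h^{-1})\bullet(T_h,\beta_h,\nabla^{T_h})$ followed by item (3). Where you genuinely diverge is the \emph{existence} half of (1). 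The paper observes that $H^2(C,\CD^\bullet(2))=0$ on the cylinder $C=(-\epsilon,\epsilon)\times S^1$ (every $2$-form on $C$ is closed and has vacuously integral periods), so $f^*\CL\cong\CI_0$ via a flat trivialisation with $\rho_h=0$, which is horizontal for free and is moved through the prescribed initial point by tensoring with a constant flat line bundle. You instead start from an arbitrary trivialisation, integrate the linear ODE $\dot u=-c(\tau)\,u$, and realise the solution geometrically via an explicit line bundle with connection $a(\tau,\sigma)\,\dd\sigma$ of prescribed fibrewise periods. Both work: the paper's argument buys brevity and the normalisation $\rho_h=0$ (which silently simplifies (4), since then $F_{\nabla^J}=\rho$), while yours is more constructive and keeps the dependence on the initial data explicit --- at the cost of having to check, as you do, that the built bundle has the right fibrewise holonomies and that horizontality of $\rho_h$ removes its contribution in (4).

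One claim in your closing paragraph is incorrect, though harmlessly so. Two trivialisations of $f^*\CL$ over the cylinder differ by $\rho-\rho'\in 2\pi\,\di\,\Omega^2_{{\rm cl},\RZ}(C)$, but on $C$ this is \emph{all} of $\Omega^2(C)$, since there are no nontrivial $2$-cycles; hence the contribution of $\rho-\rho'$ to $c(\tau)$ is in general neither exact on the loops nor invisible to $\hol_*$ --- take $\rho-\rho'=g(\tau)\,\dd\tau\wedge\dd\sigma$, whose contribution to $c(\tau)$ is $2\pi\,g(\tau)$. This does not damage the proof: the horizontality condition of Definition~\ref{def:horizontality_in_transgression_circle_bdl} refers to the curving $\rho_h$ of the section being tested and involves no reference trivialisation, and independence of your construction from the chosen reference $T$ is already guaranteed by the uniqueness statement you prove (the two constructed lifts are horizontal through the same initial point, hence equal); the would-be discrepancy in $c(\tau)$ is exactly compensated by the change in $u(0)$ and in the holonomy of the comparison bundle $\sfR(T\bullet T'^{-1})$.
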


\begin{proof}
(1):
Let $f^\dashv: (-\epsilon,\epsilon) \rightarrow \CL M$ be a path in the loop space $\CL M$ of $M$.
As before, we denote its adjunct map by $f: (-\epsilon,\epsilon){\times}S^1 \rightarrow M,\, (\tau,\sigma) \mapsto f^\dashv(\tau)(\sigma)$.
Let $(S,\psi,\nabla^S): (f^\dashv(0))^*\CL \isom \CI_0$ be a trivialisation, that is, a representative of an element of $(\CT^{\sU(1)} \CL)_{|f^\dashv(0)}$.

First of all, let us show that there exists a horizontal section of $f^{\dashv *}(\CT^{\sU(1)} \CL)$.
Applying the exact sequence \eqref{eq:DD_exact_sequence} to the cylinder $C = (-\epsilon,\epsilon){\times}S^1$ gives $H^2(C,\CD^\bullet(2)) = 0$ due to  $\Omega^2_{{\rm cl},\RZ}(C) = \Omega^2(C)$, since every 2-form on $C$ is closed and there are no non-trivial 2-cycles in $C$.
Thus {\em all} bundle gerbes (with connection) on $C$ are isomorphic.
In particular, every gerbe is isomorphic to $\CI_0$ on $C$, and a corresponding isomorphism $(T_h,\beta_h,\nabla^{T_h})$ from the pullback bundle gerbe to $\CI_0$ provides a horizontal lift of $f^\dashv$ in the sense of Definition~\ref{def:horizontality_in_transgression_circle_bdl}.
To make this pass through $(S,\psi,\nabla^S)$, we tensor by the constant flat line bundle $\sfR (S \bullet \jmath_0^*T_h^{-1})$.

Next we have to prove that this trivialisation of $f^*\CL$ is unique up to 2-isomorphism.
For this, let us assume that $(T,\beta,\nabla^T): f^*\CL \isom \CI_\rho$ is another trivialisation of the pullback bundle gerbe satisfying the horizontality condition~\eqref{eq:horizontality_in_transgression_circle_bundle}, and assume that it is a horizontal lift through $(S,\psi,\nabla^S)$, that is, $\jmath_0^*(T,\beta,\nabla^T) \cong (S,\psi,\nabla^S)$.
The two sections induced by these trivialisations over the cylinder $C$ are $[\jmath_{(-)}^*(T,\beta,\nabla^T)]$ and $[\jmath_{(-)}^*(T_h,\beta_h,\nabla^{T_h})] $ in $ \Gamma(I, f^{\dashv*} \CT^{\sU(1)} \CL)$; for example, we have
\begin{equation}
\begin{aligned}
	\big[\jmath_{(-)}^*(T,\beta,\nabla^T)\big]: I &\longrightarrow f^{\dashv*} \CT^{\sU(1)} \CL \ , \\
	\tau &\longmapsto \big[\jmath_{\tau}^*(T,\beta,\nabla^T)\big]
        \ \in \ h_0 \sisom_{\CatLBGrb^\nabla(S^1)}\big( {f^{\dashv}(\tau)}^* \CT^{\sU(1)} \CL,\, \CI_0 \big)~.
\end{aligned}
\end{equation}
They are related via the right action of a transition function given as
\begin{align}
	\big[\jmath_{(-)}^*(T_h,\beta_h,\nabla^{T_h})\big] &= \big[\jmath_{(-)}^*(T_h,\beta_h,\nabla^{T_h}) \circ \jmath_{(-)}^*(T,\beta,\nabla^T)^{-1} \circ \jmath_{(-)}^*(T,\beta,\nabla^T)\big] \nt[4pt]
	&= R_{\hol_{\jmath_{(-)}^* \sfR (T \bullet T_h^{-1})}(S^1)^{-1}} \ \big[\jmath_{(-)}^*(T,\beta,\nabla^T)\big]~.
\end{align}
The holonomy in this transition function is just the holonomy of the hermitean line bundle $\sfR (T \bullet T_h^{-1})$ on the cylinder $C$ along the circles at the respective parameter values.
From the variation~\eqref{eq:variation_of_holonomy} of the holonomy upon changing the curve it is evaluated on%
	\footnote{Since we are in an abelian case here, the formula~\eqref{eq:variation_of_holonomy} extends to variations of the curve with varying initial point.
		Another way to see this would be to use an isomorphism $\jmath_{(-)}^*\sfR (T \bullet T_h^{-1}) \cong (C{\times}\mathbb{C}, \dd + \eta)$, which exists since $C$ retracts to $S^1$.
		Then the variational formula easily follows from $\hol_{C\times\FC}(S^1) = \exp(-\int_{S^1}\, \eta)$ and Stokes' theorem.}
we derive
\begin{align}
	&\frac{\dd}{\dd\tau}_{|\tau_0} \big( \hol_{\jmath_{\tau}^* \sfR (T \bullet T_h^{-1}) }(S^1) \big)^{-1} \nt*
	&\qquad\qquad= - \hol_{\jmath_{\tau_0}^* \sfR (T \bullet T_h^{-1})}(S^1)^{-2}\  \hol_{\jmath_{\tau_0}^* \sfR (T \bullet T_h^{-1}) }(S^1) \, \int_{S^1}\, - \jmath_{\tau}^* \big( \iota_{\partial_\tau} F_{\nabla^{\sfR (T \bullet T_h^{-1})}} \big)_{|\tau_0} \nt[4pt]
	&\qquad\qquad = \hol_{\jmath_{\tau_0}^* \sfR (T \bullet T_h^{-1})}(S^1)^{-1}\, \int_{S^1}\, \jmath_{\tau}^* \big( \iota_{\partial_\tau} (\rho - \rho_h) \big)_{|\tau_0}~.
\label{eq:Rconstrho}\end{align}
By the horizontality assumption, $[\jmath_\tau^*( \iota_{\partial_\tau}\rho)]= [\jmath_\tau^*( \iota_{\partial_\tau}\rho_h)] = 0$ for all $\tau \in (-\epsilon,\epsilon)$, and therefore the transition function is constant.
Moreover, the initial condition $\jmath_0^*(T_h,\beta_h,\nabla^{T_h}) \cong (S,\psi,\nabla^S) \cong \jmath_0^*(T,\beta,\nabla^T)$ is equivalent to the transition function having the value $1 \in \sU(1)$ at $\tau = 0$.
This shows that these two trivialisations define the same section of $f^{\dashv*} (\CT^{\sU(1)} \CL)$.

\noindent
(2):
The existence of the splitting is immediate from item (1).
Its $\sU(1)$-invariance can be seen as follows:
If $(T,\beta,\nabla^T): f^*\CL \isom \CI_\rho$ is a horizontal section, acting on it with a constant element of $\sU(1)$ amounts to tensoring $(T,\beta,\nabla^T)$ by the pullback of a line bundle from $S^1$ to $C$.
This, however, has no effect on $\rho$ since such line bundles are flat on $C$.

\noindent
(3):
This computation is the same as the one carried out in \eqref{eq:Rconstrho}.

\noindent
(4):
The unique connection 1-form equivalent to the geometric connection $T^{\rm hor}(\CT^{\sU(1)}\CL)$ acts on $\zeta \in \Gamma(I,TI)$ as
\begin{align}
	[T,\beta,\nabla^T]^* \CA (\zeta)
	&= \big[ (T,\beta,\nabla^T) \bullet (T_h,\beta_h,\nabla^{T_h})^{-1} \bullet (T_h,\beta_h,\nabla^{T_h}) \big]^* \CA (\zeta) \nt[4pt]
	&= \big( R_{\hol_{\jmath_{(-)}^* J}(S^1 )^{-1}} \ [T_h,\beta_h,\nabla^{T_h}] \big)^* \CA (\zeta) \nt[4pt]
	&= \hol_{\jmath_{(-)}^* J }(S^1) \ \dd \big( \hol_{\jmath_{(-)}^* J}(S^1)^{-1} \big) (\zeta) \nt[4pt]
	&= - \hol_{\jmath_{(-)}^* J}(S^1)^{-1} \ \hol_{\jmath_{(-)}^* J }(S^1) \, \int_{S^1}\, - \jmath_{(-)}^* \big( \iota_{\zeta} F_{\nabla^J} \big) \nt[4pt]
	&= \int_{S^1}\, \jmath_{(-)}^* \big( \iota_{\zeta} \rho \big)~,
\end{align}
as required.
\end{proof}

Let us next compute the curvature of this connection.
For this, we extend the construction used to derive the connection 1-form to two tangent vectors $X_0,\, X_1 \in T_\gamma \CL M$.
Let $\Xi_0,\, \Xi_1 \in \Gamma(S^1{\times}M, T^{\rm ver}(S^1{\times}M))$ be two extensions of the graphs $\bbG_* X_0$ and $\bbG_* X_1$ to vertical vector fields on $S^1{\times}M$.
Using the flows of these vector fields, we find $\epsilon_0 > 0$ and $\epsilon_1 > 0$ such that we can define a map
\begin{equation}
 f^{\Xi_1,\Xi_0}_\gamma: (-\epsilon_1,\epsilon_1) {\times}
 (-\epsilon_0,\epsilon_0) {\times} S^1 \longrightarrow M \ , \quad
 f^{\Xi_1,\Xi_0}_\gamma (\tau_1,\tau_0,\sigma) = \Phi^{\Xi_1}_{\tau_1}
 \circ \Phi^{\Xi_0}_{\tau_0} \circ \gamma(\sigma) \ .
\end{equation}
Its derivative along the additional parameters reads as
\begin{equation}
 (f^{\Xi_1,\Xi_0}_\gamma)_{*|(0,0,\sigma)}
 \Big(\,\frac\partial{\partial\tau_i}\, \Big)
 = \frac{\partial}{\partial \tau_i}_{|0} \Phi^{\Xi_i}_{\tau_i} \circ \gamma(\sigma)
 = \Xi_{i|(\sigma,\gamma(\sigma))}
 = \bbG_* X_{i|(\sigma,\gamma(\sigma))} \ ,
\end{equation}
for $i = 0,1$. The parameter space of this map is again homotopy
equivalent to $S^1$, so that we can always find trivialisations of $f^{\Xi_1,\Xi_0*}_\gamma \CL$.
Then since the exterior derivative commutes with pullbacks, application of Proposition~\ref{st:d_and_transgression} from Appendix~\ref{sect:mapping_space_geometry} to the restricted case of integration along the circle factor in $I^2{\times}S^1$ shows that 
\begin{equation}
\begin{aligned}
	\big( [T,\beta,\nabla^T]^* \dd \CA \big)_{|\tau} &= \big( \dd [T,\beta,\nabla^T]^* \CA \big)_{|\tau}\\[4pt]
	&= \dd \Big(\, \int_{S^1}\, \jmath_\tau^* \big(\iota_{(-)}
        \rho \big) \, \Big)\\[4pt]
	&= \int_{S^1}\, \jmath_\tau^* \big(\iota_{(-) \wedge (-)} \dd \rho\big)\\[4pt]
	&= \int_{S^1}\, \jmath_\tau^* \big( \iota_{(-) \wedge (-)} f^{\Xi_1,\Xi_0*}_\gamma H\big)\\[4pt]
&= f^{\Xi_1,\Xi_0 \dashv*}_\gamma \CT(H)_{|\tau} \ ,
\end{aligned}
\end{equation}
where $H\in \di\, \Omega^3(M)$ is the curvature 3-form of $\CL$ and $f^{\Xi_1,\Xi_0\dashv}_\gamma$ is the map from the product of the two parameter intervals to the loop space which is adjunct to $f^{\Xi_1,\Xi_0}_\gamma$.
This shows that
\begin{equation}
\label{eq:curvature_of_lbdl_on_LM}
 F_{\nabla^{\CT^{\sU(1)} \CL}} = \CT(H) \ .
\end{equation}

\begin{remark}
The sign in~\eqref{eq:curvature_of_lbdl_on_LM} is opposite to that
in~\cite{Waldorf:2010aa}, wherein the opposite sign is used in the flatness condition on trivialisations (Definition~\ref{def:flat_isomps_and_trivialisations}) of bundle gerbes.
\end{remark}

We can finally complete the proof of Theorem~\ref{st:Transgression_is_functorial}.

\noindent{\it Proof of Theorem~\ref{st:Transgression_is_functorial}}. By construction, the connection 1-form $\CA$ has holonomy transport along a loop $\phi^\dashv: S^1 \rightarrow \CL M$ given, with respect to a trivialisation $(T,\beta,\nabla^T)$ of $\phi^*\CL$ over $S^1{\times}S^1$, by
\begin{equation}
	\hol_{\CT^{\sU(1)} \CL}\big(\phi^\dashv(S^1)\big) = \exp \big( - [T,\beta,\nabla^T]^* \CA \big)
	= \exp \Big( - \int_{S^1{\times}S^1}\, \rho \Big)~.
\end{equation}
This is precisely the holonomy of the bundle gerbe $\CL$ around the torus $\phi: S^1{\times}S^1 \rightarrow M$ (cf.\ Definition~\ref{def:Hol_of_BGrb}).
Waldorf's arguments for the fusion properties of the connection still
hold true, so that we really obtain a functor with target category
$\CatHLBdl^{\nabla}_{\rm fus}(\CL M)$. \qed

\begin{remark}
\label{rem:fake-curvature_condition}
As explained in Remark~\ref{rem:twist_vec_bundles}, one can regard
morphisms of line bundle gerbes as twisted vector bundles which are
associated 2-vector bundles to certain principal 2-bundles. Thus one
might be tempted to demand that the {\em fake curvature}
vanishes, because this condition renders the 2-holonomy of the
2-bundle well-defined~\cite{Baez:0511710,Baez:2004in}. It reads as 
\begin{equation}\label{eq:fake_curv}
 F_{\nabla^E} - \big( \fpmap{}{\zeta}{Z*}{2} B_2 - \fpmap{}{\zeta}{Z*}{1} B_1 \big)\,\unit = 0~.
\end{equation}
In many contexts, however, this condition seems to be too strict.

Putting together \eqref{eq:Lie_derivative_of_transgressed_section} and~\eqref{eq:connection_on_loop_space}, we compute the covariant derivative of a transgressed section $\CT (E,\alpha,\nabla^E)$ of $\CL$ to be
\begin{align}
 &\nabla^{\CT \CL}_X\, \CT (E,\alpha,\nabla^E)\\*
 &\qquad = \Big[ \big[ T,\beta,\nabla^T \big],\,
  - \tr \Big( \hol_{\sfR (T \bullet E)}\big(\imath_\tau(S^1)\big)\ \int_{S^1}\, \CP_{\sfR (\sEnd(E))} \circ \iota_{\partial_\tau} \sfR \big( F_{\nabla^E} - B\, \unit \big) \Big) \Big]~, \notag
\end{align}
where the trivialisation is chosen in means adapted to the vector field $X$ as before.
This shows in another way how strong a condition the vanishing of the fake curvature is:
Sections $(E,\alpha,\nabla^E) \in \Gamma(M,\CL)$ satisfying this condition
transgress to parallel sections of $\CT \CL$; that is,
\begin{equation}
 \CT: \Gamma_{\rm fc}(M,\CL) = \shom_{\CatLBGrb^\nabla(M),{\rm fc}}(\CI_0,\CL) \longrightarrow \Gamma_{\rm par}(\CL M,\CT \CL)~,
\end{equation}
with $\shom_{\CatLBGrb^\nabla(M),{\rm fc}}$ denoting the full
subcategories of the morphism categories in\linebreak
$\CatLBGrb^\nabla(M)$ whose objects are those 1-morphisms in
$\CatLBGrb^\nabla(M)$ which satisfy the fake curvature condition \eqref{eq:fake_curv}.
\end{remark}

\begin{corollary}
	Flat isomorphisms of bundle gerbes transgress to isomorphisms
        of hermitean line bundles with connection, that is, isomorphisms in $\CatHLBdl^\nabla(\CL M)$.
\end{corollary}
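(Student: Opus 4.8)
The claim is that flat isomorphisms of bundle gerbes transgress to isomorphisms in $\CatHLBdl^\nabla(\CL M)$, i.e. to isomorphisms of hermitean line bundles \emph{with connection}. Let me think about what needs to be established.

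We already know from Theorem~\ref{st:Transgression_is_functorial} that $\CT$ is a functor, hence sends isomorphisms in $h_1\CatLBGrb^\nabla(M)$ to isomorphisms in $\CatHLBdl^\nabla_{\rm fus}(\CL M)$; in particular a flat isomorphism $(T,\beta,\nabla^T):\CL_1\isom\CL_2$ transgresses to an isomorphism of the underlying line bundles $\CT\CL_1\to\CT\CL_2$. The content of the corollary beyond functoriality is that this isomorphism is moreover \emph{connection-preserving}, and that is exactly what the additional input of flatness buys us.

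So the plan is as follows. First I would recall that the curvature of $\nabla^{\CT^{\sU(1)}\CL}$ was computed in \eqref{eq:curvature_of_lbdl_on_LM} to be $\CT(H)$, where $H$ is the curvature 3-form of $\CL$, and that for a \emph{flat} isomorphism $(T,\beta,\nabla^T):\CL_1\isom\CL_2$ Definition~\ref{def:flat_isomps_and_trivialisations} forces $F_{\nabla^T}=(\zeta^{Z*}_2B_2-\zeta^{Z*}_1B_1)\,\unit$. Since the two gerbes then have $\dd B_1=(\sigma^{Y_1})^*H_1$ and $\dd B_2=(\sigma^{Y_2})^*H_2$ with $H_1=H_2$ as forms on $M$ (the flatness/isomorphism forces the curvature 3-forms to agree, cf.\ the discussion around Remark~\ref{rem:ddtorsion}), the transgressed curvatures $\CT(H_1)=\CT(H_2)$ coincide, so the connections $\nabla^{\CT\CL_1}$ and $\nabla^{\CT\CL_2}$ have equal curvature. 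This is a necessary consistency check but not yet the conclusion.

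The main step is to show directly that the transgression map on morphisms \eqref{eq:transgression_of_morphism} intertwines the two connections when $(T,\beta,\nabla^T)$ is flat. Here I would use the covariant derivative formula derived in Remark~\ref{rem:fake-curvature_condition}: a section transgresses under $\nabla^{\CT\CL}_X$ with the term $\int_{S^1}\jmath_\tau^*(\CP\circ\iota_{\partial_\tau}\sfR(F_{\nabla^E}-B\,\unit))$. For a \emph{flat} isomorphism, the defining relation $F_{\nabla^T}=(\zeta^{Z*}_2B_2-\zeta^{Z*}_1B_1)\,\unit$ is precisely the statement that the relevant combination $F_{\nabla^T}-(B_2-B_1)\,\unit$ vanishes; thus the composite $\sfR(T\bullet\gamma^*E\bullet{T'}^{-1})$ that defines the transgressed morphism contributes no anomalous connection term, and the Leibniz rule for $\nabla^{\CT\CL_2}\circ\CT(T,\beta,\nabla^T)$ matches $\CT(T,\beta,\nabla^T)\circ\nabla^{\CT\CL_1}$. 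Concretely I would evaluate both sides along an integral curve $f^{\Xi\dashv}_\gamma$ as in Section~\ref{sect:connection_on_CTCG}, pick an adapted trivialisation, and observe that the holonomy factor $\hol_{\sfR(T\bullet\gamma^*E\bullet{T'}^{-1})}$ defining the bundle isomorphism is covariantly constant precisely because its defining curvature $2$-form $F_{\nabla^T}-(B_2-B_1)\,\unit$ is zero.

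I expect the main obstacle to be bookkeeping across the different surjective submersions: the isomorphism $(T,\beta,\nabla^T)$, the section $(E,\alpha,\nabla^E)$, and the fibre trivialisations all live over different spaces, and one must pull everything back to a common cylinder $(-\epsilon,\epsilon)\times S^1$ and apply the reduction functor $\sfR$ before the flatness condition can be read off as vanishing of an $\sEnd$-valued $2$-form. Once that reduction is in place, the computation is essentially the one in \eqref{eq:Lie_derivative_of_transgressed_section} with the $F_{\nabla^E}-B\,\unit$ term replaced by $F_{\nabla^T}-(B_2-B_1)\,\unit=0$, so the transgressed isomorphism is parallel and hence a morphism in $\CatHLBdl^\nabla(\CL M)$; being also invertible (as $(T,\beta,\nabla^T)$ is an isomorphism and $\CT$ is functorial), it is an isomorphism there, which is the assertion.
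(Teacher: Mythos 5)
Your proposal is correct and follows essentially the same route as the paper: the corollary there is stated as an immediate consequence of Remark~\ref{rem:fake-curvature_condition}, whose covariant derivative formula shows that 1-morphisms satisfying the fake curvature condition \eqref{eq:fake_curv} transgress to parallel morphisms, and a flat isomorphism in the sense of Definition~\ref{def:flat_isomps_and_trivialisations} is exactly such a morphism, invertibility being preserved by functoriality of $\CT$. Your additional curvature consistency check via \eqref{eq:curvature_of_lbdl_on_LM} and the remarks on reducing everything to a common cylinder are fine but not needed beyond what the remark already provides.
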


\subsection{Grothendieck completion}
\label{sect:transgression_and_diff_K-theory}

In Theorem~\ref{st:Direct_sum_as_fctr} we saw that the direct sum of bundle gerbe 1-morphisms is a functor, making the categories $\shom_{\CatLBGrb^\nabla(M)}(\CL_1,\CL_2)$ into symmetric cartesian monoidal and semisimple abelian categories.
The transgression functor $\CT$ from Section~\ref{sect:Transgression_definition}, however, is defined on the 1-category $h_1 \CatLBGrb^\nabla(M)$ in which the symmetric monoidal structure of the morphism categories in $\CatLBGrb^\nabla(M)$ is collapsed to 2-isomorphism classes of 1-morphisms, making $\shom_{h_1 \CatLBGrb^\nabla(M)}(\CL_1,\CL_2)$ into a commutative monoid.
In Section~\ref{sect:Transgression_of_algebraic_data} we showed that $\CT$ respects that structure of the morphism sets in $h_1 \CatLBGrb^\nabla(M)$. On the other hand, the morphism sets in $\CatHLBdl^{\nabla}_{\rm fus}(\CL M)$ have the structure of an abelian group, and we can ask whether there is an extension of the transgression functor which acts between two categories with the same structural properties. This is made possible by the following construction.

\begin{definition}\label{def:Grothendieck_completion}
The \uline{Grothendieck 2-functor} is the strict 2-functor of 2-categories
\begin{equation}
 \Gr: \CatCat^{\mathsf{CMon}} \longrightarrow \CatCat^{\mathsf{AbGrp}}
\end{equation}
from the 2-category of categories enriched over commutative monoids to the 2-category of categories enriched over abelian groups (that is, preadditive categories), which is obtained as follows:
\begin{itemize}
\item[(i)] For $\mathscr{A}\in\CatCat^{\mathsf{CMon}}$, the \uline{Grothendieck completion} $\Gr(\mathscr{A})$ is the $\mathsf{AbGrp}$-enriched category with the same objects as $\mathscr{A}$ but its morphism sets
\begin{equation}
 \shom_{\Gr (\mathscr{A})}(a,b) = \Gr \big( \shom_{\mathscr{A}}(a,b) \big)
\end{equation}
are the Grothendieck groups of the commutative monoids $\shom_{\mathscr{A}}(a,b)$.
Composition of morphisms in $\Gr (\mathscr{A})$ is defined via
\begin{equation}
 [ f,g] \circ [h,k] = [f\circ h + g \circ k,\, f \circ k + g \circ h]~,
\end{equation}
for $f,g\in\shom_{\Gr (\mathscr{A})}(b,c)$ and $h,k\in \shom_{\Gr (\mathscr{A})}(a,b)$, and the identity on $a$ is given in terms of the original identity as $[\id_a,0]$.
\item[(ii)] On 1-morphisms in $\CatCat^{\mathsf{CMon}}$, that is, $\mathsf{CMon}$-enriched functors $\Phi: \mathscr{A} \rightarrow \mathscr{B}$, the functor $\Gr(\Phi)$ acts on objects in the same way as $\Phi$ but on morphisms as
\begin{equation}
 \Gr(\Phi)\big([f,g]\big) = \big[\Phi(f), \Phi(g)\big]~.
\end{equation}
\item[(iii)] The 2-morphisms in $\CatCat^{\mathsf{CMon}}$, that is, $\mathsf{CMon}$-enriched natural transformations $\eta: \Phi \Rightarrow \Psi$, are mapped to $\Gr(\eta)=[\eta,0]$.
\end{itemize}
\end{definition}

With this definition, one can check that $\Gr$ is indeed a well-defined strict 2-functor by checking all necessary compatibility conditions, which follow straightforwardly from those on the original enriched categories. For each $\mathscr{A}\in\CatCat^{\mathsf{CMon}}$ there is a canonical inclusion of categories $\imath_{\mathscr{A}}: \mathscr{A} \hookrightarrow \Gr (\mathscr{A})$ which sends an object to itself and a morphism $f\in\shom_{\mathscr{A}}(a,b)$ to its canonical image $[f,0]$ in $\shom_{\Gr(\mathscr{A})}(a,b)$.
If $\mathscr{A}$ is a monoidal category, then its monoidal structure carries over to $\Gr( \mathscr{A})$.

The usual Grothendieck functor from commutative monoids to abelian groups is adjoint to the forgetful functor in the opposite direction.
Such a forgetful 2-functor $\mathsf{U}:\CatCat^{\mathsf{AbGrp}}\to \CatCat^{\mathsf{CMon}}$ exists here as well.
\begin{proposition}\label{prop:Gruniversal}
Let $\mathscr{A} \in \CatCat^{\mathsf{CMon}}$.
For any category $\mathscr{B} \in \CatCat^{\mathsf{AbGrp}}$ and any functor $\Phi: \mathscr{A} \rightarrow \mathsf{U} (\mathscr{B})$ in $\CatCat^{\mathsf{CMon}}$ there exists a unique functor $\hat{\Phi}: \Gr(\mathscr{A}) \rightarrow \mathscr{B}$ in $\CatCat^{\mathsf{AbGrp}}$ such that
\begin{equation}
 \myxymatrix{
  \mathscr{A} \ar@{->}[r]^-{\imath_{\mathscr{A}}} \ar@{->}[d]_-{\Phi} & \Gr(\mathscr{A}) \ar@{-->}[dl]^-{\mathsf{U} (\hat{\Phi})}\\
  \mathsf{U} (\mathscr{B}) & 
 }
\end{equation}
is a strictly commutative diagram in $\CatCat^{\mathsf{CMon}}$; that is, there is an adjoint pair of 2-functors
\begin{equation}
	\myxymatrix{
		\CatCat^{\mathsf{CMon}} \ar@<0.2cm>[rr]^-{\Gr}	&	\perp	&	\CatCat^{\mathsf{AbGrp}} \ar@<0.2cm>[ll]^-{\mathsf{U}}
	}
\end{equation}
\end{proposition}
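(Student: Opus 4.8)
The plan is to establish the stated universal property directly, from which the $2$-adjunction $\Gr \dashv \mathsf{U}$ follows by the standard characterisation of adjunctions via universal arrows, with unit given by the inclusions $\imath_{\mathscr{A}}$. Fix $\mathscr{A} \in \CatCat^{\mathsf{CMon}}$, $\mathscr{B} \in \CatCat^{\mathsf{AbGrp}}$ and a $\mathsf{CMon}$-enriched functor $\Phi\colon \mathscr{A} \to \mathsf{U}(\mathscr{B})$. Since $\imath_{\mathscr{A}}$ is the identity on objects and $\mathsf{U}$ leaves the underlying object and morphism assignments unchanged, strict commutativity of the diagram forces $\hat\Phi$ to agree with $\Phi$ on objects and to satisfy $\hat\Phi([f,0]) = \Phi(f)$ for every $f\in\shom_{\mathscr{A}}(a,b)$. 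As $\hat\Phi$ is required to be $\mathsf{AbGrp}$-enriched and every element of the Grothendieck group $\shom_{\Gr(\mathscr{A})}(a,b)$ can be written as $[f,0]-[g,0]$, this already pins $\hat\Phi$ down uniquely by $\hat\Phi([f,g]) = \Phi(f) - \Phi(g)$, the difference being taken in the abelian group $\shom_{\mathscr{B}}(\Phi a,\Phi b)$. This yields uniqueness; it remains to verify that the formula is well defined and defines an $\mathsf{AbGrp}$-enriched functor.

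Next I would check well-definedness. If $[f,g]=[f',g']$ in $\Gr(\shom_{\mathscr{A}}(a,b))$, then $f+g'+h = f'+g+h$ for some $h$; applying the additive functor $\Phi$ and cancelling $\Phi(h)$ in the abelian group $\shom_{\mathscr{B}}(\Phi a,\Phi b)$ gives $\Phi(f)-\Phi(g)=\Phi(f')-\Phi(g')$. Here the passage to \emph{groups} in the target is exactly what legitimises the cancellation. Additivity of $\hat\Phi$ on hom-sets is then immediate from additivity of $\Phi$, and preservation of identities follows from $\hat\Phi([\id_a,0]) = \Phi(\id_a) = \id_{\Phi a}$ together with $\Phi(0)=0$, which holds since each restriction of $\Phi$ is a monoid homomorphism.

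The one computation that genuinely uses the interplay of both enrichments is functoriality under composition. Using the composition rule $[f,g]\circ[h,k] = [f\circ h + g\circ k,\, f\circ k + g\circ h]$ of Definition~\ref{def:Grothendieck_completion} together with the additivity and functoriality of $\Phi$, one expands $\hat\Phi([f,g]\circ[h,k])$ into four terms; the same four terms arise from $\big(\Phi(f)-\Phi(g)\big)\circ\big(\Phi(h)-\Phi(k)\big)$ upon invoking the \emph{bilinearity of composition} in $\mathscr{B}$, which is precisely the content of its $\mathsf{AbGrp}$-enrichment. This is the step I expect to be the main, though still routine, obstacle, as it is where the group structure on the hom-sets and the distributivity of composition must be used in tandem. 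Having verified that $\hat\Phi$ is a well-defined $\mathsf{AbGrp}$-enriched functor with $\mathsf{U}(\hat\Phi)\circ\imath_{\mathscr{A}}=\Phi$, the universal property is established. Finally I would note that the resulting bijection between $\mathsf{AbGrp}$-enriched functors $\Gr(\mathscr{A})\to\mathscr{B}$ and $\mathsf{CMon}$-enriched functors $\mathscr{A}\to\mathsf{U}(\mathscr{B})$ is natural in both arguments and compatible with the enriched natural transformations, on which $\Gr$ acts by $\eta\mapsto[\eta,0]$, so that the universal arrows assemble into the claimed adjunction of $2$-functors.
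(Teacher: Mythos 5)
Your proof is correct and follows essentially the same route as the paper, which also forces $\hat{\Phi}$ on objects by strict commutativity and then invokes the universal property of the ordinary Grothendieck functor hom-set by hom-set; your explicit formula $\hat{\Phi}([f,g])=\Phi(f)-\Phi(g)$ is exactly what that universal property produces. The only difference is that you spell out the verification that the hom-wise group homomorphisms assemble into an $\mathsf{AbGrp}$-enriched functor (the composition check using bilinearity of composition in $\mathscr{B}$), a step the paper's terse proof leaves implicit.
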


\begin{proof}
On objects, $\hat{\Phi}$ has to coincide with $\Phi$ in order to obtain a strictly commutative diagram.
Fixing two objects in $\mathscr{A}$, the assertion follows from the corresponding universal property of the ordinary Grothendieck functor applied to each monoid of morphisms.
\end{proof}

Let us now specialise to the $\mathsf{CMon}$-enriched categories $h_1 \CatLBGrb^\nabla(M)$ and $\CatHLBdl_{\rm fus}^{\nabla}(\CL M)$ of interest to us.
Applying the Grothendieck 2-functor $\Gr$ has a non-trivial effect only on $h_1 \CatLBGrb^\nabla(M)$, since $\CatHLBdl_{\rm fus}^{\nabla}(\CL M)$ is already a category in $\CatCat^{\mathsf{AbGrp}}$. Strictly speaking, Theorem~\ref{st:Transgression_is_functorial} asserts that transgression $\CT$ is a functor
\begin{equation}
 \CT: h_1 \CatLBGrb^\nabla(M) \longrightarrow \mathsf{U} \big( \CatHLBdl_{\rm fus}^{\nabla}(\CL M) \big)
\end{equation}
between monoidal categories enriched over commutative monoids.
Hence using Proposition~\ref{prop:Gruniversal} we can immediately infer the following result.

\begin{theorem}\label{thm:7.14}
	The transgression functor, as defined in Section~\ref{sect:Transgression_definition}, has a unique extension to a functor
	\begin{equation}
		\hat{\CT}: \Gr \big( h_1 \CatLBGrb^\nabla(M) \big) \longrightarrow \CatHLBdl_{\rm fus}^{\nabla}(\CL M)
	\end{equation}
	of closed and symmetric monoidal additive categories enriched over abelian groups.
\end{theorem}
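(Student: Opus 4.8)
The plan is to derive $\hat{\CT}$ directly from the universal property of the Grothendieck completion, and then to promote the resulting additive functor to one respecting the full closed symmetric monoidal structure. First I would apply Proposition~\ref{prop:Gruniversal} with $\mathscr{A}=h_1\CatLBGrb^\nabla(M)$ and $\mathscr{B}=\CatHLBdl_{\rm fus}^{\nabla}(\CL M)$. By Theorem~\ref{st:Transgression_is_functorial} and the enrichment observations preceding it, transgression is a $\mathsf{CMon}$-enriched functor $\CT: h_1\CatLBGrb^\nabla(M) \longrightarrow \mathsf{U}\big(\CatHLBdl_{\rm fus}^{\nabla}(\CL M)\big)$; the universal property then produces a unique $\mathsf{AbGrp}$-enriched functor $\hat{\CT}$ with $\mathsf{U}(\hat{\CT})\circ\imath_{\mathscr{A}}=\CT$. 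This settles existence, uniqueness, and the enrichment over abelian groups simultaneously, with $\hat{\CT}$ acting on morphisms by the difference formula $\hat{\CT}([f,g])=\CT(f)-\CT(g)$.

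Next I would verify that $\hat{\CT}$ preserves the monoidal and symmetric structures. Recall from the discussion following Definition~\ref{def:Grothendieck_completion} that the monoidal structure on $\mathscr{A}$ carries over to $\Gr(\mathscr{A})$ and that the inclusion $\imath_{\mathscr{A}}$ is strict monoidal. Since $\Gr(\mathscr{A})$ and $\mathscr{A}$ share the same objects with the same tensor product on objects, the coherence isomorphisms exhibiting $\CT$ as a symmetric monoidal functor serve verbatim as those for $\hat{\CT}$. The only point requiring checking is naturality over the enlarged morphism groups, and this reduces to naturality over the $\imath$-images by additivity of $\hat{\CT}$ together with the bilinearity of $\otimes$ with respect to the group structure on hom-sets. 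As $\CT$ is symmetric monoidal by Theorem~\ref{st:Transgression_is_functorial}, it follows that $\hat{\CT}$ is symmetric monoidal.

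The remaining and most delicate point is the closed structure. Here I would use that in both $h_1\CatLBGrb^\nabla(M)$ and $\CatHLBdl_{\rm fus}^{\nabla}(\CL M)$ the internal hom is supplied by rigidity: the internal hom of $\CL_1,\CL_2$ is $\CL_1^*\otimes\CL_2$, with evaluation and coevaluation furnished by the canonical isomorphism $\delta_{\CL}$ of \eqref{eq:delta_G} and its inverse, and the adjunction $-\otimes\CL_2\dashv\CL_2^*\otimes-$ from the end of Section~\ref{sect:2-cat_of_BGrbs}. Because the dual $\CL\mapsto\CL^*$ and the tensor product are part of the monoidal data descending to $\Gr$, the completion $\Gr\big(h_1\CatLBGrb^\nabla(M)\big)$ inherits duals, hence a closed structure, whose structure morphisms are the $\imath$-images of $\delta_{\CL}$ and $\delta_{\CL}^{-1}$. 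I would then check that $\hat{\CT}$ carries these to the corresponding duality data on loop space; this follows from the fact that $\CT$ already intertwines the Riesz dual $\Theta$ with the dual of hermitean line bundles, as witnessed by the identity $\CT\circ\frh=h$ established in Section~\ref{sect:Transgression_of_algebraic_data}. Since a symmetric monoidal functor between rigid categories that preserves duals automatically preserves internal homs and the associated adjunctions, $\hat{\CT}$ is closed.

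I expect the genuine obstacle to be bookkeeping rather than conceptual, concentrated entirely in the closed structure: one must confirm that the evaluation and coevaluation morphisms, which a priori live only in the commutative monoids $\shom_{\mathscr{A}}$, behave correctly as elements of the group completion and continue to satisfy the triangle identities, and that $\hat{\CT}$, specified only through the difference formula on morphisms, respects those identities. Once the triangle identities in $\Gr(\mathscr{A})$ are recognised as the $\imath$-images of the identities already holding in $\mathscr{A}$, additivity of $\hat{\CT}$ closes the argument.
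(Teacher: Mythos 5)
Your proposal is correct and takes essentially the same route as the paper: the paper's entire proof consists of observing that Theorem~\ref{st:Transgression_is_functorial} exhibits $\CT$ as a $\mathsf{CMon}$-enriched functor into $\mathsf{U}\big(\CatHLBdl_{\rm fus}^{\nabla}(\CL M)\big)$ (the target being already $\mathsf{AbGrp}$-enriched) and then invoking the universal property of Proposition~\ref{prop:Gruniversal}. The compatibility of $\hat{\CT}$ with the symmetric monoidal and closed structures, which you verify explicitly via the descent of $\otimes$ and the duality data $\delta_{\CL}$ to $\Gr$ together with additivity, is left implicit in the paper, so your additional checks are a correct elaboration rather than a different argument.
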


\begin{remark}
	The morphism sets in the Grothendieck completion $\Gr (h_1 \CatLBGrb^\nabla(M))$ of the category of bundle gerbes on $M$ are certain subgroups of the twisted differential K-theory groups $\hat{K}^0(M,\CL)$.
	In particular, $\Gr(\shom_{\CatLBGrb^\nabla(M)}(\CI_0,\CL)) = \Gr(\Gamma(M,\CL))$ is the subgroup of $\hat{K}^0(M,\CL)$ with vanishing auxiliary form $\rho=0$, defined by $\CL$-twisted hermitean vector bundles with connection on $M$, cf.~\cite{Bunke:1011.6663,Park:1602.02292}.
\end{remark}

\subsection{Kostant-Souriau prequantisation}
\label{sect:KS_prequantisation}

Let us finally relate transgression to higher geometric
prequantisation. Consider the space of smooth (diffeological) sections
of the transgression line bundle $\CT \CL$ on loop space. Although the
definition of a Hilbert space structure on these sections is subtle,
we can nevertheless take their covariant derivatives using the
transgressed connection on $\CT \CL$ from
Section~\ref{sect:connection_on_CTCG} and make some statements.

Let us start from a quantisable 2-plectic manifold $(M,\omega)$ and
let $\CL$ be a prequantum line bundle gerbe. The transgression
$\CT(\omega)$ of $\omega$ to loop space $\CL M$ yields a closed 2-form,
which has as its kernel the vector fields generating
reparameterisations of the loops, cf.\
\cite{Wurzbacher:1995gb,0817647309,Saemann:2012ab}. It is therefore
called \emph{weakly symplectic}, and it would turn into a proper
symplectic form upon restriction to unparameterised loop space or knot
space $\CK M$. Here we differ from the usual discussions of
quantisation of loop space as done e.g.\
in~\cite{Pressley:1988aa,Popov:1992am}: Instead of using the weakly
symplectic structure on loop space obtained from the canonical 1-form
on $\CL M$, we work with a weakly symplectic structure induced by
transgression from a curvature 3-form on $M$.

We take the Hamiltonian 1-forms introduced in
Section~\ref{ssec:multisymplectic}, form the corresponding Hamiltonian
vector fields, and pull them back to $\CL M$. From the results in
Appendix~\ref{sect:mapping_space_geometry}, we observe that for
$\alpha,\beta\in\Omega^1_{\rm Ham}(M)$ one has
\begin{equation}
 \iota_{X_\alpha}\omega=-\dd \alpha
\end{equation}
if and only if
\begin{equation}
\iota_{X_\alpha}\CT(\omega)=-\dd \, \CT (\alpha) ~,
\end{equation}
where we denoted the pullback of a vector field $X$ on $M$ to $\CL M$ by the same symbol. This implies that
\begin{equation}
 \CT(\{\alpha,\beta\}_\omega)
 :=\CT(-\iota_{X_\alpha}\iota_{X_\beta}\omega)=-\iota_{X_\alpha}\iota_{X_\beta}\CT(\omega)
 =\{\CT(\alpha) ,\CT(\beta) \}_{\CT(\omega)}~,
\end{equation}
that is, transgression is a Lie 2-algebra homomorphism from
$\Pi_{M,\omega}$ to the strict discrete Lie 2-algebra $C^\infty(\CK
M)\rightrightarrows C^\infty(\CK M)$ with Poisson bracket induced by $\CT(\omega)$.

We can now take the covariant
derivative of a section along the pullback to $\CL M$ of a Hamiltonian
vector field and
arrive at a higher version of statement (8) from Section~\ref{ssec:gq_outline}.
\begin{proposition}\label{prop:KS_prequantisation_map}
Let $X_\alpha$ be the Hamiltonian vector field of $\alpha\in\Omega^1_{\rm Ham}(M)$. Then the map
\begin{equation}
 \CQ:\ \alpha \longmapsto \nabla^{\CT \CL}_{X_\alpha} + 2\pi \, \di\, \CT(\alpha)
\end{equation}
defines a representation\footnote{By a {\em representation of a Lie 2-algebra} on a vector space we simply mean a Lie 2-algebra homomorphism into the endomorphism Lie algebra of this vector space.} of the Lie 2-algebra $\Pi_{M,\omega} =
\big(\Omega^1_{\rm Ham}(M)\ltimes C^\infty(M)\rightrightarrows
\Omega^1_{\rm Ham}(M) \big)$ from Section~\ref{ssec:multisymplectic} on $\Gamma(\CL M,\CT \CL)$, which factors through $h_0\Pi_{M,\omega}$.
\end{proposition}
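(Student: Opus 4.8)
The plan is to verify that $\CQ$ is a homomorphism of Lie 2-algebras by checking that it intertwines the respective structure maps, bracket functors, and Jacobiators, and that it descends through $h_0\Pi_{M,\omega}$. First I would recall that $\Pi_{M,\omega}$ has objects $\Omega^1_{\rm Ham}(M)$ and morphisms $\Omega^1_{\rm Ham}(M)\ltimes C^\infty(M)$, whereas the target is the strict \emph{discrete} Lie 2-algebra on $C^\infty(\CK M)$ in which all morphisms are identities. Since the target is discrete, a Lie 2-algebra homomorphism into it must send any morphism $(\alpha,f)$ to an identity, which forces $\CQ(\alpha)=\CQ(\alpha+\dd f)$; thus the first step is to show that $\CQ$ is insensitive to the exact shift $\alpha\mapsto\alpha+\dd f$. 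This is where the preceding remark on the fake-curvature transgression formula and the weak-symplectic identity $\iota_{X_\alpha}\CT(\omega)=-\dd\,\CT(\alpha)$ enter: one computes that $X_{\alpha+\dd f}=X_\alpha$ (since $\dd f$ has vanishing Hamiltonian vector field, as $\iota_{X}\omega$ determines $X$ and $\dd\,\dd f=0$) and that $\CT(\dd f)=\dd\,\CT(f)$ is exact, so that the added term $2\pi\,\di\,\CT(\dd f)$ together with the shift in the connection term combine into an inner (gauge) equivalence acting trivially on $\Gamma(\CL M,\CT\CL)$ after passing to $h_0$. This establishes the claim that $\CQ$ factors through $h_0\Pi_{M,\omega}$.

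Next I would verify the bracket compatibility. Having reduced to $h_0\Pi_{M,\omega}$, which is just the Lie algebra $(\Omega^1_{\rm Ham}(M)/\dd C^\infty(M))$ with bracket induced by $\{\alpha,\beta\}_\omega=-\iota_{X_\alpha}\iota_{X_\beta}\omega$, it suffices to check that
\begin{equation}
[\CQ(\alpha),\CQ(\beta)]=\CQ(\{\alpha,\beta\}_\omega)
\end{equation}
as endomorphisms of $\Gamma(\CL M,\CT\CL)$. Expanding the left-hand side using $\CQ(\alpha)=\nabla^{\CT\CL}_{X_\alpha}+2\pi\,\di\,\CT(\alpha)$ produces four terms: the commutator of covariant derivatives, which by the standard prequantisation identity equals $\nabla^{\CT\CL}_{[X_\alpha,X_\beta]}+F_{\nabla^{\CT\CL}}(X_\alpha,X_\beta)$; the cross terms $2\pi\,\di\,(X_\alpha\cdot\CT(\beta)-X_\beta\cdot\CT(\alpha))$; and a vanishing scalar commutator. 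Here I invoke the already-established identities $[X_\alpha,X_\beta]=X_{\{\alpha,\beta\}_\omega}$ (the pulled-back version of statement~\eqref{it:observables}) and the curvature formula $F_{\nabla^{\CT\CL}}=\CT(H)=-2\pi\,\di\,\CT(\omega)$ from \eqref{eq:curvature_of_lbdl_on_LM} together with Definition~\ref{eq:def_prequantum_line_bundle_gerbe}. The Cartan-calculus bookkeeping then reduces the cross terms plus the curvature term to $2\pi\,\di\,\CT(\{\alpha,\beta\}_\omega)$, using $\iota_{X_\alpha}\CT(\omega)=-\dd\,\CT(\alpha)$ and the transgression relation $\CT(\{\alpha,\beta\}_\omega)=\{\CT(\alpha),\CT(\beta)\}_{\CT(\omega)}$ derived just above the proposition.

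The main obstacle I anticipate is the careful treatment of the weak-symplectic degeneracy: $\CT(\omega)$ is only \emph{weakly} symplectic on $\CL M$, with kernel the reparameterisation vector fields, so all identities involving $\iota_{X_\alpha}\CT(\omega)$ must be understood modulo this kernel, and the passage to knot space $\CK M$ is what makes the Poisson bracket on $C^\infty(\CK M)$ well-defined. I would therefore take care to phrase the bracket computation on $\CK M$ (or equivalently for functions constant along reparameterisation orbits), ensuring that the pulled-back Hamiltonian vector fields $X_\alpha$ project consistently and that the Poisson bracket $\{\CT(\alpha),\CT(\beta)\}_{\CT(\omega)}$ is evaluated in the nondegenerate quotient. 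A secondary subtlety is confirming that the curvature sign convention in \eqref{eq:curvature_of_lbdl_on_LM}, flagged in the remark as opposite to Waldorf's, matches the sign in the prequantisation condition $H=-2\pi\,\di\,\omega$, so that the factor of $2\pi\,\di$ in $\CQ$ comes out consistently; I would fix conventions once at the outset and track them through the commutator expansion. Finally, since the target Lie 2-algebra is discrete, the Jacobiator compatibility is automatic once the bracket relation holds, because the Jacobiator in $\Pi_{M,\omega}$ (an exact morphism, cf.\ \eqref{eq:jacobiator_exact}) maps to an identity, and I would note this as a short closing remark rather than a separate computation.
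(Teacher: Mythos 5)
Your bracket computation is essentially the paper's proof: expand $[\CQ(\alpha),\CQ(\beta)]$, use $\big[\nabla^{\CT\CL}_{X_\alpha},\nabla^{\CT\CL}_{X_\beta}\big]=\nabla^{\CT\CL}_{[X_\alpha,X_\beta]}+\iota_{X_\alpha\wedge X_\beta}F_{\nabla^{\CT\CL}}$ together with $[X_\alpha,X_\beta]=X_{\{\alpha,\beta\}_\omega}$, the curvature formula $F_{\nabla^{\CT\CL}}=\CT(H)=-2\pi\,\di\,\CT(\omega)$, and the compatibility of $\CT$ with $\dd$ and $\pounds$; the cross terms yield $4\pi\,\di\,\CT(\{\alpha,\beta\}_\omega)$, the curvature term $-2\pi\,\di\,\CT(\{\alpha,\beta\}_\omega)$, and the sum is $\CQ(\{\alpha,\beta\}_\omega)$. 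Your closing remark on the Jacobiator also matches the paper's.

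The genuine gap is in your justification of the factoring through $h_0\Pi_{M,\omega}$. You claim $\CT(\dd f)=\dd\,\CT(f)$ ``is exact'' and that this combines with a shift in the connection term into ``an inner (gauge) equivalence acting trivially after passing to $h_0$''. This does not work as stated. First, for $\Sigma=S^1$ transgression lowers degree by one, so $\CT(f)$ is not defined for a function $f$ and the chain-map identity cannot be invoked in this degree; moreover $\CT(\dd f)$ is a \emph{function} on $\CL M$, for which ``exact'' has no meaning. Second, the quotient $h_0$ is taken in the \emph{source}, not the target: if $\CT(\dd f)$ were a non-zero function, then $\CQ(\alpha+\dd f)=\CQ(\alpha)+2\pi\,\di\,\CT(\dd f)$ would be a genuinely different endomorphism of $\Gamma(\CL M,\CT\CL)$ and $\CQ$ would simply fail to descend; no equivalence in the target is available to absorb it. What is actually needed, and what makes the claim true, is the on-the-nose vanishing
\begin{equation}
\CT(\dd f)_{|\gamma}=\int_{S^1}\,\gamma^*\dd f=0
\end{equation}
for every loop $\gamma$, by Stokes' theorem on the closed circle; together with $X_{\alpha+\dd f}=X_\alpha$ this gives $\CQ(\alpha+\dd f)=\CQ(\alpha)$ exactly. (The same vanishing is what lets you drop the exact piece $\dd\,\iota_{X_\alpha}\beta$ of $\pounds_{X_\alpha}\beta$ in the cross-term computation.) Finally, your anticipated obstacle concerning the weak degeneracy of $\CT(\omega)$ and the retreat to knot space $\CK M$ is a red herring: the vector fields $X_\alpha$ are pulled back from $M$ rather than obtained by inverting $\CT(\omega)$, so all identities such as $\iota_{X_\alpha}\CT(\omega)=-\dd\,\CT(\alpha)$ hold as written on $\CL M$ and no passage to a nondegenerate quotient is required.
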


\begin{proof}
For $\alpha,\beta \in\Omega^1_{\rm Ham}(M)$ and $\psi\in \Gamma(\CL M,\CT \CL)$ we compute
\begin{align}
	\big[ \CQ(\alpha),\, \CQ(\beta) \big] \psi
	&= \big( \nabla^{\CT \CL}_{X_\alpha} + 2\pi \, \di\, \CT(\alpha) \big) \circ \big( \nabla^{\CT \CL}_{X_\beta} + 2\pi\, \di\, \CT(\beta) \big) \psi - (\alpha \leftrightarrow \beta) \nt[4pt]
	&= \big[ \nabla^{\CT \CL}_{X_\alpha},\, \nabla^{\CT \CL}_{X_\beta} \big] \psi
	+ 2\pi\, \di\, \big( \nabla^{\CT \CL}_{X_\alpha}\, \CT(\beta) - \nabla^{\CT \CL}_{X_\beta}\, \CT(\alpha) \big) \psi \nt[4pt]
	&= \iota_{X_\alpha\wedge X_\beta}F_{\nabla^{\CT \CL}}\, \psi + \nabla^{\CT \CL}_{[X_\alpha,X_\beta]} \psi
	+ 2\pi\, \di\, \big( \CT (\iota_{X_\alpha}\, \dd\beta) - \CT (
          \iota_{X_\beta}\, \dd\alpha) \big) \psi \nt[4pt]
	&= \iota_{X_\alpha\wedge X_\beta}\, \CT(H)\, \psi
	+ \nabla^{\CT \CL}_{X_{\{\alpha,\beta\}_\omega}}\, \psi
	+ 4\pi\, \di\, \CT \big( \{ \alpha,\beta \}_\omega \big) \psi~,
\end{align}
where we used the fact that transgression commutes with the exterior
derivative (see Appendix~\ref{sect:mapping_space_geometry}). Now we
use $\iota_{X_\alpha\wedge X_\beta}\, \CT(H) = \int_{S^1}\, \ev^* \iota_{X_\alpha \wedge X_\beta} H = \int_{S^1}\,\ev^*( - 2\pi\, \di\, \{ \alpha,\beta \}_\omega) = - 2\pi \, \di\, \CT( \{ \alpha,\beta \}_\omega)$ to finally get
\begin{align}
	\big[ \CQ(\alpha),\, \CQ(\beta) \big] \psi = \nabla^{\CT \CL}_{X_{\{\alpha,\beta\}_\omega}}\, \psi
	+ 2\pi\, \di\, \CT \big( \{ \alpha,\beta \}_\omega \big) \psi = \CQ \big( \{\alpha,\beta \}_\omega \big)\,\psi~,
\end{align}
which shows that $\CQ$ is indeed a representation of $h_0\Pi_{M,\omega}$
on $\Gamma(\CL M,\CT \CL)$.
\end{proof}

As the Jacobiator in $\Pi_{M,\omega}$ is an exact 1-form on $M$, cf.\ \eqref{eq:jacobiator_exact}, it vanishes under the transgression map.
Therefore the nonassociativity in the Lie 2-algebra $\Pi_{M,\omega}$ is invisible from the loop space perspective.
This is consistent with the heuristic statement that a higher quantum theory on $M$ should correspond to an ordinary quantum theory on $\CL M$.

It would be desirable to define a higher version of the Kostant-Souriau prequantisation map directly on $M$ instead of on $\CL M$. A direct generalisation of the quantisation map $Q$, however, relies on the notion of a covariant derivative of a section of a bundle gerbe. At this point, there is no such construction evident.
One of the core obstructions to simply generalising the usual differential geometric formalism of a covariant derivative is the fact that the set of sections of $\CL$ is not a vector space.
Even if we put a topology and smooth structure on its collection of objects, so that we can define smooth curves of objects, the derivative of such a curve would not be an object of the 2-vector space.
A hint on how one might want to proceed is contained in Section~\ref{ssec:observables}, where an embedding of the Lie 2-algebra of observables into the endomorphisms on the 2-Hilbert space $\Gamma(M,\CL)$ was given. We leave further exploration of this point to future work.

\subsection{Dimensional reduction}

We conclude this section with a few comments concerning dimensional reductions. As stated in Section~\ref{sec:motivation}, one of our motivations for studying prequantum line bundle gerbes stems from the conjecture that quantised 2-plectic manifolds arise naturally in M-theory. The reductions of the underlying M-theory configurations to string theory involves in many cases a reduction of the 2-plectic manifold to a symplectic manifold. It is therefore important that our constructions similarly exhibit ``nice'' reductions to those of ordinary geometric quantisation.

For simplicity, let us restrict to the case $M=\FR^3$ of Section~\ref{ssec:R3_2_Hilbert_space}. We choose the $x^3$-direction as the M-theory direction and find that the 2-plectic form $\omega=\dd x^1\wedge \dd x^2\wedge \dd x^3$ reduces as expected to a symplectic form, $\omega_{\rm red}=\iota_{\der{x^3}}\omega=\dd x^1\wedge \dd x^2$, and the algebra of observables reduces to that on $\FR^2$ \cite{Ritter:2013wpa}.

The prequantum line bundle gerbe $\CI_\rho$ with curving
$\rho=-\tfrac{2\pi\,\di}{3!}\, \epsilon_{ijk} \, x^i\, \dd x^j\wedge
\dd x^k\in \di\,\Omega^2(M)$ reduces to a prequantum line bundle $I_r$
with connection $r=-\tfrac{2\pi\,\di}{2}\, \epsilon_{ij3}\, x^i\, \dd
x^j$ of curvature $\dd r= -2\pi\,\di\, \omega_{\rm red}$. We saw that
the sections of $\CI_\rho$ are (trivial) hermitean vector bundles over
$\FR^3$ with connection. The latter are given in terms of global
$\au(n)$-valued 1-forms $a$. Restricting to forms which are constant
along $x^3$ and contracting these with $\der{x^3}$ yields sections of
$I_a\otimes (\FR^2\times\au(n))$. In order to arrive at functions valued in $\au(1)$, we can prescribe an additional trace over the sections in the dimensional reduction.

The action of the symmetry group can be likewise reduced. In the diagram \eqref{eq:symmetry_bdl_gerbe}, the trivial line bundle $I_r$ now corresponds to a function from $\Omega(\sSpin(2)\ltimes \FR^2)$ to $\sU(1)$. The action of the string 2-group $\widehat{ \Omega_1\sSpin(2)}\xrightarrow{~\theta~} \CP\sSpin(2)$ therefore collapses to an action of $\CP\sSpin(2)$ onto $\CP(\sSpin(2)\ltimes \FR^2)$ and $\Omega(\sSpin(2)\ltimes \FR^2)$, which trivially reduces to the usual action of $\sSpin(2)$ on the base manifold $\FR^2$.

Things become a little more elegant on loop space (or more precisely knot space), the actual domain of boundaries of open M2-branes. The 2-plectic form $\omega$ is transgressed to 
\begin{equation}
 \CT(\omega)=\int_{S^1}\, \dd \tau~\frac{1}{2}\, \epsilon_{ijk}\,
 \xd^{k}(\tau) \, \dd x^{i}(\tau) \wedge \dd x^{j}(\tau) ~,
\end{equation}
where $x^i(\tau)$ are canonical local coordinates on $\CL M$ and the dot denotes a $\tau$-derivative. Note that $\CT(\omega)$ is reparameterisation invariant and thus indeed descends to knot space $\CK M$. As discussed in Section~\ref{sect:KS_prequantisation}, one can consider prequantisation on $\CL M$. The reduction from M-theory to string theory arises by compactifying the fibres of the trivial bundle $\FR^3\rightarrow \FR^2$ to $\FR^2\times S^1\rightarrow \FR^2$ and subsequently forcing the boundaries of M2-branes to be contained in one of these fibres; that is, all loops in $\CL M$ are aligned in the $x^3$-direction,
\begin{equation}
 x^i(\tau)=x^{i}(0)+\delta^{i3}\, \tau \ ,
\end{equation}
and we denote the resulting loop space by $\CL M_{\rm red}\cong
\FR^2\times S^1$. If we restrict ourselves to the zero modes $x^i=
x^i(0)$ along the fibres, then functions on $\CL M_{\rm red}$ turn into functions on $\FR^2$. Under this reduction, the transgressed 2-plectic form $\CT(\omega)$ reduces to
\begin{equation}
 \CT(\omega)\big|_{\CL M_{\rm red}}=\tfrac{1}{2}\, \epsilon_{ij3}\, \dd x^i\wedge \dd x^j= \omega_{\rm red}~,
\end{equation}
the usual symplectic form on $\FR^2$. The prequantum line bundle over $\CL M$, restricted to $\CL M_{\rm red}$ by pullback along the embedding, is trivial and its sections over $\CL M_{\rm red}$ become functions on $\FR^2$.

\section{Outlook: The non-torsion case}\label{sec:outlook}

Our approach, as it stands, requires considerable technical extension for prequantum bundle gerbes with non-torsion Dixmier-Douady class. While a full analysis of this case is beyond the scope of this paper, we present here some options for generalising our constructions in a suitable fashion.

If we had allowed bundle gerbe morphisms to have separable Hilbert
spaces as their fibre, then we could try to extend the notion of
morphisms of bundle gerbes to situations where the difference of the
Dixmier-Douady classes of source and target bundle gerbes is
non-torsion. In particular, we could have considered the natural
bundle gerbes over $S^3$ and $T^3$. Then a candidate for an inner product with the correct linearity behaviour would be the subset of 2-morphisms whose morphism of vector bundles consists of fibrewise Hilbert-Schmidt operators, as indicated in Remark~\ref{rmk:2Hspaces_and_Hilbert-Schmidt_ops}.

If we still assume 2-morphisms to be constructed from parallel morphisms of vector bundles, then kernels still exist and a morphism of bundle gerbes would still decompose into the eigenspaces of each object in $\langle (E,\alpha), (E,\alpha) \rangle$.
But since the inner product is chosen to consist of sections made from Hilbert-Schmidt operators, every such eigenbundle is of {\em finite rank} because $L^2(\CH) \subset \CK(\CH)$.
Since kernels are again morphisms between the source and target bundle gerbe of the original morphism, this would imply that there exists a finite-rank morphism between them.
This is in contradiction with the assumption that the difference of their Dixmier-Douady classes is non-torsion, due to the argument of Remark~\ref{rem:ddtorsion}.

Therefore the only morphisms between bundle gerbes, the difference of whose Dixmier-Douady classes is non-torsion, would have zero inner product with themselves; that is, there would exist only zero-norm objects in the 2-Hilbert space $\Gamma(M,\CL)$ whenever $\CL$ has non-torsion Dixmier-Douady class.

A possible way out of this problem would be to drop the parallel requirement on the 2-morphisms in $\CatLBGrb^\nabla(M)$.
We could then still define the inner product $\langle -,- \rangle$ on $\Gamma(M,\CL)$ as before by just replacing parallel sections with general sections.
This precisely amounts to taking the fibrewise Hilbert-Schmidt operators on the descent $\sfR(\Theta E \otimes F) \cong (\sfR\, \Theta E)^* \otimes \sfR F$. 
The resulting vector space $\Gamma(M,\sfR(\Theta E \otimes F)) \cong  L^2(\sfR\, \Theta E, \sfR F)$ is the space of sections of a Hilbert bundle, which is in particular a (possibly infinite-rank) hermitean vector bundle with connection.
Denoting the hermitean metric induced on $(\sfR\, \Theta E)^* \otimes \sfR F$ by $h_{(\sfR\, \Theta E)^* \otimes \sfR F}$, the vector space $L^2(\sfR\, \Theta E, \sfR F)$ is itself endowed with an inner product given by
\begin{equation}
	\prec \eps_1, \eps_2\succ\, = \int_M\, \dd \mu_M(x)~ h_{(\sfR\, \Theta E)^* \otimes \sfR F}(\eps_1(x),\eps_2(x))~.
\end{equation}
We have also amended the definition of 2-morphisms and the inner product accordingly so that there still exists a natural isomorphism
\begin{equation}
	\eta: \langle -,- \rangle \twoisom \big(2\shom_{\CatLBGrb^\nabla(M)} \big)_{|\shom_{\CatLBGrb^\nabla(M)}(\CI_0,\CL)}~.
\end{equation}
Although these conventions would allow us to circumvent the problems occuring whenever $\mathrm{dd}(\CL_2) - \mathrm{dd}(\CL_1)$ is non-torsion, this comes at the price of losing the semisimple abelian property of the morphism categories since non-parallel morphisms of vector bundles do not generally have kernels.
From the point of view of higher geometric quantisation this choice might nevertheless be reasonable, since the prequantum Hilbert space of geometric quantisation consists of {\em all} smooth sections of the prequantum line bundle $L$ independently of their compatibility with the connection on $L$.
These sections however would not lie in the domain of the
transgression functor, which is at the heart of our analysis in
Section~\ref{sect:Transgression}; they can either be interpreted as
not being related to the ordinary quantisation of the transgressed
system, or alternatively as representing physical properties of the
original system on $M$ which cannot be detected in geometric
quantisation on loop space $\CL M$.

One can easily imagine further generalisations of the whole framework we presented in this paper. Two particular aspects come to mind. Firstly, it has been suggested that ordinary manifolds are not the appropriate domain for higher quantisation, and one should switch instead to 2-spaces \cite{Ritter:2015ffa}. This would solve a number of minor issues, such as the restriction of the observable Lie 2-algebra to Hamiltonian 1-forms. 

Secondly, one might want want to work with less restrictive 2-vector spaces. From the representation theory of 2-groups, it is known that Kapranov-Voevodsky 2-vector spaces are problematic, see \cite{Barrett:2004zb,Elgueta:2007:53-92,Freidel:2012:0-0}. One could e.g. admit a larger variety of 2-bases, allowing for algebroids or more general structures. By Definition~\ref{def:2-vector_bundle}, this would lead to a generalised notion of line bundle gerbe coming with a more general set of sections. 

Finally, a more radical generalisation such as working with the sheaves of spectra of~\cite{Bunke:1311.3188} might be necessary. In the latter framework, the discussion of twisted differential K-theory seems to be less restrictive.

\subsection*{Acknowledgements}

We thank Danny Stevenson and Konrad Waldorf for helpful discussions and correspondence. Part of this work was done while R.J.S.\ was
visiting the Centro de Matem\'atica, Computa\c{c}\~{a}o e
Cogni\c{c}\~{a}o of the Universidade de Federal do ABC in S\~ao Paulo,
  Brazil during June--July 2016, whom he warmly thanks for support and hospitality during his stay there.
This work was supported in part by the Action MP1405 QSPACE from 
the European Cooperation in Science and Technology (COST). 
The work of S.B. was supported by a James Watt Scholarship. 
The work of C.S. and R.J.S.\ is supported in part by the Consolidated Grant ST/L000334/1 
from the UK Science and Technology Facilities Council (STFC). The work of R.J.S. was supported in part by
the Visiting Researcher Program
Grant 2016/04341-5 from the Funda\c{c}\~{a}o de Amparo \'a Pesquisa do
Estado de S\~ao Paulo (FAPESP, Brazil).

\appendices

\subsection{Special morphisms of line bundle gerbes}
\label{app:special_morphisms}

Here we collect some technical constructions and statements necessary in several places in the main text. Lemmas~\ref{st:t_mu-lemma} and \ref{st:dd-def_and_properties} are relevant for the definition of identity 2-morphisms.

For a surjective submersion $\sigma^Y:Y\thra M$, denote by $\Delta: Y \rightarrow Y{\times_Y}Y\subset Y^{[2]},\, y \mapsto (y,y)$ the canonical diffeomorphism of $Y$ into the diagonal, which induces further maps $\Delta_{112}:Y^{[2]}\rightarrow Y\times_YY\times_MY\subset Y^{[3]}$ and $\Delta_{122}:Y^{[2]}\rightarrow Y\times_MY\times_YY\subset Y^{[3]}$.
\begin{lemma}
	\label{st:t_mu-lemma}
	Let $\CL=(L,\mu,\sigma^Y)$ be a line bundle gerbe over $M$. There is a canonical isomorphism of line bundles with connection $t_\mu:\Delta^*L\rightarrow Y\times \FC$ given by\footnote{For ease of notation, here we do not indicate the pullback of $L$ along the inclusion $Y\times_Y Y\subset Y^{[2]}$.}
	\begin{equation}
	 \Delta^*L=\Delta^*L\otimes \Delta^*L\otimes\Delta^* L^*\xrightarrow{~\Delta^*\mu\otimes \unit~}\Delta^*L\otimes \Delta^*L^*=Y\times \FC~,
	\end{equation}
	with 
	\begin{equation}\label{eq:prop1_t}
	t_{\mu,y_1} \otimes \mathbbm{1} = (\Delta_{112}^* \mu)_{(y_1,y_1,y_2)}\eand
	\mathbbm{1} \otimes t_{\mu,y_2} = (\Delta_{122}^* \mu)_{(y_1,y_2,y_2)}~.
	\end{equation}
	Moreover, $t_\mu^{-{\rm t}} = t_{\mu^{-{\rm t}}}$.
\end{lemma}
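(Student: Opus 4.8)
The plan is to avoid unwinding the evaluation/coevaluation description of $t_\mu$ entirely, and instead to argue through the characterising identities \eqref{eq:prop1_t}. The first of these, $t_{\mu,y_1}\otimes\unit=(\Delta_{112}^*\mu)_{(y_1,y_1,y_2)}$, already pins $t_\mu$ down: for each $(y_1,y_2)\in Y^{[2]}$ the fibre $L_{(y_1,y_2)}$ is a nonzero line, so the identity factor $\unit_{L_{(y_1,y_2)}}$ can be cancelled fibrewise, and hence any morphism $\Delta^*L\to Y\times\FC$ satisfying this relation must coincide with $t_\mu$. I would first record this uniqueness statement, since the whole argument rests on it.

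Next I would invoke the construction of the Lemma not only for $\CL=(L,\mu,\sigma^Y)$ but also for its dual bundle gerbe $\CL^*=(L^*,\mu^{-{\rm t}},\sigma^Y)$, whose multiplication is the dual multiplication introduced earlier. The Lemma then provides $t_{\mu^{-{\rm t}}}:\Delta^*(L^*)\to Y\times\FC$, characterised by $t_{\mu^{-{\rm t}},y_1}\otimes\unit_{L^*_{(y_1,y_2)}}=(\Delta_{112}^*\mu^{-{\rm t}})_{(y_1,y_1,y_2)}$. Under the canonical identifications $(\Delta^*L)^*=\Delta^*(L^*)$ and $\FC^*\cong\FC$, the morphism $t_\mu^{-{\rm t}}$ has precisely the same source and target as $t_{\mu^{-{\rm t}}}$, so it suffices to show that $t_\mu^{-{\rm t}}$ obeys this same defining relation.

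The core step is then a short formal manipulation: I apply $(-)^{-{\rm t}}$ to the first identity in \eqref{eq:prop1_t} for $\CL$. The operation $(-)^{-{\rm t}}$ is direction-preserving (unlike the bare transpose), it distributes over tensor products, $(\phi\otimes\psi)^{-{\rm t}}=\phi^{-{\rm t}}\otimes\psi^{-{\rm t}}$, it fixes identities, $\unit^{-{\rm t}}=\unit$, and it commutes with pullback, so that $(\Delta_{112}^*\mu)^{-{\rm t}}=\Delta_{112}^*(\mu^{-{\rm t}})$. These yield $t_{\mu,y_1}^{-{\rm t}}\otimes\unit_{L^*_{(y_1,y_2)}}=(\Delta_{112}^*\mu^{-{\rm t}})_{(y_1,y_1,y_2)}$, which is exactly the relation characterising $t_{\mu^{-{\rm t}},y_1}$; by the uniqueness of the first step this forces $t_\mu^{-{\rm t}}=t_{\mu^{-{\rm t}}}$. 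I would close by noting that the transpose is taken with respect to the dual connection $\nabla^{L^*}$ and that $\mu^{-{\rm t}}$ is parallel because $\mu$ is, so the identity holds at the level of line bundles with connection, as stated.

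I do not expect a substantive analytic obstacle here; the argument is a uniqueness-plus-naturality verification, and the only genuine care lies in the bookkeeping of the canonical dual identifications and in confirming the stated interaction of $(-)^{-{\rm t}}$ with $\otimes$ and $\unit$. It is precisely to sidestep the harder route that I would choose the characterisation approach: unwinding $t_\mu$ and $t_{\mu^{-{\rm t}}}$ into their evaluation and coevaluation data would instead require matching $\mathrm{ev}_L^{\rm t}$ with $\mathrm{coev}_{L^*}$ and juggling the double-dual identification $L^{**}\cong L$, i.e.\ verifying the duality (snake) coherences by hand, which is exactly what the appeal to \eqref{eq:prop1_t} renders unnecessary.
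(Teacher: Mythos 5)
Your argument is correct, but it takes a genuinely different route from the paper's. The paper disposes of the identity $t_\mu^{-{\rm t}} = t_{\mu^{-{\rm t}}}$ in one line: it asserts that $t_\mu \otimes t_{\mu^{-{\rm t}}}$ is the identity morphism on $\Delta^*L \otimes \Delta^*L^* = Y\times\FC$, i.e.\ that the two trivialisations are dual to one another under the canonical pairing, and the claim then follows from the elementary fact that the unique trivialisation $s$ of a line bundle's dual satisfying $t \otimes s = \unit$ is $s = t^{-{\rm t}}$. You never invoke this pairing fact; instead you promote the first identity of \eqref{eq:prop1_t} to a \emph{defining} property (uniqueness by cancelling the nonzero line factor $L_{(y_1,y_2)}$, which is legitimate since for every $y_1$ the pair $(y_1,y_1)$ lies in $Y^{[2]}$), apply the Lemma to the dual gerbe $\CL^*=(L^*,\mu^{-{\rm t}},\sigma^Y)$, and transport the characterising identity for $\CL$ to the one for $\CL^*$ using that $(-)^{-{\rm t}}$ distributes over $\otimes$, fixes identities, and commutes with pullbacks --- properties the paper itself relies on elsewhere, e.g.\ in the construction of the Riesz functor $\Theta$. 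What the paper's route buys is brevity, at the cost of leaving the pairing assertion unproven (verifying it would require unwinding the composition defining $t_\mu$, essentially the evaluation/coevaluation juggling you deliberately avoid). What your route buys is self-containedness modulo \eqref{eq:prop1_t} itself (which both proofs take as given, citing Waldorf) and an explicit, reusable uniqueness principle; the two arguments are of course logically linked, since your conclusion immediately yields the paper's pairing fact and conversely.
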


\begin{proof}
  The definition of the isomorphism and the proof of \eqref{eq:prop1_t} is found in \cite{Waldorf:2007aa}. The remaining property is a direct consequence of the fact that $t_\mu\otimes t_{\mu^{-{\rm t}}}$ is the identity morphism on $\Delta^*L\otimes \Delta^* L^*=Y\times \FC$.
\end{proof}

Another special isomorphism of vector bundles is obtained from every 1-morphism of line bundle gerbes as follows.
\begin{lemma}
	\label{st:dd-def_and_properties}
	Let $(E,\alpha,g,\nabla^E,\zeta^Z) \in \shom_{\CatLBGrb^\nabla(M)}(\CL_1,\CL_2)$. There is an isomorphism of hermitean vector bundles with $\dd_{(E,\alpha),(z_1,z_2)}: E_{z_1}\rightarrow E_{z_2}$ over $(z_1,z_2)\in Z\times_{Y_{12}} Z$ defined by 
		\begin{equation}
		\label{eq:def_dd}
\dd_{(E,\alpha)} = \big( \pr_{Y_1}^*t_{\mu_1} \otimes \mathbbm{1}_E \big) \circ \big(\alpha_{\vert Z{\times_{Y_{12}}}Z}\big)^{-1} \circ \big( \mathbbm{1}_E \otimes \pr_{Y_2}^*t_{\mu_2} \big)^{-1} 
		\end{equation} 
	with the following properties:	
	\begin{enumerate}
		\item Over $(z_1,z_2,z_3)\in
                  Z{\times_{Y_{12}}}Z{\times_{Y_{12}}}Z$, the
                  isomorphism $\dd_{(E,\alpha)}  $ satisfies a cocycle relation
		\begin{equation}
		\label{eq:compat_dd}
		\dd_{(E,\alpha),(z_1,z_2)}^{-1} \circ \dd_{(E,\alpha),(z_2,z_3)}^{-1}
		=\dd_{(E,\alpha),(z_1,z_3)}^{-1}~.
		\end{equation}
		\item Over $(z_1,z_2,z_3,z_4)\in(Z\times_{Y_{12}} Z)\times_M(Z\times_{Y_{12}} Z)$ there is a commutative diagram
		\begin{equation}
		\label{eq:comp_alpha_dd}
		\begin{tikzpicture}[baseline=(current  bounding  box.center)]
		\node(p13L1E) at(0,0) {$L_{1,(z_1,z_3)}\otimes E_{z_3}$};
		\node(p13EL2) at(7,0) {$E_{z_1}\otimes L_{2,(z_1,z_3)}$};
		\node(p24L1E) at(0,-4) {$L_{1,(z_2,z_4)}\otimes E_{z_4}$};
		\node(p24EL2) at(7,-4) {$E_{z_2}\otimes L_{2,(z_2,z_4)}$};
		
		\draw[->] (p13L1E)--(p13EL2) node[pos=.5,above] {\scriptsize{ $\alpha_{(z_1,z_3)}$}};
		\draw[->] (p13EL2)--(p24EL2) node[pos=.5,right] {\scriptsize{ $\dd_{(E,\alpha),(z_1,z_2)} \otimes \mathbbm{1}$}};
		\draw[->] (p13L1E)--(p24L1E) node[pos=.5,left] {\scriptsize{ $\mathbbm{1} \otimes \dd_{(E,\alpha),(z_3,z_4)}$}};
		\draw[->] (p24L1E)--(p24EL2) node[pos=.5,below] {\scriptsize{ $\alpha_{(z_2,z_4)}$}};
		\end{tikzpicture}
		\end{equation}
		\item The isomorphism $\dd_{(E,\alpha)}$ satisfies the identity
		\begin{equation}
		\dd_{(E,\alpha)}^{-{\rm t}} = \dd_{(E^*,\alpha^{-{\rm t}})} = \dd_{\Theta(E,\alpha)} ~.
		\end{equation}
	\end{enumerate}
\end{lemma}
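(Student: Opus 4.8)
The plan is to handle the construction and well-definedness of $\dd_{(E,\alpha)}$ first, then prove the two compatibility properties~(1) and~(2) together since they share a single mechanism, and finally dispatch the duality identity~(3), which is essentially algebraic.

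First I would verify that \eqref{eq:def_dd} does define a hermitean isomorphism $E_{z_1}\to E_{z_2}$ over $Z\times_{Y_{12}}Z$. The key point is that on this subspace $z_1$ and $z_2$ share the same image in $Y_{12}=Y_1\times_M Y_2$, so the pulled-back line bundles $L_{i,(z_1,z_2)}$ are precisely the pullbacks of the diagonal restriction $\Delta^*L_i$ along $\pr_{Y_i}$. Hence Lemma~\ref{st:t_mu-lemma} applies and $\pr_{Y_i}^*t_{\mu_i}$ trivialises $L_{i,(z_1,z_2)}$. Tracing domains, $(\unit_E\otimes\pr_{Y_2}^*t_{\mu_2})^{-1}$ sends $E_{z_1}$ to $E_{z_1}\otimes L_{2,(z_1,z_2)}$, then $\alpha_{(z_1,z_2)}^{-1}$ sends this to $L_{1,(z_1,z_2)}\otimes E_{z_2}$, and $\pr_{Y_1}^*t_{\mu_1}\otimes\unit_E$ sends it to $E_{z_2}$, so that $\dd_{(E,\alpha),(z_1,z_2)}$ is a composite of isomorphisms. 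It is isometric and parallel because $\alpha$ and $\mu_i$ (hence $t_{\mu_i}$, being built from $\mu_i$) are isometric and parallel by the definitions of a $1$-morphism and of a hermitean connective structure.

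Next, properties~(1) and~(2) are the real content, and I would obtain them by diagram chases driven by a single mechanism: the restriction of the compatibility diagram \eqref{eq:comp_hom_bgrb} of $\alpha$ with the bundle gerbe multiplications to triples lying in $Z\times_{Y_{12}}Z\times_{Y_{12}}Z$, together with the near-diagonal identities \eqref{eq:prop1_t} for $t_{\mu_i}$. On such triples each $L_{i,(z_a,z_b)}$ is a diagonal fibre of $L_i$, and \eqref{eq:prop1_t} says precisely that the multiplications $\mu_i$ become identities once the outer factors are trivialised by $t_{\mu_i}$. Inserting the definition \eqref{eq:def_dd} of the relevant $\dd$'s and cancelling adjacent $t_{\mu_i}$/$t_{\mu_i}^{-1}$ pairs, diagram \eqref{eq:comp_hom_bgrb} collapses exactly to the cocycle relation \eqref{eq:compat_dd}, establishing~(1). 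For~(2) the same substitution reduces the square \eqref{eq:comp_alpha_dd} to an instance of \eqref{eq:comp_hom_bgrb}, after first checking that on $(Z\times_{Y_{12}}Z)\times_M(Z\times_{Y_{12}}Z)$ the lines $L_{i,(z_1,z_3)}$ and $L_{i,(z_2,z_4)}$ coincide, so that the vertical arrows act as $\unit$ on the $L_i$ factors. I expect this to be the main obstacle: the bookkeeping of which fibres are diagonal, and the careful tracking of trivialisations across the several fibre products, is where errors are likely, and it is essential that $\alpha$ be parallel so that these identities of morphisms hold in $\CatHVBdl^\nabla$ rather than merely fibrewise.

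Finally, (3) is a purely algebraic consequence of Lemma~\ref{st:t_mu-lemma} and the definition of $\Theta$. The second equality $\dd_{(E^*,\alpha^{-{\rm t}})}=\dd_{\Theta(E,\alpha)}$ holds by the very definition of $\Theta$ in Definition~\ref{def:Theta}, since $\Theta(E,\alpha)=(E^*,\alpha^{-{\rm t}},\dots)$. For the first equality I would compute $\dd_{(E,\alpha)}^{-{\rm t}}$ from \eqref{eq:def_dd} using $(\phi\circ\psi)^{\rm t}=\psi^{\rm t}\circ\phi^{\rm t}$, $(\phi\otimes\psi)^{\rm t}=\phi^{\rm t}\otimes\psi^{\rm t}$ and the compatibility of pullback with transpose and inverse, obtaining
\begin{equation}
\dd_{(E,\alpha)}^{-{\rm t}} = \big(\pr_{Y_1}^* t_{\mu_1}^{-{\rm t}}\otimes\unit_{E^*}\big)\circ\alpha^{\rm t}\circ\big(\unit_{E^*}\otimes\pr_{Y_2}^* t_{\mu_2}^{\rm t}\big)~.
\end{equation}
On the other hand $\dd_{(E^*,\alpha^{-{\rm t}})}$ is \eqref{eq:def_dd} applied to $\Theta(E,\alpha)$, whose underlying gerbes are $\CL_i^*=(L_i^*,\mu_i^{-{\rm t}},\dots)$, so the relevant trivialisations are $t_{\mu_i^{-{\rm t}}}=t_{\mu_i}^{-{\rm t}}$ by the last claim of Lemma~\ref{st:t_mu-lemma}; using $(\alpha^{-{\rm t}})^{-1}=\alpha^{\rm t}$ and $(t_{\mu_2}^{-{\rm t}})^{-1}=t_{\mu_2}^{\rm t}$, this matches the expression above, completing the proof.
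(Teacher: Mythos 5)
Your proposal is correct and follows essentially the same route as the paper: the paper's proof simply cites Waldorf for items (1)--(2), where precisely your mechanism appears (restricting \eqref{eq:comp_hom_bgrb} and collapsing the bundle gerbe multiplications via \eqref{eq:prop1_t}), and its one-line proof of item (3) is exactly your computation using $t_\mu^{-{\rm t}} = t_{\mu^{-{\rm t}}}$ from Lemma~\ref{st:t_mu-lemma} together with compatibility of pullback with transposes and inverses. One small imprecision worth noting: item~(2) requires \emph{two} instances of \eqref{eq:comp_hom_bgrb}, applied to the mixed triples $(z_1,z_2,z_4)$ and $(z_1,z_3,z_4)$ in $Z^{[3]}$ (where only one pair agrees over $Y_{12}$, so that $\mu_{1,(z_1,z_2,z_4)} = t_{\mu_1}\otimes\unit$ and $\mu_{1,(z_1,z_3,z_4)} = \unit\otimes t_{\mu_1}$ by \eqref{eq:prop1_t}), with $\alpha_{(z_1,z_4)}$ eliminated between them, rather than a single instance over $Z\times_{Y_{12}}Z\times_{Y_{12}}Z$ --- but your fibre-coincidence check $L_{i,(z_1,z_3)} = L_{i,(z_2,z_4)}$ is the essential point and the chase goes through as you describe.
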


\begin{proof}
	The proofs of items (1) and (2) can again be found in~\cite{Waldorf:2007aa}.
	Item (3) follows immediately from Lemma~\ref{st:t_mu-lemma}.
\end{proof}

The isomorphism $\dd_{(E,\alpha)}$ is compatible with 2-morphisms.
\begin{lemma}
	\label{st:dd_and_other_2-mors}
	For every 2-morphism $(\phi,\omega^W): (E,\alpha,\zeta^Z) \Rightarrow (E',\alpha',\zeta^{Z'})$ the diagram 
	\begin{equation}
	\xymatrixrowsep{2cm}
	\xymatrixcolsep{3cm}
	\myxymatrix{
		E_{w_1}\ar@{->}[r]^-{\dd_{(E,\alpha),(w_1,w_2)}} \ar@{->}[d]_-{\phi_{w_1}}
		& E_{w_2}\ar@{->}[d]^-{\phi_{w_2}}\\
		E'_{w_1}\ar@{->}[r]_-{\dd_{(E',\alpha'\,),(w_1,w_2)}}
		& E'_{w_2}
	}
	\end{equation}
commutes over $(w_1,w_2)\in W {\times_{Y_{12}}} W$.
\end{lemma}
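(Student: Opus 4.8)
The plan is to prove Lemma~\ref{st:dd_and_other_2-mors} by a direct fibrewise computation, reading off $\dd_{(E,\alpha)}$ from its definition \eqref{eq:def_dd} in Lemma~\ref{st:dd-def_and_properties} and combining it with the compatibility square \eqref{eq:comp_2hom_bgrb} satisfied by the representative $(\phi,\omega^W)$. First I would record that over $(w_1,w_2)\in W{\times_{Y_{12}}}W$ the two points $w_1,w_2$ have the same image in $Y_{12}=Y_1{\times_M}Y_2$, so that the induced points in $Y_1^{[2]}$ and $Y_2^{[2]}$ lie on the respective diagonals. Hence $L_{i,(w_1,w_2)}=\Delta^*L_i$ and the canonical trivialisations $t_{\mu_i}$ of Lemma~\ref{st:t_mu-lemma} apply, so that the pullback of \eqref{eq:def_dd} along $\omega_Z^W{\times}\omega_Z^W$ evaluates at $(w_1,w_2)$ to the composite
\begin{equation}
\dd_{(E,\alpha),(w_1,w_2)}=\big(t_{\mu_1}\otimes\unit_{E_{w_2}}\big)\circ\alpha_{(w_1,w_2)}^{-1}\circ\big(\unit_{E_{w_1}}\otimes t_{\mu_2}\big)^{-1}\ ,
\end{equation}
where $\phi_{w_i}\colon E_{w_i}\to E'_{w_i}$ denotes the component at $w_i$ of $\phi\in\shom_{\CatHVBdl^\nabla(M)}((\omega^W_Z)^*E,(\omega^W_{Z'})^*E'\,)$.

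The computation then rests on two elementary observations. Since $\phi$ acts only on the vector-bundle tensor factor, while $t_{\mu_1}$ and $t_{\mu_2}$ act only on the line-bundle factors, $\phi$ commutes with each $t_{\mu_i}$ in the evident sense. Secondly, the square \eqref{eq:comp_2hom_bgrb} holds over $W^{[2]}$ and therefore restricts to $W{\times_{Y_{12}}}W\subset W^{[2]}$, where it reads $(\phi_{w_1}\otimes\unit)\circ\alpha_{(w_1,w_2)}=\alpha'_{(w_1,w_2)}\circ(\unit\otimes\phi_{w_2})$. Solving this for the relevant composite \emph{without} inverting $\phi$ (which need not be invertible) gives
\begin{equation}
\big(\unit\otimes\phi_{w_2}\big)\circ\alpha_{(w_1,w_2)}^{-1}=\big(\alpha'_{(w_1,w_2)}\big)^{-1}\circ\big(\phi_{w_1}\otimes\unit\big)\ .
\end{equation}
I would then compute $\phi_{w_2}\circ\dd_{(E,\alpha),(w_1,w_2)}$, push $\phi_{w_2}$ past $t_{\mu_1}$, substitute the displayed relation to convert $(\unit\otimes\phi_{w_2})\circ\alpha^{-1}$ into $(\alpha'\,)^{-1}\circ(\phi_{w_1}\otimes\unit)$, and finally push $\phi_{w_1}$ back past $(\unit_{E}\otimes t_{\mu_2})^{-1}$. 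The remaining three factors are precisely $\dd_{(E',\alpha'\,),(w_1,w_2)}$, so that $\phi_{w_2}\circ\dd_{(E,\alpha),(w_1,w_2)}=\dd_{(E',\alpha'\,),(w_1,w_2)}\circ\phi_{w_1}$, which is the asserted commutativity.

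There is no deep obstacle here: the statement is a routine diagram chase and the work is purely bookkeeping of tensor factors. The two points requiring genuine care are the conceptual step that the trivialisations $t_{\mu_i}$ are available only because $w_1,w_2$ agree in $Y_{12}$ (so the line bundles are diagonal restrictions), and the fact that the compatibility relation must be rearranged as above rather than inverted through $\phi$. Since $\alpha,\alpha',t_{\mu_1},t_{\mu_2}$ and $\phi$ are all morphisms of hermitean vector bundles with connection, the resulting identity holds as an identity of such morphisms, and no separate verification of compatibility with the metrics or connections is needed; functoriality of pullback then lifts the fibrewise identity to an identity over $W{\times_{Y_{12}}}W$.
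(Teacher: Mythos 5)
Your proposal is correct and follows essentially the same route as the paper's proof: both unpack $\dd_{(E,\alpha)}$ via \eqref{eq:def_dd}, restrict the 2-morphism compatibility square \eqref{eq:comp_2hom_bgrb} to $W{\times_{Y_{12}}}W$ and rearrange it as $(\unit\otimes\phi_{w_2})\circ\alpha^{-1}=(\alpha'\,)^{-1}\circ(\phi_{w_1}\otimes\unit)$ without ever inverting $\phi$, and then commute $\phi$ past the trivialisations $t_{\mu_i}$ acting on the complementary tensor factors. Your added remarks on why the $t_{\mu_i}$ apply (the points agreeing in $Y_{12}$) and on the non-invertibility of $\phi$ make explicit what the paper leaves implicit, but the computation is identical.
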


\begin{proof}
	We compute
	\begin{equation}
	\begin{aligned}
	\phi_{w_2}&\circ \dd_{(E,\alpha),(w_1,w_2)}\\
	&\qquad= \phi_{w_2} \circ \big( \pr_{Y_1}^*t_{\mu_1} \otimes \mathbbm{1}_E \big)_{w_2} \circ \big(\alpha_{\vert Z{\times_{Y_{12}}}Z}\big)^{-1}_{(w_1,w_2)} \circ \big( \mathbbm{1}_E \otimes \pr_{Y_2}^*t_{\mu_2} \big)^{-1}_{w_1}\\[4pt]
	&\qquad=\big( \pr_{Y_1}^*t_{\mu_1} \otimes \mathbbm{1}_{E'} \big)_{w_2} \circ (\unit\otimes\phi_{w_2}) \circ \big(\alpha_{\vert Z{\times_{Y_{12}}}Z}\big)^{-1}_{(w_1,w_2)} \circ \big( \mathbbm{1}_E \otimes \pr_{Y_2}^*t_{\mu_2} \big)^{-1}_{w_1}\\[4pt]
	&\qquad=\big( \pr_{Y_1}^*t_{\mu_1} \otimes \mathbbm{1}_{E'} \big)_{w_2} \circ \big(\alpha'_{\vert Z{\times_{Y_{12}}}Z}\big)^{-1}_{(w_1,w_2)} \circ (\phi_{w_1}\otimes\unit) \circ \big( \mathbbm{1}_E \otimes \pr_{Y_2}^*t_{\mu_2} \big)^{-1}_{w_1}\\[4pt]
	&\qquad=\dd_{(E',\alpha'\,),(w_1,w_2)}\circ \phi_{w_1}~,
	\end{aligned}
	\end{equation}
	where we used Lemmas~\ref{st:t_mu-lemma} and \ref{st:dd-def_and_properties}.
\end{proof}

We can use this compatibility to reduce the representatives of
2-morphisms of Definition~\ref{def:2hom_bgrb} to 2-morphisms defined
with respect to the \emph{smallest} surjective submersion $\hat
Z=Z{\times_{Y_{12}}} Z'$.\footnote{It is smallest because of the universal property of a pullback.} 
\begin{proposition}
	\label{st:2-mor_prop}
	Every 2-morphism $(\phi,\omega^W)$ from $(E,\alpha,\zeta^Z)$ to $(E',\alpha',\zeta^{Z'})$ has a representative of the form $(\phi_0,\omega^{\hat{Z}})$.
\end{proposition}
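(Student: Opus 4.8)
The goal is to show that any 2-morphism, which by Definition~\ref{def:2hom_bgrb} is represented by a pair $(\phi,\omega^W)$ with $\omega^W:W\thra\hat Z=Z\times_{Y_{12}}Z'$ an \emph{arbitrary} surjective submersion refining $\hat Z$, can always be represented by a pair living over the minimal surjective submersion $\hat Z$ itself. The plan is to descend the vector bundle morphism $\phi$ from $W$ down to $\hat Z$ along $\omega^W$, using the stack property of $\CatHVBdl^\nabla$ together with the compatibility of $\phi$ with the canonical isomorphisms $\dd_{(E,\alpha)}$ established in Lemma~\ref{st:dd_and_other_2-mors}.

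Concretely, first I would recall that $\omega^W:W\thra\hat Z$ is a surjective submersion, so $W\times_{\hat Z}W\subset W^{[2]}$ and the fibre products are well-behaved. The morphism $\phi\in\shom_{\CatHVBdl^\nabla(W)}(\omega_Z^{W*}E,\omega_{Z'}^{W*}E'\,)$ is a parallel bundle morphism over $W$. To descend it to $\hat Z$ I must verify that $\phi$ satisfies the descent (cocycle) condition with respect to $\omega^W$: writing $\pr_1,\pr_2:W\times_{\hat Z}W\to W$ for the two projections, I need $\pr_2^*\phi = \mathsf{desc}$-compatible with $\pr_1^*\phi$, where the identification of the two pullbacks of $\omega_Z^{W*}E$ (resp.\ $\omega_{Z'}^{W*}E'$) over $W\times_{\hat Z}W$ is exactly the isomorphism $\dd_{(E,\alpha)}$ (resp.\ $\dd_{(E',\alpha'\,)}$) restricted along $W\times_{\hat Z}W\hookrightarrow W\times_{Y_{12}}W$. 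Lemma~\ref{st:dd_and_other_2-mors} gives precisely the commuting square
\begin{equation}
\phi_{w_2}\circ\dd_{(E,\alpha),(w_1,w_2)} = \dd_{(E',\alpha'\,),(w_1,w_2)}\circ\phi_{w_1}
\end{equation}
over $(w_1,w_2)\in W\times_{Y_{12}}W$, and in particular over the subspace $W\times_{\hat Z}W$. Since over $W\times_{\hat Z}W$ the $\dd$-isomorphisms are exactly the transition data of the descent bundles $D_{\omega_Z^W}(\omega_Z^{W*}E)\cong E$ and $D_{\omega_{Z'}^W}(\omega_{Z'}^{W*}E'\,)\cong E'$, this square is the statement that $\phi$ is a morphism of descent data. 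Because $\CatHVBdl^\nabla$ is a stack, $\phi$ therefore descends to a unique parallel morphism $\phi_0:E\to E'$ of hermitean vector bundles with connection over $\hat Z$.

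Next I would check that the descended pair $(\phi_0,\omega^{\hat Z})=(\phi_0,\id_{\hat Z})$ is genuinely a 2-morphism, i.e.\ that $\phi_0$ satisfies the compatibility diagram~\eqref{eq:comp_2hom_bgrb} with $\alpha$ and $\alpha'$ over $\hat Z^{[2]}$. This follows by pulling the required square back along the surjective submersion $W^{[2]}\to\hat Z^{[2]}$: the pullback of the diagram for $\phi_0$ is exactly the diagram satisfied by $\phi$ over $W^{[2]}$, and since $W^{[2]}\thra\hat Z^{[2]}$ is surjective and the bundle morphisms involved are determined by their pullbacks, the diagram for $\phi_0$ commutes. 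Finally I would verify that $(\phi_0,\id_{\hat Z})$ lies in the same equivalence class as $(\phi,\omega^W)$ under the relation~\eqref{eq:2morequiv}: the common refinement is $W$ itself, with $\xi=\id_W$ mapping to $(\phi,\omega^W)$ and $\tilde\xi=\omega^W$ mapping to $(\phi_0,\id_{\hat Z})$, and the condition $\xi^*\phi=\tilde\xi^*\phi_0$ is precisely the defining property $\phi=\omega^{W*}\phi_0$ of the descended morphism.

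The main obstacle is the verification of the descent cocycle condition, that is, establishing that $\phi$ really does intertwine the $\dd$-isomorphisms over $W\times_{\hat Z}W$ in the form required by the stack $\CatHVBdl^\nabla$. The content here is entirely supplied by Lemma~\ref{st:dd_and_other_2-mors}, but one must be careful to identify the $\dd$-isomorphisms (which are defined over $Z\times_{Y_{12}}Z$ and its $W$-pullbacks) with the canonical descent transition data for the bundles $\omega_Z^{W*}E$ and $\omega_{Z'}^{W*}E'$ along $\omega^W$. Once this identification is in place, everything else—uniqueness of the descent, parallelism, compatibility with $\alpha,\alpha'$, and the equivalence of representatives—is routine and follows from the universal property of descent together with the stack axioms.
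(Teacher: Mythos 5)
Your proposal is correct and follows essentially the same route as the paper's proof: descend $\phi$ along $\omega^W$ using the stack property of $\CatHVBdl^\nabla$, with Lemma~\ref{st:dd_and_other_2-mors} supplying the descent-morphism condition, then verify~\eqref{eq:comp_2hom_bgrb} by surjectivity of $\omega^W$ and exhibit the equivalence by taking $X=W$ in~\eqref{eq:2morequiv}. The one step you assert rather than prove---that the restriction of the $\dd$-isomorphisms to $W\times_{\hat{Z}}W$ coincides with the canonical descent datum of $\omega^{W*}\pr_Z^*E$, i.e.\ that the descent data lie in the actual (not merely essential) image of $D^{-1}_{\omega^W}$ so that $\omega^{W*}\phi_0=\phi$ holds strictly---is supplied not by Lemma~\ref{st:dd_and_other_2-mors} but by the cocycle relation~\eqref{eq:compat_dd} of Lemma~\ref{st:dd-def_and_properties}, which forces $\big(\dd_{(E,\alpha)}^{-1}\big)_{\vert Z\times_Z Z}=\mathbbm{1}_E$; this is precisely the first step of the paper's argument.
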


\begin{proof} 
  The cocycle relation \eqref{eq:compat_dd} satisfied by $\dd_{(E,\alpha)}$ on $Z\times_ZZ\times_Z Z$, together with the fact that $\dd_{(E,\alpha)}^{-1}$ is an isomorphism, implies
  \begin{equation}
  \big(\dd_{(E,\alpha)}^{-1} \big)_{\vert Z{\times_Z}Z} = \mathbbm{1}_E~.
  \end{equation}
Next let us restrict $\phi$ to $W{\times_{\hat{Z}}}W\subset W^{[2]}=W{\times_{M}}W$. On this space $\omega^{W[2]}$ maps to the diagonal $\hat{Z}{\times_{\hat{Z}}}\hat{Z}$, from where the projections $\fpmap{}{\pr}{[2]}{Z}$ and $\fpmap{}{\pr}{[2]}{Z'}$ map to the diagonals $Z{\times_Z}Z$ and $Z'{\times_{Z'}}Z'$, respectively.
	
Because of~\eqref{eq:compat_dd} and~\eqref{eq:comp_alpha_dd}, both quadruples $(\fpmap{}{\omega}{W*}{Z}E,\, \fpmap{}{\omega}{W*}{Z}g,\, \nabla^{\fpmap{}{\omega}{W*}{Z}E},\, \fpmap{}{\omega}{W[2]*}{Z} \dd_{(E,\alpha)})$ and $ (\fpmap{}{\omega}{W*}{Z'}E',\, \fpmap{}{\omega}{W*}{Z'}g',\, \nabla^{\fpmap{}{\omega}{W*}{Z'}E'},\, \fpmap{}{\omega}{W[2]*}{Z'} \dd_{(E',\alpha'\,)})$ are objects of $\CatDes(\CatHVBdl^\nabla,\omega^W)$.
	In fact, they lie in the \emph{image} of the inverse descent functor $D^{-1}_{\omega^W}$,
	\begin{equation}
	\begin{aligned}
	\big( \fpmap{}{\omega}{W*}{Z}E,\, \fpmap{}{\omega}{W*}{Z}g,\, \nabla^{\fpmap{}{\omega}{W*}{Z}E},\, \fpmap{}{\omega}{W[2]*}{Z} \dd_{(E,\alpha)} \big) &= D_{\omega^W}^{-1} \big( \fpmap{}{\pr}{*}{Z}E,\, \fpmap{}{\pr}{*}{Z}g,\, \nabla^{\fpmap{}{\pr}{*}{Z}E}\big) \ , \\[4pt]
	\big( \fpmap{}{\omega}{W*}{Z'}E',\, \fpmap{}{\omega}{W*}{Z'}g,\, \nabla^{\fpmap{}{\omega}{W*}{Z'}E},\, \fpmap{}{\omega}{W[2]*}{Z'} \dd_{(E',\alpha'\,)} \big) &= D_{\omega^W}^{-1} \big( \fpmap{}{\pr}{*}{Z'}E',\, \fpmap{}{\pr}{*}{Z'}g',\, \nabla^{\fpmap{}{\pr}{*}{Z'}E'} \big)~.
	\end{aligned}
	\end{equation}
	Moreover, $\dd_{(E,\alpha)}$ is an isometric and
        connection-preserving morphism by its construction as a composition of morphisms with these properties.
	Lemma~\ref{st:dd_and_other_2-mors} now shows that 
	\begin{equation}
	 \phi \ \in \ \shom_{\CatDes(\CatHVBdl^\nabla,\omega^W)}\big( D_{\omega^W}^{-1}(\fpmap{}{\pr}{*}{Z}E),D_{\omega^W}^{-1}(\fpmap{}{\pr}{*}{Z'}E'\,) \big)
	\end{equation}
	is a descent morphism. Combining this with the fact that these objects are in the actual (and not just essential) image of the inverse descent functor, and that $\CatHVBdl^\nabla$ is a stack, we deduce that there exists $\phi_0 \in \shom_{\CatHVBdl^\nabla(\hat{Z})}(\fpmap{}{\pr}{*}{Z}E,\fpmap{}{\pr}{*}{Z'}E'\,)$ which the inverse descent functor $D_\omega^{-1}$ maps to $\phi$, that is, $\omega^{W*}\phi_0 = \phi$.
	
Finally, we have to ensure that $\phi_0$ satisfies the compatibility relation~\eqref{eq:comp_2hom_bgrb} for 2-morphisms in $\CatLBGrb^\nabla(M)$. From $\omega^{W*}\phi_0 = \phi$ we have
	\begin{equation}
	\big( \phi_{0,w_1} \otimes \mathbbm{1} \big) \circ \alpha_{(w_1,w_2)}
	= \alpha'_{(w_1,w_2)}\circ \big(\unit\otimes \phi_{0,w_2} \big)
	\end{equation}
	over $W^{[2]}$, and since $\omega^W$ is a surjective submersion it follows that \eqref{eq:comp_2hom_bgrb} is satisfied over $\hat Z^{[2]}$.
	By the definition of the equivalence relation on pairs representing 2-morphisms, we can choose $X = W$ in \eqref{eq:2morequiv} which shows that $(\phi,\omega^W)$ and $(\phi_0,\omega^{\hat{Z}})$ with $\omega^{\hat{Z}}=\id_{\hat Z}$ are equivalent.
\end{proof}

\begin{remark}
If we had defined the 2-morphisms with $W = \hat{Z}$ fixed, we would have had to use descent theory in the construction of the horizontal composition.
The more general Definition~\ref{def:2hom_bgrb} of 2-morphisms trades this complication for the introduction of redundancy in equivalence classes.
However, Lemma~\ref{st:2-mor_prop} implies that these two definitions
yield precisely the same 2-morphisms. In particular, all morphisms of
twisted vector bundles, which are precisely the bundle gerbe modules of local bundle gerbes, can be defined over \emph{any} given common refinement of the respective open covers.
\end{remark}

\setcounter{equation}{0}
\setcounter{theorem}{0}

\subsection{Proof of Theorem~\ref{st:Direct_sum_as_fctr}}
\label{app:Proof_of_direct_sum_thm}

(1):
	First we have to check that $(E,\alpha) \oplus (E',\alpha'\,)$ is an object in $\shom_{\CatLBGrb^\nabla(M)}(\CL_1,\CL_2)$, which amounts to satisfying~\eqref{eq:comp_hom_bgrb} or equivalently
	\begin{equation}\label{eq:B_1}
	 \beta_{(\hat z_1,\hat z_3)}\circ (\mu_{1,(\hat z_1,\hat z_2,\hat z_3)}\otimes\unit) =(\unit\otimes \mu_{2,(\hat z_1,\hat z_2,\hat z_3)})\circ(\beta_{(\hat z_1,\hat z_2)}\otimes \unit)\circ(\unit\otimes \beta_{(\hat z_2,\hat z_3)})~.
	\end{equation}
The distribution isomorphisms $\sfd^{{\rm l},{\rm r}}_L$ are compatible with pullbacks and satisfy
	\begin{equation}
	\sfd^{\rm l}_{L\otimes L'} = \sfd^{\rm l}_L \circ (\unit\otimes \sfd^{\rm l}_{L'}) \eand \sfd^{\rm r}_{L\otimes L'} = \sfd^{\rm r}_{L'} \circ (\sfd^{\rm r}_L\otimes\unit) ~.
	\end{equation}
	For $\mu \in \shom_{\CatHLBdl^\nabla(M)}(L\otimes L',L''\,)$, multilinearity yields
	\begin{equation}
	\begin{aligned}
	\sfd^{\rm l}_{L''} \circ (\mu \otimes \mathbbm{1}) &= \big((\mu \otimes \mathbbm{1}) \oplus (\mu \otimes \mathbbm{1})\big) \circ\sfd^{\rm l}_{L\otimes L'}~,\\[4pt]
	\sfd^{\rm r}_{L''} \circ (\mathbbm{1} \otimes \mu) &= \big((\mathbbm{1} \otimes \mu) \oplus (\mathbbm{1} \otimes \mu)\big) \circ\sfd^{\rm r}_{L\otimes L'}~.
	\end{aligned}
	\end{equation}
	These morphisms are naturally morphisms in $\CatHVBdl^\nabla(M)$, that is, they preserve metrics and connections on the respective bundles.
	Finally, for $\alpha \in \shom_{\CatHVBdl^\nabla(M)}(E,E'\,)$ and $ \beta \in \shom_{\CatHVBdl^\nabla(M)}(F,F'\,)$ we have
	\begin{equation}
	\begin{aligned}
	\sfd^{\rm l}_L \circ \big( \mathbbm{1} \otimes ( \alpha \oplus \beta)\big) &= \big( (\mathbbm{1} \otimes \alpha) \oplus (\mathbbm{1} \otimes \beta) \big) \circ \sfd^{\rm l}_L~,\\[4pt]
	\sfd^{\rm r}_L \circ \big(( \alpha \oplus \beta) \otimes \mathbbm{1} \big) &= \big( (\alpha \otimes \mathbbm{1}) \oplus (\beta \otimes \mathbbm{1}) \big) \circ \sfd^{\rm r}_L~.
	\end{aligned}
	\end{equation}
	By repeatedly applying these identities we then readily verify \eqref{eq:B_1} as 
	\begin{align}
	&\sfd^{\rm r}_{L_{2,(\hat z_1,\hat z_3)}} \circ \beta_{(\hat z_1,\hat z_3)}\circ (\mu_{1,(\hat z_1,\hat z_2,\hat z_3)}\otimes \unit)\circ \big(\sfd^{\rm l}_{L_{1,(\hat z_1,\hat z_2)}\otimes L_{1,(\hat z_2,\hat z_3)}}\big)^{-1}\notag \\
	&= \big(\alpha_{(\hat z_1,\hat z_3)} \oplus \alpha'_{(\hat z_1,\hat z_3)} \big) \circ \sfd^{\rm l}_{L_{1,(\hat z_1,\hat z_3)}}\circ (\mu_{1,(\hat z_1,\hat z_2,\hat z_3)}\otimes \unit)\circ \big(\sfd^{\rm l}_{L_{1,(\hat z_1,\hat z_2)}\otimes L_{1,(\hat z_2,\hat z_3)}}\big)^{-1} \notag \\[4pt]
	&= \big(\alpha_{(\hat z_1,\hat z_3)} \oplus \alpha'_{(\hat z_1,\hat z_3)} \big) \circ \big( (\mu_{1,(\hat z_1,\hat z_2,\hat z_3)}\otimes \unit)\oplus (\mu_{1,(\hat z_1,\hat z_2,\hat z_3)}\otimes \unit) \big)\notag \\[4pt]
	&= \big( (\unit\otimes \mu_{2,(\hat z_1,\hat z_2,\hat z_3)})\oplus (\unit\otimes \mu_{2,(\hat z_1,\hat z_2,\hat z_3)}) \big)\circ \big((\alpha_{(\hat z_1,\hat z_2)}\otimes \unit)\oplus(\alpha'_{(\hat z_1,\hat z_2)}\otimes \unit)\big) \notag \\
	&\hspace{8.5cm}\circ \big((\unit\otimes\alpha_{(\hat z_2,\hat z_3)})\oplus(\unit\otimes\alpha'_{(\hat z_2,\hat z_3)})\big)\notag \\[4pt]
	&=\sfd^{\rm r}_{L_{2,(\hat z_1,\hat z_3)}} \circ (\unit\otimes\mu_{2,(\hat z_1,\hat z_2,\hat z_3)})\circ(\beta_{(\hat z_1,\hat z_2)}\otimes\unit)\circ(\unit\otimes\beta_{(\hat z_2,\hat z_3)}) \notag\\
	&\hspace{8.5cm}\circ \big(\sfd^{\rm l}_{L_{1,(\hat z_1,\hat z_2)}\otimes L_{1,(\hat z_2,\hat z_3)}}\big)^{-1}~.
	\end{align}
	
	Next we turn to the sum of 2-morphisms. Let 
	\begin{equation}
	\begin{aligned}
	(\phi,\id_{\hat{Z}}) & \ \in \ 2\shom_{\CatLBGrb^\nabla(M)}\big((E,\alpha,g_E,\nabla^E,B^E,\zeta^Z), (E',\alpha',g_{E'},\nabla^{E'},B^{E'},\zeta^{Z'})\big)~,\\[4pt]
	(\psi,\id_{\hat{X}}) & \ \in \ 2\shom_{\CatLBGrb^\nabla(M)}\big((F,\beta,g_F,\nabla^F,B^F,\zeta^X),(F',\beta',g_{F'},\nabla^{F'},B^{F'},\zeta^{X'})\big)
	\end{aligned}
	\end{equation}
	be two 2-morphisms in $\CatLBGrb^\nabla(M)$.
	By Proposition~\ref{st:2-mor_prop} these choices cover all possible 2-morphisms between these 1-morphisms.
	We define the direct sum of these 2-morphisms over $W =
        \hat{Z}{\times_{Y_{12}}}\hat{X} =
        Z{\times_{Y_{12}}}Z'{\times_{Y_{12}}}X{\times_{Y_{12}}}X'$ by
        setting $(\phi,\id_{\hat{Z}}) \oplus (\psi,\id_{\hat{X}}):= \big( \chi, \id_{\hat{Z}{\times_{Y_{12}}}\hat{X}} \big)$ with 
	\begin{equation}
	\chi_{(\hat z_1,\hat z_2,\hat x_1,\hat x_2)} = \phi_{(\hat z_1,\hat z_2)}\oplus \psi_{(\hat x_1,\hat x_2)}: E_{\hat z_1} \oplus F_{\hat x_1} \longrightarrow E'_{\hat z_2} \oplus F'_{\hat x_2}~.
	\end{equation}
	It satisfies the compatibility condition~\eqref{eq:comp_2hom_bgrb} for $2$-morphisms as one readily verifies.
	This definition of the sum of 2-morphisms involves only tensor products, direct sums and composition of morphisms, so that it is strictly associative.
	All the morphisms and operations involved in this construction are compatible with connections, whence the direct sum of $2$-morphisms is $2\shom_{\CatLBGrb^\nabla(M)}$-valued as required.
	Moreover, any other representative of a $2$-morphism of this form is a pullback of a representative of the above form along a surjective submersion (e.g.\ $W \rightarrow \hat{Z}$) according to Proposition~\ref{st:2-mor_prop}.
	All the elements involved in the definition of the direct sum of $2$-morphisms as above are compatible with pullbacks, whence we can straightforwardly generalise this definition to generic representatives of $2$-morphisms.
	
	Finally, for the identity 2-morphisms we consider the following commutative diagram over $\hat{Z}{\times_{Y_{12}}}\hat{Z}$:
	\begin{equation}
	\xymatrixcolsep{3pc}
	\myxymatrix{
	E_{\hat z_1}\oplus F_{\hat z_1}\ar@{->}[d]\ar@{->}[r]^{\id_{E_{\hat z_1}\oplus F_{\hat z_1}}} & 
	E_{\hat z_1}\oplus F_{\hat z_1}\ar@{->}[d] \\
	(E_{\hat z_1}\oplus F_{\hat z_1})\otimes L_{2,(\hat z_1,\hat z_2)}\ar@{->}[d]\ar@{->}[r] &
	(E_{\hat z_1}\otimes L_{2,(\hat z_1,\hat z_2)})\oplus (F_{\hat z_1}\otimes L_{2,(\hat z_1,\hat z_2)})\ar@{->}[d] \\
 	L_{1,(\hat z_1,\hat z_2)}\otimes(E_{\hat z_2}\oplus F_{\hat z_2})\ar@{->}[d]\ar@{->}[r] &
 	(L_{1,(\hat z_1,\hat z_2)}\otimes E_{\hat z_2})\oplus (L_{1,(\hat z_1,\hat z_2)}\otimes F_{\hat z_2})\ar@{->}[d] \\
 	E_{\hat z_2}\oplus F_{\hat z_2}\ar@{->}[r]_{\id_{E_{\hat z_1}\oplus F_{\hat z_1}}} &
 	E_{\hat z_2}\oplus F_{\hat z_2} 
 	}
	\end{equation}
	The inner two horizontal morphisms are distribution isomorphisms for the pullbacks of $L_i$ or the trivial bundle of rank~$1$, respectively.
	In the top two vertical morphisms we have the action of $t_{\mu_2}^{-1}$, while in the bottom two vertical morphisms that of $t_{\mu_1}$ appears.
	In order to obtain the respective $\dd_{(E,\alpha)}$ as the vertical compositions, the middle two vertical morphisms are given by the respective sums.
	By definition the middle square is commutative; the outer ones are commutative by the properties of the distribution morphisms spelled out above. This yields 
	\begin{equation}
	\dd_{(E,\alpha)\oplus(F,\beta)} = \dd_{(E,\alpha)}\oplus\dd_{(F,\beta)}~,
	\end{equation}
	which completes the proof of functoriality.

\noindent
(2):
	Let us now address the distributivity of the tensor product over the direct sum.
	For this, we introduce a third morphism of bundle gerbes $(F,\beta,g_F,\nabla^F,\zeta^U):\CL_3 \rightarrow \CL_4$. We need to find a 2-isomorphism $(\sfd^{\rm l}_{F,E,E'},\omega^W)$ between 
	\begin{equation}
		(F,\beta,\zeta^U) \otimes \big( (E,\alpha,\zeta^Z) \oplus (E',\alpha',\zeta^{Z'}) \big)
	\end{equation}
	and 
	\begin{equation}
	 \big( (F,\beta,\zeta^U) \otimes (E,\alpha,\zeta^Z) \big) \oplus \big( (F,\beta,\zeta^U) \otimes (E',\alpha',\zeta^{Z'}) \big)~, 
	\end{equation}
	which are given by vector bundles over $U {\times_M} (Z
        {\times_{Y_{12}}} Z'\, )$ and $( U {\times_M} Z)
        {\times_{Y_{1234}}} ( U {\times_M} Z' \,)$,
        respectively. The simplest choice of surjective submersion is
        given by $\omega^W=\id_W$ for
	\begin{equation}
	\begin{aligned}
	W &= \big( U {\times_M} (Z {\times_{Y_{12}}} Z'\, ) \big)
        {\times_{Y_{1234}}} \big( ( U {\times_M} Z)
        {\times_{Y_{1234}}} ( U {\times_M} Z' \, ) \big)\\[4pt]
	&\cong \big( U {\times_{Y_{34}}}U {\times_{Y_{34}}} U \big)
        {\times_M} \big( ( Z {\times_{Y_{12}}} Z) {\times_{Y_{12}}} (
        Z' {\times_{Y_{12}}} Z' \, ) \big) \ ,
	\end{aligned}
	\end{equation}
	which by Proposition~\ref{st:2-mor_prop} can be made without
        loss of generality. We now define 
	\begin{equation}
		\sfd^{\rm l}_{F,E,E'} := \big( ( \dd_{(F,\beta)}
                \otimes \dd_{(E,\alpha)}) \oplus ( \dd_{(F,\beta)}
                \otimes \dd_{(E',\alpha'\,)} ) \big) \circ \sfd^{\rm l}_{F}~,
	\end{equation}
	with the evident pullbacks suppressed. We need to verify that this isomorphism of vector bundles satisfies~\eqref{eq:comp_2hom_bgrb}, making it a 2-isomorphism. We readily compute 
	\begin{equation}
	 \begin{aligned}
	  \sfd^{\rm l}_{F,E,E'}\circ &\big(\beta\otimes(\alpha\oplus
          \alpha'\, )\big)\\ 
&=\big( ( \dd_{(F,\beta)} \otimes
          \dd_{(E,\alpha)}) \oplus ( \dd_{(F,\beta)} \otimes
          \dd_{(E',\alpha'\, )}) \big) \circ \sfd^{\rm l}_{F}\circ
          \big(\beta\otimes(\alpha\oplus \alpha'\, )\big)\\[4pt]
	  &=\big( ( \dd_{(F,\beta)} \otimes \dd_{(E,\alpha)}) \oplus (
          \dd_{(F,\beta)} \otimes \dd_{(E',\alpha'\, )}) \big) \circ
          \big((\beta\otimes\alpha)\oplus(\beta\otimes\alpha'\, )\big)\circ \sfd^{\rm l}_{F}\\[4pt]
	  &=\big((\beta\otimes\alpha)\oplus(\beta\otimes\alpha'\,
          )\big)\circ \big( ( \dd_{(F,\beta)} \otimes
          \dd_{(E,\alpha)}) \oplus \big( \dd_{(F,\beta)} \otimes
          \dd_{(E',\alpha'\, )}) \big) \circ \sfd^{\rm l}_{F}\\[4pt]
	  &=\big((\beta\otimes\alpha)\oplus(\beta\otimes\alpha'\, )\big)\circ \sfd^{\rm l}_{F,E,E'}~,
	 \end{aligned}
	\end{equation}
	again with all pullbacks suppressed. Analogously we define
        $\sfd^{\rm r}_{E,E',F}$.

It remains to show that $\sfd^{\rm l}_{F,E,E'}$ is a natural transformation. Consider additional morphisms $(G,\eta,\zeta^X),\,  (G',\eta',\zeta^{X'}): \CL_1 \rightarrow \CL_2$ and $(H,\rho,\zeta^V): \CL_3 \rightarrow \CL_4$ of line bundle gerbes.
	Let $(\phi, \id_{U {\times_{Y_{34}}} V}): (F,\beta,\zeta^U)
        \rightarrow (H,\rho,\zeta^V)$, $(\psi,\id_{Z {\times_{Y_{12}}}
          X}): (E,\alpha,\zeta^Z) \rightarrow (G,\eta,\zeta^X)$, and $(\psi',\id_{Z' {\times_{Y_{12}}} X'}): (E',\alpha',\zeta^{Z'}) \rightarrow (G',\eta',\zeta^{X'})$ be 2-morphisms in $\CatLBGrb^\nabla(M)$. We then need to show that
	\begin{equation}
	 \sfd^{\rm l}_{H,G,G'}\circ \big(\phi\otimes(\psi\oplus
         \psi'\, )\big)=\big((\phi\otimes \psi)\oplus (\phi\otimes
         \psi'\, )\big)\circ \sfd^{\rm l}_{F,E,E'}~.
	\end{equation}
	Again the $\sfd^{\rm l}_H$ part of $\sfd^{\rm l}_{H,G,G'}$ can
        be readily pulled through $\phi\otimes(\psi\oplus \psi'\, )$, leaving us with
	\begin{equation}
	 \big( ( \dd_{(H,\beta)} \otimes \dd_{(G,\alpha)}) \oplus
         \big( \dd_{(H,\beta)} \otimes \dd_{(G',\alpha'\, )}) \big)\circ 
	 \big((\phi\otimes \psi)\oplus (\phi\otimes \psi'\, )\big)\circ \sfd^{\rm l}_F~.
	\end{equation}
	Applying Lemma~\ref{st:dd_and_other_2-mors}, we simplify this
        to $\big((\phi\otimes \psi)\oplus (\phi\otimes \psi'\,
        )\big)\circ \sfd^{\rm l}_{F,E,E'}$. This completes the proof
        of Theorem~\ref{st:Direct_sum_as_fctr}.

\setcounter{equation}{0}
\setcounter{theorem}{0}

\subsection{Transgression of forms to mapping spaces}
\label{sect:mapping_space_geometry}

Here we summarise some statements concerning the transgression of
differential forms from a manifold $M$ to mapping spaces over $M$.
Let $\Sigma$ be a smooth manifold of dimension $d$ and denote by $C^\infty(\Sigma,M) = (\Sigma{\rightarrow}M)$ its mapping space into $M$. These mapping spaces carry the structure of a Fr\'echet manifold, but we can also work in the category of diffeological spaces.

We start by investigating the tangent bundle of
$(\Sigma{\rightarrow}M)$, cf.\ e.g.\ \cite{kobayashi1989}. A tangent
vector to a map $f: \Sigma \rightarrow M$ is an infinitesimal
displacement of the point $f$ in $(\Sigma{\rightarrow}M)$; that is, it is an infinitesimal deformation of the map and therefore corresponds to an assignment of a tangent vector $X_{| f(\sigma)} \in T_{f(\sigma)}M$ to every $\sigma \in \Sigma$. Thus\footnote{If one considers based maps, the deformation of $f$ at a basepoint $\sigma_0$ has to vanish in order to preserve the based property.}
\begin{equation}
T_f (\Sigma{\rightarrow}M) = \Gamma(\Sigma,f^*TM)~.
\end{equation}
A section $X$ of the pullback bundle can be equivalently described by a
lift $\tilde{X}$ of $f$ to $TM$; that is, $\pi \circ \tilde{X} = f$ where $\pi: TM \rightarrow M$ is the projection to the base.
This is summarised in the commutative diagram
\begin{equation}
\xymatrixcolsep{1cm}
\xymatrixrowsep{1cm}
\myxymatrix{%
	f^*TM \ar@{->}[r]^-{\hat{f}} \ar@<0.5ex>[d]^-{f^*\pi} & TM \ar@{->}[d]^-{\pi}\\
	\Sigma \ar@<0.5ex>[u]^-{X} \ar@{->}[r]_-{f} \ar@{->}[ur]_-{\tilde{X}} & M
}
\end{equation}
The total tangent space is then $T(\Sigma{\rightarrow }M)=\bigcup_f\,
T_f(\Sigma{\rightarrow }M)$. An element $\tilde X\in
(\Sigma{\rightarrow }TM)$ can be regarded as a tangent vector at
$f=\pi\circ \tilde X$, which yields an isomorphism 
\begin{equation}\label{eq:isomorphism_tangents}
 T(\Sigma{\rightarrow} M)\cong (\Sigma {\rightarrow} TM)~.
\end{equation}

There is a double fibration
\begin{equation}
\xymatrixcolsep{1cm}
\xymatrixrowsep{1cm}
\myxymatrix{%
	 & \Sigma\times (\Sigma{\rightarrow} M)\ar@<0.5ex>[dl]_{\ev} \ar@<0.5ex>[dr]^-{\pr} & \\
	M & & (\Sigma{\rightarrow} M)
}
\end{equation}
where $\ev(\sigma,f):=f(\sigma)$ and $\pr(\sigma,f):=f$. The transgression of a
differential form $\omega\in \Omega^p(M)$ is now obtained by first
pulling back $\omega$ from $M$ to the correspondence space $\Sigma
\times (\Sigma{\rightarrow}M)$, and then using integration over
$\Sigma$ as a pushforward of $\omega$ to the mapping space $(\Sigma{\rightarrow}M)$.

\begin{definition}
	The \uline{transgression of a $p$-form} $\omega$ on $M$ to $(\Sigma{\rightarrow}M)$ is defined by its evaluation on tangent vectors $X_1,\ldots,X_{p-d}$ to the mapping space at a point $f$ as
	\begin{equation}\label{eq:def_transgression}
	\CT(\omega)_{|f}(X_1,\ldots,X_{p-d}) :=  \int_\Sigma\, f^*\big( \iota_{X_{p-d}} \cdots \iota_{X_1} \omega \big)~.
	\end{equation}
\end{definition}
In \eqref{eq:def_transgression} we made use of the isomorphism \eqref{eq:isomorphism_tangents}.

\begin{example}
Let $\Sigma = [0,1] = I$, $\tau$ a coordinate on the interval, and $\gamma: I \rightarrow M$ a based path in $M$. Then 
\begin{equation}
\begin{aligned}
\CT(\omega)_{|\gamma}(X_1,\ldots,X_{p-1})
&= \int_I\, \gamma^*\big( \iota_{X_{p-1}} \cdots \iota_{X_1} \omega \big)\\[4pt]
&= \int_I\, \dd\tau \ \omega_{|\gamma(\tau)} \big( X_1(\tau), \ldots, X_{p-1}(\tau), \dot{\gamma}(\tau) \big)~.
\end{aligned}
\end{equation}
This is indeed multilinear on vector fields, since a function on the mapping space is constant on every individual map.
\end{example}

Evidently, we have the following statement.
\begin{proposition}
	Transgression defines a morphism of vector spaces
	\begin{equation}
	\CT: \Omega^p(M) \longrightarrow
        \Omega^{p-d}(\Sigma{\rightarrow}M) \ .
	\end{equation}
\end{proposition}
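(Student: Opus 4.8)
The plan is to verify the two defining properties of a morphism of vector spaces, namely that $\CT$ is well-defined (lands in $\Omega^{p-d}(\Sigma{\rightarrow}M)$) and is linear. Linearity in $\omega$ is essentially immediate from the definition \eqref{eq:def_transgression}: the operations of interior multiplication $\iota_{X_i}$, pullback $f^*$, and integration $\int_\Sigma$ are all $\FR$-linear (indeed $\FC$-linear) in the form being acted upon, so $\CT(a\,\omega_1 + b\,\omega_2) = a\,\CT(\omega_1) + b\,\CT(\omega_2)$ follows by composing these linear operations. This is the routine part and I would dispatch it in a sentence.

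The substantive content is showing that $\CT(\omega)$ is genuinely a smooth differential form of degree $p-d$ on the mapping space. First I would check the degree bookkeeping: contracting $\omega \in \Omega^p(M)$ with the $p-d$ tangent vectors $X_1,\ldots,X_{p-d}$ leaves a $d$-form on $M$, which pulls back along $f:\Sigma\to M$ to a $d$-form on the $d$-dimensional manifold $\Sigma$, and integration over $\Sigma$ then produces a number; as the $X_i$ vary this assigns a scalar to each $(p-d)$-tuple of tangent vectors at $f$, so the degree is correct. Next I would verify $\FR$-multilinearity and total antisymmetry in the arguments $X_1,\ldots,X_{p-d}$. Multilinearity is where one must be slightly careful, since tangent vectors to $(\Sigma{\rightarrow}M)$ are sections $X_{|f(\sigma)}$ depending on $\sigma\in\Sigma$, and a priori one might worry about $C^\infty(\Sigma{\rightarrow}M)$-linearity versus pointwise linearity. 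The key observation, already flagged in the Example for paths, is that a function on the mapping space is constant along each individual map $f$, so rescaling a tangent vector $X_i$ by a scalar (a constant on $(\Sigma{\rightarrow}M)$, hence a $\sigma$-independent factor) pulls straight through the integral; antisymmetry is inherited pointwise from the antisymmetry of $\iota_{X_{p-d}}\cdots\iota_{X_1}\omega$ in the $X_i$.

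The main obstacle, and the only place requiring genuine care, is smoothness of $\CT(\omega)$ as a form on the infinite-dimensional (Fr\'echet or diffeological) manifold $(\Sigma{\rightarrow}M)$. I would handle this using the correspondence-space picture: $\CT(\omega) = \pr_* \ev^*\omega$, where $\ev^*\omega$ is a smooth form on $\Sigma\times(\Sigma{\rightarrow}M)$ and $\pr_*$ denotes fibre integration over the compact factor $\Sigma$. Invoking the isomorphism \eqref{eq:isomorphism_tangents}, $T_f(\Sigma{\rightarrow}M)\cong(\Sigma{\rightarrow}TM)$, together with smooth dependence of $f^*$ and of the integrand on $f$ (for which compactness of $\Sigma$, or suitable decay in the noncompact case, guarantees that differentiation under the integral sign is valid), yields that the evaluation on a smoothly varying family of tangent vectors depends smoothly on the base point $f$. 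In the diffeological framework this amounts to checking compatibility with plots, which again reduces to the standard smoothness of fibre integration along $\Sigma$. Thus $\CT(\omega)$ is a smooth $(p-d)$-form and $\CT$ is the asserted linear map; I would close by remarking that the chain-map property $\dd\circ\CT = (-1)^d\,\CT\circ\dd$ (used elsewhere in the paper) follows from Stokes' theorem applied to fibre integration, but this is not needed for the present statement.
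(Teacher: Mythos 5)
Your proposal is correct and supplies exactly the routine verification that the paper omits: the paper states this proposition with no proof at all (prefacing it with ``Evidently, we have the following statement''), and your checks of linearity in $\omega$, degree counting, pointwise multilinearity and antisymmetry in the $X_i$ (using the paper's own observation that functions on the mapping space are constant along each individual map), and smoothness via fibre integration $\pr_*\circ\ev^*$ over the compact factor $\Sigma$ are precisely the definitional unwinding the authors take for granted. One trivial quibble with your closing aside, which you rightly flag as unnecessary here: in the paper's conventions the chain-map identity for closed $\Sigma$ is $\dd\circ\CT = \CT\circ\dd$ without the sign $(-1)^d$, as follows from Proposition~\ref{st:d_and_transgression} when $\partial\Sigma = \emptyset$.
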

A more involved property is compatibility with the exterior and Lie derivatives, which we will explore in the following. As done e.g.\ in \cite[Lemma 3]{Wurzbacher:1995gb}, we can use the global definition of the exterior derivative
\begin{equation}
\begin{aligned}
&\dd\, \CT(\omega)\big( X_{0}, \ldots, X_{p-d} \big)\\ & \hspace{1cm}
\qquad := \sum_{i=0}^{p-d}\, (-1)^i\, \pounds_{X_i}\, \CT (\omega) \big( X_0, \ldots, \widehat{X_i}, \ldots, X_{p-d} \big) \\
&\hspace{1.5cm} \qquad +\, \sum_{0 \leq i < j \leq p-d}\, (-1)^{i+j}\,
\CT(\omega) \big( [X_i,X_j], X_0, \ldots, \widehat{X_i}, \ldots, \widehat{X_j}, \ldots, X_{p-d} \big) \ ,
\end{aligned}
\end{equation}
where the hats indicate omitted slots, but for this we have to extend the tangent vectors to local vector fields. Our extension will be different from that of \cite[Lemma 3]{Wurzbacher:1995gb}.

There exists a natural pullback of vector fields on $M$ to vector fields on $(\Sigma{\rightarrow}M)$ given by evaluating a vector field on the image of each $f \in (\Sigma{\rightarrow}M)$.
Since sections of vector bundles form soft sheaves, this pullback is surjective onto $T_f(\Sigma{\rightarrow}M)$ whenever $f$ is an embedding with closed image. This enables us to investigate the geometry of the mapping space in terms of the geometry of $M$.

However, points $f$ in our mapping spaces are not necessarily embeddings, and the pullback construction does not even yield all tangent vectors to the mapping space at a given point. Generic vector fields on $(\Sigma{\rightarrow}M)$ do not even locally correspond to vector fields on $M$.
For example, let $t \mapsto f_t$ be a curve in
$(\Sigma{\rightarrow}M)$ through $f$ such that at some $\sigma \in \Sigma$
the family $f_t(\sigma)$ is constant, that is, $f_t(\sigma) = f(\sigma)$ for all $t$;
then any vector field on $(\Sigma{\rightarrow}M)$ whose value at $f_t(\sigma)$ is not constant does not even induce a local vector field on $M$ around $f(\sigma)$.

A way around these injectivity problems is to replace maps $f \in (\Sigma{\rightarrow}M)$ by their graphs
\begin{equation}
\bbG f: \Sigma \longrightarrow \Sigma{\times}M, \quad \bbG f(\sigma) = (\sigma,f(\sigma))~,
\end{equation}
which are always embeddings as well as sections of $\pr_\Sigma: \Sigma{\times}M \rightarrow \Sigma$; that is, $\bbG$ is a map
\begin{equation}
\bbG: (\Sigma{\rightarrow}M) \longrightarrow \Gamma(\Sigma,\Sigma{\times}M)~.
\end{equation}
Moreover, the vertical tangent bundle $T^{\rm ver}(\Sigma{\times}M)$ coincides with the pullback of
$f^*TM$ along $\pr_\Sigma$.

Consider a tangent vector $X \in T_{f}
(\Sigma{\rightarrow}M)=\Gamma(\Sigma, f^*TM)$
at a point $f$, that is, a smooth assignment $X(\sigma) \in T_{f(\sigma)}M$ to every $\sigma \in \Sigma$. 
This in turn is equivalent to a vertical vector field $\bbG X \in
\Gamma \big( \bbG f(\Sigma), T^{\rm ver}(\Sigma{\times}M) \big)$ living on the image of $\bbG f$ in $\Sigma{\times}M$. Thus generic sections of the vertical tangent bundle yield vector fields on $(\Sigma{\rightarrow} M)$,
\begin{equation}
\bbG^*: \Gamma \big( \Sigma{\times}M, T^{\rm ver}(\Sigma{\times}M) \big) \longrightarrow \Gamma \big( (\Sigma{\rightarrow}M), T(\Sigma{\rightarrow}M) \big), \quad X \longmapsto \bbG^*X~,
\end{equation}
where over each point $f\in(\Sigma{\rightarrow} M)$ the map $\bbG^*$
is the restriction to the image of $\bbG f$. Smoothness of the resulting vector field follows from smoothness of the vector field on $\Sigma{\times} M$.

If $\Sigma$ is compact, the image $\bbG f(\Sigma)$ of $\Sigma$ under
every $\bbG f$ is closed in $\Sigma{\times}M$. Given a tangent vector $X\in T_{f}(\Sigma{\rightarrow} M)$ encoded as a vertical vector field over $\bbG f(\Sigma)$, we can extend it to a global vertical vector field over $\Sigma{\times} M$ by softness of the sheaf of sections of the vertical tangent bundle and we find the desired extension of a tangent vector to a vector field.

The pullback vector fields $\bbG^*X$ have further properties related to
those of $X$ which we will find useful in our computations. We briefly
summarise them in the following three lemmas.

\begin{lemma}
	Let $X,Y \in \Gamma \big( \Sigma{\times}M, T^{\rm ver}(\Sigma{\times}M) \big)$.
	We denote the flow of the vector field $X$ by $t \mapsto
        \Phi^X_t \in {\rm Diff}(\Sigma{\times}M)$.
	\begin{enumerate}
		\item The flows of $X$ and $\bbG^*X$ are related via
		\begin{equation}
			\bbG \Phi^{\bbG^*X}_t (f) = \Phi^X_t (\bbG f) \ ,
		\end{equation}
		that is, $\big( \bbG \Phi^{\bbG^*X}_t (f) \big) (\sigma) =
                \Phi^X_t \big( \bbG f (\sigma) \big) $ for all $f \in (\Sigma{\rightarrow}M)$ and $\sigma \in \Sigma$.
		\item The flow of a pullback vector field acts on another pullback vector field as
		\begin{equation}
			\Phi^{\bbG^*X}_{t*|f}\, (\bbG^*Y_{|f}) = \bbG^* \big( \Phi^X_{t*}\, Y \big)_{|\Phi^{\bbG^*X}_t(f)}~.
		\end{equation}
		\item The map $\bbG^*$ is a homomorphism of Lie algebras
		\begin{equation}
			\big[\bbG^*X, \bbG^*Y
                        \big]_{T_f(\Sigma{\rightarrow}M)} = \big(
                        \bbG^* [X,Y]_{T^{\rm ver}(\Sigma{\times}M)} \big)_{|f}
			= \bbG^*_{|f}[X,Y]_{T^{\rm ver}(\Sigma{\times}M)}~.
		\end{equation}
	\end{enumerate}
\end{lemma}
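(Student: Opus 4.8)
The plan is to prove the three parts in the order stated, since each builds on the previous one, and to exploit throughout the fact that a vertical vector field $X$ on $\Sigma{\times}M$ has a fibre-preserving flow: because $X$ takes values in $T^{\rm ver}(\Sigma{\times}M)$, the diffeomorphisms $\Phi^X_t$ fix the $\Sigma$-coordinate, so $\Phi^X_t(\sigma,m)=(\sigma,\phi^X_{t,\sigma}(m))$ for a smooth $\sigma$-family of local flows $\phi^X_{t,\sigma}$ on $M$. This reduces every statement to an assertion about honest finite-dimensional flows on $M$ holding pointwise in $\sigma\in\Sigma$, with smoothness in $\sigma$ and in the point $f\in(\Sigma{\rightarrow}M)$ following from smooth dependence of solutions of ODEs on parameters and initial conditions; this reduction is what lets us bypass the subtleties of calculus on the Fr\'echet (or diffeological) manifold $(\Sigma{\rightarrow}M)$.

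For (1) I would set $g_t:=\pr_M\circ\Phi^X_t\circ\bbG f$, a curve in $(\Sigma{\rightarrow}M)$ with $g_0=f$, and show it solves the defining ODE of the flow of $\bbG^*X$. Evaluating at a fixed $\sigma$ and using verticality, $\tfrac{\dd}{\dd t}g_t(\sigma)=\tfrac{\dd}{\dd t}\pr_M\bigl(\Phi^X_t(\sigma,f(\sigma))\bigr)$ equals $X(\sigma,g_t(\sigma))$ regarded as an element of $T_{g_t(\sigma)}M$, which by definition is exactly $\bbG^*X$ evaluated at the point $g_t$ and the parameter $\sigma$. Hence $g_t$ and $\Phi^{\bbG^*X}_t(f)$ satisfy the same first-order ODE with the same initial value, and uniqueness of flows gives $g_t=\Phi^{\bbG^*X}_t(f)$, which is the asserted identity $\bbG\Phi^{\bbG^*X}_t(f)=\Phi^X_t(\bbG f)$ read pointwise.

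For (2) I would differentiate the graph identity of (1) along an auxiliary curve. Realising a tangent vector $Z\in T_f(\Sigma{\rightarrow}M)$ as $\tfrac{\dd}{\dd s}\big|_0 h_s$ for a curve $h_s$ through $f$, part (1) gives $(\Phi^{\bbG^*X}_t(h_s))(\sigma)=\pr_M\bigl(\Phi^X_t(\sigma,h_s(\sigma))\bigr)$; applying $\tfrac{\dd}{\dd s}\big|_0$ and again using verticality shows that $\Phi^{\bbG^*X}_{t*|f}(Z)$ is, pointwise in $\sigma$, the vertical differential of $\Phi^X_t$ applied to $Z(\sigma)$. Taking $Z=\bbG^*Y_{|f}$, so $Z(\sigma)=Y(\sigma,f(\sigma))$, and invoking part (1) once more to identify $\Phi^X_t(\sigma,f(\sigma))=\bbG(\Phi^{\bbG^*X}_t(f))(\sigma)$, the right-hand side is precisely the pushforward vector field $\Phi^X_{t*}Y$ restricted to the graph of $f_t:=\Phi^{\bbG^*X}_t(f)$, i.e.\ $\bbG^*(\Phi^X_{t*}Y)_{|f_t}$, as required. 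Pushforward preserves verticality because $\Phi^X_t$ is fibre-preserving, so this expression is well defined.

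Finally, for (3) I would write the Lie bracket as a Lie derivative, $[\bbG^*X,\bbG^*Y]_{|f}=\tfrac{\dd}{\dd t}\big|_0\Phi^{\bbG^*X}_{-t*}\bigl(\bbG^*Y_{|\Phi^{\bbG^*X}_t(f)}\bigr)$, and substitute part (2) (with $t\mapsto-t$ and base point $\Phi^{\bbG^*X}_t(f)$, whose image under $\Phi^{\bbG^*X}_{-t}$ is $f$) to obtain $\Phi^{\bbG^*X}_{-t*}\bigl(\bbG^*Y_{|\Phi^{\bbG^*X}_t(f)}\bigr)=\bbG^*(\Phi^X_{-t*}Y)_{|f}$. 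Since $\bbG^*$ is linear and commutes with the $t$-derivative, the bracket equals $\bbG^*\bigl(\tfrac{\dd}{\dd t}\big|_0\Phi^X_{-t*}Y\bigr)_{|f}=\bbG^*([X,Y])_{|f}$, the Lie-algebra homomorphism property. The main obstacle I anticipate is not the algebra but the analytic bookkeeping on the mapping space: justifying that $g_t$ is genuinely the flow curve, that $\bbG^*$ commutes with differentiation, and that the pointwise arguments assemble into smooth objects on $(\Sigma{\rightarrow}M)$. All of these are handled by the reduction to the $\sigma$-family of finite-dimensional flows, together with the compactness and softness hypotheses already invoked in the surrounding text for the existence of extensions of tangent vectors to global vertical vector fields.
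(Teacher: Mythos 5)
Your proposal is correct and follows essentially the same route as the paper: part (1) by unwinding the definition of the flow of $\bbG^*X$ fibrewise in $\sigma$ (the paper states this as a definition-chase, you make it explicit via the ODE and uniqueness), part (2) by representing the tangent vector $\bbG^*Y_{|f}$ as the derivative of a curve through $f$ and rewriting with part (1), and part (3) by differentiating the identity of part (2) in $t$ at $t=0$. Your extra remarks (fibre-preservation of $\Phi^X_t$, verticality of $\Phi^X_{t*}Y$, commuting $\bbG^*$ with the $t$-derivative) only spell out steps the paper leaves implicit.
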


\begin{proof}
	 (1):
	This is the definition of the flow of a vector field on
        $(\Sigma{\rightarrow}M)$: Every point $f(\sigma)$ flows along the tangent vector $X(\sigma)$ at $f(\sigma)$.
	This only requires extensions of $X(\sigma)$ for fixed $\sigma \in \Sigma$ to a local vector field on $M$.
	
	\noindent
        (2):
	Let $s \mapsto \Gamma_s$ be a curve in
        $(\Sigma{\rightarrow}M)$ through $f$ with tangent vector at $s
        = 0$ given by $\frac{\dd}{\dd s}_{|0} \Gamma_s =
        \bbG^*Y_{|f}$, or equivalently $\frac{\dd}{\dd s}_{|0}
        \bbG\Gamma_s (\sigma) = Y_{|f(\sigma)}$.
	Then we use item (1) to compute
	\begin{equation}
	\begin{aligned}
	\big( \Phi^{\bbG^*X}_{t*}\, (\bbG^*Y)_{|f} \big) (\sigma)
	&= \frac{\dd}{\dd s}_{|0} \Phi^{\bbG^*X}_{t} \circ \Gamma_s (\sigma)\\[4pt]
	&= \frac{\dd}{\dd s}_{|0} \pr_M\circ \Phi^X_{t} \circ \bbG \Gamma_s (\sigma)\\[4pt]
	&= \pr_{M*} \Phi^X_{t*} (Y_{|f(\sigma)})\\[4pt]
	&= \bbG^* \big( \Phi^X_{t*}\, Y \big)_{|\Phi^{\bbG^*X}_{t} (f)}\, (\sigma)~.
	\end{aligned}
	\end{equation}
		
	\noindent
        (3):
	This follows from differentiating the result of item (2) with respect to $t$ and then evaluating at $t=0$.
\end{proof}

\begin{lemma}
The transgression of a form $\omega\in\Omega^p(M)$ contracts with tangent vectors of the above type according to
\begin{equation}
\CT(\omega)_{|f} \big( \bbG^*X_1, \ldots, \bbG^*X_{p-d} \big)
= \int_\Sigma\, (\bbG f)^* \big( \iota_{X_{p-d}} \cdots \iota_{X_1} \,
\pr_M^* \omega \big)~,
\end{equation}
where $X_1,\dots,X_{p-d}\in \Gamma(\Sigma{\times}M,T^{\rm
  ver}(\Sigma{\times}M))$ and $\pr_M:\Sigma{\times}M\to M$ is the projection.
\label{lem:transGomega}\end{lemma}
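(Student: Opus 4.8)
The plan is to reduce the asserted identity to a pointwise statement about $d$-forms on $\Sigma$ and then verify it by unwinding all the definitions fibrewise. Since both sides are integrals over $\Sigma$ of $d$-forms, it suffices to show that the two integrands
\[
f^*\big(\iota_{\bbG^*X_{p-d}}\cdots\iota_{\bbG^*X_1}\,\omega\big)
\eand
(\bbG f)^*\big(\iota_{X_{p-d}}\cdots\iota_{X_1}\,\pr_M^*\omega\big)
\]
agree as elements of $\Omega^d(\Sigma)$; I would then check this equality pointwise, evaluating both $d$-forms at an arbitrary point $\sigma\in\Sigma$ on an arbitrary tuple $v_1,\dots,v_d\in T_\sigma\Sigma$.

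The two geometric facts driving the computation are, first, that the graph satisfies $\pr_M\circ\bbG f=f$, so that $\dd\pr_M\circ\dd(\bbG f)_\sigma=\dd f_\sigma$ for every $\sigma$; and second, that $\dd\pr_M$ restricts on each fibre to the canonical isomorphism $T^{\rm ver}_{(\sigma,m)}(\Sigma{\times}M)\xrightarrow{\cong}T_mM$, which is exactly the identification used to define $\bbG^*$. Concretely, under \eqref{eq:isomorphism_tangents} the tangent vector $(\bbG^*X_i)_{|f}$ is the section $\sigma\mapsto X_i(\bbG f(\sigma))$ of $f^*TM$, where the value $X_i(\bbG f(\sigma))\in T^{\rm ver}_{\bbG f(\sigma)}(\Sigma{\times}M)$ is read as an element of $T_{f(\sigma)}M$ via $\dd\pr_M$.

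Feeding these into the definition \eqref{eq:def_transgression}, the first integrand evaluated at $\sigma$ on $v_1,\dots,v_d$ equals
\[
\omega_{f(\sigma)}\big((\bbG^*X_1)(\sigma),\dots,(\bbG^*X_{p-d})(\sigma),\dd f_\sigma v_1,\dots,\dd f_\sigma v_d\big),
\]
while expanding $(\bbG f)^*$, the iterated interior products and $\pr_M^*$ in the second integrand, and applying the two facts above (using $\dd\pr_M\,\dd(\bbG f)_\sigma v_j=\dd f_\sigma v_j$ for the last $d$ slots and the vertical identification for the first $p-d$ slots), yields precisely the same expression. Hence the integrands coincide pointwise, and integrating over $\Sigma$ gives the claim.

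The one genuinely delicate point, and the step I would be most careful with, is the interpretation of the symbol $\iota_{X_{p-d}}\cdots\iota_{X_1}\omega$ on the two sides: on the mapping-space side the arguments $\bbG^*X_i$ are not vector fields on $M$ but sections of $f^*TM$, whereas on the graph side the $X_i$ are honest vertical vector fields on $\Sigma{\times}M$ contracted against $\pr_M^*\omega$. Matching these two conventions is the substance of the lemma, and it rests precisely on the factorisation $\ev|_{\Sigma\times\{f\}}=\pr_M\circ\bbG f$ together with the vertical tangent isomorphism; I would also confirm that the ordering of the interior products (and hence any resulting signs) is identical on both sides, which holds since the same ordered tuple is contracted in each case.
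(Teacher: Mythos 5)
Your proposal is correct and is essentially the paper's own argument: the paper's proof consists of the single sentence ``This follows by direct computation,'' and your pointwise comparison of the two integrands is precisely that computation carried out in full. In particular, you correctly isolate the two identities that make it work --- $\dd\pr_M\circ\dd(\bbG f)_\sigma=\dd f_\sigma$ from $\pr_M\circ\bbG f=f$ for the last $d$ slots, and the fibrewise isomorphism $T^{\rm ver}_{(\sigma,m)}(\Sigma\times M)\cong T_mM$ via $\dd\pr_M$ (which is exactly how $\bbG^*X_i$ is defined as a section of $f^*TM$) for the first $p-d$ slots --- and you rightly flag that matching the two readings of $\iota_{X_{p-d}}\cdots\iota_{X_1}$, with identical ordering and hence no sign discrepancy, is the entire content of the lemma.
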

\begin{proof}
This follows by direct computation.
\end{proof}

\begin{lemma}
	\label{st:commuting_d_and_iotas}
	Let $X_0,\ldots, X_n$ be vector fields on a manifold $M$.
	For any $\omega \in \Omega^p(M)$ with $p \geq n+1$,
	\begin{equation}
	  \begin{aligned}
	  &\iota_{X_n} \cdots \iota_{X_0}\, \dd\omega + (-1)^n\, \dd\,
          \iota_{X_n} \cdots \iota_{X_0} \omega \\
	  & \qquad\qquad = \sum_{i=0}^n\, (-1)^i\, \pounds_{X_i}\,
          \iota_{X_n} \cdots \widehat{\iota_{X_i}} \cdots \iota_{X_0}
          \omega \\
	  &\hspace{4cm}\quad + \sum_{0 \leq i < j \leq n}\, (-1)^{i+j}\, \iota_{X_n} \cdots \widehat{\iota_{X_j}} \cdots \widehat{\iota_{X_i}} \cdots \iota_{X_0}\, \iota_{[X_i,X_j]} \omega~.
	  \end{aligned}
	\end{equation}
\end{lemma}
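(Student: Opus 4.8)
The final statement is Lemma~\ref{st:commuting_d_and_iotas}, a purely local identity on an ordinary manifold relating contractions of $\dd\omega$ with the Cartan-type expansion of $\dd(\iota_{X_n}\cdots\iota_{X_0}\omega)$. It has nothing to do with gerbes or mapping spaces; it is the multi-vector-field generalisation of the standard formula $\iota_X\,\dd\omega+\dd\,\iota_X\omega=\pounds_X\omega$ together with the global coordinate-free formula for $\dd$. The plan is to prove it by induction on $n$, using only Cartan's magic formula $\pounds_X=\iota_X\,\dd+\dd\,\iota_X$ and the graded commutation relations among $\iota_X$, $\dd$, and $\pounds_X$.

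First I would record the three elementary ingredients I will use repeatedly: (i) Cartan's formula $\pounds_{X}=\iota_{X}\dd+\dd\,\iota_{X}$; (ii) the graded anticommutativity $\iota_{X}\iota_{Y}=-\iota_{Y}\iota_{X}$, so that $\iota_X^2=0$; and (iii) the commutator $[\pounds_{X},\iota_{Y}]=\iota_{[X,Y]}$, equivalently $\pounds_{X}\iota_{Y}=\iota_{Y}\pounds_{X}+\iota_{[X,Y]}$. The base case $n=0$ is exactly Cartan's formula $\iota_{X_0}\dd\omega+\dd\,\iota_{X_0}\omega=\pounds_{X_0}\omega$, which matches the claimed identity since the double sum over $0\le i<j\le 0$ is empty.

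For the inductive step I would isolate the outermost contraction $\iota_{X_n}$ and push a $\dd$ through it using Cartan's formula in the form $\iota_{X_n}\dd=\pounds_{X_n}-\dd\,\iota_{X_n}$. Concretely, write $\Phi_n:=\iota_{X_n}\cdots\iota_{X_0}\,\dd\omega+(-1)^n\dd\,\iota_{X_n}\cdots\iota_{X_0}\omega$ and observe that $\iota_{X_n}$ commutes with the inner block $\iota_{X_{n-1}}\cdots\iota_{X_0}$ only up to the sign coming from moving it past $n$ contractions. I would rewrite $\Phi_n$ in terms of $\Phi_{n-1}$ applied to $\omega$ by commuting $\iota_{X_n}$ inward and replacing $\iota_{X_n}\dd$ by $\pounds_{X_n}-\dd\,\iota_{X_n}$ in the first term; the $\dd\,\iota_{X_n}$ piece then combines with the second term of $\Phi_n$ to reproduce, after reindexing, $(-1)^{n-1}\dd$ acting on the $\iota_{X_{n-1}}\cdots\iota_{X_0}$ block. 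The leftover $\pounds_{X_n}$ term, together with the commutators $[\pounds_{X_n},\iota_{X_i}]=\iota_{[X_n,X_i]}$ generated when one moves $\pounds_{X_n}$ through the remaining contractions to the position dictated by the right-hand side, is exactly what supplies the new $i=n$ single-sum term and the new $j=n$ entries of the double sum. Applying the induction hypothesis $\Phi_{n-1}=\text{(RHS with }n\to n-1)$ to the inner block then assembles all terms.

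The main obstacle, and the step deserving the most care, is the bookkeeping of signs and of the reindexing when $\pounds_{X_n}$ is commuted to its final slot: each transposition past an $\iota_{X_i}$ produces a sign $(-1)$ and a commutator term $\iota_{[X_n,X_i]}$, and one must check that these signs line up with the $(-1)^i$ and $(-1)^{i+j}$ prefactors on the right-hand side, including the hatted (omitted) slots. I expect that choosing a fixed convention for the ordering in which the $\iota$'s are written (say always $\iota_{X_n}$ leftmost) and tracking the parity of each move will make the matching mechanical but delicate; this is the only place where an error is likely, and I would verify it by independently checking the cases $n=1$ and $n=2$ against a direct expansion before declaring the induction complete.
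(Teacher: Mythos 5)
Your proposal is correct and is essentially the paper's own proof: an induction on $n$ whose base case is Cartan's formula $\pounds_X = \dd\,\iota_X + \iota_X\,\dd$ and whose inductive step is driven by the identity $[\pounds_X,\iota_Y]=\iota_{[X,Y]}$. The only (immaterial) organisational difference is that the paper peels off the \emph{outermost} contraction, writing $\dd\,\iota_{X_{n+1}}\cdots\iota_{X_0}\omega = \pounds_{X_{n+1}}\,\iota_{X_n}\cdots\iota_{X_0}\omega - \iota_{X_{n+1}}\,\dd\,\iota_{X_n}\cdots\iota_{X_0}\omega$ and then commuting $\iota_{X_{n+1}}$ past the Lie derivatives in the induction hypothesis, whereas you move $\iota_{X_n}$ inward to meet $\dd$, apply the hypothesis to $\iota_{X_n}\omega$, and commute $\pounds_{X_n}$ outward; the sign bookkeeping you flag as delicate does close up consistently in both versions.
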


\begin{proof}
	The proof is by induction.
	For $n = 0$ the identity collapses to the Cartan formula for the Lie derivative $\pounds_{X} = \dd\, \iota_X + \iota_X\, \dd$. Using the induction hypothesis and $[\pounds_X,\iota_Y] = \iota_{[X,Y]}$ for $X,Y \in \Gamma(M,TM)$, we readily compute
	\begin{align}
	\dd\, \iota_{X_{n+1}} \cdots \iota_{X_0} \omega
	&= \pounds_{X_{n+1}}\, \iota_{X_n} \cdots \iota_{X_0} \omega - \iota_{X_{n+1}}\, \dd\, \iota_{X_n} \cdots \iota_{X_0} \omega \nt[4pt]
	&= \pounds_{X_{n+1}}\, \iota_{X_n} \cdots \iota_{X_0} \omega \nt*
	&\quad - \sum_{i=0}^n\, (-1)^{n+i}\, \iota_{X_{n+1}}\, \pounds_{X_i}\, \iota_{X_n} \cdots \widehat{\iota_{X_i}} \cdots \iota_{X_0} \omega \nt*
	&\quad - \sum_{0 \leq i < j \leq n}\, (-1)^{n+i+j}\, \iota_{X_{n+1}} \cdots \widehat{\iota_{X_j}} \cdots \widehat{\iota_{X_i}} \cdots \iota_{X_0}\, \iota_{[X_i,X_j]} \omega \nt*
	&\quad + (-1)^{n+2}\, \iota_{X_{n+1}} \cdots \iota_{X_0}\, \dd\omega \nt[4pt]
	&= \sum_{i=0}^{n+1}\, (-1)^{n+1+i}\, \pounds_{X_i}\, \iota_{X_{n+1}} \cdots \widehat{\iota_{X_i}} \cdots \iota_{X_0} \omega \nt*
	&\quad + \sum_{0 \leq i < j \leq n+1}\, (-1)^{n+1+i+j}\, \iota_{X_{n+1}} \cdots \widehat{\iota_{X_j}} \cdots \widehat{\iota_{X_i}} \cdots \iota_{X_0}\, \iota_{[X_i,X_j]} \omega \nt*
	&\quad + (-1)^{n+2}\, \iota_{X_{n+1}} \cdots \iota_{X_0}\, \dd\omega~,
	\end{align}
	as required.
\end{proof}
In particular, we recover the coordinate-free expression for the exterior derivative $\dd\omega$ of $\omega \in \Omega^{n+1}(M)$.

We are now in a position to compute the derivative of a transgressed
form on the mapping space.

\begin{proposition}
	\label{st:d_and_transgression}
	Let $\omega \in \Omega^p(M)$ be a $p$-form on $M$, and $\Sigma$ a compact $d$-dimensional manifold.
	The derivative of the transgression of $\omega$ on the mapping space $(\Sigma{\rightarrow}M)$ reads as
	\begin{equation}
	\begin{aligned}
	\label{eq:d_and_transgression}
	&\dd \, \CT (\omega)_{|f} \big( \bbG^*X_{0}, \ldots, \bbG^*X_{p-d} \big)\\
	&= (-1)^{p-d}\, \int_{\partial \Sigma}\, (\bbG\, \partial f)^* \big( \iota_{X_{p-d}} \cdots \iota_{X_0}\, \pr_M^* \omega \big)
	+ \CT(\dd \omega )_{|f} \big( \bbG^*X_{0}, \ldots, \bbG^*X_{p-d} \big)~,
	\end{aligned}
	\end{equation}
	where $X_0,\dots,X_{p-d}\in \Gamma(\Sigma{\times}M,T^{\rm
  ver}(\Sigma{\times}M))$ and we use the notation $\partial f = f_{|\partial \Sigma}$.
\end{proposition}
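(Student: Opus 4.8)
The plan is to reduce the statement about transgression on the mapping space to the global Cartan-type identity on $M$ already established in Lemma~\ref{st:commuting_d_and_iotas}, combined with Stokes' theorem applied fibrewise over $\Sigma$. First I would invoke the global coordinate-free formula for the exterior derivative on $(\Sigma{\rightarrow}M)$ with the tangent vectors extended to the pullback vector fields $\bbG^* X_i$, which is legitimate since $\bbG^*$ produces genuine vector fields on the mapping space. The crucial simplifications come from the three properties of $\bbG^*$ collected in the preceding lemma: that $\bbG^*$ is a Lie algebra homomorphism (so $[\bbG^*X_i,\bbG^*X_j]=\bbG^*[X_i,X_j]$), and that the Lie derivative along $\bbG^*X_i$ of a transgressed form can be computed via the flow relation $\Phi^{\bbG^*X}_{t*}(\bbG^*Y)=\bbG^*(\Phi^X_{t*}Y)$. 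These let me rewrite each term in the global derivative formula as a transgression of the corresponding term built from $X_0,\ldots,X_{p-d}$ on $\Sigma{\times}M$.

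The key computation is then to show
\begin{equation}
\pounds_{\bbG^*X_i}\, \CT(\omega)\big(\bbG^*X_0,\ldots,\widehat{\bbG^*X_i},\ldots\big)
= \int_\Sigma\, (\bbG f)^*\, \pounds_{X_i}\big(\iota_{X_{p-d}}\cdots\widehat{\iota_{X_i}}\cdots\iota_{X_0}\,\pr_M^*\omega\big)~,
\end{equation}
which follows from Lemma~\ref{lem:transGomega} together with the flow relation and the fact that the Lie derivative is the infinitesimal generator of the pullback by the flow. Assembling all such terms, the right-hand side of the global exterior-derivative formula becomes exactly the integral over $\Sigma$ of $(\bbG f)^*$ applied to the right-hand side of the identity in Lemma~\ref{st:commuting_d_and_iotas} (with $\omega$ replaced by $\pr_M^*\omega$ and $n=p-d$). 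By that lemma this integrand equals
\begin{equation}
(\bbG f)^*\Big(\iota_{X_{p-d}}\cdots\iota_{X_0}\,\dd\,\pr_M^*\omega + (-1)^{p-d}\,\dd\,\iota_{X_{p-d}}\cdots\iota_{X_0}\,\pr_M^*\omega\Big)~.
\end{equation}

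The first piece integrates over $\Sigma$ to give $\CT(\dd\omega)$, using $\dd\,\pr_M^*\omega=\pr_M^*\dd\omega$ and Lemma~\ref{lem:transGomega} again. The second piece is an exact form on $\Sigma$ pulled back along $\bbG f$, and since $(\bbG f)^*$ commutes with $\dd$, Stokes' theorem on the compact manifold $\Sigma$ turns $\int_\Sigma (\bbG f)^*\dd(\cdots)$ into the boundary integral $\int_{\partial\Sigma}(\bbG\,\partial f)^*(\cdots)$, which is precisely the first term on the right-hand side of \eqref{eq:d_and_transgression}. The main obstacle I anticipate is bookkeeping the interplay between the exterior derivative on the infinite-dimensional mapping space and the fibrewise derivative on $\Sigma$: one must be careful that the contraction/Lie-derivative manipulations valid for honest vector fields on a finite-dimensional manifold transfer correctly, which is exactly what the flow-compatibility and Lie-algebra-homomorphism properties of $\bbG^*$ guarantee, so the remaining work is to track the signs $(-1)^i$, $(-1)^{i+j}$ faithfully through the reduction so that they reproduce the overall factor $(-1)^{p-d}$ in the boundary term.
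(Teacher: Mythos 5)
Your proposal is correct and follows essentially the same route as the paper's proof: expanding $\dd\,\CT(\omega)$ via the coordinate-free formula on the mapping space, using the flow-compatibility and Lie-algebra-homomorphism properties of $\bbG^*$ together with Lemma~\ref{lem:transGomega} to pull everything under the integral over $\Sigma$, invoking Lemma~\ref{st:commuting_d_and_iotas} applied to $\pr_M^*\omega$ with $n=p-d$, and concluding with $\dd\,\pr_M^*\omega=\pr_M^*\,\dd\omega$ and Stokes' theorem for the boundary term. No gaps; the sign bookkeeping you flag is handled exactly as in the paper.
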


\begin{proof}
	Using the coordinate-free definition of the exterior derivative and Lemma~\ref{lem:transGomega}, the proof amounts to a straightforward computation
	\begin{align}
	&\dd \, \CT\, (\omega )_{|f} \big( \bbG^*X_{0}, \ldots, \bbG^*X_{p-d} \big) \nt*[0.2cm]
	&= \sum_{i=0}^{p-d}\, (-1)^i\, \pounds_{\bbG^*X_i}\, \CT (\omega)_{|f}\big( \bbG^*X_0, \ldots, \widehat{\bbG^*X_i}, \ldots, \bbG^*X_{p-d} \big) \nt*
	&\quad + \sum_{0 \leq i < j \leq p-d}\, (-1)^{i+j}\, \CT(\omega)_{|f} \big( [\bbG^*X_i, \bbG^*X_j], \bbG^*X_0, \ldots, \widehat{\bbG^*X_i}, \ldots, \widehat{\bbG^*X_j}, \ldots, \bbG^*X_{p-d} \big) \nt[0.2cm]
	&= \sum_{i=0}^{p-d}\, (-1)^i\, \pounds_{\bbG^*X_i}\ \int_\Sigma\, (\bbG f)^* \big( \iota_{X_{p-d}} \cdots \widehat{\iota_{X_i}} \cdots \iota_{X_0}\, \pr_M^* \omega \big) \nt*
	&\quad + \sum_{0 \leq i < j \leq p-d}\, (-1)^{i+j}\ \int_\Sigma\, (\bbG f)^* \big( \iota_{X_{p-d}} \cdots \widehat{\iota_{X_j}} \cdots \widehat{\iota_{X_i}} \cdots \iota_{X_0}\, \iota_{[X_i, X_j]}\, \pr_M^* \omega \big) \nt[0.2cm]
	&= \sum_{i=0}^{p-d}\, (-1)^i\, \frac{\dd}{\dd t}_{|0}\ \int_\Sigma\,  \Phi^{\bbG^*X_i}_t (f)^* \big( \iota_{X_{p-d}} \cdots \widehat{\iota_{X_i}} \cdots \iota_{X_0}\, \pr_M^*\omega \big) \nt*
	&\quad + \sum_{0 \leq i < j \leq p-d}\, (-1)^{i+j}\ \int_\Sigma\, (\bbG f)^* \big( \iota_{X_{p-d}} \cdots \widehat{\iota_{X_j}} \cdots \widehat{\iota_{X_i}} \cdots \iota_{X_0}\, \iota_{[X_i, X_j]}\, \pr_M^*\omega \big) \nt[0.2cm]
	&= \sum_{i=0}^{p-d}\, (-1)^i\ \int_\Sigma\, (\bbG f)^* \big( \pounds_{X_i}\, \iota_{X_{p-d}} \cdots \widehat{\iota_{X_i}} \cdots \iota_{X_0}\, \pr_M^* \omega \big) \nt*
	&\quad + \sum_{0 \leq i < j \leq p-d}\, (-1)^{i+j}\ \int_\Sigma\, (\bbG f)^* \big( \iota_{X_{p-d}} \cdots \widehat{\iota_{X_j}} \cdots \widehat{\iota_{X_i}} \cdots \iota_{X_0}\, \iota_{[X_i, X_j]}\, \pr_M^* \omega \big) \nt[0.2cm]
	&= (-1)^{p-d}\, \int_\Sigma\, (\bbG f)^* \big( \dd\, \iota_{X_{p-d}} \cdots \iota_{X_0}\, \pr_M^* \omega \big)
	+ \int_\Sigma\, (\bbG f)^* \big( \iota_{X_{p-d}} \cdots \iota_{X_0}\, \pr_M^*\, \dd \omega \big) \nt[0.2cm]
	&= (-1)^{p-d}\, \int_{\partial \Sigma}\, (\bbG\,\partial f)^* \big( \iota_{X_{p-d}} \cdots \iota_{X_0}\, \pr_M^* \omega \big)
	+ \int_\Sigma\, (\bbG f)^* \big( \iota_{X_{p-d}} \cdots \iota_{X_0}\, \pr_M^*\, \dd \omega \big)~.
	\end{align}
	Comparing this with Lemma~\ref{lem:transGomega} yields the required identity.
\end{proof}

\begin{remark}
The first term on the right-hand side of
\eqref{eq:d_and_transgression} can be regarded as a transgression to
the mapping space $(\dpar \Sigma{\rightarrow} M)$; in the following we
denote it by $\CT_{\partial}$. With this notation \eqref{eq:d_and_transgression} reads
\beq
\dd\circ\CT=\CT\circ\dd+(-1)^{{\rm deg}-d}\, \partial^*\circ\CT_\partial \ ,
\eeq
where $\partial:(\Sigma{\rightarrow} M)\to (\dpar \Sigma{\rightarrow} M)$ is the map $f\mapsto f_{|\partial\Sigma}$.
\end{remark}

\begin{corollary}
	If $\Sigma$ is closed, then transgression is a chain map inducing a pushforward map of degree $-d$ on the de Rham complexes and therefore on de Rham cohomology,
	\begin{equation}
	\CT: H^p_{\rm dR}(M) \longrightarrow H^{p-d}_{\rm dR}(\Sigma{\rightarrow}M)~,\quad [\omega] \longmapsto [\CT(\omega)]~.
	\end{equation}
\end{corollary}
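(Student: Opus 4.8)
The statement to prove is the final Corollary: if $\Sigma$ is closed, then transgression $\CT$ is a chain map inducing a pushforward of degree $-d$ on de Rham cohomology.

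The plan is to deduce this directly from Proposition~\ref{st:d_and_transgression}, which is the real content and may be assumed. First I would observe that the entire statement is an immediate consequence of the boundary term in~\eqref{eq:d_and_transgression} vanishing. Since $\Sigma$ is closed, we have $\partial \Sigma = \emptyset$, so the integral $\int_{\partial \Sigma}(\bbG\, \partial f)^*(\iota_{X_{p-d}} \cdots \iota_{X_0}\, \pr_M^* \omega)$ is an integral over the empty set and hence is zero. Thus Proposition~\ref{st:d_and_transgression} collapses to the clean identity
\begin{equation}
\dd \, \CT(\omega) = \CT(\dd \omega)
\end{equation}
for every $\omega \in \Omega^p(M)$, which is precisely the statement that $\CT$ is a chain map of degree $-d$ from $(\Omega^\bullet(M),\dd)$ to $(\Omega^{\bullet - d}(\Sigma{\rightarrow}M),\dd)$.

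Next I would check that this chain-map property is enough to induce a well-defined map on cohomology. This is the standard homological algebra argument: a chain map sends cocycles to cocycles and coboundaries to coboundaries. Concretely, if $\dd \omega = 0$ then $\dd\, \CT(\omega) = \CT(\dd\omega) = 0$, so $\CT(\omega)$ is closed and defines a class $[\CT(\omega)] \in H^{p-d}_{\rm dR}(\Sigma{\rightarrow}M)$. If $\omega = \dd \eta$ is exact, then $\CT(\omega) = \CT(\dd\eta) = \dd\, \CT(\eta)$ is again exact, so $[\CT(\omega)] = 0$. Therefore the assignment $[\omega] \mapsto [\CT(\omega)]$ is independent of the representative and yields a well-defined linear map $\CT: H^p_{\rm dR}(M) \to H^{p-d}_{\rm dR}(\Sigma{\rightarrow}M)$. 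Linearity is inherited from the linearity of $\CT$ on forms, already recorded in the excerpt.

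There is essentially no obstacle here, since the hard analytic and combinatorial work is entirely contained in Proposition~\ref{st:d_and_transgression}; the only subtle point worth a sentence is the justification that the boundary integral genuinely vanishes when $\partial\Sigma = \emptyset$, and that the tangent vectors on the mapping space used in stating the identity arise as pullbacks $\bbG^* X_i$ of vertical vector fields on $\Sigma{\times}M$ (the extension being guaranteed by the softness argument in the appendix for compact $\Sigma$), so that Proposition~\ref{st:d_and_transgression} applies to an arbitrary tuple of tangent vectors at $f$. Since a closed manifold is in particular compact, the hypotheses of that proposition are met, and the corollary follows without further computation.
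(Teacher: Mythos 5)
Your proposal is correct and follows exactly the route the paper intends: the corollary is an immediate consequence of Proposition~\ref{st:d_and_transgression}, since $\partial\Sigma=\emptyset$ kills the boundary term, giving $\dd\circ\CT=\CT\circ\dd$, after which the standard cocycle/coboundary argument yields the induced map on de Rham cohomology. Your additional remark that closedness of $\Sigma$ (hence compactness) guarantees the extension of tangent vectors to vertical vector fields, so that the proposition applies to arbitrary tangent vectors at $f$, is a correct and welcome clarification of a point the paper leaves implicit.
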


We can now compute the Lie derivative of a transgressed form from the Cartan formula $\pounds_X = \dd\, \iota_X + \iota_X\, \dd$.
For a vertical vector field $\pr_M^*X \in \Gamma(\Sigma{\times}M, T^{\rm ver}(\Sigma{\times}M))$ which is constant along the $\Sigma$-fibres, that is, which is the pullback along $\pr_M$ of a vector field $X$ on $M$, we have
\begin{equation}
\begin{aligned}
\iota_{\bbG^*(\pr_M^*X)}\, \CT(\omega)_{|f}& \big( \bbG^*X_1, \ldots, \bbG^*X_{p-d-1} \big)\\
&\qquad\qquad\qquad\qquad = \CT(\omega)_{|f} \big( \bbG^*(\pr_M^*X), \bbG^*X_1, \ldots, \bbG^*X_{p-d-1} \big)\\[4pt]
&\qquad\qquad\qquad\qquad = \int_\Sigma\, (\bbG f)^* \big( \iota_{X_{p-d-1}} \cdots \iota_{X_1}\, \iota_{\pr_M^*X}\, \pr_M^*\, \omega \Big)\\[4pt]
&\qquad\qquad\qquad\qquad = \CT(\iota_X \omega)_{|f} \big( \bbG^*X_1, \ldots, \bbG^*X_{p-d-1} \big)~.
\end{aligned}
\end{equation}
Therefore by using Proposition~\ref{st:d_and_transgression} we find
\begin{equation}
\dd\, \iota_{\bbG^*(\pr_M^*X)}\, \CT(\omega ) = (-1)^{p-1-d}\, \CT_{\partial}\, (\iota_X \omega) + \CT (\dd\, \iota_X\omega)~.
\end{equation}
On the other hand, one has
\begin{equation}
\begin{aligned}
\iota_{\bbG^*(\pr_M^*X)}\, \dd \, \CT(\omega)
&= \iota_{\bbG^*(\pr_M^*X)}\, \big( (-1)^{p-d}\, \CT_{\partial}(\omega) + \CT (\dd \omega) \big)\\[4pt]
&= (-1)^{p-d}\, \CT_{\partial}\, (\iota_X \omega) + \CT (\iota_X\, \dd \omega)~.
\end{aligned}
\end{equation}
Adding both contributions, we arrive at
\begin{equation}
\label{eq:Lie-derivative_on_mapping_space}
\pounds_{\bbG^*(\pr_M^*X)} \, \CT(\omega) = \CT(\pounds_X \omega)~.
\end{equation}

It is tempting to ask for an extension of this identity to more general vector fields on the mapping space $(\Sigma{\rightarrow}M)$.
However, the term on the right-hand side of~\eqref{eq:Lie-derivative_on_mapping_space} is sensible if and only if $X$ is a globally well-defined vector field on $M$.
From this we infer that such a generalisation is not possible.
This is a consequence of the problem mentioned before that vector fields on the mapping space do not generally correspond even locally to vector fields on the original manifold $M$. Nevertheless, the identity~\eqref{eq:Lie-derivative_on_mapping_space} is sufficient for all our purposes in the present paper.

\bibliographystyle{latexeu}


\begin{thebibliography}{10}

\bibitem{Baez:9802029}
J.~C.~Baez and J.~Dolan,
{\em Categorification,}
Contemp. Math. {\bf 230} (1998)~1--36 [{\tt
  \href{http://www.arxiv.org/abs/math.QA/9802029}{math.QA/9802029}}].

\bibitem{Waldorf:2007aa}
K.~Waldorf,
{\em {Algebraic structures for bundle gerbes and the Wess-Zumino term in
  conformal field theory},} PhD thesis, Universität Hamburg (2007).

\bibitem{Schweigert:2014nia}
C.~Schweigert, C.~Tropp, and A.~Valentino,
{\em {A Serre-Swan theorem for gerbe modules on \'etale Lie groupoids},}
\href{http://www.tac.mta.ca/tac/volumes/29/28/29-28.pdf}{Th. Appl. Cat. {\bf
  29} (2014) 819--835} [{\tt \href{http://www.arxiv.org/abs/1401.2824}{1401.2824
  [math.AT]}}].

\bibitem{Saemann:2012ex}
C.~Saemann and R.~J.~Szabo,
{\em {Groupoid quantization of loop spaces},}
PoS {\bf CORFU2011} (2011)~046 [{\tt
  \href{http://www.arxiv.org/abs/1203.5921}{1203.5921 [hep-th]}}].

\bibitem{Hull:2009mi}
C.~Hull and B.~Zwiebach,
{\em Double field theory,}
\href{http://dx.doi.org/10.1088/1126-6708/2009/09/099}{JHEP {\bf 0909}
  (2009)~99, 53 pp.} [{\tt \href{http://www.arxiv.org/abs/0904.4664}{0904.4664
  [hep-th]}}].

\bibitem{Blumenhagen:2010hj}
R.~Blumenhagen and E.~Plauschinn,
{\em {Nonassociative gravity in string theory?},}
\href{http://dx.doi.org/10.1088/1751-8113/44/1/015401}{J. Phys. A {\bf 44}
  (2011) 015401, 19 pp.} [{\tt \href{http://www.arxiv.org/abs/1010.1263}{1010.1263
  [hep-th]}}].

\bibitem{Lust:2010iy}
D.~L{\"u}st,
{\em {T-duality and closed string noncommutative (doubled) geometry},}
\href{http://dx.doi.org/10.1007/JHEP12(2010)084}{JHEP {\bf 1012} (2010) 084, 28 pp.}
  [{\tt \href{http://www.arxiv.org/abs/1010.1361}{1010.1361 [hep-th]}}].

\bibitem{Blumenhagen:2011ph}
R.~Blumenhagen, A.~Deser, D.~L{\"u}st, E.~Plauschinn, and F.~Rennecke,
{\em {Non-geometric fluxes, asymmetric strings and nonassociative geometry},}
\href{http://dx.doi.org/10.1088/1751-8113/44/38/385401}{J. Phys. A {\bf 44}
  (2011) 385401, 38 pp.} [{\tt \href{http://www.arxiv.org/abs/1106.0316}{1106.0316
  [hep-th]}}].

\bibitem{Mylonas:2012pg}
D.~Mylonas, P.~Schupp, and R.~J.~Szabo,
{\em {Membrane sigma-models and quantization of non-geometric flux
  backgrounds},}
\href{http://dx.doi.org/10.1007/JHEP09(2012)012}{JHEP {\bf 1209} (2012) 012, front matter+54 pp.}
  [{\tt \href{http://www.arxiv.org/abs/1207.0926}{1207.0926 [hep-th]}}].

\bibitem{Barnes:2014ksa}
G.~E.~Barnes, A.~Schenkel, and R.~J.~Szabo,
{\em {Nonassociative geometry in quasi-Hopf representation categories I:
  Bimodules and their internal homomorphisms},}
\href{http://dx.doi.org/10.1016/j.geomphys.2014.12.005}{J. Geom. Phys. {\bf 89}
  (2015) 111--152} [{\tt \href{http://www.arxiv.org/abs/1409.6331}{1409.6331
  [math.QA]}}].

\bibitem{Barnes:2015uxa}
G.~E.~Barnes, A.~Schenkel, and R.~J.~Szabo,
{\em {Nonassociative geometry in quasi-Hopf representation categories II:
  Connections and curvature},}
\href{http://dx.doi.org/10.1016/j.geomphys.2016.04.005}{J. Geom. Phys. {\bf
  106} (2016) 234--255} [{\tt \href{http://www.arxiv.org/abs/1507.02792}{1507.02792
  [math.QA]}}].

\bibitem{Mylonas:2013jha}
D.~Mylonas, P.~Schupp, and R.~J.~Szabo,
{\em {Non-geometric fluxes, quasi-Hopf twist deformations and nonassociative
  quantum mechanics},}
\href{http://dx.doi.org/10.1063/1.4902378}{J. Math. Phys. {\bf 55} (2014)
  122301, 30 pp.} [{\tt \href{http://www.arxiv.org/abs/1312.1621}{1312.1621
  [hep-th]}}].

\bibitem{Petalidou:2007tx}
F.~Petalidou,
{\em On the geometric quantization of twisted Poisson manifolds,}
\href{http://dx.doi.org/10.1063/1.2759833}{J. Math. Phys. {\bf 48} (2007)
  083502, 17 pp.} [{\tt \href{http://www.arxiv.org/abs/0704.2989}{0704.2989
  [math.DG]}}].

\bibitem{0817647309}
J.-L.~Brylinski,
{\em Loop spaces, characteristic classes and geometric quantization,}
  Birkh{\"a}user, Boston, 2007.

\bibitem{Pressley:1988aa}
A.~Pressley and G.~Segal,
{\em Loop groups,} Oxford University Press, Oxford, 1988.

\bibitem{Sergeev:2008aa}
A.~G.~Sergeev,
{\em Kähler geometry of loop spaces,}
Mathematical Society of Japan Memoirs, Nagoya, 2008.

\bibitem{Saemann:2012ab}
C.~Saemann and R.~J.~Szabo,
{\em {Groupoids, loop spaces and quantization of 2-plectic manifolds},}
\href{http://dx.doi.org/10.1142/S0129055X13300057}{Rev. Math. Phys. {\bf 25}
  (2013) 1330005, 72 pp.} [{\tt \href{http://www.arxiv.org/abs/1211.0395}{1211.0395
  [hep-th]}}].

\bibitem{Schreiber:2016pxa}
U.~Schreiber,
{\em {Higher prequantum geometry},}
{\tt \href{http://www.arxiv.org/abs/1601.05956}{1601.05956 [math-ph]}}.

\bibitem{Schreiber:2013pra}
U.~Schreiber,
{\em {Differential cohomology in a cohesive $\infty$-topos},}
{\tt \href{http://www.arxiv.org/abs/1310.7930}{1310.7930 [math-ph]}}.

\bibitem{Rogers:2010sc}
C.~L.~Rogers,
{\em 2-plectic geometry, Courant algebroids, and categorified prequantization,}
\href{http://dx.doi.org/10.4310/JSG.2013.v11.n1.a4}{J. Sympl. Geom. {\bf 11}
  (2013)~53--91} [{\tt \href{http://www.arxiv.org/abs/1009.2975}{1009.2975
  [math-ph]}}].

\bibitem{Fiorenza:2013kqa}
D.~Fiorenza, C.~L.~Rogers, and U.~Schreiber,
{\em {Higher $\sU(1)$-gerbe connections in geometric prequantization},}
{\tt \href{http://www.arxiv.org/abs/1304.0236}{1304.0236 [math-ph]}}.

\bibitem{Fiorenza:1304.6292}
D.~Fiorenza, C.~L.~Rogers, and U.~Schreiber,
{\em $L_\infty$-algebras of local observables from higher prequantum bundles,}
\href{http://dx.doi.org/10.4310/HHA.2014.v16.n2.a6}{Homol. Homot. Appl. {\bf
  16} (2014) 107--142} [{\tt \href{http://www.arxiv.org/abs/1304.6292}{1304.6292
  [math-ph]}}].

\bibitem{Schreiber:2008aa}
U.~Schreiber and K.~Waldorf,
{\em Connections on non-abelian gerbes and their holonomy,}
\href{http://www.tac.mta.ca/tac/volumes/28/17/28-17.pdf}{Th. Appl. Cat. {\bf
  28} (2013) 476--540} [{\tt \href{http://www.arxiv.org/abs/0808.1923}{0808.1923
  [math.DG]}}].

\bibitem{Schreiber:2009:357-401}
U.~Schreiber,
{\em AQFT from n-functorial QFT,}
\href{http://dx.doi.org/10.1007/s00220-009-0840-2}{Commun. Math. Phys. {\bf
  291} (2009) 357--401} [{\tt \href{http://www.arxiv.org/abs/0806.1079}{0806.1079
  [math.CT]}}].

\bibitem{Baez:2003aa}
J.~C.~Baez and A.~S.~Crans,
{\em Higher-dimensional algebra VI: Lie 2-algebras,}
\href{http://tac.mta.ca/tac/volumes/12/15/12-15.pdf}{Th. Appl. Cat. {\bf 12}
  (2004) 492--538} [{\tt
  \href{http://www.arxiv.org/abs/math.QA/0307263}{math.QA/0307263}}].

\bibitem{kapranov19942}
M.~M.~Kapranov and V.~A.~Voevodsky,
{\em 2-categories and Zamolodchikov tetrahedra equations,}
\href{http://dx.doi.org/10.1090/pspum/056.2/1278735}{Proc. Symp. Pure Math.
  {\bf 56} (1994) 177--260}.

\bibitem{Baez:9609018}
J.~C.~Baez,
{\em Higher-dimensional algebra II: 2-Hilbert spaces,}
\href{http://dx.doi.org/10.1006/aima.1997.1617}{Adv. Math. {\bf 127} (1997)
  125--189} [{\tt \href{http://www.arxiv.org/abs/q-alg/9609018}{q-alg/9609018}}].

\bibitem{Souriau-1970aa}
J.-M.~Souriau,
{\em Structure des syst{\`e}mes dynamiques,}
Dunod, Paris, 1969.

\bibitem{Kostant-1970aa}
B.~Kostant,
{\em Quantization and unitary representations,}
Lect. Notes Math. {\bf 170} (2006)~87--208.

\bibitem{Berezin:1974du}
F.~A.~Berezin,
{\em General concept of quantization,}
\href{http://dx.doi.org/10.1007/BF01609397}{Commun. Math. Phys. {\bf 40} (1975)
  153--174}.

\bibitem{Kostant:1975qe}
B.~Kostant,
{\em {Graded manifolds, graded Lie theory, and prequantization},}
Lect. Notes Math. {\bf 570} (2006) 177--306.

\bibitem{Kirillov2001}
A.~A.~Kirillov,
{\em Geometric quantization,}
Encycl. Math. Sci. {\bf 4} (1990) 137--172.

\bibitem{IuliuLazaroiu:2008pk}
C.~Iuliu-Lazaroiu, D.~McNamee, and C.~Saemann,
{\em {Generalized Berezin quantization, Bergman metrics and fuzzy Laplacians},}
\href{http://dx.doi.org/10.1088/1126-6708/2008/09/059}{JHEP {\bf 0809} (2008)
  059, 60 pp.} [{\tt \href{http://www.arxiv.org/abs/0804.4555}{0804.4555 [hep-th]}}].

\bibitem{Johnson:2003aa}
S.~Johnson,
{\em Constructions with bundle gerbes,}
PhD thesis, University of Adelaide (2002)
[{\tt \href{http://www.arxiv.org/abs/math.DG/0312175}{math.DG/0312175}}].

\bibitem{Waldorf:0911.3212}
K.~Waldorf,
{\em Transgression to loop spaces and its inverse I: Diffeological bundles and
  fusion maps,}
Cah. Topol. G\'eom. Diff\'er. Cat\'eg. {\bf 53} (2012) 162--210 [{\tt
  \href{http://www.arxiv.org/abs/0911.3212}{0911.3212 [math.DG]}}].

\bibitem{Waldorf:2010aa}
K.~Waldorf,
{\em Transgression to loop spaces and its inverse II: Gerbes and fusion bundles
  with connection,}
{\tt \href{http://www.arxiv.org/abs/1004.0031}{1004.0031 [math-ph]}}.

\bibitem{Murray:2006pi}
S.~Murray and C.~Saemann,
{\em Quantization of flag manifolds and their supersymmetric extensions,}
\href{http://dx.doi.org/10.4310/ATMP.2008.v12.n3.a5}{Adv. Theor. Math. Phys.
  {\bf 12} (2008) 641--710} [{\tt
  \href{http://www.arxiv.org/abs/hep-th/0611328}{hep-th/0611328}}].

\bibitem{Rogers:2011zc}
C.~L.~Rogers,
{\em {Higher symplectic geometry},} PhD thesis, University of California
  Riverside (2011)
[{\tt \href{http://www.arxiv.org/abs/1106.4068}{1106.4068 [math-ph]}}].

\bibitem{Benabou:1967:1}
J.~B{\'e}nabou,
{\em Introduction to bicategories,}
\href{http://dx.doi.org/10.1007/BFb0074299}{Lect. Notes Math. {\bf 47}
  (1967)~1--77}.

\bibitem{Jurco:2014mva}
B.~Jurco, C.~Saemann, and M.~Wolf,
{\em {Semistrict higher gauge theory},}
\href{http://dx.doi.org/10.1007/JHEP04(2015)087}{JHEP {\bf 1504} (2015) 087}
  [{\tt \href{http://www.arxiv.org/abs/1403.7185}{1403.7185 [hep-th]}}].

\bibitem{MR925720}
M.~Hilsum and G.~Skandalis,
{\em Morphismes {$K$}-orient\'es d'espaces de feuilles et fonctorialit\'e en
  th\'eorie de {K}asparov (d'apr\`es une conjecture d'{A}.\ {C}onnes),}
\href{http://www.numdam.org/item?id=ASENS_1987_4_20_3_325_0}{Ann. Sci. \'Ecole
  Norm. Sup. {\bf 20} (1987) 325--390}.

\bibitem{Moerdijk:2003bb}
I.~Moerdijk and J.~Mrcun,
{\em Introduction to foliations and Lie groupoids,}
Cambridge University Press, Cambridge, 2003.

\bibitem{Blohmann:2007ez}
C.~Blohmann,
{\em Stacky Lie groups,}
\href{http://dx.doi.org/10.1093/imrn/rnn082}{Int. Math. Res. Notices {\bf 2008}
  (2008) rnn082} [{\tt
  \href{http://www.arxiv.org/abs/math.DG/0702399}{math.DG/0702399}}].

\bibitem{Demessie:2016ieh}
G.~A.~Demessie and C.~Saemann,
{\em {Higher gauge theory with string 2-groups},}
{\tt \href{http://www.arxiv.org/abs/1602.03441}{1602.03441 [math-ph]}}.

\bibitem{Murray:9407015}
M.~K.~Murray,
{\em Bundle gerbes,}
\href{http://dx.doi.org/10.1112/jlms/54.2.403}{J. London Math. Soc. {\bf 54}
  (1996) 403--416} [{\tt
  \href{http://www.arxiv.org/abs/dg-ga/9407015}{dg-ga/9407015}}].

\bibitem{Murray:2007ps}
M.~K.~Murray,
{\em {An introduction to bundle gerbes},}
in: {\em The many facets of geometry: A tribute to Nigel Hitchin,} eds.\ O.\
  Garcia-Prada, J.-P.\ Bourguignon and S.\ Salamon, Oxford University Press,
  Oxford, 2010, 237--260
[{\tt \href{http://www.arxiv.org/abs/0712.1651}{0712.1651 [math.DG]}}].

\bibitem{Roytenberg:0712.3461}
D.~Roytenberg,
{\em On weak Lie 2-algebras,}
AIP Conf. Proc. {\bf 956} (2007) 180--198 [{\tt
  \href{http://www.arxiv.org/abs/0712.3461}{0712.3461 [math.QA]}}].

\bibitem{Severa:2006aa}
P.~Severa,
{\em $L_\infty$-algebras as 1-jets of simplicial manifolds (and a bit beyond),}
{\tt \href{http://www.arxiv.org/abs/math.DG/0612349}{math.DG/0612349}}.

\bibitem{Baez:0307200}
J.~C.~Baez and A.~D.~Lauda,
{\em Higher-dimensional algebra V: 2-groups,}
\href{http://www.kurims.kyoto-u.ac.jp/EMIS/journals/TAC/volumes/12/14/12-14.pdf}{Th.
  Appl. Cat. {\bf 12} (2004) 423--491} [{\tt
  \href{http://www.arxiv.org/abs/math.QA/0307200}{math.QA/0307200}}].

\bibitem{Douglas:2013aea}
C.~L.~Douglas, C.~Schommer-Pries, and N.~Snyder,
{\em {Dualizable tensor categories},}
{\tt \href{http://www.arxiv.org/abs/1312.7188}{1312.7188 [math.QA]}}.

\bibitem{Barrett:2004zb}
J.~W.~Barrett and M.~Mackaay,
{\em Categorical representations of categorical groups,}
\href{http://www.emis.ams.org/journals/TAC/volumes/16/20/16-20.pdf}{Th. Appl.
  Cat. {\bf 16} (2006) 529--557} [{\tt
  \href{http://www.arxiv.org/abs/math.CT/0407463}{math.CT/0407463}}].

\bibitem{Elgueta:2007:53-92}
J.~Elgueta,
{\em Representation theory of 2-groups on Kapranov and Voevodsky's 2-vector
  spaces,}
\href{http://dx.doi.org/10.1016/j.aim.2006.11.010}{Adv. Math. {\bf 213}
  (2007)~53--92}.

\bibitem{Bartlett:2008ji}
B.~Bartlett,
{\em {On unitary 2-representations of finite groups and topological quantum
  field theory},} PhD thesis, University of Sheffield (2008)
[{\tt \href{http://www.arxiv.org/abs/0901.3975}{0901.3975 [math.QA]}}].

\bibitem{Freidel:2012:0-0}
J.~C.~Baez, A.~Baratin, L.~Freidel, and D.~Wise,
{\em Infinite-dimensional representations of 2-groups,}
\href{http://dx.doi.org/10.1090/S0065-9266-2012-00652-6}{Mem. Amer. Math. Soc.
  {\bf 219} (2012), no.~1032, vi+120 pp.} [{\tt \href{http://www.arxiv.org/abs/0812.4969}{0812.4969
  [math.QA]}}].

\bibitem{Giraud:1971}
J.~Giraud,
{\em Cohomologie non-ab\'elienne,} Springer, Berlin, 1971.

\bibitem{Hitchin:1999fh}
N.~J.~Hitchin,
{\em {Lectures on special Lagrangian submanifolds},}
AMS/IP Studies Adv. Math. {\bf 23} (2001) 151--182 [{\tt
  \href{http://www.arxiv.org/abs/math.DG/9907034}{math.DG/9907034}}].

\bibitem{Chatterjee:1998}
D.~S.~Chatterjee,
{\em On gerbs,} PhD thesis, Trinity College Cambridge (1998).

\bibitem{Baez:0801.3843}
J.~C.~Baez and D.~Stevenson,
{\em The classifying space of a topological 2-group,}
in: {\em Algebraic topology: The Abel symposium 2007}, eds. N.~Baas,
  M.~E.~Friedlander, B.~Jahren and A.~P.~{\O}stv{\ae}r, Springer, Berlin, 2009,
  1--31
[{\tt \href{http://www.arxiv.org/abs/0801.3843}{0801.3843 [math.AT]}}].

\bibitem{Nikolaus:1207ab}
T.~Nikolaus, U.~Schreiber, and D.~Stevenson,
{\em Principal $\infty$-bundles: General theory,}
\href{http://dx.doi.org/10.1007/s40062-014-0083-6}{J. Homot. Relat. Struct.
  {\bf 10} (2015) 749--801} [{\tt
  \href{http://www.arxiv.org/abs/1207.0248}{1207.0248 [math.AT]}}].

\bibitem{nikolaus1207}
T.~Nikolaus, U.~Schreiber, and D.~Stevenson,
{\em Principal $\infty$-bundles: Presentations,}
\href{http://dx.doi.org/doi:10.1007/s40062-014-0077-4}{J. Homot. Relat. Struct.
  {\bf 10} (2015) 565--622} [{\tt
  \href{http://www.arxiv.org/abs/1207.0249}{1207.0249 [math.AT]}}].

\bibitem{Carey:1997xm}
A.~L.~Carey, J.~Mickelsson, and M.~K.~Murray,
{\em {Bundle gerbes applied to quantum field theory},}
\href{http://dx.doi.org/10.1142/S0129055X00000046}{Rev. Math. Phys. {\bf 12}
  (2000)~65--90} [{\tt
  \href{http://www.arxiv.org/abs/hep-th/9711133}{hep-th/9711133}}].

\bibitem{Murray:9908135}
M.~K.~Murray and D.~Stevenson,
{\em Bundle gerbes: Stable isomorphism and local theory,}
\href{http://dx.doi.org/10.1112/S0024610700001551}{J. London Math. Soc. {\bf
  62} (2000) 925--937} [{\tt
  \href{http://www.arxiv.org/abs/math.DG/9908135}{math.DG/9908135}}].

\bibitem{Behrend:0605694}
K.~Behrend and P.~Xu,
{\em Differentiable stacks and gerbes,}
\href{http://projecteuclid.org/euclid.jsg/1310388899}{J. Sympl. Geom. {\bf 9}
  (2011) 285--341} [{\tt
  \href{http://www.arxiv.org/abs/math.DG/0605694}{math.DG/0605694}}].

\bibitem{MR1103911}
A.~Weinstein and P.~Xu,
{\em Extensions of symplectic groupoids and quantization,}
\href{http://dx.doi.org/10.1515/crll.1991.417.159}{J. Reine Angew. Math. {\bf
  417} (1991) 159--189}.

\bibitem{Ginot:2008fia}
G.~Ginot and M.~Stienon,
{\em {G-gerbes, principal 2-group bundles and characteristic classes},}
{\tt \href{http://www.arxiv.org/abs/0801.1238}{0801.1238 [math.AT]}}.

\bibitem{Waldorf:1201.5052}
K.~Waldorf,
{\em A construction of string 2-group models using a transgression-regression
  technique,}
\href{http://dx.doi.org/10.1090/conm/584/11588}{Contemp. Math. {\bf 584}
  (2012)~99--115} [{\tt \href{http://www.arxiv.org/abs/1201.5052}{1201.5052
  [math.DG]}}].

\bibitem{Baez:2005sn}
J.~C.~Baez, D.~Stevenson, A.~S.~Crans, and U.~Schreiber,
{\em {From loop groups to 2-groups},}
\href{http://projecteuclid.org/euclid.hha/1201127333}{Homol. Homot. Appl. {\bf
  9} (2007) 101--135} [{\tt
  \href{http://www.arxiv.org/abs/math.QA/0504123}{math.QA/0504123}}].

\bibitem{Waldorf:0702652}
K.~Waldorf,
{\em More morphisms between bundle gerbes,}
\href{http://emis.ams.org/journals/TAC/volumes/18/9/18-09.pdf}{Th. Appl. Cat.
  {\bf 18} (2007) 240--273} [{\tt
  \href{http://www.arxiv.org/abs/math.CT/0702652}{math.CT/0702652}}].

\bibitem{Karoubi:1012.2512}
M.~Karoubi,
{\em Twisted bundles and twisted K-theory,}
Proc. Clay Math. Inst. {\bf 16} (2012) 223--257 [{\tt
  \href{http://www.arxiv.org/abs/1012.2512}{1012.2512 [math.KT]}}].

\bibitem{Carey:2002xp}
A.~L.~Carey, S.~Johnson, and M.~K.~Murray,
{\em {Holonomy on D-branes},}
\href{http://dx.doi.org/10.1016/j.geomphys.2004.02.008}{J. Geom. Phys. {\bf 52}
  (2002) 186--216} [{\tt
  \href{http://www.arxiv.org/abs/hep-th/0204199}{hep-th/0204199}}].

\bibitem{Bouwknegt:2001vu}
P.~Bouwknegt, A.~L.~Carey, V.~Mathai, M.~K.~Murray, and D.~Stevenson,
{\em Twisted K-theory and K-theory of bundle gerbes,}
\href{http://dx.doi.org/10.1007/s002200200646}{Commun. Math. Phys. {\bf 228}
  (2002)~17--49} [{\tt
  \href{http://www.arxiv.org/abs/hep-th/0106194}{hep-th/0106194}}].

\bibitem{Gawedzki:2008um}
K.~Gawedzki, R.~R.~Suszek, and K.~Waldorf,
{\em {Bundle gerbes for orientifold sigma-models},}
\href{http://dx.doi.org/10.4310/ATMP.2011.v15.n3.a1}{Adv. Theor. Math. Phys.
  {\bf 15} (2011) 621--687} [{\tt
  \href{http://www.arxiv.org/abs/0809.5125}{0809.5125 [math-ph]}}].

\bibitem{Baez:2008bu}
J.~C.~Baez, A.~E.~Hoffnung, and C.~L.~Rogers,
{\em {Categorified symplectic geometry and the classical string},}
\href{http://dx.doi.org/10.1007/s00220-009-0951-9}{Commun. Math. Phys. {\bf
  293} (2010) 701--725} [{\tt \href{http://www.arxiv.org/abs/0808.0246}{0808.0246
  [math-ph]}}].

\bibitem{Carey:1995yc}
A.~L.~Carey, M.~K.~Murray, and B.~L.~Wang,
{\em {Higher bundle gerbes and cohomology classes in gauge theories},}
\href{http://dx.doi.org/10.1016/S0393-0440(96)00014-9}{J. Geom. Phys. {\bf 21}
  (1997) 183--197} [{\tt
  \href{http://www.arxiv.org/abs/hep-th/9511169}{hep-th/9511169}}].

\bibitem{Ritter:2015ffa}
P.~Ritter and C.~Saemann,
{\em {Automorphisms of strong homotopy Lie algebras of local observables},}
{\tt \href{http://www.arxiv.org/abs/1507.00972}{1507.00972 [hep-th]}}.

\bibitem{Stolz:1996:785-800}
S.~Stolz,
{\em A conjecture concerning positive Ricci curvature and the Witten genus,}
\href{http://dx.doi.org/10.1007/BF01446319}{Math. Ann. {\bf 304} (1996) 785--800}.

\bibitem{Stolz:2004aa}
S.~Stolz and P.~Teichner,
{\em What is an elliptic object?,}
London Math. Soc. Lect. Notes Ser. {\bf 308} (2004) 247--343.

\bibitem{Nikolaus:2011zg}
T.~Nikolaus, C.~Sachse, and C.~Wockel,
{\em {A smooth model for the string group},}
\href{http://dx.doi.org/10.1093/imrn/rns154}{Int. Math. Res. Notices {\bf 2013}
  (2013) 3678--3721} [{\tt \href{http://www.arxiv.org/abs/1104.4288}{1104.4288
  [math.AT]}}].

\bibitem{Baez:2004in}
J.~C.~Baez and U.~Schreiber,
{\em Higher gauge theory: 2-connections on 2-bundles,}
{\tt \href{http://www.arxiv.org/abs/hep-th/0412325}{hep-th/0412325}}.

\bibitem{Gawedzki:1987ak}
K.~Gawedzki,
{\em {Topological actions in two-dimensional quantum field theories},}
NATO Adv. Sci. Inst. Ser. B Phys. {\bf 185} (1988) 101--141.

\bibitem{Waldorf:0804.4835}
K.~Waldorf,
{\em Multiplicative bundle gerbes with connection,}
\href{http://dx.doi.org/10.1016/j.difgeo.2009.10.006}{Diff. Geom. Appl. {\bf
  28} (2010) 313--340} [{\tt \href{http://www.arxiv.org/abs/0804.4835}{0804.4835
  [math.DG]}}].

\bibitem{Gawedzki:2002se}
K.~Gawedzki and N.~Reis,
{\em WZW branes and gerbes,}
\href{http://dx.doi.org/10.1142/S0129055X02001557}{Rev. Math. Phys. {\bf 14}
  (2002) 1281--1334} [{\tt
  \href{http://www.arxiv.org/abs/hep-th/0205233}{hep-th/0205233}}].

\bibitem{Taubes:2011aa}
C.~H.~Taubes,
{\em Differential geometry,}
Oxford University Press, Oxford, 2011.

\bibitem{Baez:0511710}
J.~C.~Baez and U.~Schreiber,
{\em Higher gauge theory,}
\href{http://dx.doi.org/10.1090/conm/431/08264}{Contemp. Math. {\bf 431}
  (2007)~7--30} [{\tt
  \href{http://www.arxiv.org/abs/math.DG/0511710}{math.DG/0511710}}].

\bibitem{Bunke:1011.6663}
U.~Bunke and T.~Schick,
{\em Differential K-theory: A survey,}
Springer Proc. Math. {\bf 17} (2012) 303--357 [{\tt
  \href{http://www.arxiv.org/abs/1011.6663}{1011.6663 [math.KT]}}].

\bibitem{Park:1602.02292}
B.~Park,
{\em Geometric models of twisted differential K-theory I,}
{\tt \href{http://www.arxiv.org/abs/1602.02292}{1602.02292 [math.KT]}}.

\bibitem{Wurzbacher:1995gb}
T.~Wurzbacher,
{\em {Symplectic geometry of the loop space of a Riemannian manifold},}
\href{http://dx.doi.org/10.1016/0393-0440(94)00033-Z}{J. Geom. Phys. {\bf 16}
  (1995) 345--384}.

\bibitem{Popov:1992am}
A.~D.~Popov and A.~G.~Sergeev,
{\em {Bosonic strings, ghosts and geometric quantization},}
JINR preprint E2-92-261.

\bibitem{Ritter:2013wpa}
P.~Ritter and C.~Saemann,
{\em {Lie 2-algebra models},}
\href{http://dx.doi.org/10.1007/JHEP04(2014)066}{JHEP {\bf 1404} (2014) 066, 45 pp.}
  [{\tt \href{http://www.arxiv.org/abs/1308.4892}{1308.4892 [hep-th]}}].

\bibitem{Bunke:1311.3188}
U.~Bunke, T.~Nikolaus, and M.~Völkl,
{\em Differential cohomology theories as sheaves of spectra,}
\href{http://dx.doi.org/10.1007/s40062-014-0092-5}{J. Homot. Relat. Struct.
  {\bf 11} (2016)~1--66} [{\tt \href{http://www.arxiv.org/abs/1311.3188}{1311.3188
  [math.KT]}}].

\bibitem{kobayashi1989}
S.~Kobayashi,
{\em Manifolds over function algebras and mapping spaces,}
\href{http://dx.doi.org/10.2748/tmj/1178227825}{Tohoku Math. J. {\bf 41} (1989)
  263--282}.

\end{thebibliography}

\end{document}